\newtheorem{theorem}{\bf{Theorem}}
\newtheorem{definition}{\bf{Definition}}
\newtheorem{lemma}{\bf{Lemma}}
\newtheorem{corollary}{\bf{Corollary}}
\newtheorem{proposition}{\bf{Proposition}}
\newtheorem{remark}{\bf{Remark}}
\newtheorem{example}{\bf{Example}}
\newtheorem{assumption}{\bf{Assumption}}
\newcommand{\ERM}{{\sf{ERM}}}
\newcommand{\E}[1]{\mathbb E\left[#1\right]}
\newcommand{\Esub}[2]{\mathbb E_{#1}\left[#2\right]}
\newcommand{\gen}{\textup{gen}}
\DeclareMathOperator*{\argmin}{argmin}
\begin{document}
%
% paper title
% Titles are generally capitalized except for words such as a, an, and, as,
% at, but, by, for, in, nor, of, on, or, the, to and up, which are usually
% not capitalized unless they are the first or last word of the title.
% Linebreaks \\ can be used within to get better formatting as desired.
% Do not put math or special symbols in the title.
\title{On the Generalization for Transfer Learning: An Information-Theoretic Analysis}
%

%\titlenote{This work is an extended version of the preliminary work \cite{wu2020information} appeared in ISIT2020 conference. This paper extends the results for online transfer learning to instantaneous and time-variant transfer learning. We then propose an information-theoretic bound based algorithm for the transfer learning and compare it with other benchmarks.} 

%
% author names and IEEE memberships
% note positions of commas and nonbreaking spaces ( ~ ) LaTeX will not break
% a structure at a ~ so this keeps an author's name from being broken across
% two lines.
% use \thanks{} to gain access to the first footnote area
% a separate \thanks must be used for each paragraph as LaTeX2e's \thanks
% was not built to handle multiple paragraphs
%
% \author{%
%   \IEEEauthorblockN{Xuetong Wu$^1$, Jonathan H. Manton$^1$, Uwe Aickelin$^2$, Jingge Zhu$^{1}$}\\
%   \IEEEauthorblockA{$^1$Department of Electrical and Electronic Engineering, \\ 
%   $^2$School of Computing and Information Systems, \\
%   University of Melbourne, Parkville, Victoria, Australia\\
%   Email: xuetongw1@student.unimelb.edu, \{jmanton, uwe.aickelin, jingge.zhu\}@unimelb.edu.au}
%\thanks{Manuscript received April 19, 2005; revised August 26, 2015.}
% }

\author{Xuetong Wu,\thanks{This work is an extended version of the preliminary work \cite{wu2020information} presented in ISIT2020 conference. %\blue{This paper extends the previous results on different learning algorithms, such as the Gibbs algorithm and the noisy stochastic gradient descent algorithm. We also have derived the learning bounds for different divergences, such as the Wasserstein distance and $\phi$-divergence, that may handle more general cases than the mutual information bound. In addition, we identify the ``easiness" condition for transfer learning and characterize the fast rate bounds that capture the true behaviour of the excess risk. According to the proposed information-theoretic bounds, we develop a boosting-based algorithm for transfer learning and compare it with other benchmarks. The results demonstrate the effectiveness of the proposed algorithm.}
}, \thanks{Xuetong Wu, Jonathan H. Manton and Jingge Zhu are with the Department of Electrical and Electronic Engineering, University of Melbourne, VIC, 3010, Australia. E-mail: wfyitf@gmail.com; \{jmanton, jingle.zhu\}@unimelb.edu.au;}~
Jonathan H. Manton,~\IEEEmembership{Fellow,~IEEE,}~ Uwe Aickelin,\thanks{Uwe Aickelin is with the School of Computing and Information Systems, University of Melbourne, VIC, 3010, Australia. E-mail: uwe.aickelin@unimelb.edu.au.}
Jingge Zhu,~\IEEEmembership{Member,~IEEE}
}
\maketitle

% As a general rule, do not put math, special symbols or citations
% in the abstract or keywords.
\begin{abstract}
Transfer learning, or domain adaptation, is concerned with machine learning problems in which training and testing data come from possibly different probability distributions. In this work, we give an information-theoretic analysis of the generalization error and excess risk of transfer learning algorithms. Our results suggest, perhaps as expected, that the Kullback-Leibler (KL) divergence $D(\mu\|\mu')$ plays an important role in the characterizations where $\mu$ and $\mu'$ denote the distribution of the training data and the testing data, respectively. Specifically, we provide generalization error and excess risk upper bounds for learning algorithms where data from both distributions are available in the training phase. Recognizing that the bounds could be sub-optimal in general, we provide improved excess risk upper bounds for a certain class of algorithms, including the empirical risk minimization (ERM) algorithm, by making stronger assumptions through the \textit{central condition}.  To demonstrate the usefulness of the bounds, we further extend the analysis to the Gibbs algorithm and the noisy stochastic gradient descent method. We then generalize the mutual information bound with other divergences such as $\phi$-divergence and Wasserstein distance, which may lead to tighter bounds and can handle the case when $\mu$ is not absolutely continuous with respect to $\mu'$. Several numerical results are provided to demonstrate our theoretical findings. Lastly, to address the problem that the bounds are often not directly applicable in practice due to the absence of the distributional knowledge of the data, we develop an algorithm (called InfoBoost) that dynamically adjusts the importance weights for both source and target data based on certain information measures. The empirical results show the effectiveness of the proposed algorithm.

\end{abstract}

%However, there are two drawbacks of information-theoretic bounds: the first is that $\mu$ needs to be absolute continuous w.r.t. $\mu'$. Another drawback is that the bounds are sub-optimal in some learning scenarios, in terms of the convergence rate w.r.t. the sample size. To develop a more appropriate upper bound to handle more general transfer learning scenarios, we extend the results by using other types of divergences such as $\phi$-divergence and Wasserstein distance, which do not impose the absolute continuity restriction and can possibly be tighter than the mutual information bounds. We also identify the key condition that leads to the correct learning rate. 

% Note that keywords are not normally used for peer-review papers.
\begin{IEEEkeywords}
Transfer learning, generalization error, KL divergence, mutual information, $\phi$-divergence.
\end{IEEEkeywords}

% For peer review papers, you can put extra information on the cover
% page as needed:
% \ifCLASSOPTIONpeerreview
% \begin{center} \bfseries EDICS Category: 3-BBND \end{center}
% \fi
%
% For peerreview papers, this IEEEtran command inserts a page break and
% creates the second title. It will be ignored for other modes.
%\IEEEpeerreviewmaketitle
%TODO: 
\section{Introduction}
A learning algorithm is viewed as a stochastic mapping, which takes training data as its input and produces a hypothesis as the output. The output hypothesis will then be used on data not seen before (testing data). Most machine learning methods focus on the setup where the training and testing data are drawn from the same distribution. Transfer learning, or domain adaptation, is concerned with machine learning problems where training and testing data come from possibly different distributions. This setup is of particular interest in real-world applications, as in many cases we often have easy access to a substantial amount of labelled (or unlabelled) data from a distribution $\mu$, namely the source domain, on which our learning algorithm trains, but wish to use the trained hypothesis for data coming from a different distribution $\mu'$, namely the target domain, from which we have limited data for training. Generalization error, a crucial measure of learning performance, is defined as the difference between the empirical loss and the population loss for a given hypothesis and indicates if the hypothesis suffers from overfitting or underfitting for the target domain of interest. Conventionally, many bounding techniques are proposed under different conditions and assumptions for traditional machine learning methods. For example, \citet{vapnik2015uniform} proposed VC-dimension which describes the richness of a hypothesis class for generalization ability.  The notion of ``algorithmic stability" was introduced in \cite{kearns1997algorithmic} and \cite{devroye1979distributiona} for bounding the generalization error, by examining if a single training sample has a significant effect on the expected loss. PAC-Bayes bounds are a class of algorithm-dependent bounds first introduced by \citet{mcallester1999some}. \citet{xu2012robustness} develop another notion, namely the robustness for the generalization. Unlike the stability, the robustness conveys geometric intuition and it can be extended to non-standard setups such as Markov chain or quantile loss, facilitating new bounds on generalization. 

While upper bounds on generalization error are classical results in statistical learning theory, only a relatively small number of papers are devoted to this problem for transfer learning algorithms. To mention a few, \citet{ben-david_theory_2010} defined the generalization error for transfer learning problems and gave its VC dimension-style bounds for classification problems with the proposed $\mathcal{H}$-divergence. \citet{blitzer_learning_2008} studied transfer learning problems with a similar setup and obtained upper bounds in terms of Rademacher complexity.  \citet{long_adaptation_2013} developed a more general framework for transfer learning and the error is bounded by the distribution difference and adaptability of the hypothesis output. \citet{dai_boosting_2007} and \citet{eaton2011selective} proposed two boosting-based transfer learning algorithms that emphasize the significance of source instances in transfer boosting, and it can be shown that the error bounds are increasingly smaller with iterations increasing. \citet{zhang_bridging_2019} proposed an extension of theories in \cite{ben-david_theory_2010} to multiclass classification in transfer learning with a novel domain divergence called margin disparity discrepancy.  Compared with traditional learning problems, the generalization error of transfer learning additionally takes the distribution divergence between the source and target into account and how to evaluate this "domain shift" is non-trivial. Traditional bounds used in transfer learning theory, such as VC dimension or Rademacher complexity, typically do not take the learning algorithm into account. They primarily focus on characterizing the complexity of the whole hypothesis class or making assumptions about the algorithms and loss functions. Consequently, the results often provide an overly pessimistic view of the learning problem, especially when the data or the algorithm has some underlying structure that could be exploited to simplify the learning task. On the other hand, some of the works focus specifically on certain transfer learning algorithms and loss functions \cite{ben-david_theory_2010,dai_boosting_2007}, and the resulting bounds on the generalization error are not universally applicable. Moreover, most bounds mentioned above are only concerned with the hypothesis or the algorithm solely, which fails to account for the complex interplay between data distribution, model hypothesis, and learning algorithm in determining the generalization error. 
    
To characterize the intrinsic nature of a learning algorithm, some recent works have shown that the generalization error can be upper bounded using information-theoretic quantities. In particular, \citet{russo_controlling_2016} study the connection between mutual information and generalization error in the context of adaptive learning. The authors show that the mutual information between the training data and the output hypothesis can be used to upper bound the generalization error. One nice property of this framework is that the mutual information bound explicitly explores the dependence between training data and the output hypothesis, in contrast to the bounds obtained by traditional methods with VC dimension and Rademacher complexity \cite{shalev-shwartz_understanding_2014}. This paper exploits the information-theoretic framework in the transfer learning settings and derives the upper bounds for the generalization error.  %However, there are two main drawbacks to the information-theoretic bounds. The first is that, for some deterministic algorithms, the mutual information quantities may be infinite and the bound will become vacuous, and these bounds are usually hard to estimate if the hypothesis and the data are of high dimensions. Another drawback is that the bounds are usually sub-optimal in some learning scenarios, in terms of the convergence rate w.r.t. the sample size. In most of the relevant works, the convergence rate of the expected generalization error is in the form of $O(\sqrt{\lambda/{n}})$ where $\lambda$ is some information-theoretic quantities such as the mutual information between the data sample and the learned hypothesis. However, such a learning rate is typically considered to be ``slow", compared to a ``fast rate" of $O(1/n)$ in many learning scenarios. We tackle these two issues of the information-theoretic bounds that could also occur in transfer learning in Section~\ref{sec:other_metric} and Section~\ref{sec:fast}. 
To summarize our main contributions, we highlight the following points.

\begin{itemize}
\item[1.] We give an information-theoretic upper bound on the generalization error and the excess risk of transfer learning algorithms where training and testing data come from different distributions. Specifically, the upper bound involves the mutual information $I(W;Z)$ where $W$ denotes the output hypothesis and $Z$ denotes the training instance, and an additional term $D(\mu||\mu')$ that captures the effect of domain adaptation where $\mu, \mu'$ denotes the distributions of the source and the target domains, respectively. Such a result can be easily extended to multi-source transfer learning problems. To show the usefulness of the bounds, we further specialize the upper bounds on three specific algorithms: empirical risk minimization (ERM), the noisy stochastic gradient descent, and the Gibbs algorithm.

\item[2.] We observe that the mutual information upper bounds derived from existing methods are in general sub-optimal in terms of the convergence rate. To arrive at the correct learning rate, we further tighten the bounds for specific algorithms such as ERM and regularized ERM using the proposed $(\eta, c)$-central condition. We demonstrate through a few examples that, compared with the mutual information bounds previously derived, the new bound improves the convergence rate of the generalization error from $O(\sqrt{1/n})$ to $O(1/n)$ up to the domain divergence. We could further arrive at intermediate rates under the relaxed $(v,c)$-central conditions. 

\item[3.] We extend the results by using other types of divergences such as $\phi$-divergence and Wasserstein distance, which can be tighter than the mutual information bound under mild conditions. Such an extension also allows us to handle more general learning scenarios where the mutual information bound may be vacuous, e.g. when $\mu$ is not absolutely continuous with respect to $\mu'$.
%We point out that our result is not effective for a class of supervised machine learning problems where $\mu$ is \textbf{not} absolute continuous with respect to $\mu'$ as KL divergence between $\mu$ and $\mu'$ goes to infinity and our bound becomes vacuous. To develop appropriate upper bound to handle such scenarios, 
%\item[3.] We specialize the obtained bounds on two specific algorithms: the Gibbs algorithm and the noisy stochastic gradient descent. The results are useful in the sense that the upper bounds can be easily calculated only using parameters from the optimization algorithms. 

%Since the ERM algorithm may overfit in some cases, we are also interested in numerous ERM approximation algorithms with regularization such as Gibbs and noisy stochastic gradient descent methods. As a direct application of the general upper bounds, we give the algorithm-dependent bounds on the generalization error and the excess risk for these two algorithms. 

\item[4.] Finally, we give a few examples of some simple transfer learning problems to validate our proposed bounds. However, in practical scenarios, these bounds are often not directly applicable due to the lack of knowledge of the data distributions. To address this, we propose a boosting-type algorithm called \texttt{InfoBoost} in which the importance weights for source and target data are adjusted adaptively in accordance with information measures. We also conduct several experiments on real datasets, and the empirical results show that, in most cases, our algorithm outperforms the state-of-the-art benchmarks.
\end{itemize}

The outline of this paper is structured as follows. We provided an in-depth review of the literature on the information-theoretic analysis for machine learning and transfer learning in Section~\ref{sec:literature}. Then, we formally formulate the transfer learning problem and give the main results in Section~\ref{sec:main}. We then specialize the bounds on the noisy iterative algorithms and the Gibbs algorithm following a similar intuition in Section~\ref{sec:applications}. In Section~\ref{sec:other_metric}, for scenarios where the mutual information-based bounds are vacuous, we extend the results to other divergences such as the $\phi$-divergence and the Wasserstein distance. Some examples are illustrated in Section~\ref{sec:example} to show the effectiveness of the bounds. Additionally, from a practical perspective, we propose an intuitive algorithm for transfer learning problems in Section~\ref{sec:example} that potentially works in real-world scenarios, inspired by the mutual information bounds. Section~\ref{sec:conclusion} concludes the paper with some remarks. 

\section{Literature Review}\label{sec:literature}
In this section, we take a closer look at the related works that have shaped our understanding of machine learning and transfer learning. We will begin with an exploration of the information-theoretic analysis for traditional machine learning. Considering that the current information-theoretic analysis has a couple of primary limitations, such as the bounds being ineffective for deterministic algorithms and typically showing a slower rate of generalization error w.r.t. the data size, we will also delve into related works that improve the learning rate using various novel approaches.

\subsection{Information-theoretic analysis for machine learning}
Generally speaking, the \emph{generalization error} measures how well a learned model performs on previously unseen data, which is usually characterized by the gap between the training loss and testing loss. The generalization error of a learning algorithm lies in the core analysis of the statistical learning theory, the estimation of which becomes remarkably crucial in machine learning problems. Conventionally, many bounding techniques are proposed with different notions as aforementioned \cite{vapnik1999nature, bousquet_stability_2002,mcallester1999some,xu2012robustness,shalev-shwartz_understanding_2014,dwork2015preserving}. However, most bounds mentioned above are only concerned with the hypothesis or the algorithm solely. For example, VC-dimension methods care about the worst-case bound, which depends only on the hypothesis space. The stability methods only specify the properties of learning algorithms but do not require additional assumptions on hypothesis space. 

Information theory has been demonstrated to not only offer theoretical insights into the generalization error but also steer the intuition behind specific learning algorithms as it is a useful tool that extracts the statistical characterizations of the input data. Recently, \citet{russo_controlling_2016} and \citet{xu_information-theoretic_2017} studied a general statistical learning problem, and the authors give theoretical bounds for striking the balance between the data fit and generalization by controlling the mutual information between the output hypothesis and input data. In contrast to conventional generalization error bounds, such a result hinges on the data distribution, algorithms, and the learned hypothesis. Moreover, the dependence between the input data and the output hypothesis can be regarded as a measure that prompts the hypothesis class and algorithmic stability, hence recovering many other existing results such as VC dimension, algorithmic stability, and differential privacy. Similarly, \citet{Bassily2017} studied learning algorithms that only use a small amount of information from input samples by focusing on the capacity of the algorithm space. Furthermore, the mutual information bounds are improved in \cite{bu_tightening_2019}, and the authors proposed a tighter version with the mutual information between the output hypothesis and single data instance for the generalization error. In \cite{raginsky2016information}, several information-theoretic measures from an algorithmic stability perspective are used to upper bound the generalization error. In contrast to the mutual information, the generalization error bounds based on the Wasserstein distance are proposed in \cite{Lopez2019} and have been extended in \cite{wang2019information} with the total variation distance by exploiting the geometric nature of the Kantorovich-Rubinstein duality theorem. 

\subsection{Improvements on information-theoretic bounds}
However, there are several drawbacks to the information-theoretic bounds for typical machine learning problems. The first is that, for some deterministic algorithms, the mutual information quantities may be infinite and the bound will become vacuous. To tackle the infinity issue, instead of measuring information with the whole dataset, \citet{bu_tightening_2019} propose the bound based on the mutual information between the single instance and the hypothesis, which is finite even for deterministic algorithms. \citet{Negrea2019} propose generalization error bounds based on a subset of the whole dataset chosen uniformly at random. \citet{steinke2020reasoning} and \citet{Haghifam2020} improve the results with the conditional mutual information that is always finite by introducing a set of discrete (binary) random variables. \citet{Rodriguez-Galvez2021} have further tightened the bound using the random subset techniques under the Wasserstein distance. Another drawback is that these bounds are usually hard to estimate if the hypothesis and the data are of high dimensions. To make the quantity estimable, \citet{harutyunyan2021information} derive a novel generalization error bound that measures information with the predictions instead of the hypothesis produced by the training samples, which is significantly easier to estimate. The third drawback is that the bounds are usually sub-optimal, in terms of the convergence rate w.r.t. the sample size. In most of the relevant works, the convergence rate of the expected generalization error in traditional statistical learning problems is in the form of $O(\sqrt{\lambda/{n}})$ where $\lambda$ is some information-theoretic quantities such as the mutual information between the data sample and the learned hypothesis. However, such a learning rate is typically considered to be ``slow", compared to a ``fast rate" of $O(1/n)$ in many learning scenarios. Fast rate conditions are less investigated under the information-theoretic framework. Only a few works are dedicated to it, e.g., \citet{Grunwald2021pac} applies the conditional mutual information \cite{steinke2020reasoning} for the fast rate characterization under the PAC-Bayes framework and the results rely on prior knowledge of the hypothesis space. \citet{wu2020information} proposed the $(\eta, c)$-central condition for fast rate generalization error with the mutual information between the hypothesis and single data instance. \citet{bu2022characterizing} characterize the exact generalization error of the transfer learning under the Gibbs algorithm using the symmetric KL divergence and arrive at the fast rate under mild conditions. 

\subsection{Transfer learning bounds and comparisons}
The transfer learning problem is of particular interest in real-world applications, as in many cases we often have easy access to a substantial amount of labelled (or unlabelled) data from one distribution, on which our learning algorithm trains, but wish to use the learned hypothesis for data coming from a different distribution, from which we only have limited data for training. In practice, there will be perturbations or shifts in the distributions of the training and testing data, or obtaining the training data for some tasks can be very expensive and difficult such as robotics \cite{cote2017transfer, van2019sim}, medical images \cite{erickson2017machine, cheplygina2017transfer, morid2021scoping} and rare language translation \cite{zoph2016transfer, kocmi2018trivial}. Popular empirical risk minimization (ERM)--based methods usually minimize a convex combination of source and target data. The performance of the ERM has been initiated and investigated in works such as \cite{blitzer_learning_2008, mansour2009domain,ben-david_theory_2010,long_adaptation_2013}. These studies generally offer different bounds on the generalization error, contingent on specific domain divergences between source and target distributions. For example, the high-probability bounds for generalization error based on the error distance are presented in works such as Theorem 2 in \cite{blitzer_learning_2008}, Theorem 3 in \cite{ben-david_theory_2010}, and Theorem 2 in \cite{long_adaptation_2013}. Other works, such as Theorem 8 in \cite{mansour2009domain} and Theorem 3.7 in \cite{zhang_bridging_2019}, place their results within the context of the $\mathcal{A}$-discrepancy and disparity discrepancy with the Rademacher complexity of the hypothesis space. However, such discrepancies are typically linked to the hypothesis space only as they assess the largest discrepancy between two domains among all possible hypotheses given certain loss functions. It is also worth noting that these bounds are usually worst-case bounds as they work for any hypothesis in the hypothesis space. In specific learning regimes, such as in \cite{dai_boosting_2007}, the researchers investigate the boosting type transfer learning algorithm, introducing a learning bound in Theorem 3 that hinges on the iteration number and target sample error. However, this bound is closely tied to the chosen hypothesis space and does not capture the value of the source data. \citet{kuzborskij2013stability} consider the scenario in that only the source hypothesis induced from the source data is available. The authors conduct an algorithmic stability analysis on a class of hypothesis transfer learning problems with the regularized least squares algorithm, and the result suggests the relatedness of the source and target domains (e.g., how the source hypothesis performs on the target domains) determines the effectiveness of the transfer. \citet{germain2013pac} take the first attempt at the transfer learning problem under the PAC-Bayesian framework and provide a novel pseudo-distance on domain distributions, which leverages the idea from \cite{ben-david_theory_2010,mansour2009domain} by changing the pointwise disagreement to an averaging disagreement to fit into the PAC-Bayesian analysis. 

In our work, we introduce an information-theoretic framework for the ERM setup that presents multiple advantages over previous results. Specifically, our suggested bounds delve deeply into the dependence among data distribution, output hypothesis, and the algorithm itself. Such a bound may, in fact, be tighter than traditional bounds because the relationship between the dataset and the hypothesis can be seen as a metric that encourages algorithmic stability.  Our bound also contains the KL divergence between the source and target domains, effectively capturing the domain divergence. Moreover, whereas the majority of previous bounds exhibit sublinear convergence w.r.t. the sample size up to the domain divergence term, our bound can provide the correct linear convergence rate for some algorithms.

\section{Problem formulation and main results}\label{sec:main}
We consider an instance space $\mathcal Z$, a hypothesis space $\mathcal W$, and a non-negative loss function $\ell: \mathcal W\times\mathcal Z\mapsto \mathbb R^+$. Let $\mu$  and $\mu'$ be two probability distributions defined on $\mathcal Z$, and assume that $\mu$ is absolutely continuous with respect to $\mu'$. In the sequel, the distribution $\mu$ is referred to as the \textit{source distribution}, and $\mu'$ as the \textit{target distribution}. We are given a set of training data with size $n$. More precisely, for a fixed number $\beta\in[0,1)$, we assume that $\beta n$ is an integer and the samples $S'=\{Z'_1,\ldots, Z'_{\beta n}\}$ are drawn IID from the target distribution $\mu'$, and the samples $S=\{Z_{\beta n+1},\ldots, Z_{n}\}$ are drawn IID from the source distribution $\mu$.  

In the setup of transfer learning, a learning algorithm is a (randomized) mapping from the training data $S, S'$ to a hypothesis $w\in\mathcal W$, characterized by a conditional distribution $P_{W|SS'}$, with the goal to find a hypothesis $w$ that minimizes the population risk with respect to the \textit{target distribution}
\begin{align}
L_{\mu'}(w):=\Esub{Z'\sim\mu'}{\ell(w, Z')},
\end{align}
where $Z'$ is distributed according to $\mu'$.  Notice that $\beta=0$ corresponds to the important case when we do not have any samples from the target distribution. Obviously, $\beta=1$ takes us back to the classical setup where training data comes from the same distribution as test data, which is not the focus of this paper. We call $(\mu', \mu, \ell, \mathcal{W}, \mathcal{A})$ as a transfer learning tuple.

\subsection{Empirical risk minimization}\label{subsec:erm}
In this section, we focus on one particular  \textit{empirical risk minimization} (ERM) algorithm. For a hypothesis $w\in\mathcal W$, the empirical risk of $w$ on the a training sequence $\tilde S:=\{Z_1,\ldots, Z_m\}$ is defined as
\begin{align}
\hat L(w,\tilde S):=\frac{1}{m}\sum_{i=1}^m\ell(w,Z_i).
\end{align}
Given samples $S$ and $S'$ from both distributions, it is natural to form an empirical risk function as a convex combination of the empirical risk induced by $S$ and $S'$  defined as
\begin{align}
\hat L_{\alpha}(w,S,S')&:=\frac{\alpha}{\beta n}\sum_{i=1}^{\beta n}\ell(w,Z'_i) \nonumber \\
&+ \frac{1-\alpha}{(1-\beta)n}\sum_{i=\beta n+1}^{n}\ell(w,Z_i) 
\label{eq:er}
\end{align}
for some weight parameter $\alpha\in[0,1]$ to be determined. We will use $\hat{L}_{\alpha}(w)$ interchangeably for $\hat L_{\alpha}(w,S,S')$ to simplify the notation, and then we define the ERM solution by
\begin{align}
w_{\ERM} :=\text{argmin}_w \hat L_{\alpha}(w,S,S').
\end{align}
Accordingly, we define the optimal hypothesis with respect to the distribution $\mu'$ as
\begin{align}
w^*=\text{argmin}_{w\in\mathcal W} L_{\mu'}(w).
\end{align}
We are interested in two quantities for any general machine learning algorithms, including the ERM algorithm. The first one is the  \textit{generalization error} defined as
\begin{align}
\gen(w, S, S'):= L_{\mu'}(w)-\hat L_{\alpha}(w,S,S'),
\label{eq:generalization_error}
\end{align}
namely the difference between the minimized empirical risk and the population risk of some hypothesis under the target distribution. We are also interested in the \textit{excess risk} defined as 
\begin{align}
R_{\mu'}(w)=L_{\mu'}(w)-L_{\mu'}(w^*), \label{eq:excess_expression}
\end{align}
which is the difference between the population risk of $w$ compared to that of the optimal hypothesis $w^*$. Notice that the excess risk of the ERM solution is related to the generalization error via the following upper bound:
\begin{align}
&L_{\mu'}(w_{\ERM})-L_{\mu'}(w^*)\nonumber\\
&=L_{\mu'}(w_{\ERM})-\hat L_{\alpha}(w_{\ERM},S,S')+\hat L_{\alpha}(w_{\ERM},S,S')\nonumber\\
&-\hat L_{\alpha}(w^*,S,S')+\hat L_{\alpha}(w^*,S,S')\nonumber\\
&-L_{\alpha}(w^*) +L_{\alpha}(w^*)-L_{\mu'}(w^*)\nonumber\\
&\leq \gen(w_{\ERM}, S, S')+\hat L_{\alpha}(w^*,S,S')-L_{\alpha}(w^*)\nonumber\\
& \quad +(1-\alpha)(L_{\mu}(w^*)-L_{\mu'}(w^*)),
\label{eq:excess_decompose}
\end{align}
where we have used the fact $\hat L_{\alpha}(w_{\ERM})-\hat L_{\alpha}(w^*)\leq 0$ by the definition of $w_{\ERM}$. For any $w\in\mathcal W$, the quantity $L_{\alpha}(w)$ in the above expression is defined as
\begin{align}
&L_{\alpha}(w):=(1-\alpha) L_{\mu}(w)+\alpha L_{\mu'}(w) \nonumber \\
&=(1-\alpha)\Esub{Z\sim\mu}{\ell(w,Z)}+\alpha \Esub{Z'\sim\mu'}{\ell(w,Z')}.
\end{align}

\subsection{Upper bound on the generalization error}
We consider the hypothesis $W$ as a random variable induced by the random samples $S, S'$ with some algorithm $\mathcal{A}$, characterized by a conditional distribution $P_{W|SS'}$. We will first study the expectation of the generalization error
\begin{align}
&\Esub{WSS'}{\gen(W, S, S')} \nonumber\\
&=\Esub{WSS'}{L_{\mu'}(W)-\hat L_{\alpha}(W,S, S')},
\label{eq:generalization_error_expect}
\end{align}
where the expectation is taken with respect to the distribution $P_{WSS'}$ defined as
\begin{align}
&P_{WSS'}(w,S,S') \nonumber\\
&=P_{W|SS'}(w|S,S')\prod_{i=1}^{\beta n} \mu'(z'_i)\prod_{i=\beta n+1}^{n}\mu(z_i).
\label{eq:excess}
\end{align}
Furthermore, we use $P_W$ to denote the marginal distribution of $W$ induced by the joint distribution $P_{WSS'}$. Following the characterization used in \cite{bu_tightening_2019}, the following theorem provides an upper bound on the expectation of the generalization error in terms of the mutual information between individual samples $Z_i$ and the hypothesis $W$ induced by a certain algorithm $P_{W|SS'}$, as well as the KL-divergence between the source and target distributions. As pointed out in \cite{bu_tightening_2019},  using mutual information between the hypothesis and individual samples $I(W;Z_i)$, in general, gives a tighter upper bound than using  $I(W;S)$.
\begin{theorem}[Generalization error of generic algorithms]
Assume that the hypothesis $W$ is distributed over $P_{W}$ induced by some algorithm, and the cumulant generating function of the random variable $\ell(W, Z)-\E{\ell(W,Z)}$ is upper bounded by $\psi(\lambda)$ in the interval $(b_{-},b_{+})$ under the product distribution $P_W\otimes\mu'$ for some $b_{-}<0$ and $b_{+}>0$.  Then for any $\beta>0$, the expectation of the  generalization error  in (\ref{eq:generalization_error_expect}) is upper bounded as
\begin{align*}
& \Esub{WSS'}{\gen(W, S, S')}\leq  \frac{\alpha}{\beta n}\sum_{i=1}^{\beta n}\psi^{*-1}_{-}(I(W;Z'_i)) \\
& +\frac{(1-\alpha)}{(1-\beta)n}\sum_{i=\beta n+1}^{n}\psi^{*-1}_{-}(I(W;Z_i)+D(\mu||\mu')), \\
& -\Esub{WSS'}{\gen(W, S, S')}\leq \frac{\alpha}{\beta n}\sum_{i=1}^{\beta n}\psi^{*-1}_{+}(I(W;Z'_i)) \\
& +\frac{(1-\alpha)}{(1-\beta)n}\sum_{i=\beta n+1}^{n}\psi^{*-1}_{+}(I(W;Z_i)+D(\mu||\mu')),
\end{align*}
where we define
\begin{align*}
\psi^{*-1}_{-}(x) &:=\inf_{\lambda\in[0,-b_{-})}\frac{x+\psi(-\lambda)}{\lambda}, \\
\psi^{*-1}_{+}(x) &:=\inf_{\lambda\in[0,b_{+})}\frac{x+\psi(\lambda)}{\lambda}.
\end{align*}
\label{thm:exp_gen}
\end{theorem}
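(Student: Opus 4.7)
The plan is to follow the Russo--Zou/Xu--Raginsky variational route, as refined by \cite{bu_tightening_2019} to per-sample mutual information, but carefully pair each empirical sample with the \emph{target} population $\mu'$ and absorb the resulting change of measure on the source samples into an additive KL correction.

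First I would use linearity of expectation to split the generalization error. Note $L_{\mu'}(W)=\alpha L_{\mu'}(W)+(1-\alpha)L_{\mu'}(W)$, so
\begin{align*}
\Esub{WSS'}{\gen(W,S,S')}
&=\frac{\alpha}{\beta n}\sum_{i=1}^{\beta n}\Bigl(\Esub{P_W\otimes\mu'}{\ell(W,Z)}-\Esub{P_{W,Z_i}}{\ell(W,Z_i)}\Bigr) \\
&\quad +\frac{1-\alpha}{(1-\beta)n}\sum_{i=\beta n+1}^{n}\Bigl(\Esub{P_W\otimes\mu'}{\ell(W,Z)}-\Esub{P_{W,Z_i}}{\ell(W,Z_i)}\Bigr),
\end{align*}
where $P_{W,Z_i}$ is the joint marginal of $(W,Z_i)$ and for $i\le\beta n$ the $Z_i$-marginal is $\mu'$, while for $i>\beta n$ it is $\mu$.

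Second, for each target index $i\le\beta n$ I would apply the Donsker--Varadhan representation of KL (or the equivalent Gibbs inequality used in \cite{xu_information-theoretic_2017,bu_tightening_2019}): for any $\lambda\in[0,-b_-)$,
\begin{align*}
-\lambda\Bigl(\Esub{P_{W,Z_i}}{\ell(W,Z_i)}-\Esub{P_W\otimes\mu'}{\ell(W,Z)}\Bigr)
\le D(P_{W,Z_i}\|P_W\otimes\mu')+\psi(-\lambda)
= I(W;Z_i)+\psi(-\lambda).
\end{align*}
Dividing by $\lambda$ and taking the infimum yields the bound $\psi^{*-1}_-(I(W;Z_i))$ for each target term.

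Third, for each source index $i>\beta n$ the reference measure in the divergence must still be $P_W\otimes\mu'$ (because the population risk is under $\mu'$), whereas $Z_i\sim\mu$. The key algebraic step is the KL chain rule identity
\begin{align*}
D(P_{W,Z_i}\|P_W\otimes\mu')
= D(P_{W,Z_i}\|P_W\otimes\mu)+\Esub{Z_i\sim\mu}{\log\tfrac{d\mu}{d\mu'}(Z_i)}
= I(W;Z_i)+D(\mu\|\mu'),
\end{align*}
which is well defined thanks to the absolute continuity assumption $\mu\ll\mu'$ from Section~\ref{sec:main}. The same Donsker--Varadhan calculation as above, with this enlarged divergence on the right-hand side, produces $\psi^{*-1}_-(I(W;Z_i)+D(\mu\|\mu'))$ for each source term. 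Summing the target and source contributions with the stated weights gives the first inequality; the second (lower) bound follows by repeating the argument with $+\lambda\in[0,b_+)$ in the Gibbs step, replacing $\psi(-\lambda)$ by $\psi(\lambda)$ and $\psi^{*-1}_-$ by $\psi^{*-1}_+$.

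The main obstacle, and the only nontrivial step, is the third one: one has to resist the temptation to compare the source empirical risk with $L_\mu(W)$ (which would only yield $I(W;Z_i)$), and instead reference $L_{\mu'}(W)$ directly so that the distribution mismatch surfaces as an additive $D(\mu\|\mu')$ \emph{inside} the concave rate function $\psi^{*-1}_-$. Everything else reduces to the standard cumulant/Legendre manipulations already used in the i.i.d.\ setting.
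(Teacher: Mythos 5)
Your proposal is correct and follows essentially the same route as the paper's proof: the per-sample decomposition pairing every empirical term with $L_{\mu'}(W)$, the Donsker--Varadhan step with reference measure $P_W\otimes\mu'$, and the identity $D(P_{WZ_i}\|P_W\otimes\mu')=I(W;Z_i)+D(P_{Z_i}\|\mu')$ (which the paper writes directly and you derive via the chain rule) are all the same ingredients. The only cosmetic difference is that the paper keeps $L_{\mu'}(W)-\ell(W,Z_i)$ inside a single expectation rather than splitting it into two marginal expectations, but the optimization over $\lambda$ and the resulting bounds are identical.
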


The proof can be found in Appendix~\ref{apd:main_theorem}. Notice that the bound above is not specific to the ERM algorithm but applicable to any hypothesis generated by a learning algorithm that has a bounded cumulant generating function.  Comparing the derived results with other transfer learning bounds such as Theorem 2 in \cite{blitzer_learning_2008}, and Theorem 3 in \cite{ben-david_theory_2010} that are applied in worst-case scenarios for any hypothesis, our proposed bound suggests that the generalization error inherently depends on the mutual information $I(W;Z_i)$ data distribution, the output hypothesis, and the algorithm itself, which may lead to a tighter characterization as it also takes the algorithm $P_{W|SS'}$ into account. From a stability point of view, good algorithms (ERM, for example) should ensure that the mutual information $I(W;Z_i)$ vanishes as $n\rightarrow\infty$, which has similar insights as \cite{bousquet_stability_2002} from the point view that a single instance should not affect the output hypothesis much. On the other hand, the domain shift is reflected in the KL-divergence $D(\mu\| \mu')$, as this term does not vanish when $n$ goes to infinity. The KL divergence is only dependent on the data distributions and irrelevant to the loss function and hypothesis. This is in contrast to other metrics that may depend on the hypothesis space and the prediction functions such as discrepancy distance \cite{mansour2009domain} and $\mathcal{H}\Delta \mathcal{H}$-divergence\cite{ben-david_theory_2010}.

\begin{remark}
It is natural to consider the problem of minimizing the upper bound with respect to the parameter $\alpha$ as it mediates the balance between performance on the source and target domains. This is, however, a non-trivial problem as the output hypothesis $W$ implicitly involves $\alpha$, and optimization of bound w.r.t. $\alpha$ is challenging. In principle, $\alpha$ should depend on the source and target data sizes and distribution differences between the two domains. Specifically, when the target domain data is abundant and the source and target domains are significantly different, $\alpha$ can be biased towards the target domain (i.e., $\alpha$ should be closer to 1). This is because the model can rely more on the target domain data for better learning performance. Conversely, when the target domain data is limited and the source and target domains are closely related, $\alpha$ should be biased towards the source domain (i.e., $\alpha$ should be closer to 0). This is because the model can benefit more from leveraging knowledge from the source domain to avoid overfitting the limited target data. 

The optimal value of $\alpha$ often requires empirical validation. Notice that if we care about the generalization error with respect to the population risk under the target distribution for $n\rightarrow\infty$ (the number of samples $S'$ from the target distribution also goes to infinity), the intuition says that we should choose $\alpha=1$, i.e. only using $S'$  from the target domain in the training process. On the other hand, if we only have limited data samples, $\alpha$ can be set to be approximate as $\beta$ as suggested in \cite{zhang2012generalization,ben-david_theory_2010} that this choice is shown to achieve a tighter bound empirically. Overall, we suggest that non-asymptotically, $\alpha$ should approach $1$ with the target sample size $\beta n$ increasing, say, $\alpha = 1 - O(\frac{1}{\beta n})$. In real practice, techniques such as cross-validation or grid search could also be used to tune $\alpha$ by evaluating model performance across a range of values. 
\end{remark}

The result in Theorem \ref{thm:exp_gen} does not cover the case $\beta=0$ (no samples from the target distribution).  However, it is easy to see that in this case, we should choose $\alpha=0$ in our ERM algorithm, and a corresponding upper bound is given as in the following corollary under the generic hypothesis. 

\begin{corollary}[Generalization error with source only]
Let $\beta=0$ so that we only have samples $S$ from the source distribution $\mu$. Let $P_{ W|S}$ be the conditional distribution characterizing the learning algorithm, which maps samples $S$ to a hypothesis $ W$. Under the same assumption as in Theorem \ref{thm:exp_gen},  the expected generalization error of $ W$ is upper bounded as
\begin{small}
\begin{align*}
\Esub{ WS}{\gen(W, S)}\leq \frac{1}{n} \sum_{i=1}^{n}\psi^{*-1}_{-}(I( W;Z_i)+D(\mu||\mu')),\\
-\Esub{ WS}{\gen(W, S)}\leq \frac{1}{n} \sum_{i=1}^{n}\psi^{*-1}_{+}(I( W;Z_i)+D(\mu||\mu')).
\end{align*}
\end{small}
\label{coro:gen_beta0}
\end{corollary}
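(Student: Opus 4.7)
The plan is to treat this corollary as a direct specialization of Theorem~\ref{thm:exp_gen}. With $\beta=0$ there are no target samples, so it is natural to set $\alpha=0$ in the convex combination $\hat L_\alpha$, which reduces the empirical risk to $\hat L(W,S)=\frac{1}{n}\sum_{i=1}^n \ell(W,Z_i)$ with each $Z_i\sim\mu$. One could in principle just invoke Theorem~\ref{thm:exp_gen} with $(\alpha,\beta)=(0,0)$ and read off the bound, but since Theorem~\ref{thm:exp_gen} was stated for $\beta>0$, I would re-run the short derivation to make the argument self-contained.

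First I would rewrite the expected generalization error as a per-sample average:
\begin{align*}
\Esub{WS}{L_{\mu'}(W)-\hat L(W,S)}=\frac{1}{n}\sum_{i=1}^{n}\Esub{WZ_i}{L_{\mu'}(W)-\ell(W,Z_i)}.
\end{align*}
For each $i$, I would apply the variational representation of the KL divergence (equation~(\ref{eq:variational}) in the proof of Theorem~\ref{thm:exp_gen}) with $P=P_{WZ_i}$, $Q=P_W\otimes\mu'$, and test function $f=\lambda\ell(W,Z_i)$ for $\lambda\in[b_-,0]$. Using the cumulant bound $\log \mathbb{E}_{P_W\otimes\mu'}[e^{\lambda(\ell(W,Z)-\mathbb E[\ell(W,Z)])}]\leq\psi(\lambda)$, this yields
\begin{align*}
\Esub{WZ_i}{L_{\mu'}(W)-\ell(W,Z_i)}\leq -\frac{1}{\lambda}\bigl(D(P_{WZ_i}\,\|\,P_W\otimes\mu')+\psi(\lambda)\bigr).
\end{align*}

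The key distinction from Theorem~\ref{thm:exp_gen} appears when decomposing the divergence: since now $Z_i\sim\mu$ (not $\mu'$) for every $i$, the chain rule gives
\begin{align*}
D(P_{WZ_i}\,\|\,P_W\otimes\mu')=I(W;Z_i)+D(P_{Z_i}\,\|\,\mu')=I(W;Z_i)+D(\mu\,\|\,\mu'),
\end{align*}
so the domain-shift penalty $D(\mu\|\mu')$ enters uniformly for every index $i$. Optimizing over $\lambda\in[0,-b_-)$ gives $\psi^{*-1}_{-}(I(W;Z_i)+D(\mu\|\mu'))$, and averaging over $i=1,\dots,n$ yields the first inequality. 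The second inequality follows by the symmetric argument on $\lambda\in[0,b_+)$ using the upper branch of the cumulant bound.

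There is no real obstacle here: the proof is essentially a bookkeeping adaptation of Theorem~\ref{thm:exp_gen}. The only point that requires minor care is confirming that setting $\alpha=0$ is legitimate and that the $\beta=0$ case, excluded from the theorem's hypothesis, is recovered cleanly once the target-sample terms disappear from the derivation.
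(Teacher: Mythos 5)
Your proposal is correct and follows essentially the same route as the paper: the paper's own proof likewise reduces the generalization error to the per-sample average $\frac{1}{n}\sum_{i=1}^n \Esub{WZ_i}{L_{\mu'}(W)-\ell(W,Z_i)}$ and then invokes the argument from Theorem~\ref{thm:exp_gen}, which is exactly the variational/Donsker--Varadhan step with the decomposition $D(P_{WZ_i}\|P_W\otimes\mu')=I(W;Z_i)+D(\mu\|\mu')$ that you spell out. The only difference is that you make explicit the details the paper leaves implicit.
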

The proof of this result is given in Appendix \ref{proof:corollary_beta0}. If the loss function $\ell(W, Z)$ is $r^2$-subgaussian, namely 
\begin{align*}
\log \E{ e^{\lambda(\ell(W,Z)-\E{\ell(W,Z))}}}\leq \frac{r^2\lambda^2}{2}
\end{align*}
for any $\lambda\in\mathbb R$ under the distribution $P_W\otimes \mu'$, the bound in Theorem \ref{thm:exp_gen} can be further simplified with $\psi^{*-1}(y) =\sqrt{2r^2y}$. In particular, if the loss function takes value in $[a,b]$, then $\ell(W,Z)$ is $\frac{(b-a)^2}{4}$-subgaussian. We give the following corollary for the subgaussian loss function.

\begin{corollary}[Generalization error for subgaussian loss functions]
Let $P_W$ be the marginal distribution induced by $S,S'$ and $P_{W|SS'}$ for some algorithm. If $\ell(W,Z)$ is $r^2$-subgaussian under the distribution $P_W\otimes \mu'$, then the expectation of the generalization error is upper bounded as
\begin{align}
&|\Esub{WSS'}{\gen(W, S, S')}|\leq  \frac{\alpha\sqrt{2r^2}}{\beta n}\sum_{i=1}^{\beta n}\sqrt{I(W;Z'_i)} \nonumber  \\
&+\frac{(1-\alpha)\sqrt{2r^2}}{(1-\beta)n}\sum_{i=\beta n+1}^{n}\sqrt{ (I(W;Z_i)+D(\mu||\mu'))}.
\end{align}
If $\beta=0$, for any hypothesis $W$ induced by $S$ and a learning algorithm $P_{W|S}$, we have the upper bound
\begin{small}
\begin{align}
|\Esub{WS}{\gen(W, S)}|\leq \frac{\sqrt{2r^2}}{n} \sum_{i=1}^{n}\sqrt{I(W;Z_i)+D(\mu||\mu')}.
\label{eq:subgaussian_beta0}
\end{align}
\end{small}
\label{coro:general_bound}
\end{corollary}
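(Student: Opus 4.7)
The plan is to derive both inequalities as a direct specialization of Theorem~\ref{thm:exp_gen} (for the $\beta > 0$ case) and of Corollary~\ref{coro:gen_beta0} (for the $\beta = 0$ case), using the fact that for an $r^2$-subgaussian random variable the cumulant generating function admits the uniform quadratic upper bound $\psi(\lambda) = r^2 \lambda^2 / 2$ valid for every $\lambda \in \mathbb{R}$. In particular, the interval $(b_-, b_+)$ in Theorem~\ref{thm:exp_gen} can be taken as all of $\mathbb{R}$, so there is no need to track constraints on $\lambda$.

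The main computational step is to evaluate the Legendre-type inverses $\psi_{-}^{*-1}$ and $\psi_{+}^{*-1}$ in closed form for this choice of $\psi$. For $\psi_{-}^{*-1}(x) = \inf_{\lambda \geq 0}\bigl(x + r^2 \lambda^2 / 2\bigr)/\lambda$, differentiating the objective in $\lambda$ and setting the result to zero yields the minimizer $\lambda^{\ast} = \sqrt{2x / r^2}$, and substituting back gives $\psi_{-}^{*-1}(x) = \sqrt{2 r^2 x}$. Since $\psi$ is even, the same calculation on the positive side delivers $\psi_{+}^{*-1}(x) = \sqrt{2 r^2 x}$. Plugging these expressions into the two inequalities of Theorem~\ref{thm:exp_gen}, the upper and lower bounds on $\mathbb{E}[\gen(W,S,S')]$ coincide, which is precisely what lets us collapse them into the single absolute-value statement in the corollary.

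For the $\beta = 0$ case, I would apply the very same substitution $\psi_{\pm}^{*-1}(x) = \sqrt{2 r^2 x}$ inside Corollary~\ref{coro:gen_beta0}, again combining the two one-sided inequalities into an absolute-value bound. There is no genuine obstacle here, since the subgaussianity assumption is stated exactly under the product measure $P_W \otimes \mu'$ that Theorem~\ref{thm:exp_gen} requires; the only thing to be mildly careful about is that the extra $D(\mu \| \mu')$ appearing in the source-sample terms has already been absorbed into the $\psi^{*-1}$ arguments by Theorem~\ref{thm:exp_gen}, so it is simply carried through the square root in the final expression.
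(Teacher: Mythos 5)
Your proposal is correct and follows essentially the same route as the paper, which also obtains the corollary by setting $\psi(\lambda)=r^2\lambda^2/2$ with $b_-=-\infty$, $b_+=+\infty$ in Theorem~\ref{thm:exp_gen} and Corollary~\ref{coro:gen_beta0} and using $\psi^{*-1}_{\pm}(x)=\sqrt{2r^2x}$. Your explicit optimization over $\lambda$ to derive $\psi^{*-1}_{\pm}(x)=\sqrt{2r^2x}$ is exactly the computation the paper leaves implicit, and your observation that the two one-sided bounds coincide to give the absolute-value form is the right justification.
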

The above result follows directly from Theorem \ref{thm:exp_gen} and Corollary  \ref{coro:gen_beta0} by noticing that we can set $\psi(\lambda)=\frac{r^2\lambda^2}{2}, b_{-}=-\infty, b_{+}=\infty$ with the assumption that $\ell(W,Z)$ is $r^2$-subgaussian.

\begin{remark}
Using the chain rule of mutual information and the fact that $Z_i$'s are IID, we can relax the upper bound in (\ref{eq:subgaussian_beta0}) as 
\begin{align}
\Esub{WS}{\gen(W, S)}\leq \sqrt{2r^2\left(\frac{I( W;S)}{n} +D(\mu||\mu')\right)},
\end{align}
which recovers the result in \cite{xu_information-theoretic_2017} if $\mu=\mu'$. Moreover, we see that the effect of the ``domain shift"  is simply captured by the KL divergence between the source and the target distribution. 
\end{remark}
%Using the preceding results on the expectation of generalization error, we can also give the following high probability result on the generalization error.
%\begin{proposition}[Generalization error of ERM in high probability]
%Let $P_W$ be the marginal distribution induced by $S,S'$ and $P_{W|SS'}$ for ERM algorithm. Assume that $\ell(W,Z)$ is $r^2$-subgaussian  under the distribution $P_W\otimes \mu'$, then for any $\epsilon>0$ and $\delta>0$, there exists an $n_0$ (depending on $\delta$ and $\epsilon$) such that for all $n\geq n_0$, the following inequality holds with probability at least $1-\delta$,
%\begin{align*}
%\gen(W_{\ERM},S,S')\leq \frac{\alpha\sqrt{2r^2}}{\beta n}\sum_{i=1}^{\beta n}\sqrt{I(W_{\ERM};Z_i)}+\frac{(1-\alpha)\sqrt{2r^2}}{(1-\beta)n}\sum_{i=\beta n+1}^{n}\sqrt{ (I(W_{\ERM};Z_i)+D(\mu||\mu'))}+\epsilon.
%\end{align*}
%Furthermore in the case when $\beta=0$, the inequality becomes
%\begin{align*}
%\gen(W_{\ERM},S')\leq \frac{\sqrt{2r^2}}{n}\sum_{i=1}^{n}\sqrt{ (I(W_{\ERM};Z_i)+D(\mu||\mu'))}+\epsilon.
%\end{align*}
%\label{prop:high_probability}
%\end{proposition}
%The above result states that for $n$ large enough, the generalization error concentrates around its mean with high probability.  The proof is very similar to that of \citet[Theorem 3]{xu_information-theoretic_2017}, and is omitted here. 
\begin{remark}
Even though we focus on the supervised learning setups for transfer learning with the convexly combined empirical loss. Such a framework can be easily extended to various different transfer learning setups such as multi-source transfer learning problems, pre-trained hypothesis setups, and unsupervised setups. To maintain the paper focus and prevent excessive use of notations, we have placed all detailed results and discussions in Appendix~\ref{apd:MSTLP},~\ref{apd:pre-train} and~\ref{apd:unlabel}, respectively.
\end{remark}

\subsection{Upper bound on the excess risk of ERM}
In this section, we focus on the case  $\beta>0$ and give a data-dependent upper bound on the excess risk defined in (\ref{eq:excess_expression}). To do this, we first define a distance quantity between the two divergent distributions as
\begin{align}
d_{\mathcal W}(\mu, \mu')=\sup_{w\in\mathcal W} |L_{\mu}(w)- L_{\mu'}(w)|.
\label{eq:empirical_distance}
\end{align}
The following theorem gives an upper bound on the excess risk.
% \begin{theorem}[Excess risk of ERM]
% Let $P_W$ be the marginal distribution induced by $S,S'$ and $P_{W|SS'}$ for the ERM algorithm, assume the loss function $\ell(W, Z)$ is $r^2$-subgaussian under the distribution $P_{W} \otimes \mu'$. Then for any $\epsilon>0$ and $\delta>0$, there exists an $n_0$ (depending on $\delta$ and $\epsilon$) such that for all $n\geq n_0$, the following inequality holds with probability at least $1-\delta$  (over the randomness of samples and the learning algorithm),
% \begin{align}
% L_{\mu'}(w_{\ERM})-L_{\mu'}(w^*) \leq & \frac{\alpha\sqrt{2r^2}}{\beta n}\sum_{i=1}^{\beta n}\sqrt{I(W_{\ERM};Z_i)} +\frac{(1-\alpha)\sqrt{2r^2}}{(1-\beta)n}\sum_{i=\beta n+1}^{n}\sqrt{ (I(W_{\ERM};Z_i)+D(\mu||\mu'))}\nonumber \\
% &+ \sqrt{\frac{\alpha^2}{\beta} + \frac{(1-\alpha)^2}{(1-\beta)}}\sqrt{\frac{2r^2\ln{\frac{2}{\delta}}}{n}} +(1-\alpha)d_{\mathcal W}(\mu, \mu')+\epsilon.
% \label{eq:excess_ERM}
% \end{align}
% Furthermore in the case when $\beta=0$ (no samples from the distribution $\mu'$),  the inequality becomes
% \begin{align}
% L_{\mu'}(w_{\ERM})- & L_{\mu'}(w^*) \leq \sqrt{\frac{2r^2\log \frac{2}{\delta}}{n}} + \frac{\sqrt{2r^2}}{n}\sum_{i=1}^{n}\sqrt{ (I(W_{\ERM};Z_i)+D(\mu||\mu'))} + d_{\mathcal W}(\mu,\mu') + \epsilon. \label{eq:excess_erm_bound_mi}
% \end{align}
% \label{thm:excess}
% \end{theorem}

\begin{theorem}[Excess risk of ERM]
Let $P_W$ be the marginal distribution induced by $S,S'$ and $P_{W|SS'}$ for the ERM algorithm, assume the loss function $\ell(W, Z)$ is $r^2$-subgaussian under the distribution $P_{W} \otimes \mu'$. Then the following inequality holds.
\begin{align}
&\mathbb{E}_{W}[R_{\mu'}(W_{\ERM})]\leq  \frac{\alpha\sqrt{2r^2}}{\beta n}\sum_{i=1}^{\beta n}\sqrt{I(W_{\ERM};Z'_i)}\nonumber \\
&+\frac{(1-\alpha)\sqrt{2r^2}}{(1-\beta)n}\sum_{i=\beta n+1}^{n}\sqrt{ (I(W_{\ERM};Z_i)+D(\mu||\mu'))}\nonumber \\
&+ (1-\alpha)d_{\mathcal W}(\mu, \mu').
\label{eq:excess_ERM}
\end{align}
Furthermore in the case when $\alpha = \beta=0$ (no samples from the target distribution $\mu'$),  the inequality becomes
\begin{small}
\begin{align}
\mathbb{E}_{W}[R_{\mu'}(W_{\ERM})] \leq  &\frac{\sqrt{2r^2}}{n}\sum_{i=1}^{n}\sqrt{ (I(W_{\ERM};Z_i)+D(\mu||\mu'))} \nonumber \\
&+ d_{\mathcal W}(\mu,\mu'). \label{eq:excess_erm_bound_mi}
\end{align}
\end{small}
\label{thm:excess}
\end{theorem}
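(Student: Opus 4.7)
The plan is to start directly from the decomposition already written in equation (\ref{eq:excess_decompose}), take expectation on both sides with respect to $P_{WSS'}$, and bound the three resulting terms separately. Writing out the decomposition, we have
\begin{align*}
\mathbb{E}[R_{\mu'}(W_{\ERM})] \leq \mathbb{E}[\gen(W_{\ERM}, S, S')] + \mathbb{E}\!\left[\hat L_{\alpha}(w^*,S,S')-L_{\alpha}(w^*)\right] + (1-\alpha)\bigl(L_{\mu}(w^*)-L_{\mu'}(w^*)\bigr).
\end{align*}
The first term will be controlled directly by Corollary \ref{coro:general_bound}, which gives exactly the two mutual information sums appearing on the right-hand side of (\ref{eq:excess_ERM}). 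So the remaining work is only to deal with the other two terms.

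For the middle term, the key observation is that $w^*=\argmin_{w\in\mathcal W} L_{\mu'}(w)$ is a deterministic quantity that does not depend on the samples. Therefore the expectation distributes into the sums and, using $\E{\ell(w^*,Z_i)}=L_{\mu'}(w^*)$ for $i\le \beta n$ and $\E{\ell(w^*,Z_i)}=L_{\mu}(w^*)$ for $i>\beta n$, we get $\E{\hat L_\alpha(w^*,S,S')}=(1-\alpha)L_\mu(w^*)+\alpha L_{\mu'}(w^*)=L_\alpha(w^*)$, so the middle term is identically zero. For the last term, apply the definition (\ref{eq:empirical_distance}): $|L_\mu(w^*)-L_{\mu'}(w^*)|\le \sup_{w\in\mathcal W}|L_\mu(w)-L_{\mu'}(w)|=d_{\mathcal W}(\mu,\mu')$, giving the $(1-\alpha)d_{\mathcal W}(\mu,\mu')$ term.

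For the special case $\alpha=\beta=0$, I would proceed analogously by noting that the decomposition (\ref{eq:excess_decompose}) specializes to
\begin{align*}
L_\mu(W_{\ERM})-L_{\mu'}(w^*) \le \gen(W_{\ERM},S) + \bigl(\hat L(w^*,S)-L_\mu(w^*)\bigr)+\bigl(L_\mu(w^*)-L_{\mu'}(w^*)\bigr),
\end{align*}
and invoke Corollary \ref{coro:gen_beta0} under the subgaussian assumption (equivalently the $\beta=0$ instance of Corollary \ref{coro:general_bound}) to control the generalization error. The middle term vanishes in expectation as before, and the last term is bounded by $d_{\mathcal W}(\mu,\mu')$ by definition.

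There is no genuine obstacle here since every ingredient has been prepared: the decomposition is given, the generalization-error bound is Corollary \ref{coro:general_bound}, and the domain-shift gap is absorbed into $d_{\mathcal W}$ by definition. The only points requiring a little care are (i) to emphasize that $w^*$ is non-random so that its expected empirical risk equals $L_\alpha(w^*)$ exactly, and (ii) to verify that Corollary \ref{coro:general_bound} can legitimately be applied to $W_{\ERM}$, which only requires that $\ell(W_{\ERM},Z)$ be $r^2$-subgaussian under $P_{W_{\ERM}}\otimes \mu'$, precisely the stated hypothesis.
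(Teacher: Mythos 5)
Your proposal is correct and follows essentially the same route as the paper's own proof: start from the decomposition in (\ref{eq:excess_decompose}), bound the generalization error via Corollary \ref{coro:general_bound}, observe that $\mathbb{E}[\hat L_{\alpha}(w^*,S,S')-L_{\alpha}(w^*)]=0$ since $w^*$ is deterministic, and absorb $L_{\mu}(w^*)-L_{\mu'}(w^*)$ into $d_{\mathcal W}(\mu,\mu')$. The only blemish is a typo in your $\alpha=\beta=0$ display, where the left-hand side should read $L_{\mu'}(W_{\ERM})-L_{\mu'}(w^*)$ rather than $L_{\mu}(W_{\ERM})-L_{\mu'}(w^*)$.
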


The proof of this theorem is given in Appendix \ref{proof:thm_excess}. Note that $d_{\mathcal W}(\mu, \mu')$ is normally known as the integral probability metric, which is challenging to evaluate. \citet{sriperumbudur2012empirical} investigated the data-dependent estimation to compute the quantity using the Kantorovich metric, Dudley metric, and kernel distance, respectively. Another evaluation method is proposed in \cite{ben-david_theory_2010} to resolve the issue for classification problems. We verify our bound for the transfer learning with a toy example studied in \cite{bu_tightening_2019}. In Section~\ref{subsec:logit}, we also verify the bounds on the logistic regression transfer problem where the hypothesis cannot be explicitly calculated.

\begin{example}[Estimating the mean of Gaussian]\label{example:gaussian}
 Assume that  $S$ comes from the source distribution $\mu= \mathcal{N} (m, \sigma^2)$ and $S'$ comes form the target distribution $\mu'=\mathcal{N}(m', \sigma^2)$ where $m \neq m'$.  We define the loss function as
\begin{align*}
\ell(w,z)=(w-z)^2.
\end{align*}
For simplicity, we assume here that $\beta=0$. The empirical risk minimization (ERM) solution is obtained by minimizing $\hat L(w,S):=\frac{1}{ n}\sum_{i=1}^{n}(w-Z_i)^2$, where the solution is given by
\begin{align*}
W_{\ERM}=\frac{1}{n}\sum_{i=1}^nZ_i.
\end{align*}
To obtain the upper bound, we first notice that in this case
\begin{align*}
I(W_{\ERM};Z_i)=\frac{1}{2}\log \frac{n}{n-1},
\end{align*}
for all $i = 1,2,\cdots,n$. It is easy to see that the loss function $\ell(w, z_i)$ is non-central chi-square distribution $\chi^2(1)$ of $1$ degree of freedom with the variance of $\sigma^2_\ell = \frac{n+1}{n}\sigma^2$. 
%Then the cumulant generating function (CGF) of $\ell(W,Z)$ is
%\begin{align*}
%    \Lambda_{\ell(w, z_i)}(\lambda) =& - \sigma^2_\ell\lambda - \frac{1}{2}\log (1-2\sigma^2_{\ell}\lambda)-\lambda(m-m')^2 \\
%    &  + \frac{(m-m')^2\lambda}{1-2\lambda\sigma^2_{\ell}}, \lambda \in (-\infty, \frac{1}{2\sigma^2_{\ell}})
%\end{align*}
Furthermore, the cumulant generating function can be bounded as for any $\lambda > 0$:
\begin{align*}
\log \mathbb{E}{e^{\lambda(\ell(w, z_i)-\mathbb{E}{\left[\ell(w, z_i))\right]}}} \leq \sigma^4_\ell\lambda^2 + \frac{2\lambda^2 \sigma^2_{\ell} (m-m')^2}{1 + 2\lambda\sigma^2_{\ell}}.
\end{align*}
Then it can be seen that the loss function is $\sqrt{2\sigma^4_\ell + 4\sigma^2_\ell(m-m')^2}$-subgaussian under the distribution $P_W\otimes \mu'$. Let $\sigma^2_{\ell'} =  2\sigma^4_\ell + 4\sigma^2_\ell(m-m')^2$, we reach at
\begin{small}
\begin{align*}
\mathbb{E}_{WS} \left[ \gen(W_{\ERM},S)\right] &\leq   \sqrt{\sigma^2_{\ell'}\log\frac{n}{n-1} + 2\sigma^2_{\ell'}D(\mu \| \mu')},
\end{align*}
\end{small}
where $D(\mu||\mu') = \frac{(m - m')^2}{2\sigma^2}$. Then the excess risk is upper bounded by,
\begin{align*}
&\mathbb{E}_{W}[L_{\mu'}(W_{\ERM}) - L_{\mu'}(w^*)] \leq\\
&\sqrt{\sigma^2_{\ell'}\log\frac{n}{n-1} + 2\sigma^2_{\ell'}D(\mu \| \mu')} 
+  d_{\mathcal{W}}\left(\mu, \mu^{\prime}\right).
\end{align*}
In this case, the generalization error and the excess risk of $W_{\ERM}$ can be calculated exactly to be
\begin{align*}
&\mathbb{E}_{WS}\left[ \hat L(W_{\ERM}, S)-L_{\mu'}(W_{\ERM}) \right]\\
&= \frac{2\sigma^2}{n} + 2\sigma^2 D(\mu||\mu'), \\
&\mathbb{E}_{W}[L_{\mu'}(W_{\ERM}) - L_{\mu'}(w^*)] = \frac{\sigma^2}{n} + 2\sigma^2 D(\mu||\mu').
\end{align*}
The derived excess risk bound approaches $\sqrt{4\sigma^4D(\mu\|\mu') + 16\sigma^4D(\mu\|\mu')^2} + d_{\mathcal{W}}(\mu, \mu')$ as $n\rightarrow \infty$ with a decay rate of  $O(1/\sqrt{n})$, which does not capture the bound asymptotically well as the true value should be $\sqrt{4\sigma^4D(\mu\|\mu')^2}$. Moreover, the hypothesis space-dependent quantity $d_{\mathcal{W}}(\mu, \mu')$ will be infinite if $w$ is unbounded, resulting in a vacuous bound. To further tighten the bound, we propose various ``easiness" conditions on the excess risk, which is shown to capture the true behavior up to a scaling factor in Section \ref{sec:fast}.
%\begin{equation*}
%\mathbb{E}_{WSS'}\left[ \gen(W_{\ERM},S,S')\right] \leq \frac{1}{n}\sum_{i=1}^{n}\psi^{*-1}(I(W_{\ERM};Z_i)+D(\mu||\mu'))
%\end{equation*}
%where $\psi^{*-1}(x):=\inf_{\lambda\in[0,+\infty)}\frac{x+\sigma_{\ell}^4\lambda^2 + \lambda(m-m')^2}{\lambda}$.
%By the definition of $\psi^{*-1}(x)$, we have that 
%\begin{align*}
%    \psi^{*-1}(x) \geq (m - m')^2 + \sigma^4_\ell\lambda + \frac{I(W;Z)}{\lambda}.
%\end{align*}
%We set $\lambda = \sqrt{\frac{I(W;Z_i)}{\sigma^4_\ell}}$ and substitute $I(W_{\ERM};Z_i)$ in the generalization error above, the derived bound becomes
%\begin{align*}
%\mathbb{E}_{WSS'} \left[ \gen(W_{\ERM},S,S')\right] \leq 2\left(\frac{n+1}{n}\right)\sigma^2\sqrt{\frac{1}{2}\log\frac{n}{n-1}} + 2\sigma^2D(\mu \| \mu'),
%\end{align*}
%which approaches $2\sigma^2D(\mu\|\mu')$ as $n\rightarrow \infty$ with a decay rate of  $O(1/\sqrt{n})$. coincides with the results using Rademacher complexity bound \cite{zhang2012generalization}. This bound is tight but requires the exact calculation of CGF, which will be intractable in many learning scenarios.
\end{example}

\subsection{Fast rate upper bound on the excess risk of ERM} \label{sec:fast}
As can be seen from previous sections, the convergence rate of the excess risk is in the form of $O\Big(\sqrt{\frac{\lambda}{\beta n}} + \sqrt{\frac{\lambda'}{(1-\beta) n}  + D(\mu\|\mu')} + d_{\mathcal{W}}(\mu,\mu')\Big)$ where $\lambda$ and $\lambda'$ is some information-theoretic related quantities such as the mutual information, $D(\mu\|\mu')$ and $d_{\mathcal{W}}(\mu,\mu')$ are the domain divergences between the source and target domains. However, such a learning rate is in general suboptimal.  

In this section, we give an alternative analysis for the excess risk under various ``easiness" conditions following the idea from \cite{Grunwald2021pac,wu2022fast}. With the new technique, we can show that the excess risk is characterized by the mutual information between the hypothesis and data instances and the rate will be of the form $O\left(\frac{\eta}{\beta n} + \frac{\eta'}{(1-\beta)n} +  D(\mu\|\mu')\right)$ where $\eta$ and $\eta'$ are some information-theoretic related quantities different from $\lambda$ and $\lambda'$ for specific learning algorithms such as empirical risk minimization and the hypothesis dependent term $d_{\mathcal{W}}(\mu,\mu')$ vanishes in the new bound. While the results presented in this section offer a more refined analysis for certain scenarios, it is important to note that they do not entirely supersede the 'slow-rate' results. For example, in the case of the Gaussian problem, we could show that the fast rate bound in Theorem~\ref{thm:central-transfer} does indeed provide a strictly better bound on the excess risk than the slow rate result in Theorem~\ref{thm:excess}. However, in a more general context, the two bounds on excess risk are not directly comparable, as they pertain to different underlying assumptions and scenarios. To simplify the notation, we also define the \emph{empirical} excess risk w.r.t. $w^*$ for some $w \in \mathcal{W}$ given the dataset $S$ as
\begin{align}
\hat{R}(w,S) := \hat L(w,S) - \hat{L}(w^*,S).
\end{align}
The empirical excess risk combined with both source and target is defined as
\begin{align}
\hat{R}_{\alpha}(w,S,S') :=  \hat L_{\alpha}(w,S,S') - \hat{L}_{\alpha}(w^*,S,S').
\end{align}
We also define the unexpected excess risk:
\begin{align}
    r(w,z_i) := \ell(w,z_i) - \ell(w^*,z_i)
\end{align}
for single instance $z_i$ as well. We further define the excess risk as
\begin{align}
    R_{\mu'}(w_\ERM) := L_{\mu'}(w_{\ERM}) - L_{\mu'}(w^*)].
\end{align}
Then the expected excess risk over $W_{\ERM}$ can be bounded by the following inequality:
\begin{align}
\mathbb{E}_{W}[R_{\mu'}(W_\ERM)] \leq \mathbb{E}_{WSS'}[\mathcal{E}(W_{\ERM}, S,S')], %\label{eq:excess_expression}
\end{align}
where the empirical excess risk gap is defined as
\begin{align}
    &\mathcal{E}(w, S,S') = \alpha \left({R_{\mu'}(w) - \hat{R}(w,S')}\right) \nonumber  \\ 
    & + (1-\alpha) \left({R_{\mu'}(w) - \hat{R}(w,S)}\right).
\end{align}
Here we have used the fact $\hat L_{\alpha}(W_{\ERM},S,S')-\hat L_{\alpha}(w^*,S,S')\leq 0$ by the definition of $W_{\ERM}$. Recently, numerous studies have focused on formulating fast information-theoretic bounds \cite{Grunwald2021pac,wu2022fast}, aiming to eliminate the square root present in the upper bounds proposed in the previous sections. This is mainly achieved by changing the assumptions from the loss function $\ell(w,z)$ to the unexpected excess risk $r(w,z)$. In this context, referencing the optimal hypothesis $w^*$ becomes crucial to further narrow down the concentration condition. This essential requirement is recognized as the central condition \cite{van2015fast,Grunwald2021pac,wu2022fast}. In our work, we utilize this condition and make subtle modifications to cater to our goals in transfer learning as follows.
\begin{definition}[$(\eta,c)$-Central Condition]
Let ${\eta}>0$ and $0 < c \leq 1$. We say that a transfer learning problem satisfies the expected $(\eta,c)$-central condition under the target distribution $\mu'$ if the following inequality holds for the optimal hypothesis $w^*$:
\begin{align}
&\log \mathbb{E}_{P_W\otimes \mu'} \left[e^{-{\eta}\left(\ell(W,Z)-\ell(w^*,Z)\right)}\right]  \leq \nonumber  \\ 
& -c\eta  \mathbb{E}_{P_W\otimes \mu'}\left[\ell(W,Z) - \ell(w^*,Z)\right]. \label{eq:eta_c}
\end{align}
where $P_W$ is the marginal distribution of the output hypothesis.
\end{definition}
This condition is similar to the central condition \cite{van2015fast,Grunwald2020}, where some assumptions are made on the small lower tail for the excess risk function with the exponential concavity, implying good concentration properties of the excess risk. Compared to the $\eta$-central condition defined in \cite[Def. 3.1]{van2015fast} which can be retrieved by setting $c = 0$, the R.H.S. of~(\ref{eq:eta_c}) will be negative and has tighter control of the tail behavior for some $c> 0$. We also point out that such a condition is indeed the key assumption for improving the rate by removing the square root, which also coincides with some well-known conditions that lead to a fast rate such as the Bernstein condition \cite{bartlett2006empirical,bartlett2006convexity, hanneke2016refined,mhammedi2019pac} and the central condition with the witness condition \cite{van2015fast,Grunwald2020}. Next, we provide several instances where the $(\eta,c)$-central condition is satisfied. While some of these examples are discussed in \cite{wu2022fast}, we revisit them here for the sake of completeness.
\begin{example}\label{coro:subexponential}
If $r(W, Z)$ is ($\nu^2$, $\alpha$)-sub-exponential under the distribution $P_W \otimes \mu'$, then the learning tuple satisfies $(\min(\frac{1}{\alpha}, \frac{\nu^2}{\mathbb{E}_{P_W \otimes \mu'}[r(W,Z)]}), \frac{1}{2})$-central condition. 
\end{example}
\begin{example}\label{coro:subgamma}
If $r(W, Z)$ is ($\nu^2$, $\alpha$)-sub-Gamma under the distribution $P_W \otimes \mu'$, then the learning tuple satisfies $(\frac{\mathbb{E}_{P_W\otimes\mu'}[r(W,Z)]}{\nu^2+\alpha \mathbb{E}_{P_W\otimes\mu'}[r(W,Z)]}, \frac{1}{2})$-central condition.
\end{example}
\begin{example}
Let $\gamma \in [0,1]$ and $B \geq 1$. Let $P_W$ is induced by $P_{WSS'}$. We assume that the \textbf{Bernstein condition} holds under the target distribution $P_W \otimes \mu'$. Namely, the following inequality holds for the optimal hypothesis $w^*$:
\begin{align*}
      &{\mathbb{E}}_{P_W \otimes \mu'}\left[\left(\ell\left(W, Z^{\prime}\right)-\ell\left(w^{*} , Z^{\prime}\right)\right)^{2}\right] \leq \\
      & B \left({\mathbb{E}}_{P_W \otimes \mu'}\left[\ell\left(W, Z^{\prime}\right)- \ell\left(w^{*} ; Z^{\prime}\right)\right]\right)^{\gamma}.
\end{align*}
In case, and if $\gamma = 1$ and $r(w,z_i)$ is bounded by $-b$ with some $b > 0$ for all $w$ and $z_i$, then the learning tuple also satisfies $(\min(\frac{1}{b}, \frac{1}{2B(e-2)}), \frac{1}{2})$-central condition.

The Bernstein condition is commonly identified as a way to describe the 'easiness' of a learning problem. The typical Bernstein condition necessitates that the inequality is satisfied for all $w \in \mathcal{W}$. However, in our case, we only require that the inequality is satisfied in expectation over $P_W$. In particular, consider the source-only case, $\gamma = 1$ corresponds to the easiest and the learning rate will be $O(\frac{1}{n} + cD(\mu\|\mu'))$ if $I(W;Z_i)$ is converging with the rate of $O(\frac{1}{n})$ for some leading constant $c$. For the bounded loss, the Bernstein condition will automatically hold with $\gamma = 0$ and it will recover the results in Corollary~\ref{coro:gen_beta0} with the rate of $O(\sqrt{\frac{1}{n} + cD(\mu\|\mu')})$.
\end{example}
\begin{example}
The second condition is the central condition with the witness condition \cite{van2015fast,Grunwald2020}, which also implies the $(\eta,c)$-central condition. We say the learning tuple $(\mu, \mu', \ell, \mathcal{W}, \mathcal{A})$ satisfies the $\eta$-central condition \cite{van2015fast,Grunwald2020} if for the optimal hypothesis $w^*$, the following inequality holds,
\begin{align*}
\mathbb{E}_{P_W\otimes \mu'}\left[e^{-{\eta}\left(\ell(W,Z)-\ell(w^*,Z)\right)}\right] \leq 1. 
\end{align*}
We also say the learning tuple  $(\mu, \mu', \ell, \mathcal{W}, \mathcal{A})$ satisfies the  $(u, c)$-witness condition \cite{Grunwald2020} if for constants $u > 0$ and $c \in (0,1]$, the following inequality holds.
\begin{align*}
    &\mathbb{E}_{P_W\otimes \mu'} [\left(\ell(W,Z)-\ell({w^{*}},Z) \right) \cdot   \mathbf{1}_{\left\{\ell(W,Z)-\ell({w^{*}},Z) \leq u \right\}}] \\ 
    &\geq c \mathbb{E}_{P_W\otimes \mu'}\left[\ell(W,Z) -\ell({w^{*}},Z) \right],
\end{align*}
where $ \mathbf{1}_{\{\cdot\}}$ denotes the indicator function. Then we have the following statement: If the learning tuple satisfies both $\eta$-central condition and $(u,c)$-witness condition, then the learning tuple also satisfies the $(\eta', \frac{c-\frac{c\eta'}{\eta}}{\eta' u +1})$-central condition for any $0 < \eta' < \eta$. 

The standard $\eta$-central condition \cite{van2015fast,mehta2017fast,Grunwald2020} is essential for deriving fast-rate bounds of generalization error. Some typical examples include exponential concave loss functions (including log-loss) with $\eta = 1$ \cite{mehta2017fast,zhu2020semi} and bounded loss functions with Massart noise condition with various $\eta$ \cite{van2015fast}. The witness condition \cite[Def. 12]{Grunwald2020} is applied to exclude cases where learnability is inherently impossible. This condition ensures that we exclude the poorly performing hypothesis $w$ with almost zero probability (even though it can still affect the expected loss), which we will never observe empirically. One trivial example is, if the excess risk is upper bounded by some constant $b$, we may always take $u=b$ and $c=1$ so that a witness condition is satisfied. 
%For instance, consider a setting with $\mathcal{F}=\left\{f^{*}, f_{1}, f_{2}, \ldots\right\}$ where $\ell_{f^{*}}=1$ with probability 1 and, for each $j \geq 1, \ell_{f_{j}}$ is equal to 0 with probability $1-\frac{1}{j}$ and equal to $2 j$ with probability $\frac{1}{j}$. Then for all $j, \mathbb{E}\left[\ell_{f_{j}}-\ell_{f^{*}}\right]=1$, but as $j \rightarrow \infty$, empirically we will never witness the badness of $f_{j}$ as it almost surely achieves lower loss than $f^{*}$. On the other hand, if the excess loss is upper bounded by some constant $b$, we may always take $u=b$ and $c=1$ so that a witness condition is trivially satisfied. 
\end{example}

With the definitions in place, we derive the fast rate bounds for the excess risk in transfer learning under the $(\eta, c)$-central condition as follows.
\begin{theorem}[Fast Rate with ($\eta,c$)-Central Condition]\label{thm:central-transfer}
Assume the learning problem with the ERM algorithm satisfies the expected ($\eta,c$)-central condition under the target distribution $\mu'$. Then for any $0 < \eta' \leq \eta$, the expected excess risk can be upper bounded by,
\begin{align}
&\mathbb{E}_{W}[R_{\mu'}(W_\ERM)] \leq  \frac{1}{c\eta'} \frac{\alpha}{\beta n} \sum_{i=1}^{\beta n} I(W_\ERM;Z'_i) \nonumber \\ 
&+ \frac{1}{c\eta'} \frac{1-\alpha}{(1-\beta)n} \sum_{i=\beta n+ 1}^{n} \left(I(W_\ERM;Z_i) + D(\mu\|\mu')\right). \label{eq:fast-erm}
\end{align}
More generally, for any algorithm $\mathcal{A}$ and any $W$ induced by the algorithm, if the expected ($\eta,c$)-central condition holds, we have that for any $0 < \eta' \leq \eta$,
\begin{align}
&\mathbb{E}_{W}[R_{\mu'}(W_\ERM)] \leq  \frac{1}{c\eta'} \frac{\alpha}{\beta n} \sum_{i=1}^{\beta n} I(W;Z'_i) \nonumber \\ 
&+ \frac{1}{c\eta'} \frac{1-\alpha}{(1-\beta)n} \sum_{i=\beta n +1}^{n} \left(I(W;Z_i) + D(\mu\|\mu')\right) \nonumber \\
&+ \frac{1}{c}\left[\alpha\mathbb{E}_{WS'}[\hat{R}(W,S')] + (1-\alpha)\mathbb{E}_{WS}[\hat{R}(W,S) \right]. \label{eq:tightened}
\end{align}
\end{theorem}
The proof can be found in Appendix~\ref{proof:central-transfer}. Now we compare (\ref{eq:fast-erm}) with the bound in Theorem~\ref{thm:excess} which we reproduce below,
\begin{align*}
&\mathbb{E}_{W}[R_{\mu'}(W_\ERM)] \leq \frac{\alpha \sqrt{2 r^{2}}}{\beta n} \sum_{i=1}^{n} \sqrt{I\left(W_{\ERM} ; Z'_{i}\right)} \nonumber \\ 
&+\frac{(1-\alpha) \sqrt{2 r^{2}}}{(1-\beta)n} \sum_{i=\beta n +1}^{n} \sqrt{\left(I\left(W_{\ERM} ; Z_{i}\right)+D\left(\mu \| \mu^{\prime}\right)\right)} \nonumber  \\
& + (1-\alpha) d_{\mathcal{W}}\left(\mu, \mu^{\prime}\right).
\end{align*}
In the new bound, the square root term is removed and we may achieve a faster rate for converging to the domain divergence $D(\mu\|\mu')$. Furthermore, the new bound does not contain the hypothesis space-dependent divergence term $d_{\mathcal{W}}\left(\mu, \mu^{\prime}\right)$, which might be very large or unbounded for certain distributions and hypothesis space. In the following Gaussian mean estimation example, we verify that the new bound is tighter than the previous bound and captures the true behaviour of the excess risk.

\begin{example}[Continuing from Example~\ref{example:gaussian}]
We continue to examine the bound in~Theorem~\ref{thm:central-transfer} that achieves the correct rate of convergence in the Gaussian mean estimation, which satisfies the $(\eta, c)$-central condition for certain $\eta$ and $c$. To this end, for a large sample size $n$, we check,
\begin{align*}
 &\log \mathbb{E}_{P_W \otimes \mu'}\left[e^{-\eta r(W,Z)}\right] = \log\sqrt{\frac{n}{n+2\eta\sigma^2(1-2\eta\sigma^2)}} \\ 
 &+ (2\eta^2\sigma^2 -\eta)(m-m')^2\leq -c\eta \left(\frac{\sigma^2}{n} + (m-m')^2\right).   
\end{align*}
From the above inequality, this learning problem satisfy the $(\eta, c)$-central condition for any $0 < \eta < \frac{1}{2\sigma^2}$ and any 
\begin{small}
\begin{align*}
    c &\leq -\frac{1}{\eta} \lim_{n\rightarrow \infty} \frac{\frac{1}{2}{\log\frac{n}{n+2\eta\sigma^2(1-2\eta\sigma^2)}} + (2\eta^2\sigma^2 -\eta)(m-m')^2}{\frac{\sigma^2}{n} + (m-m')^2} \\ 
    &= 1- 2\eta \sigma^2,
\end{align*}
\end{small}
by the quotient law of limits, where the choice of $c$ is independent of the sample size and thus does not affect the convergence rate. Therefore, take $\eta = \frac{1}{4\sigma^2}$ and $c = \frac{1}{2}$, the excess risk bound in (\ref{eq:tightened}) for ERM under the source only case becomes
\begin{align}
     &\mathbb{E}_{W}[R_{\mu'}(W_\ERM)] \leq \frac{1}{c\eta' n} \sum_{i=1}^{n} \left(I(W;Z_i) + D(\mu\|\mu')\right) \nonumber \\ 
     & + \frac{1}{c}\mathbb{E}_{P_{WS}}[\hat{R}(W,S)] \nonumber \\
     & = 4\sigma^2\log \frac{n}{n-1} + 4(m-m')^2 - \frac{2\sigma^2}{n} - 2(m-m')^2 \nonumber \\
     &= 4\sigma^2\log \frac{n}{n-1} - \frac{2\sigma^2}{n}  + 2(m-m')^2 \nonumber \\
     &\asymp \frac{2\sigma^2}{n} + 2(m-m')^2, \label{eq:fast_gaussian}
\end{align}
for large $n$. While the true excess risk can be calculated by
\begin{align*}
% &= \mathbb{E}_{W\otimes \mu'}[(W_\ERM - Z)^2] - \mathbb{E}_{\mu'}[(w^* - Z)^2] \\
\mathbb{E}_{W}[R_{\mu'}(W_\ERM)] &= (m-m')^2 + \frac{\sigma^2}{m} + \sigma^2 - \sigma^2 \\
&= (m-m')^2 + \frac{\sigma^2}{n}.
\end{align*}
The new bound is tight in the sense that it captures the true excess risk up to a scaling factor. However, if we apply (\ref{eq:excess_erm_bound_mi}) and the bound becomes,
\begin{align*}
     \mathbb{E}_{W}[R_{\mu'}(W_\ERM)] \leq  & \sqrt{\sigma^2_{\ell'}\log\frac{n}{n-1} + 2\sigma^2_{\ell'}D(\mu \| \mu')} \\ 
     &+ d_{\mathcal{W}}(\mu,\mu'),
\end{align*}
where $\sigma^2_{\ell'} =  2\sigma^4_\ell + 4\sigma^2_\ell(m-m')^2$ and $d_{\mathcal{W}}(\mu,\mu') = \sup_{w\in\mathcal{W}}\left| (w-m)^2 - (w-m') \right|$. Then this bound approaches
\begin{align}
  \sqrt{4(m-m')^4 + 2\sigma^2(m-m')^2} + d_{\mathcal{W}}(\mu,\mu')
\end{align}
with the rate of $\sqrt{1/n}$, which is apparently worse than (\ref{eq:fast_gaussian}).
\end{example}

\begin{remark}[Justification of the tightness]
In the following, we examine the tightness of the bound and show why $r(w,z)$ is a more sensible choice than $\ell(w,z)$. For simplicity, we consider the source-only case with the Gaussian mean estimation problem where in the proof we used the Donsker-Varadhan representation for the KL divergence between the $D(P_{WZ_i}\|P_W \otimes \mu')$ for each $Z_i$:
\begin{align}
      &D(P_{WZ_i} \| P_W \otimes \mu')=\sup _{f: \mathcal{W} \otimes \mathcal{Z} \rightarrow \mathbb{R}} \mathbb{E}_{WZ_i}[f(W,Z_i)]\nonumber \\ 
      &-\log \left(\mathbb{E}_{P_W\otimes \mu'}\left[e^{f(W,Z_i)}\right]\right). \label{eq:donsker}
\end{align}
It is known that under mild conditions \cite{donsker1975asymptotic}, the optimal function where the equality is achieved for Eq.~(\ref{eq:donsker}) is chosen by $f'(dP_{WZ_i}/(d(P_W\otimes\mu')$ where $f(t) = t\log t$. We will calculate this optimizer explicitly and show that the choice of $r(w,z_i)$ is actually tight. To this end, we firstly calculate the densities of $P_W$ and $P_{W|Z_i}$ as: $dP_W = \frac{\sqrt{n}}{\sqrt{2\pi\sigma^2}} \exp(\frac{-(W - \mu)^2 n }{2\sigma^2})$, $dP_{W|Z_i} = \frac{n}{\sqrt{2\pi \sigma^2(n-1)}} \exp(-\frac{(W - \frac{n-1}{n}\mu - \frac{1}{n}Z_i)^2n^2}{2\sigma^2(n-1)}).$ Then we can calculate the optimizer as:
\begin{align*}
&f'(dP_{WZ_i}/d(P_W\otimes \mu')) = \log \frac{dP_{W|Z_i}}{dP_W} + \log \frac{d\mu}{d\mu'} + 1 \\
&= \underbrace{\frac{1}{2}\log\frac{n}{n-1} - \frac{(w-z_i)^2 - (\mu - z_i)^2}{2\sigma^2} - \frac{(w-z_i)^2}{2\sigma^2(n-1)}}_{\log \frac{dP_{W|Z_i}}{dP_W}} \\ 
&+ \underbrace{\frac{(\mu' - z_i)^2 - (\mu - z_i)^2}{2\sigma^2}}_{\log \frac{d\mu}{d\mu'}} + 1   \\
&= \frac{1}{2}\log\frac{n}{n-1} - \frac{(w-z_i)^2 
 - (\mu' - z_i)^2}{2\sigma^2}\\
&- \frac{(w-z_i)^2}{2\sigma^2(n-1)} + 1
\end{align*}
for fixed $w$ and $z_i$. The above function can be written as:
\begin{align*}
    f'(dP_{W|Z_i}/dP_W) = &-\frac{r(w,z_i)}{2\sigma^2} - \frac{\ell(w,z_i)}{2\sigma^2(n-1)} \\ 
    &+ \frac{1}{2}\log\frac{n}{n-1} + 1.
\end{align*}
The unexpected excess risk $r(w,z_i)$ clearly appears in the optimizer with some scaling factor and shifting constant (up to a $O(\frac{1}{n})$ difference), which, however, will not affect the convergence. To rigorously show this, we state the following result.
\begin{lemma}
The choice of the function $-\frac{r(w,z_i)}{2\sigma^2}$ satisfies the following inequality:
\begin{align*}
    & \frac{n-1}{n} \left(I(W;Z_i) + D(\mu\|\mu')\right) \\
    & \leq \mathbb{E}_{WZ_i}[-\frac{r(W,Z_i)}{2\sigma^2}]  - \log \mathbb{E}_{P_W\otimes \mu'}[e^{-\frac{r(W,Z_i)}{2\sigma^2}}] \\ 
    & \leq I(W;Z_i) + D(\mu\|\mu'),
\end{align*}
\end{lemma}
The provided lemma confirms that for the Gaussian example, the variational representation is actually tight. However, when using the loss function $\ell(w, z_i)$ without referencing $w^*$, the mutual information bound might not be tight as the equality in the variational representations may not be achieved (up to $O(\frac{1}{n} + D(\mu\|\mu'))$). Moreover, we posit that selecting $-r(w,z_i)$ also results in a tight upper bound on the generalization error. This is further supported by our demonstration of a corresponding lower bound for the generalization error in the following.
\begin{lemma}[Matching Lower Bound]\label{thm:lower_bounds}
Consider the Gaussian mean estimation problem with $\beta = 0$. With a large $n$, the following inequality holds for ERM:
\begin{align*}
    &\mathbb{E}_{WS}[\gen(W,S)] \geq \\ 
    & 2\sigma^2 \frac{n-1}{n^2}\sum_{i=1}^{n} I(W;Z_i) + D(\mu\|\mu').
\end{align*}
\end{lemma}
From our result, it could be seen that the sample-wise mutual information is present in both the upper and lower bounds. When considering the generalization error, the rates of convergence for both bounds align, albeit with varying leading constants. In the context of the Gaussian mean example, the excess risk upper bound is precise, given that both the empirical excess risk and generalization error are of order $O\left(\frac{1}{n} + D(\mu\|\mu')\right)$. Yet, for a broader range of learning problems, the lower bound for the excess risk primarily hinges on the data distributions and may not be calculated easily and explicitly.
\end{remark}

\noindent Moreover, the learning bound in Theorem~\ref{thm:central-transfer} can be applied to the regularized ERM algorithm as:
\begin{align*}
    w_{\sf{RERM}} = \argmin_{w \in \mathcal{W}} \hat{L}_{\alpha}(w,S,S') + \frac{\lambda}{n}g(w),
\end{align*}
where $g : \mathcal{W} \rightarrow \mathbb{R}$ denotes the regularizer function and $\lambda$ is some penalizing coefficient. We define $\hat{{R}}_{\textup{reg}}(w,S,S') = \hat{{R}}_{\alpha}(w,S,S') + \frac{\lambda}{n}(g(w) - g(w^*))$, then we have the following lemma.
\begin{lemma}\label{lemma:rerm}
We assume conditions in Theorem~\ref{thm:central-transfer} hold for the regularized ERM and also assume $|g(w_1) - g(w_2)| \leq B$ for any $w_1$ and $w_2$ in $\mathcal{W}$ with some $B >0$. Then for $W_{\sf{RERM}}$:
\begin{align*}
     &\mathbb{E}_{W}\left[L_{\mu'}(W_{\sf{RERM}}) - L_{\mu'}(w^*) \right]  \leq  \\
     & \frac{1}{c} \mathbb{E}_{P_{WSS'}}\left[\hat{{R}}_{\textup{reg}}\left(W_{\sf{RERM}}, S, S'\right)\right] + \frac{\lambda B}{cn}   \\
      &+ \frac{1}{c\eta'} \frac{\alpha}{\beta n} \sum_{i=1}^{\beta n} I(W_{\sf{RERM}}; Z'_i) \\
      & + \frac{1}{c\eta'} \frac{1-\alpha}{(1-\beta)n} \sum_{i=\beta n+ 1}^{n} \left(I(W_{\sf{RERM}};Z_i) + D(\mu\|\mu')\right).
\end{align*}
\end{lemma}
\noindent The proof of this result is given in Appendix \ref{proof:lemma_rerm}. As $\hat{{R}}_{\textup{reg}}(w,S,S')$ will be negative for $w_{\sf{RERM}}$, the regularized ERM algorithm can lead to the fast rate up to the domain divergence (by ignoring the multiplicative constant) if both $I(W_{\sf{RERM}};Z'_i)$ and $I(W_{\sf{RERM}};Z_i)$ are of $O(1/n)$.
%\subsection{Fast Rate with Central Condition}
%Next we derive the fast rate with the central condition at the following theorem.
%\begin{theorem}[Fast Rate with Central Condition]\label{thm:central}
%Let $(\mu, \ell, \mathcal{W}, \mathcal{A})$ represent a learning problem which satisfies the $\bar{\eta}$-central condition. If the $(u, c)$-witness condition holds, then for any $\eta \in(0, \bar{\eta})$, we have
%\begin{equation}
% \mathbb{E}_{W\mathcal{S}_n}[\mathcal{E}(W,\mathcal{S}_n)] \leq (c_{u} - 1) \mathbb{E}_{W\mathcal{S}_n}[\hat{\mathcal{R}}(W,\mathcal{S}_n)] + \frac{c_u}{\eta} \frac{1}{n} \sum_{i=1}^{n} I(W;Z_i)
%\end{equation}
%Furthermore, the expected excess risk is bounded by,
%\begin{equation}
% \mathbb{E}_{W}[\mathcal{R}(W)] \leq c_{u} \mathbb{E}_{W\mathcal{S}_n}[\hat{\mathcal{R}}(W,\mathcal{S}_n)] + \frac{c_u}{\eta} \frac{1}{n} \sum_{i=1}^{n} I(W;Z_i)   
%\end{equation}
%where $c_u = \frac{1}{c}\frac{\eta u + 1}{1- \frac{\eta}{\bar{\eta}}} > 1$, $u > 0$ and $c \in (0, 1]$.
%\end{theorem}

From Theorem~\ref{thm:central-transfer}, we can achieve the fast rate if the mutual information between the hypothesis and data example is converging up to the domain divergence with the rate $O(1/n)$, and one may ask whether we could arrive at an intermediate rate between $O(1/\sqrt{n})$ and $O(1/n)$. To further relax the $(\eta,c)$-central condition, we can also derive the intermediate rate with the order of $O(n^{-\alpha})$ for $\alpha \in [\frac{1}{2}, 1]$. Similar to the $v$-central condition, which is a weaker condition of the $\eta$-central condition \cite{van2015fast,Grunwald2020}, we propose the $(v,c)$-central condition first and derive the intermediate rate results in Theorem~\ref{lemma:intermediate}.

\begin{definition}[$(v,c)$-Central Condition]\label{def:weaker-eta-c}
% We say that $(\mu, \mu', \ell, \mathcal{W}, \mathcal{A})$ satisfies the $(\eta,c)$-central condition up to some $\epsilon > 0$ if the following inequality holds for the optimal target hypothesis $w^*$:
% \begin{align}
% & \log \mathbb{E}_{P_W\otimes \mu'}  \left[e^{-{\eta}\left(\ell(W,Z)-\ell(w^*,Z)\right)}\right]  \leq \nonumber \\ 
% & -c\eta  \mathbb{E}_{P_W\otimes \mu'}\left[\ell(W,Z) - \ell(w^*,Z)\right] + \eta \epsilon. \label{eq:v-central}
% \end{align}
Let $v:[0, \infty) \rightarrow[0, \infty)$ is a bounded and non-decreasing function satisfying $v(\epsilon)>0$ for all $\epsilon > 0$. We say that $(\mu, \mu', \ell, \mathcal{W}, \mathcal{A})$  satisfies the $(v,c)$-central condition if for all $\epsilon \geq 0$, it holds that
\begin{align}
& \log \mathbb{E}_{P_W\otimes \mu'}  \left[e^{-{v(\epsilon)}\left(\ell(W,Z)-\ell(w^*,Z)\right)}\right]  \leq \nonumber \\ 
& -cv(\epsilon)  \mathbb{E}_{P_W\otimes \mu'}\left[\ell(W,Z) - \ell(w^*,Z)\right] + v(\epsilon) \epsilon. \label{eq:v-central}
\end{align}
\end{definition}

% \begin{remark}
%     The $(v,c)$-central condition is a relaxed condition of $(\eta,c)$-central condition where if we $\eta = 0$ corresponds to $(\eta,c)$-central condition and $\eta = \mathbb{E}_{P_W\otimes \mu'}\left[r(W,Z)\right]$ corresponds to $\eta$-central condition. The small deviation term $\eta \epsilon$ is present for relaxing the concentration on the R.H.S. If $v(\epsilon)$ scales linearly, e.g., $v = O(\epsilon)$, this could lead to a slower rate since the R.H.S. is significantly large, and vice versa for a faster rate for the sublinear growth. To illustrate, we provide examples demonstrating the $(v,c)$-central condition and its impact on the learning rate as detailed in Theorem~\ref{lemma:intermediate}.
% \end{remark}

\begin{example}[Sub-Gaussian Condition]
We assume that the $\sigma^2$-subgaussian holds under the target distribution $P_W \otimes \mu'$ for a certain $\sigma^2$. Then the learning tuple also satisfies $(\min(v, c)$-central condition where $v(\epsilon) = \frac{2\epsilon}{\sigma^2}$ and $c = 1$.
\end{example}

\begin{example}[Bernstein Condition]
Let $\gamma \in (0,1]$ and $B \geq 1$. We assume that the \textbf{Bernstein condition} holds under the target distribution $P_W \otimes \mu'$ for given $\gamma$ and $B$. Additionally, if  $r(w,z_i)$ is bounded by $-b$ with some $b > 0$ for all $w$ and $z_i$, the learning tuple also satisfies $(\min(v, c)$-central condition where $v(\epsilon) = \frac{\epsilon^{1-\gamma}}{2Bc(1-\gamma)^{1-\gamma}}$ and $c = \min\{\frac{1}{2}, \gamma\}$. 
\end{example}
Then we can derive the following results for the intermediate rate result.
\begin{theorem}\label{lemma:intermediate}
Assume the learning tuple $(\mu, \mu', \ell, \mathcal{W}, \mathcal{A})$ satisfies the  $\left(v, c\right)$-central condition up to $\epsilon$ for some function $v$ as defined in~Def. \ref{def:weaker-eta-c} and $0 < c < 1$. Then it holds that for any $\epsilon \geq 0$ and any $0< \eta' \leq v(\epsilon)$,
\begin{align}
     &\mathbb{E}_{W}[R_{\mu'}(W_\ERM)] 
    \leq  \frac{1}{c} \mathbb{E}_{{WSS'}}[\hat{{R}}_{\alpha}\left(W, S,S' \right)] \nonumber  \\ 
     & + \frac{\alpha}{c \beta n} \sum_{i=1}^{\beta n} \left( \frac{I(W;Z'_i)}{\eta'} + \epsilon \right) \nonumber \\
      & + \frac{1-\alpha}{c(1-\beta)n} \sum_{i=\beta n+ 1}^{n} \left(\frac{I(W;Z_i) +  D(\mu\|\mu')}{\eta'} + \epsilon \right).
 \end{align}
\end{theorem}
\noindent In particular, if $v(\epsilon) = \epsilon^{1-\gamma}$ for some $\gamma \in (0,1]$, then the generalization error is bounded by,
\begin{align*}
     &\mathbb{E}_{W}[R_{\mu'}(W_\ERM)] \leq \\ 
     &  \frac{1}{c} \mathbb{E}_{{WSS'}}[\hat{{R}}_{\alpha}\left(W, S,S' \right)] + \frac{2\alpha}{c \beta n} \sum_{i=1}^{\beta n}  I(W;Z'_i)^{\frac{1}{2-\gamma}} \\
     & + \frac{2(1-\alpha)}{c(1-\beta)n} \sum_{i=\beta n+ 1}^{n} \left(I(W;Z_i) +  D(\mu\|\mu')\right)^{\frac{1}{2-\gamma}} .
 \end{align*}
The proof can be found in Appendix~\ref{apd:lemma_inter}. Thus, the expected generalization is found to have an order of $I(W;Z_i)^{\frac{1}{2-\gamma}}$, which corresponds to the typical results under Bernstein's condition \cite{hanneke2016refined,mhammedi2019pac,Grunwald2021pac}. 

\section{Applications and Extensions}\label{sec:applications}

%\subsection{Generalization error of the pre-trained model}

\subsection{Generalization error of stochastic noisy iterative algorithms}\label{sec:noisy}
The upper bound obtained in the previous section cannot be evaluated directly as it depends on the distribution of the data, which is, in general, assumed unknown in learning problems. Furthermore, in most cases, $W_{\ERM}$ does not have a closed-form solution but is obtained using an optimization algorithm. In this section, we study the class of optimization algorithms that iteratively update its optimization variable based on both source $S$ and target dataset $S'$. The upper bound derived in this section is useful in the sense that the bound can be easily calculated if the relative learning parameters are given. Specifically, the hypothesis $W$ is represented by the optimization variable of the optimization algorithm, and we use $W(t)$ to denote the variable at iteration $t$. In particular, we consider the following noisy iterative algorithm:
\begin{equation}
W(t) = W(t-1) - \eta_t\nabla\hat{L}_\alpha(W(t-1),S,S^\prime) + n(t),
\label{eq:iterative}
\end{equation}
where $W(t)$ is initialized to be $W(0) \in \mathcal{W}$ arbitrarily, $\nabla\hat{L}_{\alpha}$ denotes the gradient of $\hat{L}_{\alpha}$ with respect to $W$, and $n(t)$ can be any noises with the mean value of $0$ and variance of $\sigma^2_tI_d \in \mathbb{R}^d$. A typical example is $n(t) \sim \mathcal{N}(0,\sigma^2_tI_d)$. 

To obtain a generalization error bound for the above algorithm aligning with the theorem derived, we make the following assumptions.
\begin{assumption}  We assume the loss function $\ell(w,z)$ is $r^2\text{-subgaussian}$ under the distribution $\mu'$ for any $w\in \mathcal{W}$.
\label{asp:r-gaussian}
\end{assumption}

\begin{assumption}
The gradient is bounded, e.g., $\left\| \nabla\ell(w(t), z_i) \right\|_2 \leq K_S$, for all $z_i \in S$, and $\left\| \nabla \ell(w(t),z_i) \right\|_2  \leq K_T $, for all $z_i \in S'$ with $K_S,K_T > 0$, $\forall t\geq 1$. Then it follows that $\left \| \nabla(\hat{L}_\alpha(w(t),S,S^{\prime}))\right\|_2  \leq (1-\alpha)K_S + \alpha K_T \triangleq K_{ST}$. 
\label{asp:bounded}
\end{assumption}
\begin{remark}
    The bounded gradient assumption is a common assumption made in many analyses of machine learning and optimization algorithms, particularly in the context of gradient descent and its variants \cite{moulines2011non,malherbe2017global}.  In simpler terms, it ensures that the function does not have any abrupt or infinitely steep changes, and this assumption can be easily satisfied in many learning setups. For instance, functions like the absolute loss $f(x) = |x|$ and the linear loss $f(x) = mx + c$ are inherently Lipschitz continuous, thus adhering to the bounded gradient criteria. Similarly, polynomials of limited degrees, such as the quadratic function $f(x) = x^2$, satisfy this condition within a closed interval $x \in [a,b]$. Furthermore, the sigmoid and hyperbolic tangent functions, two popular activation functions used in neural networks, also comply with this assumption, having bounded constants of $K = 0.25$ and $K = 1$, respectively.
\end{remark}
Now we will apply the bound in Corollary~\ref{coro:general_bound} by further characterizing the mutual information $I(W;Z_i)$ with the relevant optimization parameters.
\begin{theorem}[Generalization error of stochastic noisy iterative algorithm]
Suppose that Assumptions~\ref{asp:r-gaussian} and \ref{asp:bounded} hold and $W(T)$ is obtained from (\ref{eq:iterative}) at $T$th iteration. Then the generalization error is upper bounded by 
\begin{align}
&\mathbb{E}_{W S S^{\prime}}  \left[\operatorname{gen}\left(W(T), S, S^{\prime}\right)\right]  \leq \alpha \sqrt{\frac{2 r^{2}}{\beta n} \hat{I}(S)} \nonumber \\ 
& + (1-\alpha)\sqrt{2 r^{2}\left(\frac{\hat{I}(S)}{(1-\beta)n}+D\left(\mu \| \mu^{\prime}\right)\right)},
\label{eq:T_iteration_gen}
\end{align}
where we define
\begin{equation}
\hat{I}(S) := \frac{d}{2} \sum_{t=1}^{T}\log \left(2 \pi e \frac{\eta_{t}^{2} K_{ST}^{2}+d \sigma_{t}^{2}}{d}\right)- \sum_{t=1}^{T}h(n_t).
\label{eq:SGDMItarget}
\end{equation}
\label{thm:Tbound}
\end{theorem}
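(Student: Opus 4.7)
\textbf{Proof plan for Theorem \ref{thm:Tbound}.}

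The plan is to start from Corollary \ref{coro:general_bound}, applied with $W=W(T)$, and then replace the per-sample mutual informations by a single quantity $\hat I(S)$ that upper bounds $I(W(T);S)$ and $I(W(T);S')$ simultaneously. First I would use Jensen's inequality (concavity of $\sqrt{\cdot}$) together with the chain-rule consequence $\sum_{i}I(W(T);Z_i)\le I(W(T);S)$ (and analogously for $S'$), which holds because the $Z_i$ are IID and conditioning cannot increase entropy. This converts the two sums of square roots in Corollary \ref{coro:general_bound} into
\begin{align*}
\alpha\sqrt{\tfrac{2r^{2}}{\beta n}I(W(T);S')}+(1-\alpha)\sqrt{2r^{2}\bigl(\tfrac{I(W(T);S)}{(1-\beta)n}+D(\mu\|\mu')\bigr)}.
\end{align*}

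Next I would bound $I(W(T);S)$ (the argument for $I(W(T);S')$ is symmetric, which is why a single quantity $\hat I(S)$ governs both). By the data-processing inequality and the chain rule,
\begin{align*}
I(W(T);S)\le I(W(1{:}T);S)=\sum_{t=1}^{T}I(W(t);S\mid W(1{:}t{-}1)).
\end{align*}
For each summand I write $I(W(t);S\mid W(1{:}t{-}1))=h(W(t)\mid W(1{:}t{-}1))-h(W(t)\mid W(1{:}t{-}1),S)$. For the first term, since $W(t-1)$ is given, $h(W(t)\mid W(1{:}t{-}1))=h(W(t)-W(t-1)\mid W(1{:}t{-}1))$; the recursion \eqref{eq:iterative} together with Assumption~\ref{asp:bounded} and the independence (and zero mean) of $n(t)$ yields $\mathbb E\|W(t)-W(t-1)\|_{2}^{2}\le \eta_{t}^{2}K_{ST}^{2}+d\sigma_{t}^{2}$, so the Gaussian maximum-entropy principle for a fixed second moment gives $h(W(t)\mid W(1{:}t{-}1))\le \tfrac{d}{2}\log\!\bigl(2\pi e\,(\eta_{t}^{2}K_{ST}^{2}+d\sigma_{t}^{2})/d\bigr)$. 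For the second term, conditioning reduces entropy, so $h(W(t)\mid W(1{:}t{-}1),S)\ge h(W(t)\mid W(1{:}t{-}1),S,S')$; conditional on $(W(1{:}t{-}1),S,S')$ the update $W(t)$ is a deterministic shift of $n(t)$, and translation-invariance of differential entropy identifies this with $h(n(t))$. Telescoping across $t$ delivers $I(W(T);S)\le\hat I(S)$, and the identical argument with $S$ and $S'$ swapped delivers $I(W(T);S')\le\hat I(S)$. Substituting into the bound from the first paragraph yields \eqref{eq:T_iteration_gen}.

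The main obstacle is the maximum-entropy step: I need to carefully justify passing from the bound on the second moment of the vector $W(t)-W(t-1)$ (including the squared norm of a bounded gradient plus the trace of the noise covariance) to a coordinate-wise variance bound compatible with the isotropic-Gaussian entropy $\tfrac{d}{2}\log(2\pi e v)$. Here one uses that for any random vector $X\in\mathbb R^{d}$ with $\mathbb E\|X\|_{2}^{2}\le V$, the differential entropy satisfies $h(X)\le \tfrac{d}{2}\log(2\pi e V/d)$, attained by isotropic Gaussians; this requires a brief invocation of the Hadamard-type inequality for covariances. The remaining steps are routine applications of the chain rule, data-processing inequality, and Jensen's inequality already used in this paper.
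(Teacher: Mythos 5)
Your proposal is correct and follows essentially the same route as the paper: Jensen's inequality plus the chain rule reduce Corollary~\ref{coro:general_bound} to the full mutual informations $I(W(T);S)$ and $I(W(T);S')$, which are then decomposed across iterations via the Markov chain $S\rightarrow W(1)\rightarrow\cdots\rightarrow W(T)$ and controlled by the Gaussian maximum-entropy bound $\tfrac{d}{2}\log\bigl(2\pi e(\eta_t^2K_{ST}^2+d\sigma_t^2)/d\bigr)$ on the increment together with the lower bound $h(n(t))$ on the conditional entropy. The only (harmless) deviation is in that last step: the paper obtains $h(W(t)\mid W(t-1),S)\ge h(n(t))$ via the entropy power inequality applied to $n(t)+\eta_t\alpha\nabla\hat L_\alpha(W(t-1),S')$, whereas you condition additionally on $S'$ so that $W(t)$ becomes a deterministic shift of the independent noise $n(t)$ and then use that conditioning reduces entropy; both yield the same bound.
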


The proof is given in Appendix~\ref{proof:thm_Tbound}.  In this bound, we observe that if the optimization parameters (such as $\alpha, \beta, n(t), w(0), T, d$) and loss function are fixed, the generalization error bound is easy to calculate by using the parameters given above. Also note that our assumptions do not require that the noise is Gaussian distributed or the loss function $\ell(w,z)$ is convex; this generality provides a possibility to tackle a wider range of optimization problems. 

\begin{remark}
For fixed learning parameters $T, \sigma^2_t, K_{ST}, \eta_t$ not depending on the sample size $n$, if we increase the sample size $n$, the bound decays up to $(1-\alpha)\sqrt{2r^2D(\mu\|\mu')}$. This shows that as the sample size increases, the mutual information diminishes as each individual instance has less influence on the gradient descent updates. Consequently, the decreased $\hat{I}(S)$ will result in a reduced generalization error. 
\end{remark}

\begin{remark}
If we fix the sample size instead but increase the iteration number $T$, the bounds would go to infinity with a fast-growing rate if $\eta_t$ and $\sigma^2_t$ are not wisely chosen. To determine proper choices of the optimization parameters to achieve a low generalization error, we consider the case $d = 1$ and $h(n_t) = d\log 2\pi e \sigma^2_t$ where $n_t \sim \mathcal{N}(0,\sigma^2_tI_d) $ for simplicity, and we can then further upper bound $\hat{I}(S)$ as:
\begin{align*}
    \hat{I}(S) = \frac{1}{2}\sum_{t=1}^{T}\log\left(1 + \frac{\eta^2_tK^2_{ST}}{\sigma^2_t} \right)\leq \frac{K^2_{ST}}{2}\sum_{t=1}^{T}\frac{\eta^2_t}{\sigma_t^2}
\end{align*}
with the inequality that $\log(1+x) \leq x$ for $x \geq 0$. To have better control of the mutual information, we may need to control the rate of the summation of $\frac{\eta^2_t}{\sigma^2_t}$. One typical choice of the noisy stochastic gradient descent is $\eta_t = \frac{1}{t}$ and the noise is set as $\sigma_t = \sqrt{\eta_t}$, then $\hat{I}(S) \leq O(\log(T))$, which scales logarithmically and coincides with the results in \cite{welling_bayesian_2011,pensia_generalization_2018}. Recently, in \cite{haghifam2023limitations}, it is shown that the mutual information bounds fail to give vanishing bounds w.r.t. $T$ for both generalization error and excess risk (see (2) and (3) for example). Regarding this, while the mutual information bounds may not provide optimal bounds, we still include these results to showcase the practicality of mutual information bounds in analyzing the noisy gradient descent by knowing the optimization parameters. However, it should be noted that more refined techniques with surrogate algorithms may be required for a tighter characterization. From the ratio, it could also be seen that the mutual information is controlled by the step size $\eta$ and the noise variance $\sigma^2_t$. Given a larger step size $\eta$, the data sample will have a greater influence on the hypothesis. On the contrary, if the noise variance is dominating, altering the step size might have less impact on the hypothesis.
\end{remark}

However, in many cases, the generalization error does not fully reflect the effectiveness of the hypothesis if $W(T) \neq W_{\ERM}$. One can further provide an excess risk upper bound by utilizing Proposition 3 in \cite{schmidt_convergence_2011} with the assumption of a strongly convex loss function, which guarantees the convergence of the hypothesis. We now provide an excess risk upper bound when the loss function is strongly convex. Recall that the excess risk of $W(T)$ is defined as
\begin{equation}
R_{\mu'}\left( W(T) \right)=L_{\mu^{\prime}}\left(W(T)\right)-L_{\mu^{\prime}}\left(w^{*}\right).
\end{equation}
Following the result in Theorem~\ref{thm:excess}, we present the upper bound for the excess risk if the following two assumptions hold.
\begin{assumption}
$\ell(w,z)$ is $\nu$-strongly convex, namely
\begin{equation}
\ell(w_1,z) \geq \ell(w_2,z) + \nabla\ell(w_2)(w_1-w_2) + \frac{\nu}{2}\|w_1 - w_2 \|^2
\end{equation}
for some $\nu >0$ and any $w_1, w_2 \in \mathcal{W}$. \label{asp:strong_convex}
\end{assumption}

\begin{remark}
    The Assumption~\ref{asp:strong_convex} is a fundamental condition in the context of optimization. For algorithms like gradient descent, strong convexity can guarantee faster convergence rates and can be introduced through regularization techniques like $L_2$ regularization, which can prevent overfitting in machine learning models \cite{boyd2004convex, rakhlin2011making}. For example, in the context of linear regression, the loss function is given by the mean squared error. When we add an $L_2$ regularization term (i.e., ridge regression), the loss function satisfies the strongly convex properties. Similarly, the cross-entropy loss in the logistic regression with an added $L_2$ regularization could also satisfy the strong convex condition\cite{salehi2019impact}. Another type of loss - the exponential loss used in the boosting algorithm - is also strongly convex, especially in the context of AdaBoost \cite{freund1997decision}. 
\end{remark}

\begin{assumption}
The loss function $\ell(w,z)$ has $\mathcal{L}$-Lipschitz-continuous gradient such that 
\begin{align}
  |\nabla \ell(w_1,z) - \nabla \ell(w_2,z)| \leq \mathcal{L}|w_1 - w_2|
\end{align}
for any $w_1, w_2 \in \mathcal{W}$ with respect to any $z \in \mathcal{Z}$. 
\label{asp:gradient_lipschitz}
\end{assumption}
\begin{remark}
    The Lipschitz-continuous gradient condition (sometimes referring to $\mathcal{L}$-smooth condition) ensures that the gradient does not change too abruptly and has been widely applied in many optimization problems, particularly in the context of stochastic gradient descent (SGD) and its variants \cite{recht2011hogwild,nguyen2017sarah}. This condition can be used to bound the updates, ensuring that the gradient change does not become too large and destabilize the learning process. Some typical examples include the mean squared loss, the Huber loss \cite{huber1992robust}, and the log-cosh loss \cite{saleh2022statistical}. However, it is important to highlight that not all loss functions in machine learning exhibit smoothness. For instance, the least absolute loss and hinge loss are examples of non-smooth loss functions.
\end{remark}

\begin{corollary}[Excess risk of strongly convex loss function]
Suppose Theorem \ref{thm:Tbound} holds and the loss function~$\ell(w,z)$ satisfies Assumptions~\ref{asp:strong_convex} and \ref{asp:gradient_lipschitz}. Define $\kappa = \frac{\nu}{\mathcal{L}}$, setting $\eta = \frac{1}{\mathcal{L}}$, and $W$ is arbitrarily initialized with $W(0)$. Then the excess risk can be bounded as follows:
\begin{align} 
& \mathbb{E}_{W}\left[R_{\mu'}\left( W(T) \right)\right]  \leq   (1-\alpha)d_{\mathcal W}(\mu, \mu') + \alpha \sqrt{\frac{2 r^{2}}{\beta n} \hat{I}(S)} \nonumber \\ 
& + (1-\alpha)\sqrt{2 r^{2}\left(\frac{\hat{I}(S)}{(1-\beta)n}+D\left(\mu \| \mu^{\prime}\right)\right)} \nonumber \\ 
& + K_{ST}(1-\kappa)^T\mathbb{E}[\left\| W(0) - W_\ERM \right \|] \nonumber  \\ 
& + K_{ST}\sum_{t=1}^T(1-\kappa)^{T-t}\mathbb{E}\left[\left\| n(t) \right\|\right].
\end{align}
\label{coro:convex_er_bound}
\end{corollary}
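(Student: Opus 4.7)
I would follow the same decomposition that drove the proof of Theorem~\ref{thm:excess}, but applied to $W(T)$ in place of $W_{\ERM}$. Writing
\begin{align*}
L_{\mu'}(W(T))-L_{\mu'}(w^*)
&= \gen(W(T),S,S') + \bigl[\hat L_\alpha(W(T),S,S')-\hat L_\alpha(W_{\ERM},S,S')\bigr] \\
&\quad + \bigl[\hat L_\alpha(W_{\ERM},S,S')-\hat L_\alpha(w^*,S,S')\bigr] \\
&\quad + \bigl[\hat L_\alpha(w^*,S,S')-L_\alpha(w^*)\bigr] + \bigl[L_\alpha(w^*)-L_{\mu'}(w^*)\bigr],
\end{align*}
the third bracket is non-positive by the optimality of $W_{\ERM}$, the fourth bracket has zero expectation because $w^*$ is deterministic, and the last bracket equals $(1-\alpha)(L_\mu(w^*)-L_{\mu'}(w^*))$, which is bounded in absolute value by $(1-\alpha)d_{\mathcal W}(\mu,\mu')$ exactly as in the ERM case. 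Only the generalization term, an optimization gap, and the distribution-shift term will survive.

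\textbf{Two ingredients.} Under Assumption~\ref{asp:bounded}, the map $w\mapsto\hat L_\alpha(w,S,S')$ is $K_{ST}$-Lipschitz, so
\begin{equation*}
\hat L_\alpha(W(T),S,S')-\hat L_\alpha(W_{\ERM},S,S')\le K_{ST}\,\|W(T)-W_{\ERM}\|.
\end{equation*}
To control the right-hand side I plan to invoke Proposition~3 of Schmidt, Le~Roux and Bach (2011): when $\hat L_\alpha$ is $\nu$-strongly convex with $\mathcal L$-Lipschitz gradient and $\eta=1/\mathcal L$, each step of (\ref{eq:iterative}) is an exact gradient descent step followed by an additive perturbation $n(t)$, and a standard contraction unrolls to
\begin{equation*}
\|W(T)-W_{\ERM}\|\le (1-\kappa)^T\|W(0)-W_{\ERM}\| + \sum_{t=1}^T(1-\kappa)^{T-t}\|n(t)\|.
\end{equation*}
Taking expectation yields the last two terms of the stated bound. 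For the generalization term, Theorem~\ref{thm:Tbound} furnishes the $\hat I(S)$-based summands directly.

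\textbf{Assembly and main obstacle.} Combining the decomposition with the Lipschitz bound, the contraction, and Theorem~\ref{thm:Tbound}, and then taking expectations, delivers the corollary. I expect the principal technical obstacle to be the contraction for $\|W(T)-W_{\ERM}\|$: one must observe that $W_{\ERM}$ is a fixed point of the noise-free update, apply the classical $(1-\kappa)$ contraction of an $\eta=1/\mathcal L$ gradient step on a $\nu$-strongly convex, $\mathcal L$-smooth objective, and then carefully unroll the one-step recursion $\|W(t)-W_{\ERM}\|\le(1-\kappa)\|W(t-1)-W_{\ERM}\|+\|n(t)\|$ to time $T$ without requiring independence between $n(t)$ and the past iterates. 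The remaining pieces — optimality of $W_{\ERM}$, the mean-zero property of $\hat L_\alpha(w^*,S,S')-L_\alpha(w^*)$, and the appeal to Theorem~\ref{thm:Tbound} — are routine bookkeeping.
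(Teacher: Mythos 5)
Your proposal matches the paper's own proof essentially step for step: the same five-term decomposition of the excess risk, the same use of Theorem~\ref{thm:Tbound} for the generalization term, the same $K_{ST}$-Lipschitz bound on $\hat L_\alpha(W(T))-\hat L_\alpha(W_{\ERM})$ controlled by the contraction from Proposition~3 of Schmidt, Le~Roux and Bach (which the paper states as its Proposition~\ref{prop:strong_convex}), the optimality of $W_{\ERM}$ for the third bracket, and the $(1-\alpha)d_{\mathcal W}(\mu,\mu')$ bound for the remaining terms. No substantive differences.
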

\begin{remark}
    From the above bound, we could also optimize $\kappa$ and $\eta$ for a tighter bound. Let us consider the same setup that $n(t)$ is Gaussian distributed and $d = 1$. First, we set $\eta_t = \frac{1}{t}$ and $\sigma_t =\sqrt{\eta_t}$ to make the generalization error tight. Now we assume that the hypothesis space is bounded, i.e., $\left\| w_1 - w_2 \right \| \leq W_{B}$ for any $w_1, w_2 \in \mathcal{W}$. Then the third term in R.H.S. will decay exponentially when $T$ goes to infinity, which means the initialization does not affect the generalization error with a large number of iterations. While the fourth term in the R.H.S. can be upper bounded by:
    \begin{align*}
        & K_{ST}\sum_{t=1}^T(1-\kappa)^{T-t}\mathbb{E}\left[\left\| n(t) \right\|\right] = \frac{K_{ST}}{\sqrt{2\pi}}\sum_{t=1}^{T}\frac{(1-\kappa)^{T-t}}{\sqrt{t}} \\ 
        & \leq \frac{K_{ST}}{\sqrt{2\pi}} \frac{1}{1-(1-\kappa)}.
    \end{align*}
    Then the final bound would have the form of 
    \begin{align*}
    &\mathbb{E}_{W}\left[R_{\mu'}(W(T))\right] \leq (1-\alpha)d_{\mathcal W}(\mu, \mu') \\
    & + O\Bigg(\alpha \sqrt{\frac{\log T}{\beta n}} + (1-\alpha) \sqrt{\frac{\log T}{(1-\beta)n} + cD(\mu\|\mu')} \\ 
    & + (1-\kappa)^T + \frac{K_{ST}}{\sqrt{2\pi \kappa}} \Bigg),
    \end{align*}
    where $c$ is some leading constant. From the bound, we can see that selecting $\kappa$ as a constant would not affect the rate of the bound, but the choice of $\eta$ could be crucial as it controls the variance level of the noise.    
\end{remark}
The proof is provided in Appendix \ref{proof:convex_er_bound}. Notice that the summation of the terms $\| n(t)\|$ needs to be finite. Hence this upper bound is effective when $n(t)$ is sampled from a bounded random variable (for example, truncated Gaussian or uniform random variable), but not for the case $n(t)$ is Gaussian distributed, where $\| n(t) \|$ is not bounded. We give a toy example in Section~\ref{sec:bernoulli_sgd} for the Bernoulli transfer to show the effectiveness of the bounds.

\subsection{Generalization error on Gibbs algorithm}\label{sec:gibbs}
Theorem \ref{thm:excess} shows that the generalization error can be upper-bounded in terms of the mutual information between the input data and output hypothesis and the KL divergence between the source and target domains. Since the KL divergence $D(\mu\|\mu')$ is usually uncontrollable as $\mu$ and $\mu'$ are unknown in real settings, it is natural to consider an algorithm that minimizes the empirical risk regularized by $I(W,Z_i)$ as
\begin{align}
    &P^*_{W|S,S'} = \nonumber  \\
    &\argmin_{P_{W|S,S'}}\left(\mathbb{E}_{WSS'}[\hat{L}_{\alpha}(W,S,S')] + \frac{1}{k}\sum_{i=1}^{n}I(W;Z_i) \right).
\end{align}  
With the chain rule, we have
\begin{align}
&\sum_{i=1}^{\beta n}I(W;Z'_i) + \sum^n_{i=\beta n+1}I(W;Z_i)  \leq \sum_{i=1}^{\beta n} I(W;Z'_i|(Z')^{i-1}) \nonumber  \\ 
& + \sum_{i=\beta n + 1}^{n} I(W;Z_i|S', Z_{\beta n}^{i-1})= I(W;S,S'),
\end{align}
with the definition $(Z')^{i-1} = \{Z'_1,Z'_2,\cdots,Z'_{i-1}\}$ and $Z_{\beta n}^{i-1} = \{Z_{\beta n+1}, \cdots, Z_{i-1} \}$. Then we aim to minimize the relaxed Gibbs algorithm as
\begin{equation}
    P^*_{W|S,S'} = \argmin_{P_{W|S,S'}}\left(\mathbb{E}_{WSS'}[\hat{L}_{\alpha}(W,S,S')] + \frac{1}{k}I(W;S,S') \right). \label{eq:gibbs-1}
\end{equation}
We can also relax the above optimization problem by replacing $I(W;S,S')$ with an upper bound 
\begin{align*}
  D\left(P_{W | S,S'} \| Q | P_{S,S'}\right)=I(W;S,S')+D\left(P_{W} \| Q\right)
\end{align*}
where $Q$ is an arbitrary distribution on $W$. We can also rewrite $Q$ as 
\begin{align*}
  &D\left(P_{W | S,S'} \| Q | P_{S,S'}\right)= \\ 
  & \int_{\mathcal{Z}^{n}} D\left(P_{W | S=s, S'=s'} \| Q\right) d^{\otimes \beta n}(\mu') d^{\otimes (1 - \beta) n}(\mu),  
\end{align*}
which does not depend on source distribution $\mu$ or target distribution $\mu'$. Thus we can relax~(\ref{eq:gibbs-1}) and arrive at the following surrogate solution as
\begin{align}
&P^*_{W|S,S'} = \argmin_{P_{W|S,S'}}\Big(\mathbb{E}_{WSS'}[\hat{L}_{\alpha}(W,S,S')] \nonumber \\ 
& + \frac{1}{k}D\left(P_{W | S,S'} \| Q | P_{S,S'}\right) \Big).
    \label{eq:Gibbs}    
\end{align}
\begin{theorem}
The solution to the optimization problem (\ref{eq:Gibbs}) is the Gibbs algorithm, which satisfies
% \begin{align}
%         P^*_{W|S,S'} = \argmin_{P_{W|S,S'}}\left(\mathbb{E}_{WSS'}[\hat{L}_{\alpha}(W,S,S')] + \frac{1}{k}D\left(P_{W | S,S'} \| Q | P_{S,S'}\right) \right) \label{eq:gibbs-erm}
% \end{align}
\begin{equation}
P_{W | S', S}^{*}(\mathrm{d} w)=\frac{e^{-k \hat{L}_{\alpha}(W,S,S')} Q(\mathrm{d} w)}{\mathbb{E}_{Q}\left[e^{-k \hat{L}_{\alpha}(W,S,S')}\right]}
\label{eq:Gibbs-Solution}
\end{equation}
for each $(S,S') \in \mathcal{Z}^{n}$.
\label{thm:gibbs-solution}
\end{theorem} 

The proof can be found in Appendix~\ref{apd:gibbs}. For fixed $\alpha$, we denote the hypothesis that achieves the combined minimum population risk among $\mathcal{W}$ by $w^*_{st}(\alpha)$ such that
\begin{align}
    w^*_{st}(\alpha) = \argmin_{w \in \mathcal{W}} \alpha L_{\mu'}(w) + (1-\alpha) L_{\mu}(w).
\end{align} 
In particular, we define $w^*_s = w^*_{st}(0)$ and we also have that $w^* = w^*_{st}(1)$. We further denote $w_G$ as the output hypothesis of the Gibbs algorithm. We have
\begin{align}
& \mathbb{E}_{W} [L_{\mu'}(W_G)] = \mathbb{E}_{WSS'}[L_{\mu'}(W_G) - \hat{L}_{\alpha}(W_G,S,S')] \nonumber  \\ 
& + \mathbb{E}_{WSS'}[\hat{L}_{\alpha}(W_G,S,S')] \nonumber\\
&\leq \mathbb{E}_{WSS'}[\gen(W_G, S, S')] +  \mathbb{E}_{WSS'}[\hat{L}_{\alpha}(W_G,S,S')] \nonumber \\ 
& + \frac{1}{k} D(P^*_{W_G|S,S'} \| Q|S,S').
\label{eq:excess_gibbs}
\end{align}
As a direct application of Corollary~\ref{coro:general_bound}, we then specialize the upper bound on the population risk for the Gibbs algorithm by further upper bounding the generalization error on the R.H.S. of (\ref{eq:excess_gibbs}). For any $\ell \in [0,1]$, we reach the following corollary for countable hypothesis space.
\begin{corollary}
\label{coro:gibbs-finite}
Suppose $\mathcal{W}$ is countable. Let $W_G$ denote the output of the Gibbs algorithm applied on dataset $S, S'$. For some $\alpha \in [0,1]$ and $\ell(w,z) \in [0,1]$ for any $w$ and $z$, the generalization error is upper bounded by:
\begin{align}
& \left|\Esub{WSS'}{\gen(W_G, S, S')}\right|
%& \leq \frac{\alpha^2k}{4\beta n}+ \frac{(1-\alpha)}{(1-\beta)n} \sum_{i=\beta n +1}^{n}\sqrt{\frac{(1-\alpha)^2k^2}{16(1-\beta)^2n^2} + \frac{D(\mu \| \mu')}{2}}  \noindent \\
\leq \frac{\alpha^2k}{4\beta n}+ \frac{(1-\alpha)^2k}{4(1-\beta)n} \nonumber \\ 
& +  (1-\alpha)\sqrt{\frac{D(\mu \| \mu')}{2}}, \label{eq:Gibbs-gen}
\end{align}
and the population risk of the Gibbs algorithm satisfies:
\begin{align}
& \mathbb{E}_{W}\left[L_{\mu'}(W_G)\right] \leq L_{\alpha}(w^*_{st}(\alpha)) + \frac{1}{k} \log \frac{1}{Q\left(w^*_{st}(\alpha)\right)} \nonumber  \\ 
& +\frac{\alpha^2k}{4\beta n}+ \frac{(1-\alpha)^2k}{4(1-\beta)n} +  (1-\alpha)\sqrt{\frac{D(\mu \| \mu')}{2}}.
\end{align}
\end{corollary}

\begin{remark}
From~(\ref{eq:Gibbs-gen}), we can see that the generalization error bound converges up to the domain divergence with $O(\frac{1}{n})$, which coincides with the rate in \cite{bu2022characterizing} for the exact generalization error characterization of the $\alpha-$weighted ERM algorithm. The only difference lies in the definition of the generalization error: we define the generalization error as the gap between the combination of the empirical risks in both the source and target domains and the population risk in the target domain. While in \cite{bu2022characterizing}, the generalization error is defined as the gap between the empirical risk in the target domain only and its population risk. Thus, there is an additional domain divergence term in our upper bound. Apart from that, in both cases, the Gibbs algorithm is able to achieve a fast convergence rate. 
\end{remark}
If we set $\alpha = 1$, $\beta = 1$ (only use the target data), we will retrieve the results for conventional machine learning, derived in \cite[Corollary 2]{xu_information-theoretic_2017} as follows
\begin{align}
  \mathbb{E}_W\left[L_{\mu'}(W_G)\right] \leq L_{\mu'}(w^*)+ \frac{1}{k} \log \frac{1}{Q\left(w^*\right)}+ \frac{k}{4n}.
\end{align}
We consider the special case $\beta = 0$, $\alpha = 0$ (only using the source data),  then we have
\begin{align}
\mathbb{E}_W\left[L_{\mu'}(W_G)\right] \leq &  L_{\mu}(w^*_s) + \frac{1}{k} \log \frac{1}{Q\left(w^*_{s}\right)}  \nonumber \\ 
& + \frac{k}{4n} + \sqrt{ \frac{D(\mu \| \mu')}{2}}.
\end{align}
In this case, it is observed that $D(\mu \| \mu')$ would not vanish. Even when $n$ increases, the excess risk also depends on the population risk of $w^*_{s}$ w.r.t. the source distribution $\mu$.

When $\mathcal{W}$ is uncountable (e.g., $\mathcal{W} = \mathbb{R}^{d}$), the term $\frac{1}{k} D(P_{W_G|S,S'} \| Q|S,S')$ will be bounded using the approximation by a Gaussian distribution for $W_G$ given $S$ and $S'$. As a result, we could bound the population risk using the following corollary.
\begin{corollary} \label{coro:gibbs-infinite}
Suppose $\mathcal{W}=\mathbb{R}^{d}$ and $\ell(\cdot, z)$ is $\rho$-Lipschitz for all $z \in \mathcal{Z}$. Let $W_G$ denote the output of the Gibbs algorithm induced by datasets $S$ and $S'$. For $\ell \in[0,1]$ and some $\alpha \in [0,1]$, the population risk of $W_G$ satisfies
\begin{align}
& \mathbb{E}_W\left[L_{\mu'}(W_G)\right] \leq  L_{\alpha}(w^*_{st}(\alpha)) + \frac{\alpha^2k}{4\beta n}+ \frac{(1-\alpha)^2k}{4(1-\beta)n} \nonumber  \\ 
& +  (1-\alpha)\sqrt{\frac{D(\mu \| \mu')}{2}} \nonumber  \\
&+ \inf _{a>0}\left(a \rho \sqrt{d}+\frac{1}{k} D\left(\mathcal{N}\left(w^*_{st}(\alpha), a^{2} \mathbf{I}_{d}\right) \| Q\right)\right).
\end{align}
\end{corollary}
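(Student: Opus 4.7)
The plan is to mirror the proof of Corollary~\ref{coro:gibbs-finite} almost verbatim, replacing only the step in which the Dirac mass $\delta_{w^*_{st}(\alpha)}$ is used to upper bound the regularized empirical risk (which yields an infinite KL term when $\mathcal{W}=\mathbb{R}^{d}$ and $Q$ is absolutely continuous). The starting point is the same decomposition
\begin{align*}
\mathbb{E}_W\bigl[L_{\mu'}(W_G)\bigr]
&\leq \mathbb{E}\bigl[\operatorname{gen}(W_G,S,S')\bigr]
+ \mathbb{E}\bigl[\hat{L}_\alpha(W_G,S,S')\bigr]
+ \tfrac{1}{k}\,D\bigl(P^*_{W_G|S,S'}\,\Vert\,Q\,|\,P_{S,S'}\bigr),
\end{align*}
and the generalization term is controlled by the Hoeffding-type subgaussian computation already carried out in the proof of Corollary~\ref{coro:gibbs-finite}, giving $\tfrac{\alpha^{2}k}{4\beta n}+\tfrac{(1-\alpha)^{2}k}{4(1-\beta)n}+(1-\alpha)\sqrt{D(\mu\Vert\mu')/2}$. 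Nothing in that bound depends on $\mathcal{W}$ being countable.

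The remaining sum $\mathbb{E}[\hat L_\alpha(W_G,S,S')] + \tfrac{1}{k}D(P^*_{W_G|S,S'}\Vert Q\mid P_{S,S'})$ is exactly the objective minimized by the Gibbs posterior in Theorem~\ref{thm:gibbs-solution}, so I would upper bound it by evaluating the same objective at any feasible surrogate conditional distribution. The natural choice for the continuous case is the data-independent Gaussian $P_{W'|S,S'}=\mathcal{N}(w^*_{st}(\alpha),a^{2}\mathbf{I}_{d})$ for an arbitrary $a>0$, which yields
\begin{align*}
\mathbb{E}\bigl[\hat L_\alpha(W_G,S,S')\bigr]+\tfrac{1}{k}D\bigl(P^*_{W_G|S,S'}\Vert Q\mid P_{S,S'}\bigr)
\;\leq\;
\mathbb{E}_{W'}\bigl[\hat L_\alpha(W',S,S')\bigr]+\tfrac{1}{k}D\bigl(\mathcal{N}(w^*_{st}(\alpha),a^{2}\mathbf{I}_{d})\,\Vert\,Q\bigr),
\end{align*}
where the conditional KL collapses because the surrogate does not depend on $(S,S')$.

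Next I would use the $\rho$-Lipschitz hypothesis to control the expected empirical risk of the surrogate: for every $z$,
\begin{align*}
\mathbb{E}_{W'}[\ell(W',z)]\;\leq\;\ell(w^*_{st}(\alpha),z)+\rho\,\mathbb{E}_{W'}\!\left[\,\Vert W'-w^*_{st}(\alpha)\Vert\,\right]
\;\leq\;\ell(w^*_{st}(\alpha),z)+a\rho\sqrt{d},
\end{align*}
where the last step follows from Jensen's inequality since $\mathbb{E}\Vert W'-w^*_{st}(\alpha)\Vert\leq\sqrt{\mathbb{E}\Vert W'-w^*_{st}(\alpha)\Vert^{2}}=a\sqrt{d}$. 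Averaging over $(S,S')$ converts $\mathbb{E}_{W'}[\hat L_\alpha(W',S,S')]$ into $L_\alpha(w^*_{st}(\alpha))+a\rho\sqrt{d}$ (since the convex combination of population risks under $\mu$ and $\mu'$ is exactly $L_\alpha$). Taking the infimum over $a>0$ gives the stated bound.

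The only delicate point is the Lipschitz-plus-Jensen step and making sure the KL term $D(\mathcal{N}(w^*_{st}(\alpha),a^{2}\mathbf{I}_{d})\Vert Q)$ genuinely captures the price of using a spread-out surrogate instead of a point mass; the rest is bookkeeping that reuses the generalization bound and the variational optimality of the Gibbs posterior. No single step should be hard, but the bias--variance trade-off controlled by $a$ is where the sharpness of the bound resides, and choosing a different surrogate family (e.g.\ anisotropic Gaussians) could in principle yield a tighter statement — here we restrict to isotropic Gaussians to match the corollary as written.
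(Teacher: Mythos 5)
Your proposal is correct and follows exactly the route the paper intends (the paper omits the detailed proof but states immediately before the corollary that the KL term is handled "using the approximation by a Gaussian distribution"): reuse the generalization bound from Corollary~\ref{coro:gibbs-finite}, exploit the variational optimality of the Gibbs posterior by evaluating the objective at the data-independent surrogate $\mathcal{N}(w^*_{st}(\alpha),a^{2}\mathbf{I}_{d})$, and control the resulting bias via the $\rho$-Lipschitz assumption and Jensen's inequality, then optimize over $a$. All the individual steps check out, including the observation that the Hoeffding-based mutual-information bound does not rely on countability of $\mathcal{W}$.
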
 
%The proof follows the similar procedure of \citet[Corollary 3]{xu_information-theoretic_2017} with the Gaussian approximation, which we do not repeat here.  

\section{Bounding with other divergences} \label{sec:other_metric}
The usage of KL divergence to quantify distribution shifts has been popular in many previous works \cite{gupta2021s, nguyen2021kl, aminian2022information}. However, an important observation made by \citet{hanneke2019value} demonstrates certain limitations of this approach - specifically in the context of transfer learning - where in Theorem \ref{thm:excess}, the result is not effective for a class of supervised machine learning problems if $\mu$ is not absolutely continuous with respect to $\mu'$ as the KL divergence goes to infinity and the bound becomes vacuous. In Example 1 of \cite{hanneke2019value}, the authors also illustrate why KL divergence in the hypothesis space is not the right measure for this purpose, and the issue is wisely bypassed by changing the hypothesis distribution from $P(w)$ to a Bernoulli distribution $P(w = w^*)$. The transfer component - the divergence proposed in their work - can easily handle this case, providing insights into the data values in transfer learning. Now we give two examples.

\begin{example}
Let the sample $Z$ be a pair $(X, Y)$ where $X$ denotes features and $Y$ denotes the corresponding label, and we assume that $Y$ is determined by $X$, i.e., $Y=f(X)$ for some deterministic function $f$. In this case, the distribution $\mu$ of $Z$ can be factored as $\mu (z)=\mu(x,y)=P_{X}(x)P_{Y|X}(y|x)=P_X(x) \mathbf 1_{y=f(x)}$ where $P_X$ is the distribution of $X$ and $\mathbf 1$ denotes the indicator function. Let target distribution $\mu'$ factor as $\mu'(x,y)=P_{X}'(x)\mathbf 1_{y=f'(x)}$ for some distribution $P_X'$ and function $f'$. Notice that in this case, $\mu$ is not absolutely continuous with respect to $\mu'$  unless $f=f'$. Indeed, for some $(x, y)$ we have $\mathbf 1_{y=f'(x)}=0$ while $\mathbf 1_{y=f(x)}=1$ unless $f$ agrees with $f'$ almost everywhere. Therefore, the KL divergence is $D(\mu\|\mu')=\infty$, and the upper bound becomes vacuous unless $f=f'$ (however $P_X$ and $P_X'$ could still be different, hence the problem is not necessarily trivial in this case).
\end{example}

\begin{example}\label{ex:domain_div}
Let the sample $Z$ be a discrete random variable over the set $\{1,2,3\}$. Assume that for the source domain $\mu(Z = 1) = \mu(Z = 2) = \mu(Z = 3) = \frac{1}{3}$ while $\mu'(Z = 1) = \mu'(Z = 2) = \frac{1}{2}$ and $\mu'(Z = 3) = 0$. In this case, the KL divergence $D(\mu\|\mu') = \infty$ as $\mu$ is not absolutely continuous w.r.t. $\mu'$ when $Z = 3$. However, if we consider the total variation \cite{dudley2010distances} between $\mu$ and $\mu'$ (where the detailed definition is given in the later context), it can be calculated that $TV(\mu, \mu') = \sum_{z\in\mathcal{Z}} \frac{1}{2}|\mu(z) - \mu'(z)| = \frac{1}{3}$, which is finite instead.
\end{example}

We mitigate these issues tied to KL divergence by introducing bounds using other divergences, such as the Wasserstein distance and $\phi$-divergence, as we show in Example~\ref{ex:domain_div}. These alternative measures provide us with a more general and robust framework to quantify the distribution shift, which may also give a tighter characterization. 
%Specifically when the label $Y$ is determined by the features $X$, the KL divergence is $D(\mu||\mu')=\infty$, leading to a useless bound. To develop an appropriate upper bound to handle such scenarios, we follow the methods in \cite{jiao_dependence_2017} to extend the results by using other types of $\phi$-divergence. In particular, we choose $\phi(x) = |x-1|$, which do not impose the absolute continuity restriction. Detailed methods can be found in section~\ref{sec:other_metric}. We point out that our result is not effective for a class of supervised machine learning problems. 

\subsection{$\phi$-divergence bounds}
To develop an appropriate upper bound to handle the case where the KL divergence is infinite, we may extend the results by using other types of divergence following the work of \cite{jiao_dependence_2017}, which do not impose the absolute continuity restriction. To this end, we first introduce a more general divergence between two distributions that can handle such a case, namely, the $\phi$-divergence.
\begin{definition}[$\phi$-divergence]
Given two measures $\mu$, $\nu$ and a convex functional $\phi$, we define the $\phi$ divergence by:
\begin{align}
    D_{\phi}(\nu\|\mu) = \mathbb{E}_{\mu}\left[\phi(\frac{d\nu}{d\mu}) \right]
\end{align}
where $d\nu/d\mu$ is the Radon-Nikodym derivative. 
\end{definition}
To tackle the absolute continuity issue between $\mu$ and $\mu'$, we shall choose $\phi(x) = \frac{1}{2}|x-1|$ and arrive at the bounds with the total variation distance, which is always bounded by $[0,1]$. Following \cite{jiao_dependence_2017}, we suppose that the loss function $\ell (w,z)$ is $L_{\infty}$-norm upper bounded by $\sigma$ where the $L_{\infty}$-norm of a random variable is defined as
\begin{equation*}
    \| X \|_{\infty} = \inf \{ M: P(X > M) = 0 \}.
\end{equation*}
Then we have the following corollary.
\begin{corollary} (Generalization error bound of ERM using $\phi$-divergence)
Assume that for any $w\in\mathcal W$, the loss function $\ell(w, Z)$ is $L_{\infty}$-norm bounded by $\sigma$ under the distribution $\mu'$. Then the following inequality holds. 
%with probability at least $1-\delta$  (over the randomness of samples and the learning algorithm) that
%for any $\epsilon>0$ and $\delta>0$, there exists an $n_0$ (depending on $\delta$ and $\epsilon$) such that for all $n\geq n_0$,
\begin{small}
\begin{align}
&\mathbb{E}_{WSS'}\left[\gen (W_{\ERM}, S,S')\right] \leq \frac{2\alpha\|\sigma\|_{\infty}}{\beta n}\sum_{i=1}^{\beta n}I_{\phi}(W_{\ERM};Z'_i) \nonumber \\ 
&+\frac{2(1-\alpha)\|\sigma\|_{\infty}}{(1-\beta)n}\sum_{i=\beta n+1}^{n} \left( I_{\phi}(W_{\ERM};Z_i) + TV(\mu, \mu')\right),
\label{eq:excess-fdiv-ERM}
\end{align}
\end{small}
\label{coro:excess-fdiv-ERM}
where $I_{\phi}(W_{\ERM}; Z_i) = D_{\phi}(P_{W_{\ERM}Z_i}||P_{W_{\ERM}} \otimes P_{Z_i})$ is the $\phi$-divergence between the distribution $P_{W_{\ERM}Z_i}$ and $P_{W_{\ERM}} \otimes P_{Z_i}$ (similar to $Z'_i$) with $D_{\phi}(P||Q) = \frac{1}{2}\int|dP - dQ|$ and $TV(\mu, \mu') = D_{\phi}(\mu || \mu')$ denotes the total variation distance between the distribution $\mu$ and $\mu'$.
\end{corollary}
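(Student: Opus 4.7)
The plan is to mirror the proof of Theorem~\ref{thm:exp_gen}, but replace the Donsker--Varadhan variational step for the KL divergence with the elementary bound
\begin{align*}
\Bigl|\mathbb{E}_{P}[f(X)]-\mathbb{E}_{Q}[f(X)]\Bigr|
=\Bigl|\int f\,(dP-dQ)\Bigr|
\leq \|f\|_{\infty}\int |dP-dQ|
= 2\|f\|_{\infty}\,TV(P,Q),
\end{align*}
which holds for any bounded measurable $f$ and which does not require $P\ll Q$. Setting $f=\ell(W,Z_i)$ and using the $L_{\infty}$ assumption on the loss gives $\|f\|_{\infty}\leq\|\sigma\|_{\infty}$.

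First I would reuse the exact same decomposition used at the start of the proof of Theorem~\ref{thm:exp_gen}, writing
\begin{align*}
\mathbb{E}_{WSS'}\!\left[\gen(W_{\ERM},S,S')\right]
=\frac{1}{n}\frac{\alpha}{\beta}\sum_{i=1}^{\beta n}\mathbb{E}_{WZ_i}\!\left[L_{\mu'}(W)-\ell(W,Z_i)\right]
+\frac{1}{n}\frac{1-\alpha}{1-\beta}\sum_{i=\beta n+1}^{n}\mathbb{E}_{WZ_i}\!\left[L_{\mu'}(W)-\ell(W,Z_i)\right].
\end{align*}
Each inner difference can then be written as $\mathbb{E}_{P_W\otimes\mu'}[\ell(W,Z_i)]-\mathbb{E}_{P_{WZ_i}}[\ell(W,Z_i)]$, which is what the total-variation inequality above controls.

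For the target indices $i=1,\dots,\beta n$, the marginal of $Z_i$ under $P_{WSS'}$ is exactly $\mu'$, so $P_W\otimes\mu'=P_W\otimes P_{Z_i}$ and the bound yields
\begin{align*}
\mathbb{E}_{WZ_i}\!\left[L_{\mu'}(W)-\ell(W,Z_i)\right]
\leq 2\|\sigma\|_{\infty}\,TV(P_{WZ_i},P_W\otimes P_{Z_i})
= 2\|\sigma\|_{\infty}\,I_{\phi}(W_{\ERM};Z_i).
\end{align*}
For source indices $i=\beta n+1,\dots,n$ the marginal of $Z_i$ is $\mu$, so after applying the same $L_{\infty}$ bound I would insert a triangle inequality for $TV$ to split off the domain shift:
\begin{align*}
TV(P_{WZ_i},P_W\otimes\mu')
\leq TV(P_{WZ_i},P_W\otimes\mu)+TV(P_W\otimes\mu,P_W\otimes\mu')
= I_{\phi}(W_{\ERM};Z_i)+TV(\mu,\mu'),
\end{align*}
where the second equality uses that total variation is preserved under product with a common marginal. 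Plugging back and collecting terms exactly reproduces the stated inequality.

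There is no real obstacle here; the only point that requires care is the factor of $2$, which comes from the convention $D_{\phi}(P\|Q)=\tfrac{1}{2}\int|dP-dQ|=TV(P,Q)$ used in the statement, so that $\int|dP-dQ|=2TV$. One should also note that, unlike the KL bound, no concavity or Jensen step is needed after summing, so the square-root vanishes and the bound becomes linear in the information quantities, which is consistent with the fact that $I_{\phi}$ and $TV$ are themselves bounded by $1$.
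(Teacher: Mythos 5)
Your proposal is correct and follows essentially the same route as the paper: the paper invokes Theorem 3 of the cited work of Jiao et al.\ for the inequality $|\mathbb{E}_P[\ell]-\mathbb{E}_Q[\ell]|\leq 2\|\sigma\|_\infty D_\phi(P\|Q)$ (which is exactly the elementary $L_\infty$/total-variation bound you derive by hand), applies it per-sample after the same decomposition as Theorem~\ref{thm:exp_gen}, and uses the same triangle inequality on total variation to split off $TV(\mu,\mu')$ for the source indices. No meaningful difference in approach or any gap.
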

The proof can be found in Appendix~\ref{proof:coro-fdiv-ERM}.
Note that with this bound, the divergence $TV(\mu \| \mu')$ is always bounded by $[0,1]$ for any $\mu$ and $\mu'$. More generally, the generalization error can be upper bounded using different $\phi$-divergences, and the key to unifying different divergences is the Legendre-Fenchel duality that we used in Theorem~\ref{thm:exp_gen}. For example, by choosing $\phi(x) = x\log x$ and the function $\psi$ for bounding the moment generating function, we will end up with the KL-divergence-based bound. Likewise, if we choose $\phi(x) = \frac{x^2}{2}$ and the function $\psi$ for bounding the variance, we will arrive at the $\chi^2$-divergence based bound as shown in \cite{esposito2022generalisation}, such an extension allows the result to hold for a wider family of distributions (e.g., the sub-exponential random variables) where the mutual information bound may be invalid.  %Readers can refer to \cite{jiao_dependence_2017,Lopez2019,esposito2022generalisation} for more details.

\subsection{Generalization error with Wasserstein distance}
Another interesting metric, called the Wasserstein distance, has a very close connection with the KL divergence via the transportation-cost inequality \cite{raginsky2013concentration}. Such a distance has several advantages over the KL divergence. As we show later, the Wasserstein distance can handle the deterministic algorithm where the mutual information is infinite. Furthermore, under mild conditions, the Wasserstein distance-based bound is naturally tighter for transfer learning, compared to the mutual information bound as shown in Corollary~\ref{coro:general_bound}. To show our results, we first give some definitions of the probability measures. Let $(\mathcal{Z}, d)$ be a metric space and $p \in[1,+\infty)$, we define $\mathcal{P}_{p}(\mathcal{Z})$ as the set of probability measures $\mu$ on $\mathcal{Z}$ satisfying $(\mathbb{E}_{Z \sim \mu}[d(Z,z_0)^p])^{\frac{1}{p}} < \infty$ for some $z_0 \in \mathcal{Z}$. Then we define the $p$-Wasserstein distance as follows.
\begin{definition}[Wasserstein distance] 
Assume $\mu, \nu \in \mathcal{P}_{p}(\mathcal{Z})$. The $p$-Wasserstein distance between $\mu$ and $\nu$ is defined as 
\begin{align}
 \mathbb{W}_{p}(\mu, \nu)=\inf_{\pi \in \Pi(\mu, \nu)}\left(\mathbb{E}_{\pi}\left[d(Z, Z')^{p}\right]\right)^{1 / p}.   
\end{align}
where $\Pi(\mu, \nu)$ denotes the coupling of $\mu$ and $\nu$, i.e., the set of all the joint probability measures $\pi \in \mathcal{P}(\mathcal{Z} \times \mathcal{Z})$ with marginals equal to $\mu$ and $\nu$.
\end{definition}
% \begin{remark}
% The Wasserstein distance is also known in the literature as the optimal transport problem under Monge and Kantorovich formulations; see \cite{ambrosio2003lecture,raginsky2013concentration,kolouri2017optimal} for details. Intuitively speaking, this distance measures how far we have to move the mass of $\mu$ to turn it into $\nu$. If a minimizer $\pi$ exists, we call it the optimal transport map. The Wasserstein distance has many nice properties and has become popular in statistics and machine learning. For example, the Wasserstein distance can capture the underlying geometry of the space, and it can also measure the distance between the discrete and continuous distributions. Furthermore, in some cases, KL divergence is sensitive to small wiggles in the distribution, while this is not the case for the Wasserstein distance, see Example 1 in \cite{arjovsky2017wasserstein}.
% \end{remark}
With the definition in place, we give the generalization error bound of the ERM algorithm using the Wasserstein distance in the following theorem.
\begin{theorem} [Generalization error bound of ERM with Wasserstein distance] \label{thm:gen-wd} 
\noindent Let $P_W$ be the marginal distribution induced by $S,S'$ and $P_{W|SS'}$ with the ERM algorithm. Assume that for any $w\in\mathcal W$, the loss function $\ell(w, Z)$ is $\mathcal{L}$-Lipschitz for any $W \in \mathcal W$, $Z \in \mathcal Z$. Then the following inequality holds.
\begin{align}
&\mathbb{E}_{W}\left[\gen(W_{\ERM}, S, S')\right]  \leq \frac{\alpha\mathcal{L}}{\beta n} \sum_{i=1}^{\beta n} \mathbb{E}_{\mu'}[\mathbb{W}_1(P_{W},P_{W|Z'_i})]  \nonumber \\ 
& + \frac{(1-\alpha)\mathcal{L}}{(1-\beta) n} \sum_{i=\beta n + 1}^{n}\left( \mathbb{E}_{\mu}[\mathbb{W}_1(P_{W},P_{W|Z_i})] + \mathbb{W}_{1}(\mu,\mu') \right).
\end{align}
\end{theorem}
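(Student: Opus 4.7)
The approach is to follow the template of the proof of Theorem~\ref{thm:excess}, but to replace the KL-divergence based variational argument with the Kantorovich--Rubinstein duality: for any $\mathcal{L}$-Lipschitz function $f$ and any distributions $P,Q$ on the same space, $|\mathbb{E}_P[f]-\mathbb{E}_Q[f]|\le \mathcal{L}\,\mathbb{W}_1(P,Q)$. Starting from the excess-risk decomposition~(\ref{eq:excess_decompose}) and taking expectations, the centered term $\mathbb{E}[\hat{L}_{\alpha}(w^*,S,S')-L_{\alpha}(w^*)]$ vanishes because $w^*$ is deterministic, leaving $\mathbb{E}[R_{\mu'}(W_{\ERM})] \le \mathbb{E}[\gen(W_{\ERM},S,S')] + (1-\alpha)(L_{\mu}(w^*)-L_{\mu'}(w^*))$. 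The second piece is controlled by applying the duality to the Lipschitz map $\ell(w^*,\cdot)$, giving the ``optimal-hypothesis domain-shift'' bound $(1-\alpha)\mathcal{L}\,\mathbb{W}_1(\mu,\mu')$; this contribution will later be absorbed into the source-sample sum of the displayed inequality.

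Next I would bound the generalization error sample-by-sample. Write $\mathbb{E}[\gen(W,S,S')] = \frac{\alpha}{\beta n}\sum_{i=1}^{\beta n}\Delta_i + \frac{1-\alpha}{(1-\beta)n}\sum_{i=\beta n+1}^{n}\Delta_i$ with $\Delta_i := \mathbb{E}[L_{\mu'}(W)-\ell(W,Z_i)]$, and treat target and source indices separately. For a target index ($Z_i\sim\mu'$), conditioning on the value of $Z_i$ reduces $\Delta_i$ to $\mathbb{E}_{z\sim\mu'}[\mathbb{E}_{P_W}\ell(W,z)-\mathbb{E}_{P_{W|Z_i=z}}\ell(W,z)]$; applying the duality to the $\mathcal{L}$-Lipschitz map $\ell(\cdot,z)$ and averaging over $z$ gives $|\Delta_i|\le \mathcal{L}\,\mathbb{E}_{\mu'}[\mathbb{W}_1(P_W,P_{W|Z_i})]$. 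For a source index ($Z_i\sim\mu$), the marginal mismatch with $\mu'$ forces the insertion of the intermediate quantity $L_\mu(W)$ via $L_{\mu'}(W)-\ell(W,Z_i)=(L_{\mu'}(W)-L_\mu(W))+(L_\mu(W)-\ell(W,Z_i))$: the second summand is handled exactly as in the target case (with $\mu$ replacing $\mu'$) and contributes $\mathcal{L}\,\mathbb{E}_{\mu}[\mathbb{W}_1(P_W,P_{W|Z_i})]$, while the first summand is bounded by $\mathcal{L}\,\mathbb{W}_1(\mu,\mu')$ by applying the duality to the $\mathcal{L}$-Lipschitz map $\ell(W,\cdot)$ for each fixed $W$. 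Summing over $i$ and grouping the $\mathbb{W}_1(\mu,\mu')$ contributions inside the source-sample summation reproduces the inequality in the statement.

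The only real subtlety is that Kantorovich--Rubinstein is invoked in two different coordinates: in the hypothesis coordinate $w$ (for the per-sample deviations between $P_{W|Z_i}$ and $P_W$) and in the sample coordinate $z$ (for the population-level gap $L_{\mu'}-L_\mu$). Both applications are justified by the theorem's Lipschitz hypothesis on $\ell$ taken in each argument separately. Beyond this bookkeeping the proof is essentially routine; no concentration inequality or variational identity is required, which is what makes the Wasserstein bound structurally simpler than the KL-based counterpart of Theorem~\ref{thm:excess}.
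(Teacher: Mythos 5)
Your handling of the generalization error is exactly the paper's argument: the same three-way split of $\Esub{WSS'}{\gen(W_{\ERM},S,S')}$ into a target part, a source ``marginal mismatch'' part $\int \ell\, dP_W d\mu' - \int \ell\, dP_W d\mu$, and a source ``per-sample'' part $\int \ell\, dP_W d\mu - \int \ell\, dP_{WZ_i}$, each controlled by Kantorovich--Rubinstein duality applied in the appropriate coordinate. That portion is correct and matches the paper's proof step for step.

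The gap is in your claim that the extra term $(1-\alpha)\bigl(L_{\mu}(w^*)-L_{\mu'}(w^*)\bigr)$ coming from the excess-risk decomposition~(\ref{eq:excess_decompose}) can be ``absorbed into the source-sample sum.'' It cannot: the source-sample sum in the displayed inequality already contributes exactly $\frac{(1-\alpha)\mathcal{L}}{(1-\beta)n}\sum_{i=\beta n+1}^{n}\mathbb{W}_1(\mu,\mu') = (1-\alpha)\mathcal{L}\,\mathbb{W}_1(\mu,\mu')$, and your (perfectly valid) duality bound on the $w^*$ term adds a second, separate copy of $(1-\alpha)\mathcal{L}\,\mathbb{W}_1(\mu,\mu')$. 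These are two genuinely distinct domain-shift contributions --- one from the generalization gap of $W_{\ERM}$, one from the bias of $\hat{L}_{\alpha}(w^*)$ as an estimator of $L_{\mu'}(w^*)$ --- so your derivation establishes the stated right-hand side \emph{plus} an additional $(1-\alpha)\mathcal{L}\,\mathbb{W}_1(\mu,\mu')$, which is strictly weaker than the displayed inequality. For what it is worth, the paper's own proof of this theorem only bounds the generalization error (the left-hand side as printed, $\mathbb{E}_W[R_{\mu'}(W_{\ERM})]$, appears to be a typo for $\Esub{WSS'}{\gen(W_{\ERM},S,S')}$, consistent with the theorem's title), and the genuine excess-risk version is Theorem~\ref{thm:er-wd}, which carries the extra additive term $(1-\alpha)D_{\mathcal{W}}(\mu,\mu')$ for precisely the $w^*$ contribution you tried to absorb. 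Your idea of replacing that $D_{\mathcal{W}}(\mu,\mu')$ by $\mathcal{L}\,\mathbb{W}_1(\mu,\mu')$ via duality on $\ell(w^*,\cdot)$ is a nice refinement of Theorem~\ref{thm:er-wd}, but it must appear as an extra additive term, not be folded into the existing sum.
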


\begin{remark} Under the Wasserstein distance, the domain divergence is captured by the first order Wasserstein distance $\mathbb{W}_1(\mu, \mu')$, which also resolves the absolutely continuous issue in mutual information bound. This bound requires that the loss function is $\mathcal{L}$-Lipschitz with respect to any hypothesis $W$ and data instance $Z$ while the mutual information bound is derived under the subgaussian assumption. It is also worth noting that the bound is based on the term $\mathbb{W}_1(P_W,P_{W|Z_i})$. Intuitively, this term measures how distribution diverges with a given single instance $Z_i$. In other words, it measures how one instance can affect the distribution of $W$ given a specific algorithm. This intuition is also related to stability in the algorithmic perspective, similar to $I(W,Z_i)$ in the mutual information bound.
\end{remark}
The proof can be found in Appendix~\ref{proof:gen-wd}. Based on the result of the generalization error, we can derive the excess risk upper bound using the Wasserstein distance.

\begin{theorem} [Excess risk bound of ERM with Wasserstein distance] 
\label{thm:er-wd}
Assume the conditions in Theorem~\ref{thm:gen-wd} hold and assume the loss function $\ell(w,z)$ is bounded by $[0,1]$ for any $w$ and $z$. Then the following inequality holds:
\begin{align}
& \mathbb{E}_{W}\left[R_{\mu'}(W_{\ERM})\right] \leq    (1-\alpha) d_{\mathcal W}(\mu,\mu') \nonumber \\ 
&+ \frac{(1-\alpha)\mathcal{L}}{(1-\beta) n} \sum_{i=\beta n + 1}^{n}\left( \mathbb{E}_{\mu}[\mathbb{W}_1(P_{W},P_{W|z_i})] + \mathbb{W}_{1}(\mu,\mu') \right) \nonumber \\
& + \frac{\alpha\mathcal{L}}{\beta n} \sum_{i=1}^{\beta n} \mathbb{E}_{\mu'}[\mathbb{W}_1(P_{W},P_{W|z'_i})]. 
\end{align}  
\label{thm:ER-WD}
\end{theorem}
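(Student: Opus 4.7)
The plan is to combine the excess risk decomposition \eqref{eq:excess_decompose} derived earlier in Section~\ref{subsec:erm} with the Wasserstein generalization error bound of Theorem~\ref{thm:gen-wd}. From \eqref{eq:excess_decompose} we already have, for any realization of the data,
\begin{align*}
R_{\mu'}(W_{\ERM}) \leq \gen(W_{\ERM}, S, S') + \bigl(\hat L_{\alpha}(w^*,S,S') - L_{\alpha}(w^*)\bigr) + (1-\alpha)\bigl(L_{\mu}(w^*) - L_{\mu'}(w^*)\bigr),
\end{align*}
so the task reduces to taking the expectation over $P_{WSS'}$ and controlling the three resulting terms separately.

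First I would dispatch the two non-generalization pieces, which are essentially free. Because $w^*$ is deterministic (depending only on the unknown $\mu'$, not on the observed samples), the IID sampling assumption and linearity of expectation give $\mathbb{E}[\hat L_{\alpha}(w^*,S,S')] = \alpha L_{\mu'}(w^*) + (1-\alpha) L_{\mu}(w^*) = L_{\alpha}(w^*)$, so the middle term vanishes in expectation. The last term is controlled pointwise by the supremum definition \eqref{eq:empirical_distance}: $L_\mu(w^*) - L_{\mu'}(w^*) \leq d_{\mathcal W}(\mu, \mu')$, and the boundedness hypothesis $\ell \in [0,1]$ ensures this quantity is finite (in fact at most $1$). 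Next I would invoke Theorem~\ref{thm:gen-wd} to upper bound $\mathbb{E}_{WSS'}[\gen(W_{\ERM},S,S')]$ by the two Wasserstein averages that appear on the right-hand side of the claim, using the Lipschitz hypothesis on $\ell$. Summing the three contributions reproduces the stated inequality, with $D_{\mathcal W}(\mu,\mu')$ interpreted as the quantity $d_{\mathcal W}(\mu,\mu')$ from \eqref{eq:empirical_distance}.

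There is no serious obstacle here: all of the substantive transport-theoretic work is packaged inside Theorem~\ref{thm:gen-wd}, and the remainder is bookkeeping on top of the decomposition \eqref{eq:excess_decompose}. The only step that warrants a moment of care is observing that the empirical-versus-population fluctuation in \eqref{eq:excess_decompose} is evaluated at the data-independent minimizer $w^*$ rather than at $W_{\ERM}$, so plain unbiasedness suffices and no extra concentration or coupling argument is needed — in particular, no second family of $\mathbb{W}_1(P_W, P_{W|Z_i})$ terms arises for $w^*$. This is why the bound takes its clean form with exactly one generalization-type sum plus the deterministic domain-shift penalty $(1-\alpha)\, d_{\mathcal W}(\mu,\mu')$.
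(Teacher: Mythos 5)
Your argument is correct and is essentially the proof the paper intends: it mirrors the proof of Theorem~\ref{thm:excess} by taking expectations in the decomposition (\ref{eq:excess_decompose}), noting the middle term is zero in expectation and the last term is bounded by $d_{\mathcal W}(\mu,\mu')$, and then substituting the Wasserstein generalization bound of Theorem~\ref{thm:gen-wd} in place of the mutual-information bound. Your observation that $w^*$ is data-independent (so no extra coupling terms arise) and that $D_{\mathcal W}$ is the quantity defined in (\ref{eq:empirical_distance}) matches the paper's treatment.
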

We can show that this bound is tighter than the mutual information bound under mild conditions. Specifically, we consider the case that only the source domain is available ($\alpha = \beta = 0$) with some $n$. From Theorem \ref{thm:excess}, we can easily bound the \textbf{expected excess risk} for ERM algorithm and we denote the bound by $\mathbb{B}_{\mathrm{Info}}$, which is defined as:
\begin{align}
\mathbb{B}_{\mathrm{Info}}  := & \frac{\sqrt{2r^2}}{n}\sum_{i=1}^{n}\sqrt{ (I(W_{\ERM};Z_i)+D(\mu||\mu'))} \nonumber \\ 
&+ d_{\mathcal W}(\mu,\mu').
\end{align}
From Theorem~\ref{thm:ER-WD}, we can also derive the bound for the expected excess risk with the ERM algorithm, and we denote the Wasserstein type bound by $\mathbb{B}_{\mathrm{Wass}}$ as
\begin{align}
\mathbb{B}_{\mathrm{Wass}} := &  \frac{\mathcal{L}}{n} \sum_{i=  1}^{n}\left( \mathbb{E}_{{\mu}}[\mathbb{W}_1(P_{W},P_{W|z_i})] + \mathbb{W}_{1}(\mu,\mu') \right) \nonumber \\ 
& + d_{\mathcal W}(\mu,\mu').
\end{align}
Next, we will prove that the $\mathbb{B}_{\mathrm{Wass}}$ is, in general, tighter than $\mathbb{B}_{\mathrm{Info}}$ under the mild assumption. To this end, we first introduce the transportation cost inequality (TCI) following the definition 3.4.2 from \cite{raginsky2013concentration}.
\begin{definition}[Transportation Cost Inequality]
We say that a probability measure $\mu$ on $(\mathcal{X}, d)$ satisfies an $L^{p}$ transportation cost inequality with constant $c>0,$ or a $\mathrm{T}_{p}(c)$ inequality for short, if for every probability measure $\nu \ll \mu$ we have
\begin{align}
    \mathbb{W}_{p}(\nu, \mu) \leq \sqrt{2 c D(\nu \| \mu)}. \label{eq:TCI}
\end{align}
\end{definition}
This is a typical definition in many learning setups and we will give several examples below.
\begin{example}
Under the Hamming distance, it can be proved that the first-order Wasserstein distance is equivalent to the total variation \cite{raginsky2013concentration}, then a well-known application of this inequality is the Pinsker's inequality where $p = 1$ and $c = \frac{1}{4}$. This inequality provides an upper bound on the Wasserstein distance in terms of the divergence.
\end{example}

\begin{example}\label{example:TCI}
It is proved that the inequality~(\ref{eq:TCI}) holds if and only if for some $c$ and every 1-Lipschitz function $f$, it satisfies the subgaussian property under $\mu$, e.g., for any $t \in \mathbb{R}$,
\begin{align*}
    \mathbb{E}_{\mu}[e^{tf}] \leq e^{\frac{ct^2}{2}}.
\end{align*}
Readers can refer to Theorem 1 in \cite{bobkov1999exponential} and Theorem $\diamond$ in \cite{esposito2022generalisation} for more details.
\end{example} 

From example~\ref{example:TCI}, it could be shown that if we set $\nu = P_{WZ_i}$ and $\mu = P_W\otimes P_{Z_i}$ in (\ref{eq:TCI}), the Wasserstein distance between $\nu$ and $\mu$ will be tighter than the mutual information measure. We provide a rigorous argument in the following proposition. 
\begin{proposition}
\label{prop:wd-info-comparison}
Consider the case where $\beta = 0$ for simplicity (e.g., with the source data only), let $P_W$ be the marginal distribution induced by some algorithm and the source sample distribution $\mu^{\otimes n}$. We assume that the induced conditional distribution $P_{W|z_i}$ is absolutely continuous w.r.t. $P_W$ for any $z_i \in \mathcal{Z}$, and both $P_{W}$ and $\mu'$ satisfies the $T_{1}(\frac{r^2}{2\mathcal{L}^2})$ transportation cost inequality. Then the following inequality holds:
\begin{equation*}
    \mathbb{B}_{\mathrm{Wass}}  \leq \mathbb{B}_{\mathrm{Info}}.
\end{equation*}
\end{proposition}
The proof can be found in Appendix~\ref{proof:wd-info-comparison}. Essentially, if both $\mu'$ and the resulting hypothesis $P_W$ exhibit subgaussian characteristics, which include having light tails and preserving strong concentration for any Lipschitz functions, then the Wasserstein distance could be a better metric than the mutual information when evaluating the generalization error. 

In this section, we demonstrate how various distribution metrics, including the total variation, $\phi$-divergence, and the Wasserstein distance, can be employed to derive different information-theoretic bounds. Metrics like the Wasserstein distance are particularly noteworthy as they not only address the absolute continuity concerns but have also proven to provide tighter bounds than those based on mutual information. However, a caveat is that evaluating these bounds requires the knowledge of data and hypothesis distributions, and estimating these distributions can be very challenging in practical scenarios. As an initial approach to this concern, the subsequent section introduces a heuristic algorithm that capitalizes on the properties and insights drawn from the aforementioned bounds.
\section{Examples and Algorithms}\label{sec:example}
In this section, we provide three examples to illustrate the upper bounds we obtained in previous sections. First we provide an example of calculating the learning bounds on the Bernoulli transfer problem given the optimization parameters with the stochastic gradient descent algorithm. Then we present the logistic regression transfer learning problem with the mutual information bounds, and lastly we evaluate the proposed InfoBoost algorithm in several real-world transfer learning scenarios.

\begin{figure*}[!htp]
	\centering
	\subfloat{
		\includegraphics[width=0.5\textwidth]{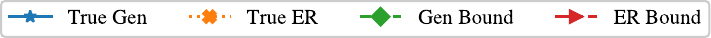}		}	
	\addtocounter{subfigure}{-1}
	\\
	\subfloat[~100 samples]{
		\includegraphics[width=0.2\textwidth]{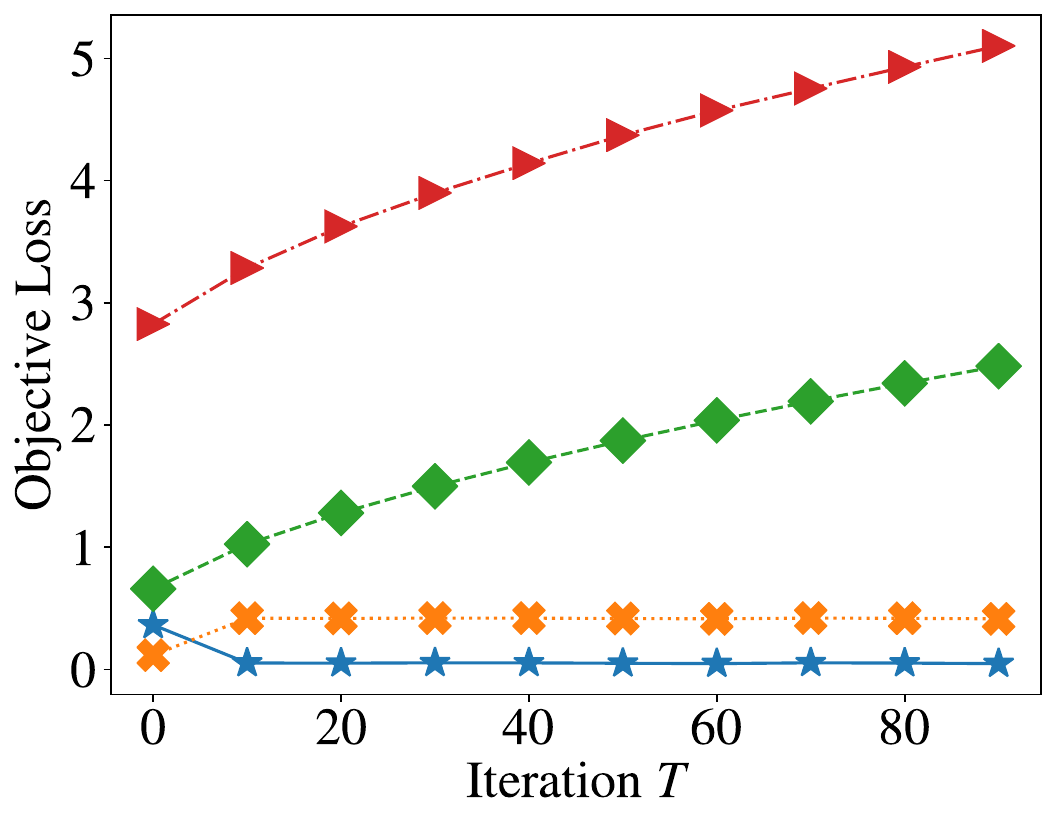}}
	\subfloat[~1000 samples]{
		\includegraphics[width=0.2\textwidth]{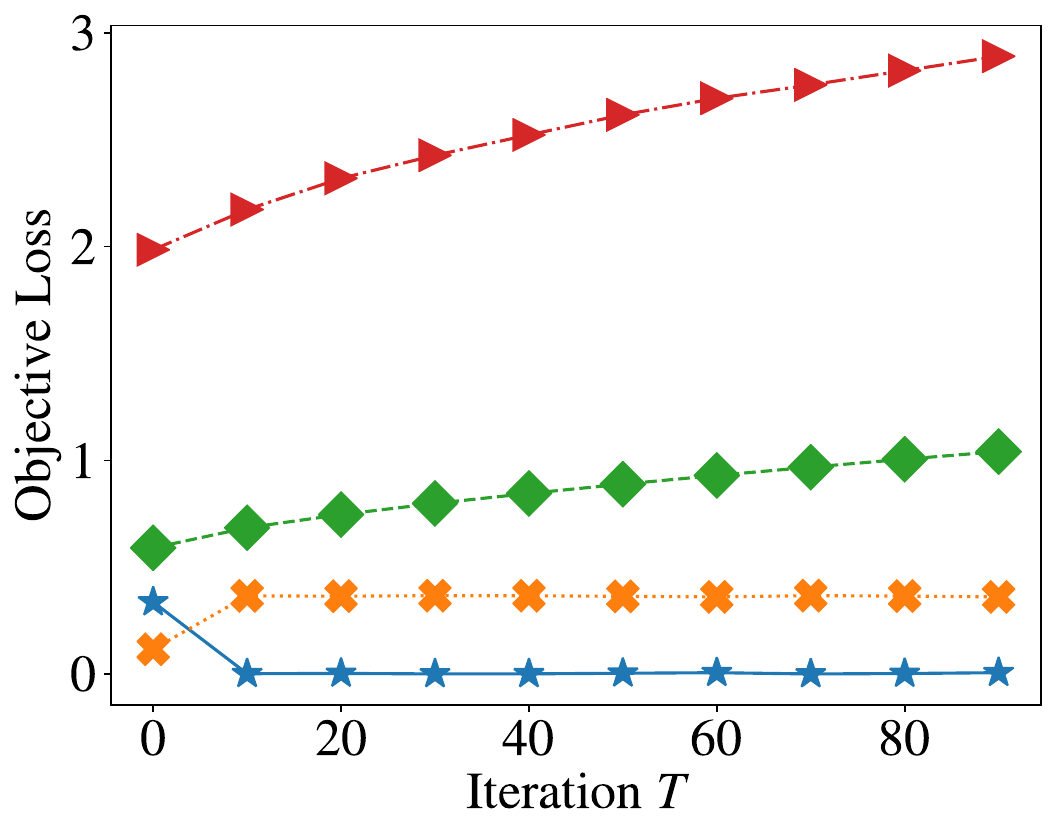}
	}
	\subfloat[~10000 samples]{
		\includegraphics[width=0.2\textwidth]{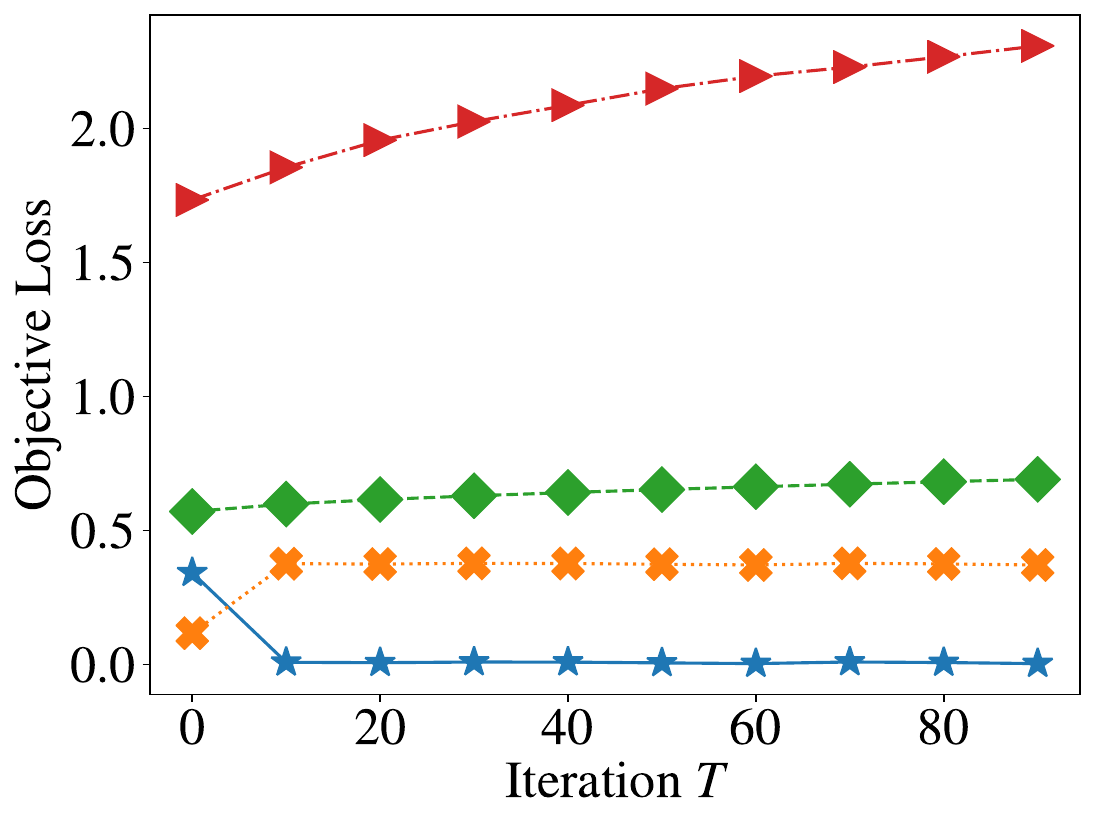}
	}
	\subfloat[~100000 samples]{
		\includegraphics[width=0.2\textwidth]{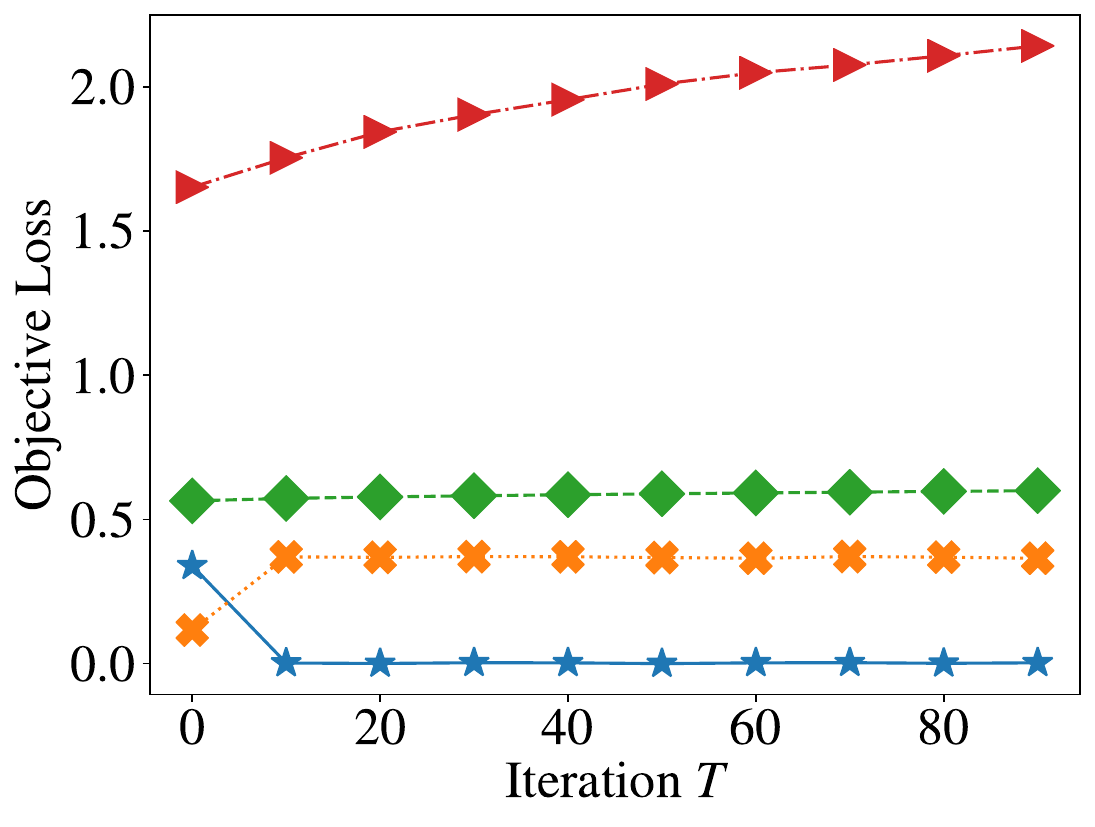}
	}\\
	\subfloat[~$\beta$ = 0.2]{
		\includegraphics[width=0.2\textwidth]{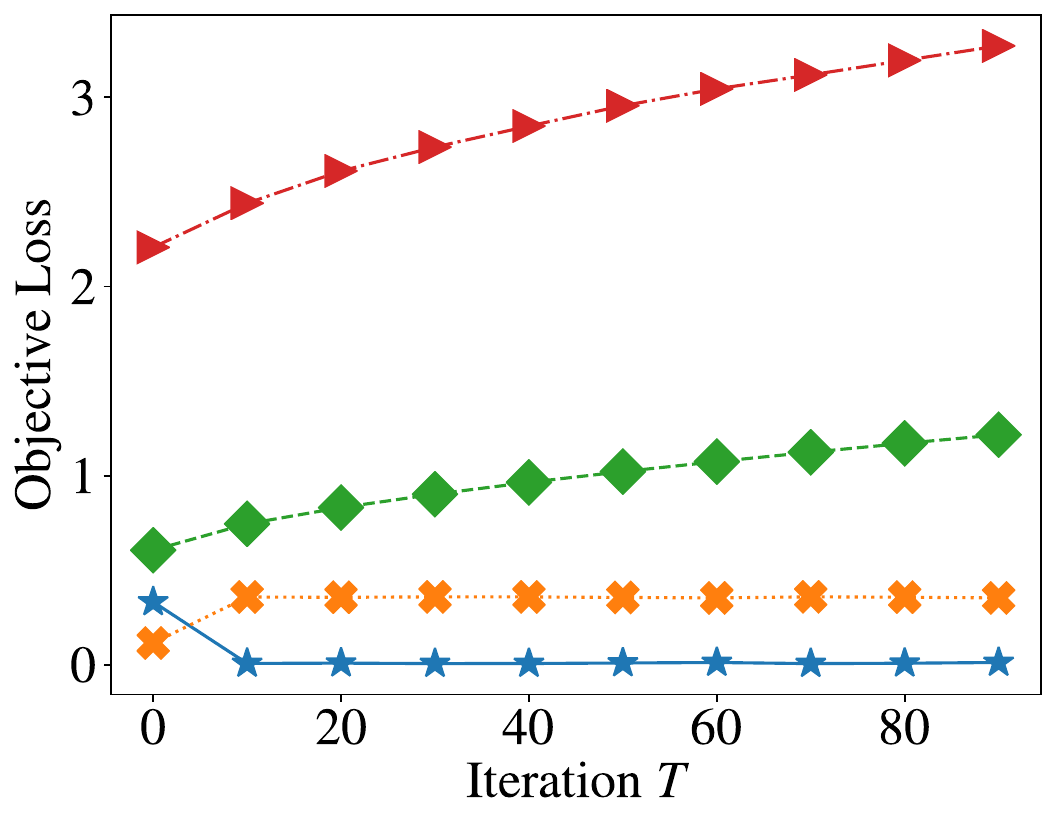}}
	\subfloat[~$\beta$ = 0.5]{
		\includegraphics[width=0.2\textwidth]{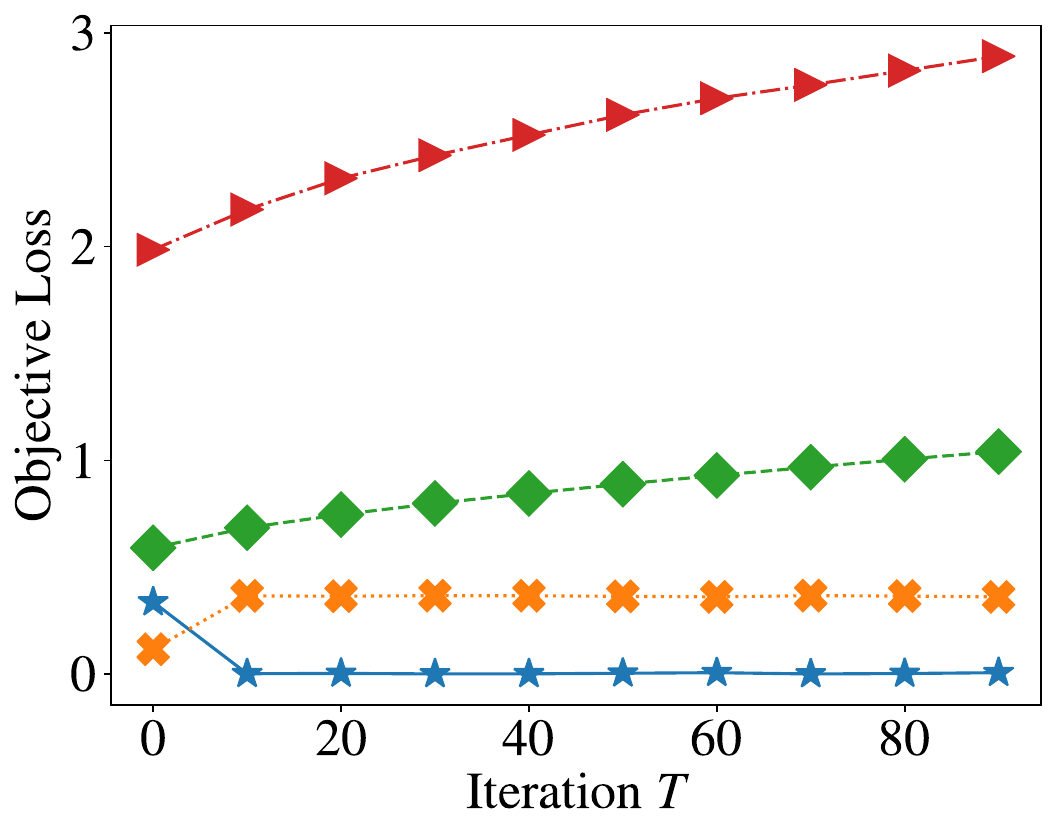}
	}
	\subfloat[~$\beta$ = 0.8]{
		\includegraphics[width=0.2\textwidth]{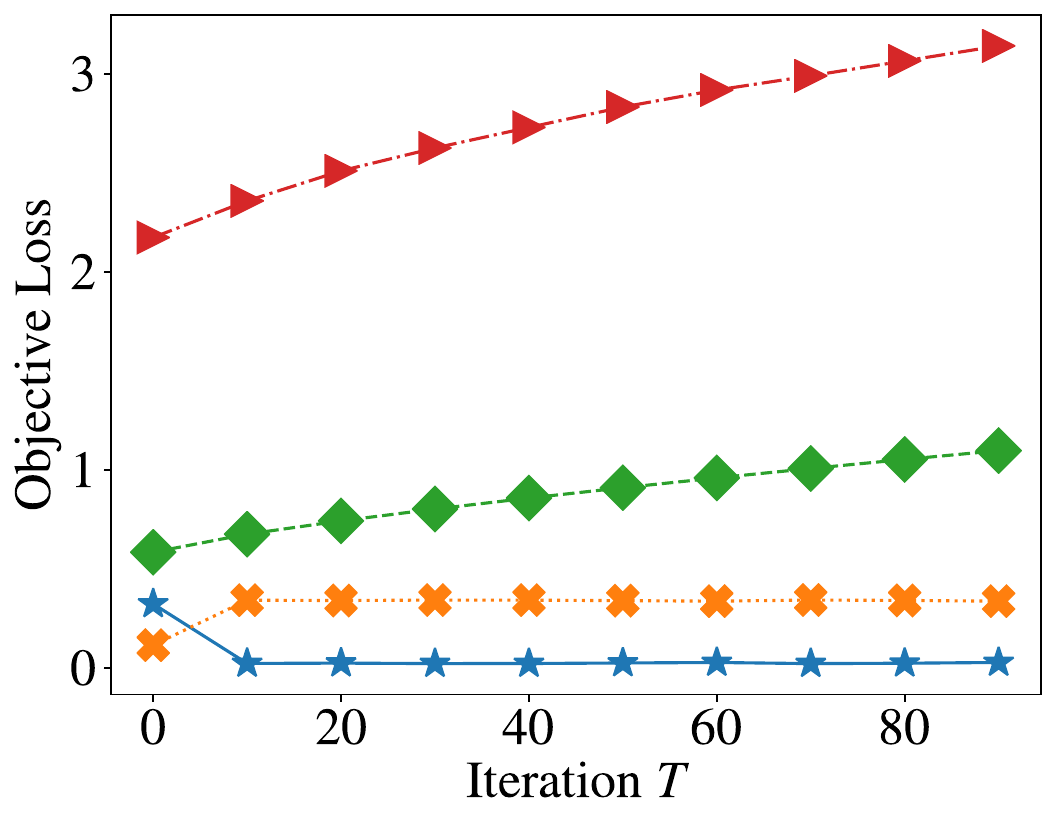}
	}
	\subfloat[~$\beta$ = 0.99]{
		\includegraphics[width=0.2\textwidth]{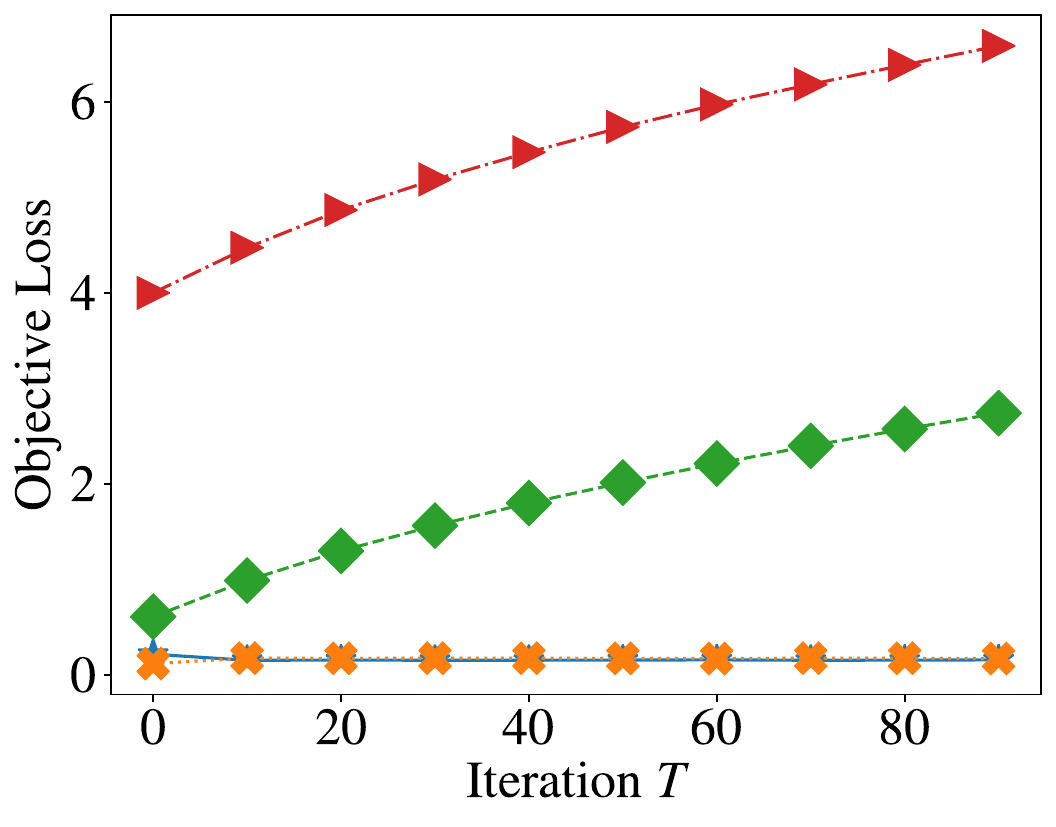}
	}\\
%	\subfloat[~$\alpha$ = 0.2]{
%		\includegraphics[width=0.25\textwidth]{}}
%	\subfloat[~$\alpha$ = 0.5]{
%		\includegraphics[width=0.25\textwidth]{}
%	}
%	\subfloat[~$\alpha$ = 0.8]{
%		\includegraphics[width=0.25\textwidth]{}
%	}
%	\subfloat[~$\alpha$ = 0.99]{
%		\includegraphics[width=0.25\textwidth]{}
%	}\\	
	\subfloat[~$W(0)$ = 0.2]{
		\includegraphics[width=0.2\textwidth]{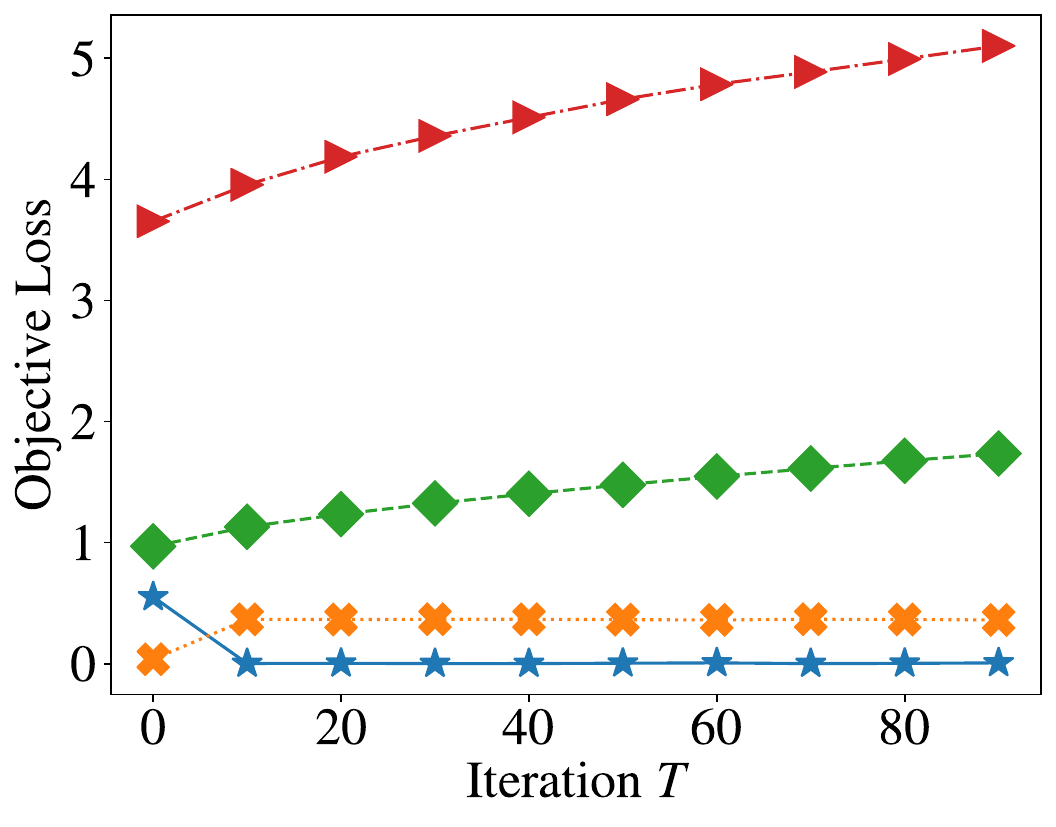}}
	\subfloat[~$W(0)$ = 0.4]{
		\includegraphics[width=0.2\textwidth]{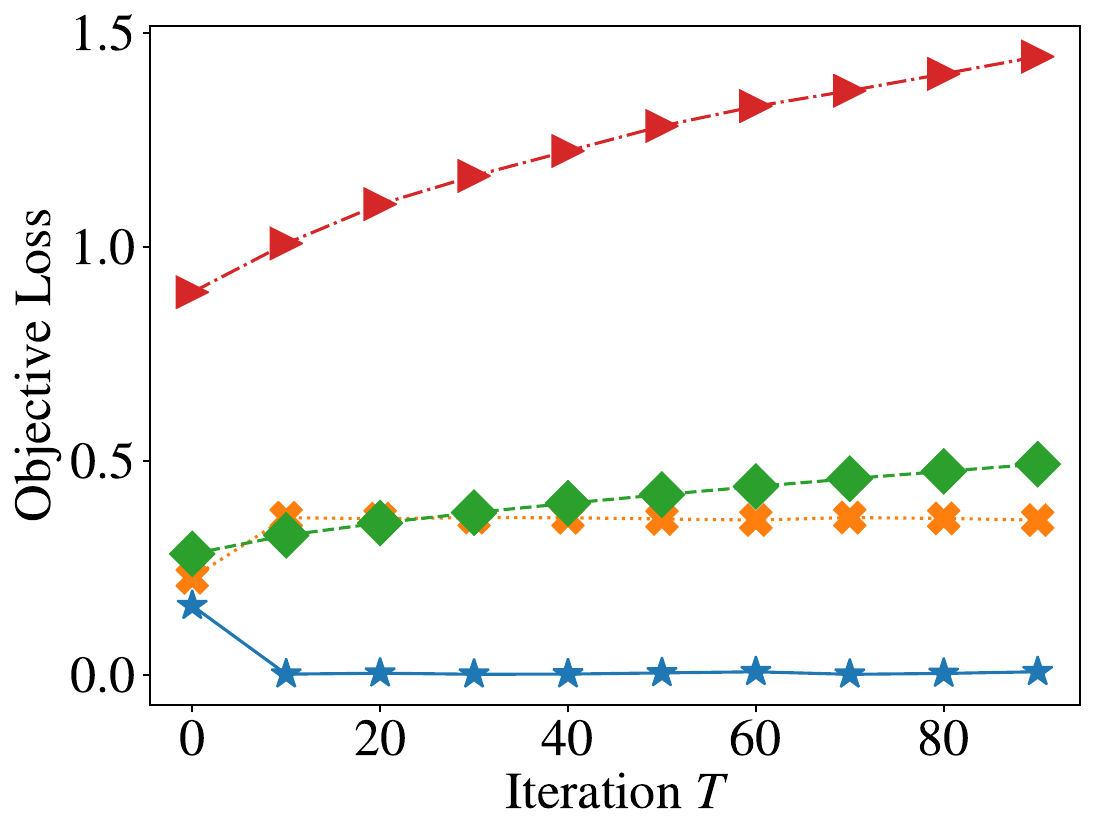}
	}
	\subfloat[~$W(0)$ = 0.6]{
		\includegraphics[width=0.2\textwidth]{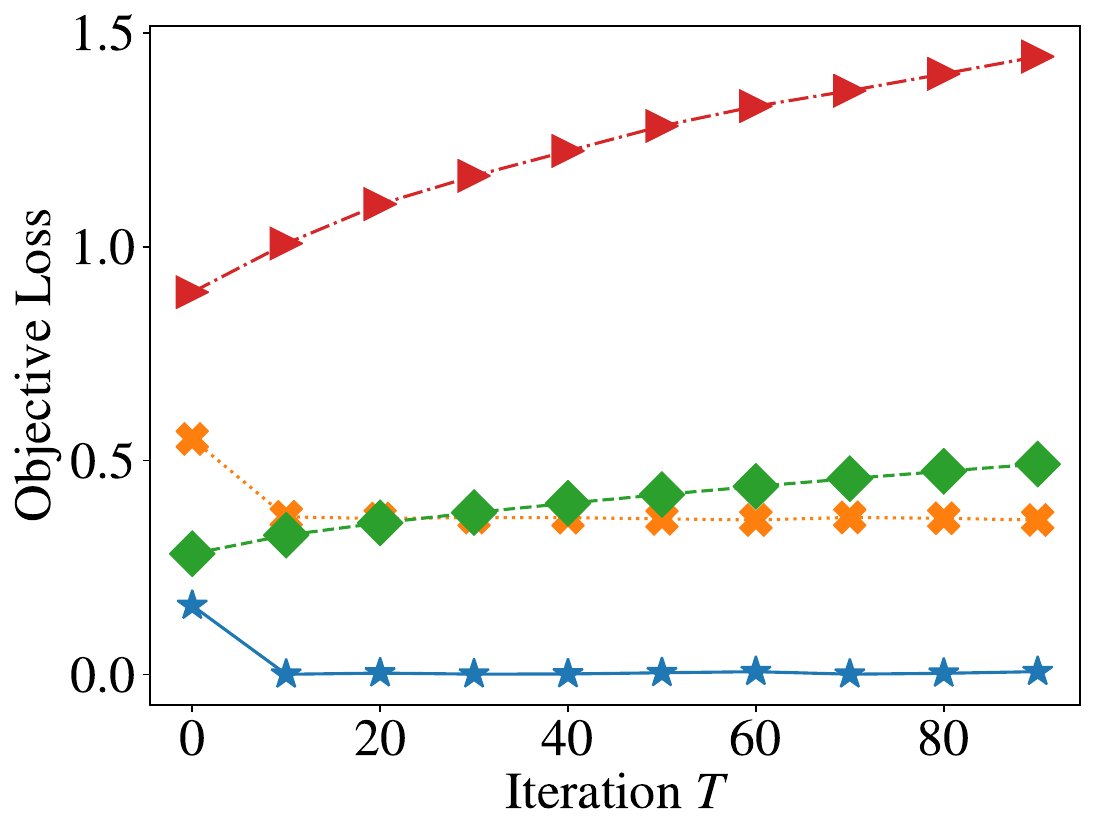}
	}
	\subfloat[~$W(0)$ = 0.8]{
		\includegraphics[width=0.2\textwidth]{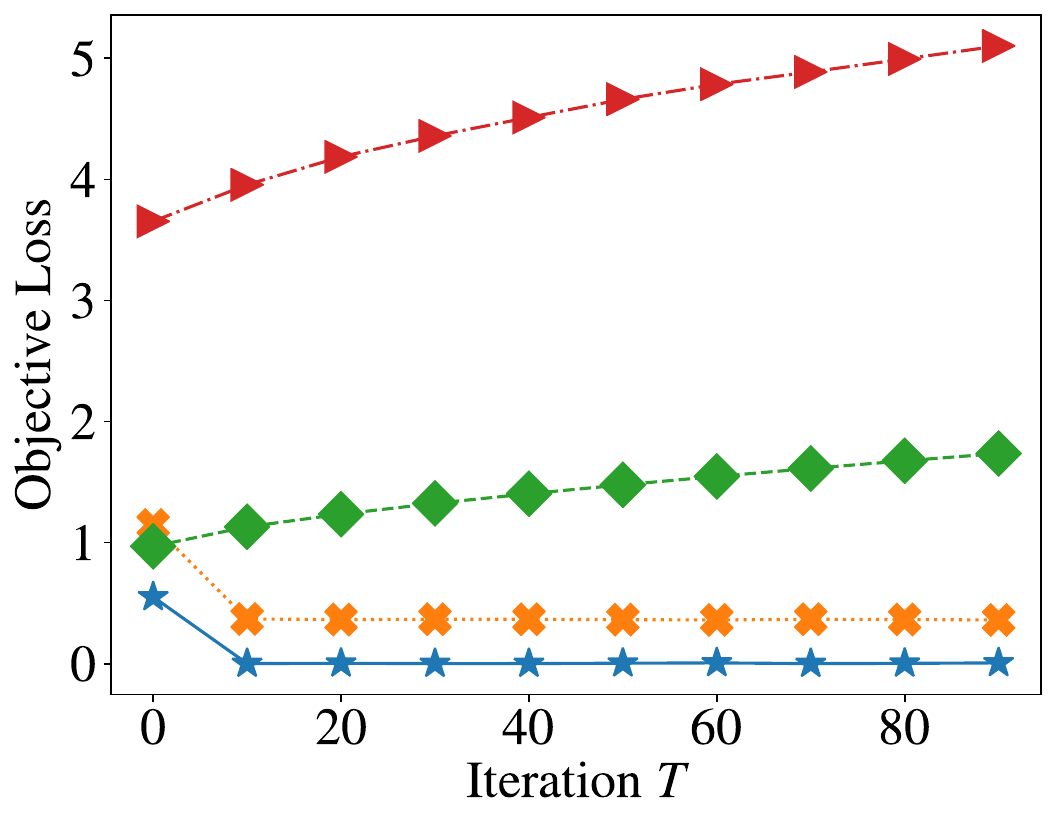}
	}
	%\vspace{-2mm}
	%\caption{Comparisons for testing results of true generalization error(blue), generalization error bound (green), true excess risk (orange) and excess risk bound(red). { \color{red}, for each rows, describe the parameters} we set $n = 1000$, $\beta = 0.5$, $\alpha =0.5$, $p' =w^*= 0.1$, $p = 0.9$, $w_{\ERM} = \alpha p + (1-\alpha) p' = 0.5$, $T = 100$, $K_{ST}(0) = 10$, $\eta(0) = 0.1$, $\sigma_t = \sqrt{\theta\eta(t) / t}$, $\theta = 0.001$, we also set $W(0) = 0.3$ and high probability $\delta = 0.01$, by examining the sample number $n$, $\beta$ and $W(0)$ respectively.}\label{fig:result}	%\vspace{-3mm}
    \caption{Comparisons for testing results of true generalization error(blue), generalization error bound (green), true excess risk (orange) and excess risk bound(red). We set a series of parameters $\alpha=0.5, p^{\prime}=w^{*}=0.1, p=0.9, w_{\mathrm{ERM}}=0.5, T=100$, $K_{S T}(0)=10, \eta(0)=0.1, \sigma_{t}=\sqrt{\theta \eta(t) / t}, \theta=0.001, W(0)=0.3$ and $\delta=0.01$ to be fixed for all experiments. For comparison tests, we set $\beta=0.5, W(0)=0.3$ for the first row, $n=1000, W(0)=0.3$ for the second row, and $\beta=0.5, n=1000$ for the last row, respectively.}\label{fig:result}
\end{figure*}

\subsection{Transfer with stochastic gradient descent} \label{sec:bernoulli_sgd}
In this example, we assume that data samples $(Z'_1,\ldots, Z'_{\beta n})\sim Ber(p)$ and $(Z_{\beta n+1}, \ldots, Z_n)\sim Ber (p')$ where $Ber(p)$ denotes the Bernoulli distribution with probability $p$.  For any $z_i \in (S,S')$, the loss function is defined as binary cross-entropy
\begin{equation*}
\ell(w, z_i) = -(z_i\log(w) + (1-z_i)\log(1-w)).
\end{equation*} 
In this case, there is no closed-form solution to the ERM algorithm so  we apply the algorithm in (\ref{eq:iterative}) to obtain an hypothesis $W(T)$ where we choose $n(t)\sim N(0,\sigma^2_t)$. 

The numerical results are shown in Figure~\ref{fig:result}, where the calculation for the upper bounds and detailed experimental setups  can be found in Appendix~\ref{setup:bernoulli}.  In the first row, we compare different number of samples while in second row we compare the bounds when $\beta$ varies, and the last row investigates the effect of initialization value $W(0)$.

From the results, it is obvious that both the excess risk and generalization error are upper bounded by  our developed upper bounds. We notice that that the tightness of the bound varies for different parameters of the algorithm. For example, 
if the initial value $W(0)$ is close to $W_{\ERM}$, the upper bounds are tighter in this case. Also one observes that our bounds are in general becomes tighter if the number of samples $n$ increases.  The results  confirms that the bounds captures the dependence of the input data and output hypothesis, as well as the stochasticity of the learning algorithm.

\subsection{Logistic regression transfer} \label{subsec:logit}
In this section, we apply our bound in a typical classification problem. Consider the following logistic regression problem in a 2-dimensional space shown in Figure~\ref{fig:LogisticRegression}. For each $w \in \mathbb{R}^2$ and $z_i = (x_i,y_i) \in \mathbb{R}^{2} \times \{0,1\}$, the loss function is given by
\begin{align*}
    \ell(w,z_i) := & -(y_i\log (\sigma(w^Tx_i)) \\ 
    &+ (1-y_i)\log (1 - \sigma(w^Tx_i))),
\end{align*}
where $\sigma(x) = \frac{1}{1+e^{-x}}$. 
\begin{figure}[h!]
    \centering
    \includegraphics[width=0.4\textwidth]{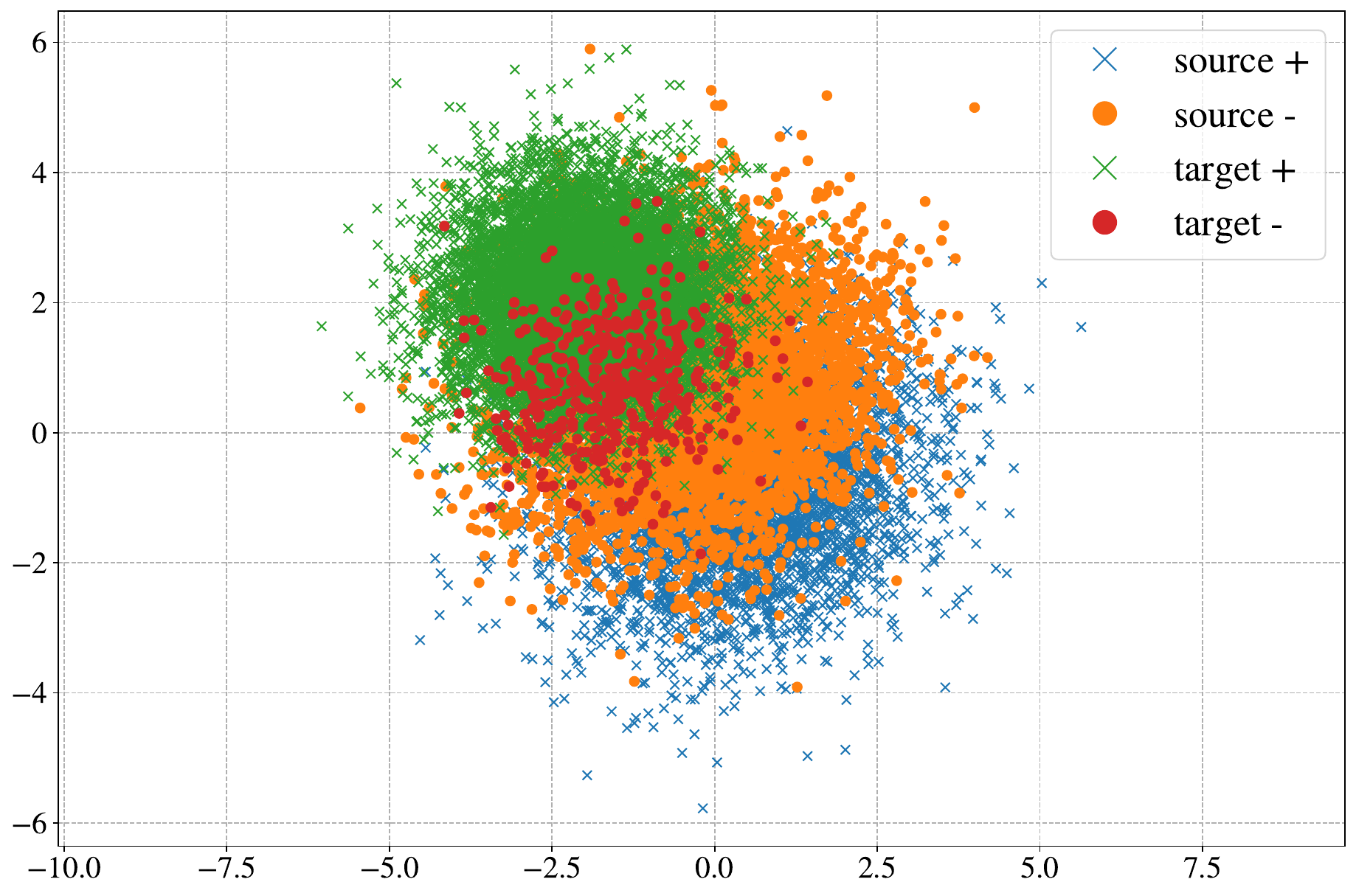}
    \caption{The source data $x_i$ are sampled from the {\bf truncated} Gaussian distribution $\mathcal{N}_{tc} \sim (\mathbf{0},2\mathbf{I})$ while the target data are sampled from the {\bf truncated} Gaussian distribution $\mathcal{N}_{tc} \sim ((-2,2),\mathbf{I})$. The according label $y \in \{0, 1 \}$, is generated from the Bernoulli distribution with probability $p(1) = \frac{1}{1+e^{-w^Tx}}$, where $w_s = (0.5,-1)$ for the source and $w_t = (-0.5,1.5)$ for the target.}
    \label{fig:LogisticRegression}
\end{figure}

Here we truncate the Gaussian random variables $x_i = \{(x_1,x_2) \big| \|x_1\|_2 < 6, \|x_2\|_2<6 \}$, for $i = 1,\cdots,n$. We also restrict hypothesis space as $\mathcal{W} = \{w: \|w\|_2 < 3\}$ where $W_{\ERM}$ falls in this area with high probability. It can be easily checked that $\mu \ll \mu'$ and the loss function is bounded, hence we can upper bound generalization error using Corollary \ref{coro:general_bound}. To this end, we firstly fix the source samples $n_s = 10000$, while the target samples $n_t$ varies from 100 to 100000 and $\alpha = \beta = \frac{n_t}{n_s + n_t}$ following the guideline from \cite{ben-david_theory_2010, zhang2012generalization}. We give the empirical estimation for $r^2$ within the according hypothesis space such that
\begin{small}
\begin{align*}
      r^2 = \frac{\left(\max_{Z \in \mathcal{Z}, w \in \mathcal{W}}{\ell(w,Z)} - \min_{Z \in \mathcal{Z}, w \in \mathcal{W}}{\ell(w,Z)}\right)^2}{4}. 
\end{align*}
\end{small}
To evaluate the mutual information $I(W_{\ERM},Z_i)$ efficiently, we follow the work \cite{moddemeijer1989estimation} by repeatedly generating $W_{\ERM}$ and $Z_i$. As $\mu \ll \mu'$, we decompose $D(\mu(X,Y)\|\mu'(X',Y')) = D(\mu(X)\|\mu'(X))+  D(\mu(Y|X)\|\mu'(Y|X)|X)$ in terms of the feature distributions and conditional distributions of the labels. The first term $D(P_X\|P_{X'})$ can be calculated using the parameters of Gaussian distributions. The latter term denotes the expected KL-divergence over $P_{X}$ between two Bernoulli distributions, which can be evaluated by generating abundant samples from the source domain. Further, we apply Theorem \ref{thm:excess} to upper bound the excess risk, where we give a data-dependent estimation for the term $d_{\mathcal W}(\mu, \mu')$ as
\begin{align*}
\hat d_{\mathcal W}(\mu, \mu')=\sup_{w\in\mathcal W} |\hat L(w, S)-\hat L(w, S')|.
\end{align*}
To demonstrate the usefulness of our algorithm, we compare the bound in the following theorem using the Rademacher complexity under the same domain adaptation framework. 

\begin{theorem} (Generalization error of ERM with Rademacher complexity) \cite[Theorem 6.2]{zhang2012generalization}
Assume that for any $w\in\mathcal W$, the loss function $\ell(w, Z)$ is bounded between $[a,b]$ for any $w$ and $z$. Then for any $\delta>0$, the following inequality holds with probability at least $1-\delta$  (over the randomness of samples and the learning algorithm):
\begin{align*}
&\gen(w_{\ERM})  \leq (1-\alpha) d_{\mathcal W}(\mu,\mu') + 2 \alpha \mathbb{E}_{\sigma\otimes \mu}\left[ \sup_{w \in \mathcal{W}}\sigma \ell(w,Z) \right] \\ 
& + \frac{2(1-\alpha)}{\beta n} \mathbb{E}_{\sigma}\left[ \sup_{w \in \mathcal{W}} \sum_{i=1}^{\beta n} \sigma_{i} \ell \left(w, z_i \right) \right] + 3 \alpha \sqrt{\frac{(b-a)\ln (4 / \delta)}{2\beta n}} \\
& + (1-\alpha) \sqrt{\frac{(b-a)^2}{2}\ln (\frac{2}{\delta})\left(\frac{\alpha^{2}}{\beta n}+\frac{(1-\alpha)^{2}}{(1-\beta) n}\right)},
\end{align*} 
where $\sigma$ (and $\sigma_i$) are randomly selected from \{-1,+1\} with equal probability.
\label{thm:RC-ER}
\end{theorem}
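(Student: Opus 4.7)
The plan is to follow the classical domain-adaptation strategy of Ben-David et al.\ combined with the standard symmetrization argument that converts empirical deviations into Rademacher complexities, together with McDiarmid concentration tails. First I would decompose
\begin{align*}
\gen(w_{\ERM}) = \alpha\bigl[L_{\mu'}(w_{\ERM}) - \hat L(w_{\ERM}, S')\bigr] + (1-\alpha)\bigl[L_{\mu'}(w_{\ERM}) - \hat L(w_{\ERM}, S)\bigr],
\end{align*}
and on the source piece insert $\pm L_{\mu}(w_{\ERM})$ to isolate a pure shift term $(1-\alpha)\bigl(L_{\mu'}(w_{\ERM}) - L_{\mu}(w_{\ERM})\bigr)$, which is bounded uniformly in absolute value by $(1-\alpha)\, d_{\mathcal W}(\mu, \mu')$ by the definition~\eqref{eq:empirical_distance}. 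What remains is to control the two empirical-process terms, first in expectation and then in high probability.

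Next, I would upper bound each data-dependent deviation by its uniform counterpart $\sup_{w\in\mathcal W}|L_{\mu'}(w) - \hat L(w, S')|$ and $\sup_{w\in\mathcal W}|L_{\mu}(w) - \hat L(w, S)|$ and handle each by the classical two-step recipe: standard symmetrization replaces the expected uniform deviation by twice the Rademacher average of the loss class on the corresponding sample, yielding the two Rademacher terms in the statement. Note the source Rademacher term uses $\beta n$ samples because the original notation in~\cite{zhang2012generalization} counts the number of target points appearing in the normalization; the target term likewise emerges from symmetrizing over the $\beta n$ IID draws from $\mu'$.

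To promote the expected deviations to high-probability bounds, I would invoke McDiarmid's inequality. For the stand-alone target deviation, replacing a single coordinate of $S'$ perturbs $\sup_{w}|L_{\mu'}(w)-\hat L(w,S')|$ by at most $(b-a)/(\beta n)$, yielding the $3\alpha\sqrt{(b-a)\log(4/\delta)/(2\beta n)}$ tail (with the multiplicative constant accounting for the symmetrization step and a union bound). For the composite deviation, a single application of McDiarmid's on the joint sample $(S,S')$ suffices: perturbing one target sample changes the weighted statistic by $\alpha(b-a)/(\beta n)$ and one source sample by $(1-\alpha)(b-a)/((1-\beta)n)$, so the sum of squared bounded differences equals $(b-a)^{2}\bigl(\alpha^{2}/(\beta n)+(1-\alpha)^{2}/((1-\beta)n)\bigr)$, producing exactly the stated $\sqrt{\cdot\log(2/\delta)\cdot}$ factor after inversion.

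The main obstacle is the bookkeeping: one must carefully split the $\delta$-budget across the McDiarmid applications via a union bound so that the overall confidence remains $1-\delta$ (which explains the $\log(4/\delta)$ vs.\ $\log(2/\delta)$ asymmetry), and make sure the symmetrized Rademacher complexities are attributed to the target and source samples separately rather than to the combined statistic, so that the sample sizes $\beta n$ and $(1-\beta)n$ appear in the correct normalizations and do not get entangled with the weights $\alpha$, $1-\alpha$ carried by the convex-combination structure.
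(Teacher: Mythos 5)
First, a point of comparison: the paper does not actually prove Theorem~\ref{thm:RC-ER} --- it is imported verbatim (up to notation) from \cite[Theorem 6.2]{zhang2012generalization} and used only as a benchmark against the information-theoretic bounds in Section~\ref{subsec:logit}, so there is no in-paper proof to measure your argument against. Judged on its own terms, your overall route is the standard and essentially correct one for a bound of this shape: peel off the domain-shift term via $(1-\alpha)\bigl(L_{\mu'}(w)-L_{\mu}(w)\bigr)\le(1-\alpha)\,d_{\mathcal W}(\mu,\mu')$, control the remaining expected uniform deviations by symmetrization into Rademacher averages, and upgrade to high probability with McDiarmid; your computation of the bounded differences $\alpha(b-a)/(\beta n)$ and $(1-\alpha)(b-a)/((1-\beta)n)$, and hence of the combined tail $\sqrt{\tfrac{(b-a)^2}{2}\ln(\tfrac{2}{\delta})\bigl(\tfrac{\alpha^2}{\beta n}+\tfrac{(1-\alpha)^2}{(1-\beta)n}\bigr)}$, is exactly right.

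There is, however, one genuine inconsistency in your bookkeeping. You commit to the decomposition $\gen(w_{\ERM})=\alpha[L_{\mu'}-\hat L(\cdot,S')]+(1-\alpha)[L_{\mu'}-L_{\mu}]+(1-\alpha)[L_{\mu}-\hat L(\cdot,S)]$ and bound the target and source deviations by \emph{separate} suprema, but then claim the combined tail arises from ``a single application of McDiarmid on the joint sample.'' Under your decomposition the source piece $\sup_{w}|L_{\mu}(w)-\hat L(w,S)|$ is a function of the source coordinates only, so its bounded-difference sum is $(1-\alpha)^2(b-a)^2/((1-\beta)n)$ alone; the $\alpha^2/(\beta n)$ contribution in the stated tail has nowhere to come from. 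To produce that term one must instead concentrate the combined statistic $\sup_{w}\bigl(L_{\alpha}(w)-\hat L_{\alpha}(w,S,S')\bigr)$ as a single function of $(S,S')$ (the same computation as in the paper's Lemma~\ref{lemma:auxiliary}), which is a different organization from the one you set up and would yield a single joint Rademacher term rather than the two separate ones you derive. Relatedly, you assert rather than verify that ``the multiplicative constant'' $3$ and the $\ln(4/\delta)$ versus $\ln(2/\delta)$ asymmetry ``work out'' from a union bound; the factor $3$ standardly comes from passing to the \emph{empirical} Rademacher complexity (one extra McDiarmid application), which is worth stating explicitly since the theorem mixes an expected Rademacher term (over $\sigma\otimes\mu$) with an empirical one. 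These are reconcilable, but as written the proposal's decomposition and its concentration step do not fit together.
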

The comparisons of generalization error bound and excess risk bound are shown in figure \ref{res:LogisticRegression}. It is obvious that the true losses are bounded by our developed upper bounds. The result also suggests that our bound is tighter than the Rademacher complexity bound in terms of both generalization error and excess risk. This is possibly due to that the generalization error bound with Rademacher complexity is characterized by the domain difference in the whole hypothesis space, while our bound is data-algorithm dependent, which is only concerned with $W_{\ERM}$. As expected, the data-algorithm dependent bound captures the true behavior of generalization error while the Rademacher complexity bound fails to do so. It is noteworthy that both bounds converge as $n$ increases. The result confirms that the bounds capture the dependence of the input data and output hypothesis, as well as the stochasticity of the algorithm.
\begin{figure*}[!htp]
    \centering
    \includegraphics[width=0.8\textwidth]{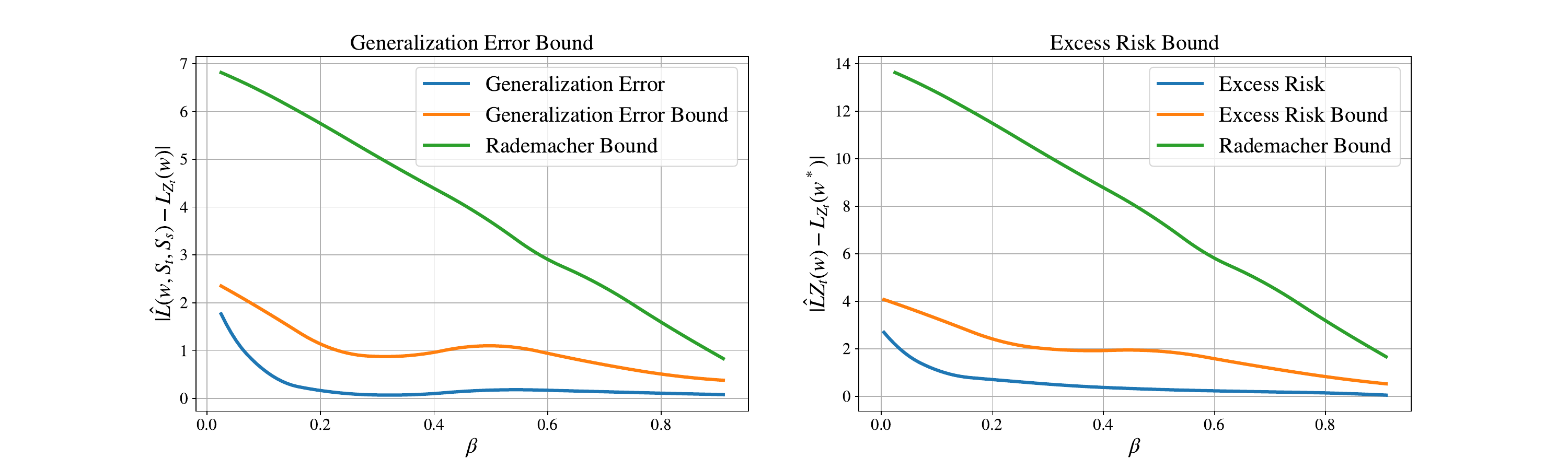}
    \caption{Comparisons for generalization error and excess risk where we fix $n_s = 10000$ and vary $n_t$ by setting $\alpha = \beta$.}
    \label{res:LogisticRegression}
\end{figure*}

\subsection{Fast Rate Logistic Regression}\label{sec:example-fast}
In this section, we apply our fast-rate bound in the logistic regression problem in a 2-dimensional space to further evaluate the effectiveness of the bounds in Equation~(\ref{eq:fast-erm}). Again, we assume each $w \in \mathbb{R}^2$ and $z_i = (x_i,y_i) \in \mathbb{R}^{2} \times \{0,1\}$ as a similar setup in the previous section. Now we assume the source-only scenario where $\alpha = \beta = 0$, and each $x_i$ is drawn from a standard multivariate Gaussian distribution $\mathcal{N}(0,\mathbf{I}_{2})$ and Let $w_s = (0.5,0.5)$ and $w_t = (-0.5, -0.5)$ for a different setup for generalizing the experimental results. We also restrict hypothesis space as $\mathcal{W} = \{w: \|w\|_2 < 2\}$ where $W_{\ERM}$ falls in this area with high probability. Since the hypothesis is bounded and under the log-loss, then the learning problem will satisfy the central and witness condition \cite{van2015fast,Grunwald2020}. Therefore, it will satisfy the $(\eta,c)$-central condition. To estimate $\eta$, $c$, and the mutual information $I(W_{\ERM}, Z_i)$ efficiently, we repeatedly generate samples of $W_{\ERM}$ and $Z_i$ and use their empirical density for estimation. Specifically, we vary the sample size $n$ in the range $[50, 400]$ and for each value of $n$, we repeat the logistic regression algorithm $2000$ times to generate a set of $W_\ERM$ samples. By setting $\eta = 0.8$ as an example, we can empirically estimate the CGF and the expected excess risk using the data sample and a set of ERM hypotheses, which results in an estimate of $c \approx 0.195$. Importantly, our experiments indicate that once $\eta$ is fixed, the choice of $c$ remains independent of the sample size $n$, providing empirical support for the ($\eta,c$)-central condition. For the mutual information estimation, we used a similar method as illustrated in the previous section by decomposing the mutual information into marginal and conditional divergences. To demonstrate the usefulness of the results, we also compare the bounds among the true excess risk, the slow rate excess risk in Equation~(\ref{eq:excess_erm_bound_mi}) and the fast rate excess risk in Equation~(\ref{eq:fast-erm}). The comparisons are shown in Figure~\ref{fig:logistic}.
\begin{figure*}[htp]
	\centering
	\subfloat[Excess Risk]{\includegraphics[width=0.33\textwidth]{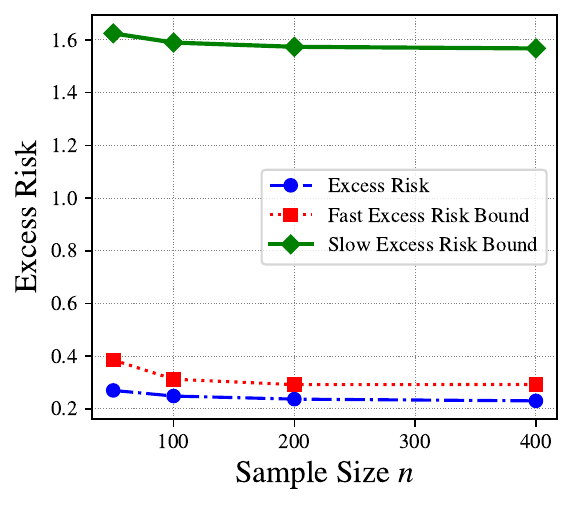}}	\quad
	\subfloat[Convergence]{\includegraphics[width=0.38\textwidth]{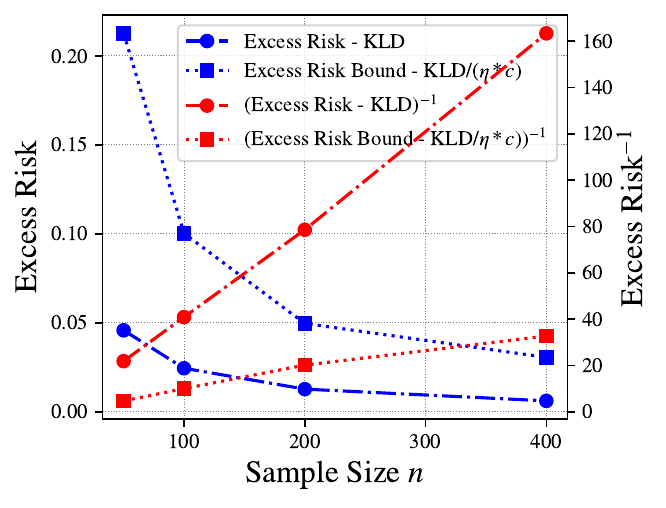}}
    \caption{We represent the true expected generalization error in (a) along with its bounds in Theorem~\ref{thm:excess} and Theorem~\ref{thm:central-transfer}. Here we vary $n$ from 50 to 400. To show the convergence up to the domain divergence, we also plot the quantity $\mathbb{E}_W[R_{\mu'}(W)] - D(\mu\|\mu')$ and the fast rate bound $\frac{1}{c\eta n}\sum_{i=1}^{n}I(W;Z_i)$, along with their reciprocals to show the rate w.r.t. sample size $n$. All results are derived by 2000 experimental repeats.}\label{fig:logistic}
\end{figure*}
From the figure, it can be seen that the fast excess risk bound in Equation~(\ref{eq:fast-erm}) is even tighter than the slow excess risk bound in Equation~(\ref{eq:excess_erm_bound_mi}). Moreover, both the excess risk and its fast rate bound exhibit linear convergence scaling as $O(\frac{1}{n})$, up to the domain divergence (with some leading constant). Importantly, this simple toy example showed that the bounds presented in Theorem~\ref{thm:central-transfer} are tight, accurately reflecting the true behaviours at the same decay rate.

\subsection{Algorithms and Real Dataset}\label{sec:info}
Many information-theoretic bounds are challenging to apply directly to real-world machine-learning tasks primarily because they require knowledge of the data and the hypothesis distributions, and estimating these bounds can be very difficult. Recognizing this gap, we introduce a heuristic boosting algorithm named \texttt{InfoBoost} where this innovative approach leverages the concepts behind the proposed bounds.

With the standard empirical risk minimization algorithm, both the source and target instances are equally weighted within their domains as we define the empirical risk by Eq~(\ref{eq:er}). However, in many real-world scenarios, the weight of each instance can be different, especially for the transfer learning problem. To select the source data that are useful for the prediction on the target domain, we can re-weight each source instance so that the source data will have a similar performance as the target data. To interpret, we start from the following fact on change of measure for the expected risk if $\mu'$ is absolutely continuous w.r.t. $\mu$:
\begin{equation}
\mathbb{E}_{\mu'}[\ell(W,Z)] = \mathbb{E}_{\mu}\left[\frac{\mu'(Z)}{\mu(Z)}\ell(W,Z) \right].
\end{equation}
Ideally, if we can find the distribution density ratio $\frac{\mu'(z_i)}{\mu(z_i)}$ for each instance in the source domain, the expected risk induced by the source is obviously an unbiased estimate of the expected risk under the target distribution. However, this quantity is non-trivial to estimate. For the supervised learning problem where $Z_i = (X_i, Y_i)$, under the co-variate assumption such that $P(Y|X)$ remains unchanged across different domains, this ratio can be estimated through statistical methods such as correcting sample selection bias \cite{huang2006correcting}, kernel mean matching \cite{gretton2009covariate} and direct density ratio estimation \cite{sugiyama2012density}. On the other hand, the iterative algorithm based on model aggregation (e.g., boosting) has been proposed to adaptively adjust the instance weights in transfer learning problems. To name a few, \citet{dai_boosting_2007} take the first step in the application of boosting to transfer learning where the weights for the target data are adjusted using Adaboost \cite{freund1997decision} while for the source the weights are gradually decreasing following the weighted majority algorithm. Such a strategy is known as the TradaBoost algorithm. \citet{eaton2011selective} further develop the TransferBoost algorithm where the source instances are re-weighted according to whether the source domain can improve or hurt the prediction performance on the target. More recently, \citet{wang2019transfer} propose the GapBoost algorithm by taking the hypothesis performance gap between two domains into consideration.

Inspired by the information-theoretic bounds and the boosting algorithm, we heuristically develop an information-theoretic-based boosting algorithm where the re-weighting scheme will involve mutual information $I(W;Z_i)$ and the domain divergence $D(\mu\|\mu')$, which appear in our derived upper bounds. From the excess risk bound~(\ref{eq:excess_erm_bound_mi}), we can see that the learning performance is controlled by the mutual information terms $I(W;Z_i)$, the domain divergence $D(\mu\|\mu')$, the summation of the weights in the target domains controlled by $\alpha$. Instead of using the same weights for all instances in the same domain, we assign different weights to different instances to show their importance and then introduce the novel boosting type algorithm based on the mutual information bound, namely, the \texttt{InfoBoost} algorithm. For simplicity, we consider the binary classification task where $\mathcal{Z} = \mathcal{X} \times \mathcal{Y}$ and present the algorithm in Alg~\ref{alg:infoboost}. The multi-label task and regression boosting algorithms can be extended following similar steps as shown in \cite{freund1997decision}.
\begin{algorithm}[h!]
\SetAlgoLined
\SetKwInOut{Input}{Input}
\SetKwInOut{Output}{Output}
 \Input{Source Sample $S$, Target Sample $S'$, Iteration $T$, Loss function $\ell(h,x,y)$}
 Initialize weights $\gamma_1(i) = \frac{1}{n}$ for all $i$\;
 Choose hyper-parameters $\Gamma$, $\zeta$, $\eta$ and iteration $T$\;
 \For{$t = 1,\cdots,T$}{ 
 Learn a base hypothesis $h_t$ using $S \cup S'$\; 
 Learn $h^{-i}_t$ using $S \cup S'$\; %by $k$-fold cross-validation\; 
 Learn domain hypothesis $h_{S'}$ and $h_S$ using $S'$ and $S$ separately\;
 Learn a domain discriminator $h_{\textup{dis}}$ for the domain features\;
 $\epsilon_{t}=\sum_{i=1}^{n\beta } \gamma_t(i) \mathbf{1}_{h_{t}\left(x_{i}\right) \neq y_{i}}+\sum_{i=\beta n + 1}^{n} \gamma_t(i) \mathbf{1}_{h_{t}\left(x_{i}\right) \neq y_{i}}$\;
 $\alpha_t = \log \frac{1-\epsilon_{t}}{\epsilon_t}$\;
 \For{$i = 1,\cdots,\beta n$}{
    Incur stability loss $d_i(h_t, h^{-i}_t)$ according to (\ref{eq:mi-stable})\;
    $\gamma_{t+1}(i) = \gamma_{t}(i) \cdot e^{ \alpha_t\ell(h_t,x_{i}, y_{i}) - \eta d_i(h_t, h^{-i}_t)}$ \;
    %Set $\gamma(i)$ to be the maximum value $\Gamma$\;%based on the mutual information $I(W;Z_i)$ and give less weights to the source coefficient;
 }
  \For{$i = \beta n + 1,\cdots, n $}{
    Incur stability loss $d_i(h_t, h^{-i}_t)$ according to (\ref{eq:mi-stable})\;
    Incur domain discrimination loss $\ell(h_{\textup{dis}}(x_i))$ according to (\ref{eq:domain})\;
    Incur labelling divergence loss $d(h_{S'}(x_i, y_i), h_{S}(x_i, y_i))$ according to (\ref{eq:labeling})\;
	$\gamma_{t+1}(i) = \gamma_{t}(i) \cdot e^{\alpha_t \ell(h_t,x_{i}, y_{i})  - \eta d_i(h_t, h^{-i}_t) - \zeta (\ell(h_{\textup{dis}}(x_i)) + d_i(h_S, h_{S'})) )}$\;
	%Set $\gamma(i)$ to be the maximum value $\Gamma$\;
	%Update $\gamma^\mathcal{T}_t$ based on the mutual information $I(W;Z_i)$ and give more weights to the target coefficient;
 }
 Let $Z_t = \sum_{i=1}^{n}\gamma_t(i)$\;
 Normalize $\gamma_t(i) = \frac{\gamma_t(i)}{Z_t}$ for $i = 1,2,\cdots,n$\;
 }
 \Output{$f(x) = \mathbf{1}_{\sum_{t=1}^{T}\alpha_th_t(x) \geq \frac{1}{2}\sum_{t=1}^{T}\alpha_t}$.}
 \caption{InfoBoost Algorithm}
 \label{alg:infoboost}
\end{algorithm}

 The procedures of the algorithm are sketched as follows:
\begin{itemize}
    \item Given both source and target training data, firstly we initialize the weights $\gamma_1(i), i = 1,\cdots, n$ uniformly (arbitrary initialization is also acceptable). The subscript for all variables in the algorithm indicates the timestamp. 
    
    \item We will then update weights iteratively in each round. At each timestamp $t$, we first learn a base hypothesis denoted by $h_t: \mathcal{X} \rightarrow \mathcal{Y}$ with the weights $\gamma_t$ using both source and target data. To introduce the concept of information stability for a single instance $z_i$, we need to train the hypothesis without $i$-th sample and compare it with $h_t$, which we denote by $h^{-i}_t$.  
    
    \item To capture the domain divergence, we decompose the KL divergence as $D(\mu(X,Y)\|\mu'(X,Y)) = D(\mu(X),\mu'(X)) + D(\mu(Y|X)\|\mu'(Y|X)|X)$ into two terms. Instead of directly estimating the KL divergence, we may use surrogate quantities learned from the data to approximate these two KL divergences. In practice, we train a domain discriminator $h_{\textup{dis}}: \mathcal{X} \rightarrow [0,1]$ to capture the feature divergence $D(\mu(X),\mu'(X))$, which takes the features $X$ as the input and the probability of being the target domain as the output. In other words, if $h_{\textup{dis}}(x)$ is closer to 1, then $x$ is more likely to be drawn from the target domain, and the divergence should be small. In addition, we train two domain-specific classifiers $h_{S'}: \mathcal{X} \rightarrow \mathcal{Y}$ and $h_S: \mathcal{X} \rightarrow \mathcal{Y}$ using $S'$ and $S$ respectively to capture the condition divergence term $D(\mu(Y|X)\|\mu'(Y|X)|X)$, where the differences of their predictions on the same input can be heuristically interpreted as the labelling divergence. 
    
    \item Having outlined the basic concept, let us explain in detail how the algorithm works. We firstly evaluate $h_t$ over $S$ and $S'$ to obtain the error rate $\epsilon_t$ and  $\alpha_t$ similarly as in \cite{freund1997decision}. For each target data, we use the exponentially updating rule following the boosting strategy in \cite{freund1997decision}. In addition to that, we also examine the difference between $h_t$ and $h^{-i}_t$, which can be heuristically interpreted as the information stability loss by:
    \begin{align}
        d_i(h_t, h^{-i}_t) = |h_t(x_i) - h^{-i}_t(x_i)|. \label{eq:mi-stable}
    \end{align}
    A smaller $d_i$ indicates that the sample $z_i$ has less effect on learning the hypothesis. Nonetheless, learning $h^{-i}_t$ may be very time-consuming due to the large training sample size $n$. Instead, by evenly splitting the training sample into $K$ folds where $K$ is much smaller than the sample size $n$, we will train a set of base classifier $h_k, k = 1,2,\cdots,K$ for each fold from the rest $K-1$ folds. That is, we will use the same hypothesis $h_k$ for all data in $k$-th fold to approximate $h^{-i}_t$. We choose $K = 20$ for the subsequent experiments. For each source data, we will further take the domain divergence into account, which is decomposed into two parts as mentioned earlier. The first is the divergence of the feature of $X$ induced by the domain discriminator $h_{\textup{dis}}$: 
    \begin{align}
        \ell(h_{\textup{dis}}(x_i)) = 1 - h_{\textup{dis}}(x_i) . \label{eq:domain}
    \end{align}
    Here, if $\ell(h_{\textup{dis}}(x_i))$ is large, then $i$-th sample is more different from the target data and we shall assign a smaller weight to this instance. Regarding the labelling divergence $D(\mu(Y|X)\|\mu'(Y|X)|X)$, we will utilize the hypothesis $h_S$ and $h_{S'}$ trained from the source and target domains, which can be seen as a mapping from $X$ to $Y$, and define the following quantity to measure their differences:
    \begin{align}
        d_i(h_S,h_{S'}) = |\ell(h_S,x_i,y_i) - \ell(h_{S'},x_i,y_i)|. \label{eq:labeling}
    \end{align}
    A smaller labelling divergence implies that $h_S$ and $h_{S'}$ are more similar. In summary, we use $d_i(h_t, h^{-i}_t)$ as a proxy for the mutual information $I(W;Z_i)$ and $\textup{div}(x_i,y_i) = \ell(h_{\textup{dis}}(x_i)) + d_i(h_S,h_{S'})$ as a proxy for the domain divergence $D(\mu\|\mu')$. We also introduce the hyper-parameters $\zeta$ and $\eta$ to allow more flexible control of these two quantities, and an appropriate choice of which could also prevent us from focusing too much on the particular data.
    
    \item After $T$ iterations, we aggregate all base classifiers $h_t$ with the corresponding weights $\alpha_t$ for some new input $x$ and output the prediction.
\end{itemize}

We now evaluate the proposed InfoBoost algorithm on several typical transfer learning tasks to show its effectiveness. The datasets used are listed as follows.
\begin{itemize}
    \item Office-Caltech-10 \cite{gong2012geodesic}: This dataset contains four subsets, and each domain has a set of office photos with the same 10 classes. In particular, the four subsets are \textbf{W}ebcam ($W$ for short), \textbf{D}SLR ($D$ for short), \textbf{A}mazon ($A$ for short) and \textbf{C}altech dataset ($C$ for short). We use each subset as a domain. Consequently, we get four domains ($A, C, D$ and $W$), leading to 12 transfer learning problems. Since each domain shares the same 10 classes, we therefore constructed 5 binary classification tasks and reported the average error in each transfer learning problem. We use the SURF features as described in \cite{gong2012geodesic} encoded with a visual dictionary of 800 dimensions.
    \item 20 Newsgroups \footnote{\url{http://qwone.com/~jason/20Newsgroups/}}: the dataset contains approximately 20000 reviews from 7 major categories that can be split into 20 subcategories. The source and target domains were picked from the same major categories but different subcategories in each transfer learning task, in the same way as in \cite{wang2019transfer}. 
    \item MNIST and USPS: These two datasets contain black and white hand-written digits from 0 to 9 where MNIST \footnote{\url{http://yann.lecun.com/exdb/mnist/}}  has approximately 70000 images and USPS \footnote{\url{https://www.csie.ntu.edu.tw/~cjlin/libsvmtools/datasets/multiclass.html##usps}} has approximately 10000 images in total. Since each domain shares the same digits from 0 to 9 (but a different writing style), we therefore constructed 5 binary classification tasks and reported the classification error for each task. We use all samples from USPS datasets but only use 10000 images from MNIST.
\end{itemize}
The benchmarks we compare with are the typical boosting algorithms for transfer learning with the same base classifier, e.g., the logistic regression model with very small regularization. We list all the competitors as follows.
\begin{itemize}
    \item \textbf{AdaBoost$_\mathcal{T}$ \cite{freund1997decision}:} The first baseline method is the AdaBoost algorithm with the target training data only, and the initial weights are assigned uniformly.
    \item  \textbf{AdaBoost$_{\mathcal{S} \& \mathcal{T}}$:} We also directly apply the AdaBoost algorithm with both source and target data, and the initial weights are assigned uniformly over all instances.
    \item \textbf{TrAdaBoost \cite{dai_boosting_2007}:}  TrAdaBoost algorithm is the firstly proposed boosting method for transfer learning where at each iteration less weights are assigned to the source and we adaptively focus more on the target data.
    \item \textbf{TransferBoost \cite{eaton2011selective}:} TransferBoost is another boosting algorithm that selects the useful source data by examining whether the source domain improves the learning performance in the target domain. We use the same way as described in \cite{eaton2011selective} to choose the hyper-parameters $\alpha_t$ and $\beta_t$. 
    \item \textbf{GapBoost \cite{wang2019transfer}:} The GapBoost algorithm minimizes the proposed performance gap between the source and target domains by training auxiliary classifiers on both source and target domains. The gap of the predictions of the auxiliary classifiers is regarded as the performance gap. Such a quantity is taken into consideration when updating instance weights. We use the same hyper-parameters as described in the experiments section in \cite{wang2019transfer}. 
\end{itemize}

\begin{table*}[!htp]
\centering
\caption{Accuracy in $\%$ with 10 target data for SURF Office-Caltech Dataset}\label{tab:acwd}
\begin{tabular}{|c|c|c|c|c|c|c|}
\hline
& Adaboost$_\mathcal{T}$           & Adaboost$_{\mathcal{S} \& \mathcal{T}}$           & Tradaboost           & Transfer Boost & GapBoost & InfoBoost \\
\hline
$A\rightarrow W$ & 64.69 & 75.47 & 75.88 & 76.42 & 77.78 & \textbf{78.30}  \\
$A\rightarrow D$ & 71.60 & 75.83 & 76.16 & 75.42 & 76.89 &  \textbf{78.44} \\
$A\rightarrow C$ & 57.19 & 76.88 & \textbf{77.24} & 74.97& 75.08& 74.84  \\
$D\rightarrow A$ & 54.94 & 56.98 & 57.12 & 57.68 & 57.87 & \textbf{58.04}\\
$D\rightarrow W$ & 48.91 & 51.84 & 51.54 & 52.05 & 52.31 & \textbf{52.62} \\
$D\rightarrow C$ & 54.57 & 60.97 & 60.88 & 61.01 & \textbf{61.11} & 61.05 \\
$W\rightarrow A$ & 54.76 & 62.90 & 62.75 & 63.04 & 62.86 & \textbf{63.28} \\
$W\rightarrow D$ & 59.41 & 61.32 & \textbf{61.44} &  60.66 & 60.96 & 61.20 \\
$W\rightarrow C$ & 55.91 & 63.07 & 63.01 & 63.13 & 63.07 & \textbf{63.27}\\
$C\rightarrow A$ & 57.68 & 76.34 & \textbf{78.05} & 76.51 & 75.14 & 77.87\\
$C\rightarrow D$ & 68.24 & 68.61 & 74.19 & 70.93 & 70.08 & \textbf{75.80} \\
$C\rightarrow W$ & 65.28 & 75.01 & 74.85 & 76.39 & 74.16 & \textbf{80.49} \\
\hline
Average  & 59.43 & 67.10 & 67.76 & 67.35 & 67.36 & \textbf{68.77} \\
\hline
\end{tabular}
\end{table*}

\begin{table*}[!htp]
\centering
\caption{Accuracy in $\%$ with 10\% training target data for 20 Newsgroups data}\label{tab:20news}
%\hspace*{-2cm}
\begin{tabular}{|c|c|c|c|c|c|c|}
\hline
Tasks & AdaBoost$_\mathcal{T}$    & AdaBoost$_{\mathcal{S} \& \mathcal{T}}$   & TradaBoost  & TransferBoost & GapBoost & InfoBoost\\
\hline
rec vs talk &  $90.30\pm 3.14$ & $87.42\pm 2.21$ & $71.97 \pm 5.89$ & $89.42 \pm 2.99$ & $91.77\pm 3.51$ & $\mathbf{93.11 \pm 1.71}$ \\
comp vs sci & $92.87 \pm 1.03$ & $84.66 \pm 1.53$ & $92.71 \pm 1.78$ & $93.35 \pm 1.05$ & $93.55 \pm 2.54$ & $\mathbf{94.15 \pm 1.04}$ \\
rec vs sci & $90.17 \pm 1.02$ & $86.68\pm 1.06$ & $90.49 \pm 1.36$ & $\mathbf{93.62 \pm 0.69}$ & $90.70 \pm 2.14$ & $92.58 \pm 0.89$\\
talk vs sci  & $88.66 \pm 2.74$ & $71.72 \pm 3.68$ & $81.39  \pm 4.98$ & $87.78 \pm 3.25$ & $90.14 \pm 2.55$ & $\mathbf{90.40 \pm 1.78}$\\
comp vs rec & $91.44 \pm 1.35$ & $87.90  \pm 1.80$ & $91.52\pm 1.12$ & $\mathbf{94.95 \pm 0.60}$ & $92.62 \pm 2.95$ & $94.00 \pm 1.58$\\
comp vs talk & $93.86 \pm 1.46$ & $89.57 \pm 1.32$ & $75.78 \pm 1.14$ & $94.83 \pm 0.78$ & $94.90 \pm 1.02$ &  $\mathbf{95.08 \pm 0.89}$ \\
\hline
Average & 91.22 & 84.66 & 83.98 & 92.33 & 92.28 & $\mathbf{93.22}$\\
\hline
\end{tabular}
\end{table*}

\begin{table*}[!htp]
\centering
\caption{Accuracy in $\%$ with 1\% training target data for MNIST and USPS datasets} \label{tab:mu}
%\hspace*{-2cm}
\begin{tabular}{|c|c|c|c|c|c|c|}
\hline
Tasks & AdaBoost$_\mathcal{T}$    & AdaBoost$_{\mathcal{S} \& \mathcal{T}}$   & TradaBoost  & TransferBoost & GapBoost & InfoBoost\\
\hline
% U to M$_{1 \textup{ vs } 7}$ & $63.03 \pm 4.16$ & $61.92 \pm 2.32$ & $52.26 \pm 1.60$  & $69.11 \pm 2.06$  & $69.14 \pm 2.10$ & $\mathbf{69.37 \pm 1.98}$ \\
% U to M$_{2 \textup{ vs } 3}$ &  $62.23 \pm 4.43$ & $57.75 \pm 3.12$ & $52.14 \pm 3.67$  & $66.05 \pm 1.64$ & $\mathbf{66.27 \pm 1.59}$ & $66.00 \pm 1.96$ \\
% U to M$_{5 \textup{ vs } 6}$ &  $62.85 \pm 3.23$ & $56.34 \pm 3.72$ & $57.51 \pm 4.16$ & $64.36 \pm 2.08$ & $64.15 \pm 1.77$ & $\mathbf{65.34 \pm 1.71}$ \\
% U to M$_{0 \textup{ vs } 8}$ &  $64.03 \pm 5.02$ & $58.19 \pm 2.78$ & $51.46 \pm 1.21$  & $67.92 \pm 2.07$  & $67.86 \pm 2.09$ & $\mathbf{68.23 \pm 2.18}$ \\
% U to M$_{4 \textup{ vs } 9}$ & $62.32 \pm 3.01$  & $59.76 \pm 5.74$ & $56.65 \pm 6.21$ & $65.35 \pm 2.36$ & $65.43 \pm 2.20$ & $\mathbf{65.43 \pm 2.16}$ \\
U to M$_{1 \textup{ vs } 7}$ & $61.52 \pm 4.15$  &  $57.73 \pm 1.92$ &  $52.67 \pm 1.08$ & $63.47 \pm 2.34$  & $63.58 \pm 1.99$ & $\mathbf{64.51 \pm 2.11}$ \\
U to M$_{2 \textup{ vs } 3}$ & $59.69 \pm 3.96$  & $56.49 \pm 1.87$  & $52.18 \pm 3.87$  & $63.29 \pm 1.71$  & $\mathbf{63.37 \pm 1.87}$  & $62.83 \pm 1.57$  \\
U to M$_{5 \textup{ vs } 6}$ & $61.05 \pm 2.87$  &  $57.22 \pm 1.24$ & $55.15 \pm 1.44$  & $61.54 \pm 1.74$ &  $61.46 \pm 1.70$ & $\mathbf{62.66 \pm 1.63}$  \\
U to M$_{0 \textup{ vs } 8}$ &  $59.78 \pm 3.75$  &  $57.61 \pm 2.57$ & $51.48 \pm 1.29$ &  $64.50 \pm 1.88$ & $64.77 \pm 1.88$ & $\mathbf{65.00 \pm 2.02}$  \\
U to M$_{4 \textup{ vs } 9}$ &  $57.97 \pm 3.12$  & $58.03 \pm 4.24$ &  $57.15 \pm 4.69$ & $61.87 \pm 1.64$  & $\mathbf{61.91 \pm 1.71}$  & $61.51 \pm 2.01$ \\
\hline
M to U$_{1 \textup{ vs } 7}$ &  $87.42 \pm 4.94$ & $90.44 \pm 12.54$  & $72.64 \pm 13.98$  & $92.23 \pm 3.66$   &  $91.26 \pm 3.50$ & $\mathbf{92.43 \pm 3.75}$ \\
M to U$_{2 \textup{ vs } 3}$ &  $74.38 \pm 5.33$ & $74.32 \pm 11.68$  & $76.80 \pm 7.34$ & $\mathbf{77.01} \pm 7.08$  &  $76.94 \pm 5.83$ & $76.29 \pm 6.34$ \\
M to U$_{5 \textup{ vs } 6}$ &  $71.87 \pm 7.14$ & $59.11 \pm 8.62$  & $69.53 \pm 3.79$ & $69.42 \pm 7.92$   & $75.33 \pm 5.83$ & $\mathbf{75.63 \pm 5.60}$ \\
M to U$_{0 \textup{ vs } 8}$ &  $74.47 \pm 3.85$ & $74.84 \pm 4.87$ &  $71.83 \pm 4.57$ & $77.89 \pm 4.08$ &  $\mathbf{78.43 \pm 4.22}$ & $77.96 \pm 4.43$  \\
M to U$_{4 \textup{ vs } 9}$ & $87.88 \pm 4.04$  &  $88.80 \pm 5.93$  & $79.24 \pm 1.43$  &  $91.07 \pm 4.03$  & $91.21 \pm 3.78$  & $\mathbf{91.33 \pm 3.94}$ \\
\hline
Average & 69.60  & 67.46 & 63.87 & 72.23 & 72.83 & \textbf{73.02} \\
\hline 
\end{tabular}
\end{table*}

\begin{figure*}[!htp]
	\centering
	\subfloat[rec vs talk]{
		\includegraphics[width=0.2\textwidth]{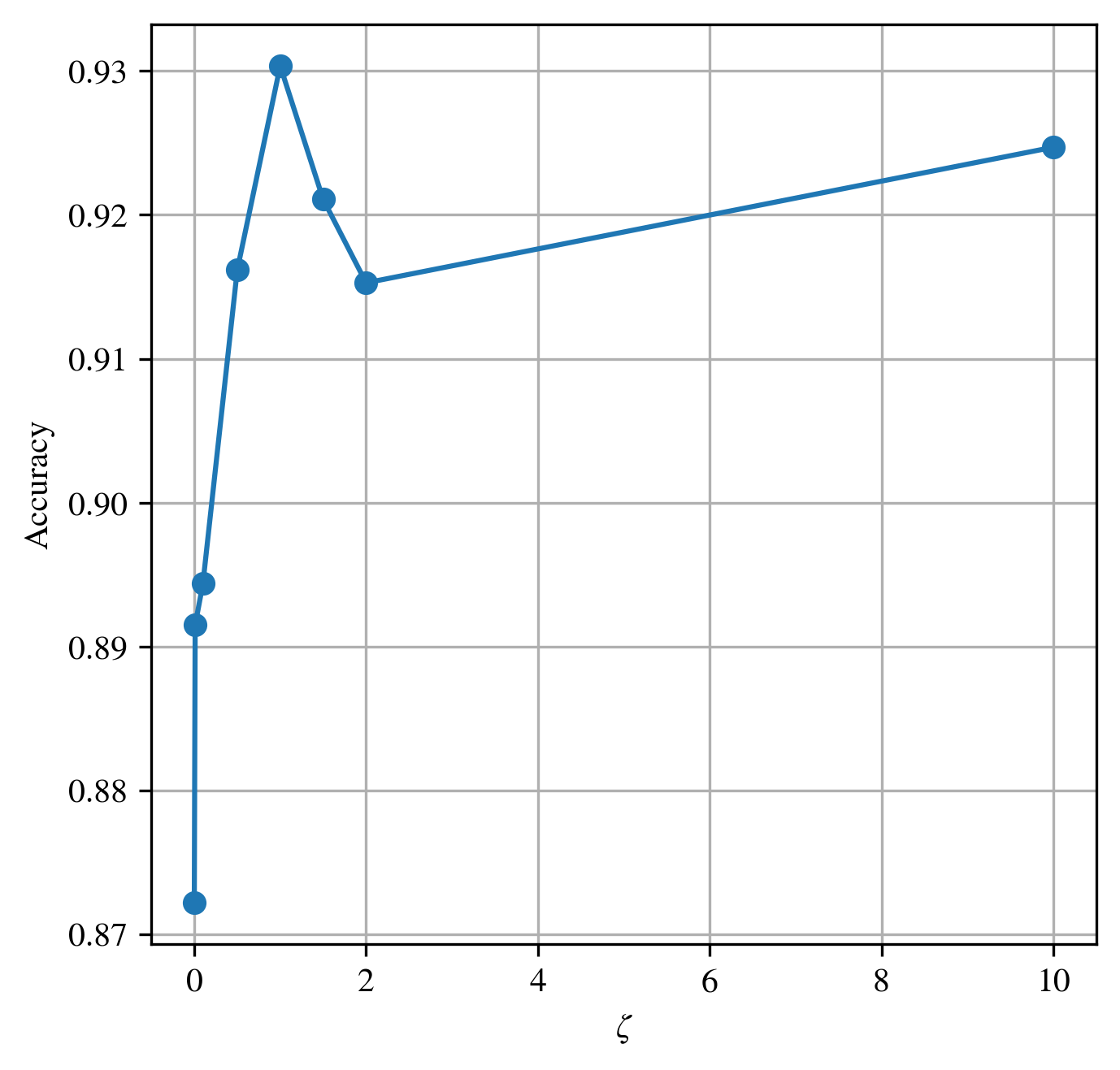}
	}
	\subfloat[comp vs sci]{
		\includegraphics[width=0.2\textwidth]{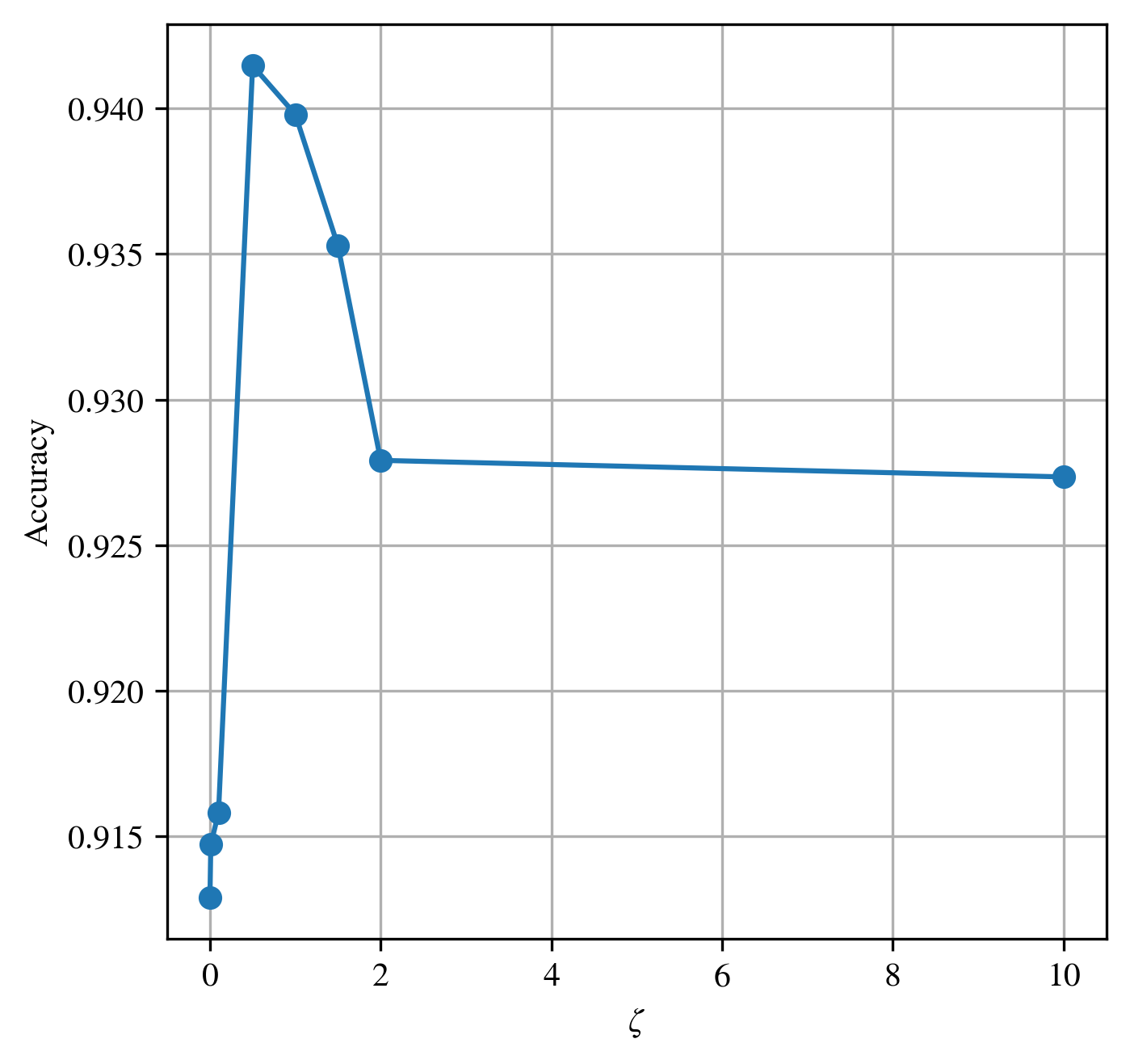}
	}
	\subfloat[rec vs sci]{
		\includegraphics[width=0.2\textwidth]{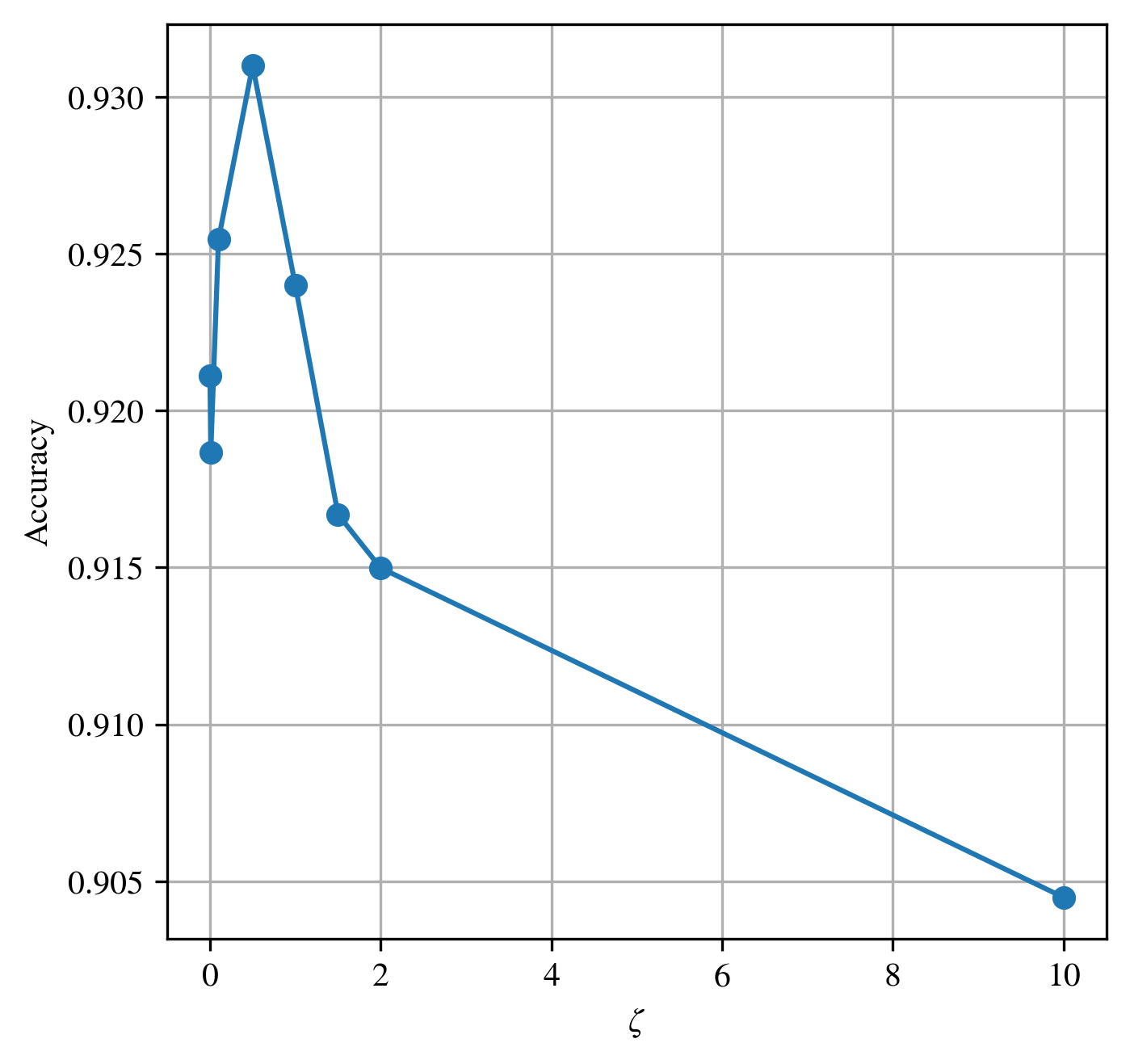}
	}\\
	\subfloat[talk vs sci]{
		\includegraphics[width=0.2\textwidth]{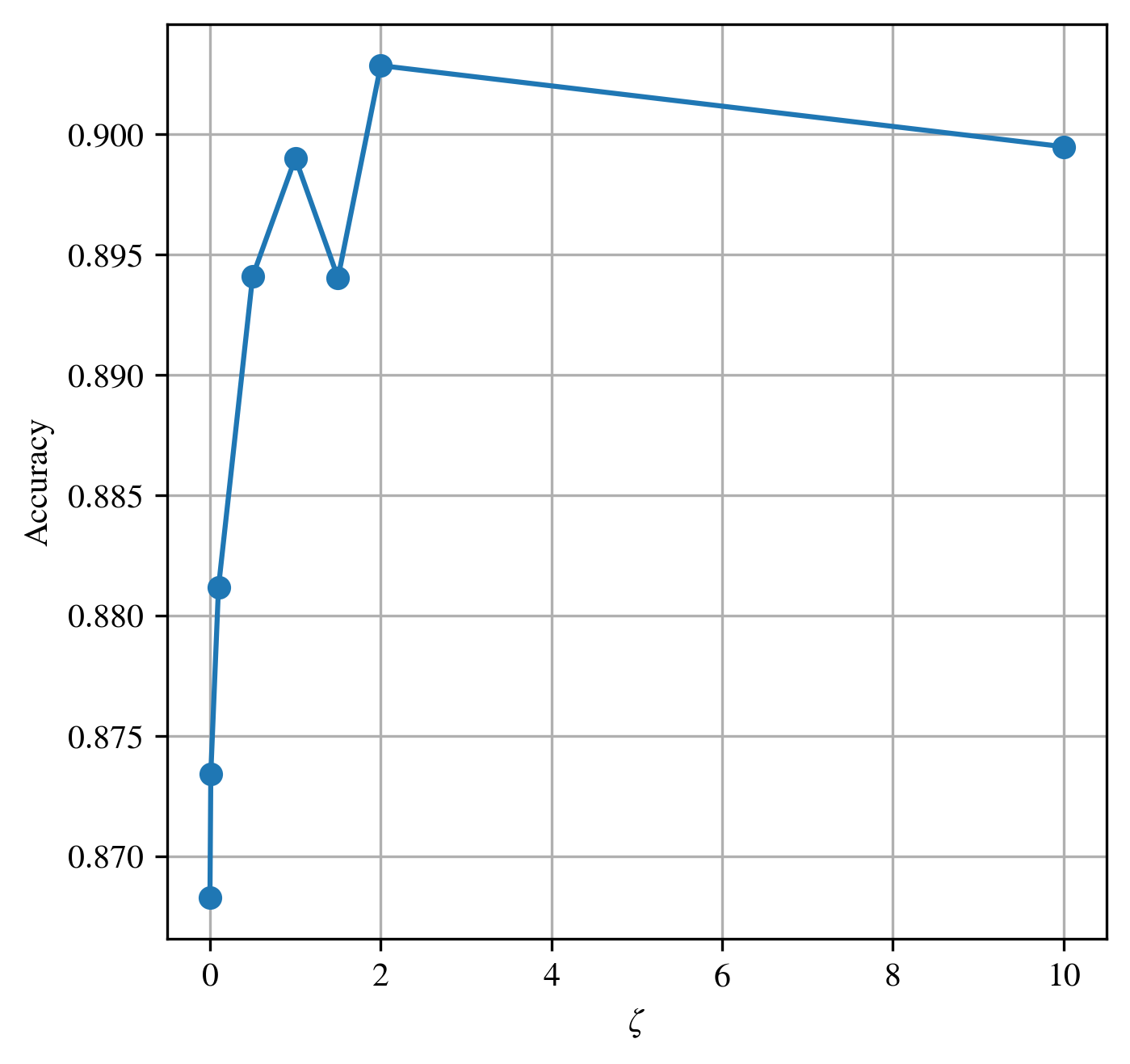}}
	\subfloat[comp vs rec]{
		\includegraphics[width=0.2\textwidth]{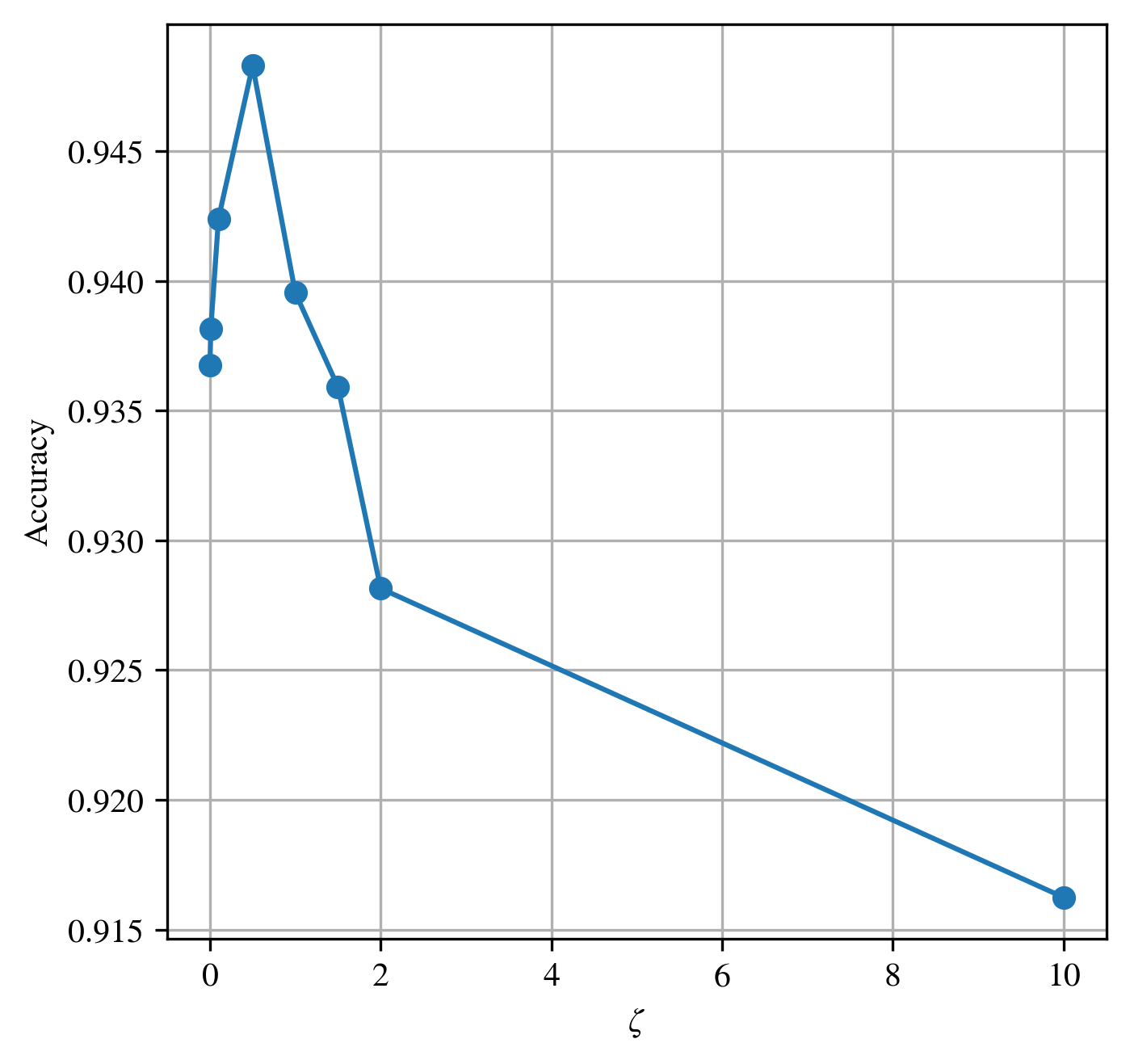}
	}
	\subfloat[comp vs talk]{
		\includegraphics[width=0.2\textwidth]{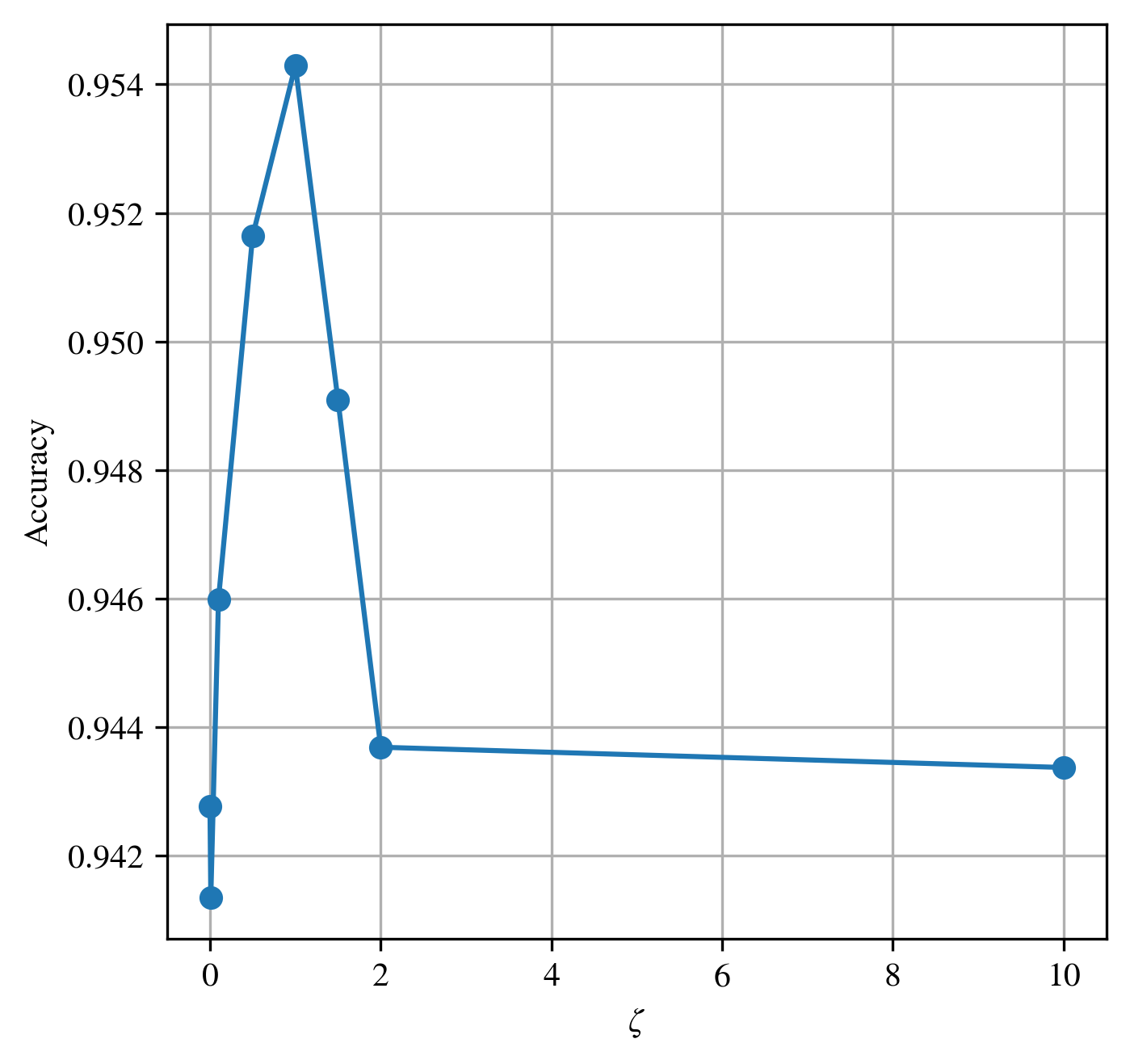}
	}
    \caption{Effect of $\zeta$ varying from 0 to 10 when fixing $\eta = 1$. The results are averaged over 20 experiments.}\label{fig:zeta}
\end{figure*}

\begin{figure*}[!htp]
	\centering
	\subfloat[rec vs talk]{
		\includegraphics[width=0.2\textwidth]{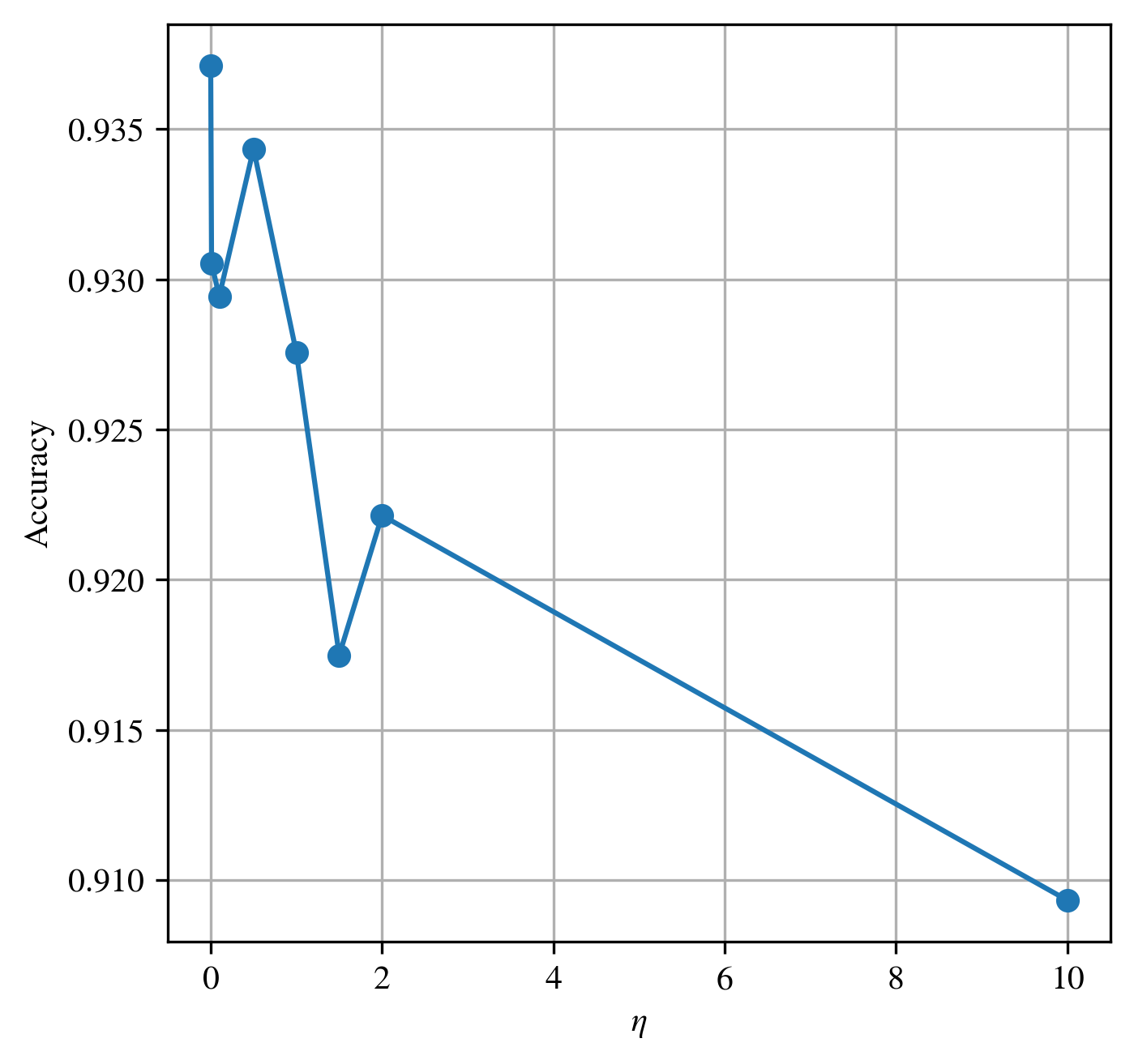}
	}
	\subfloat[comp vs sci]{
		\includegraphics[width=0.2\textwidth]{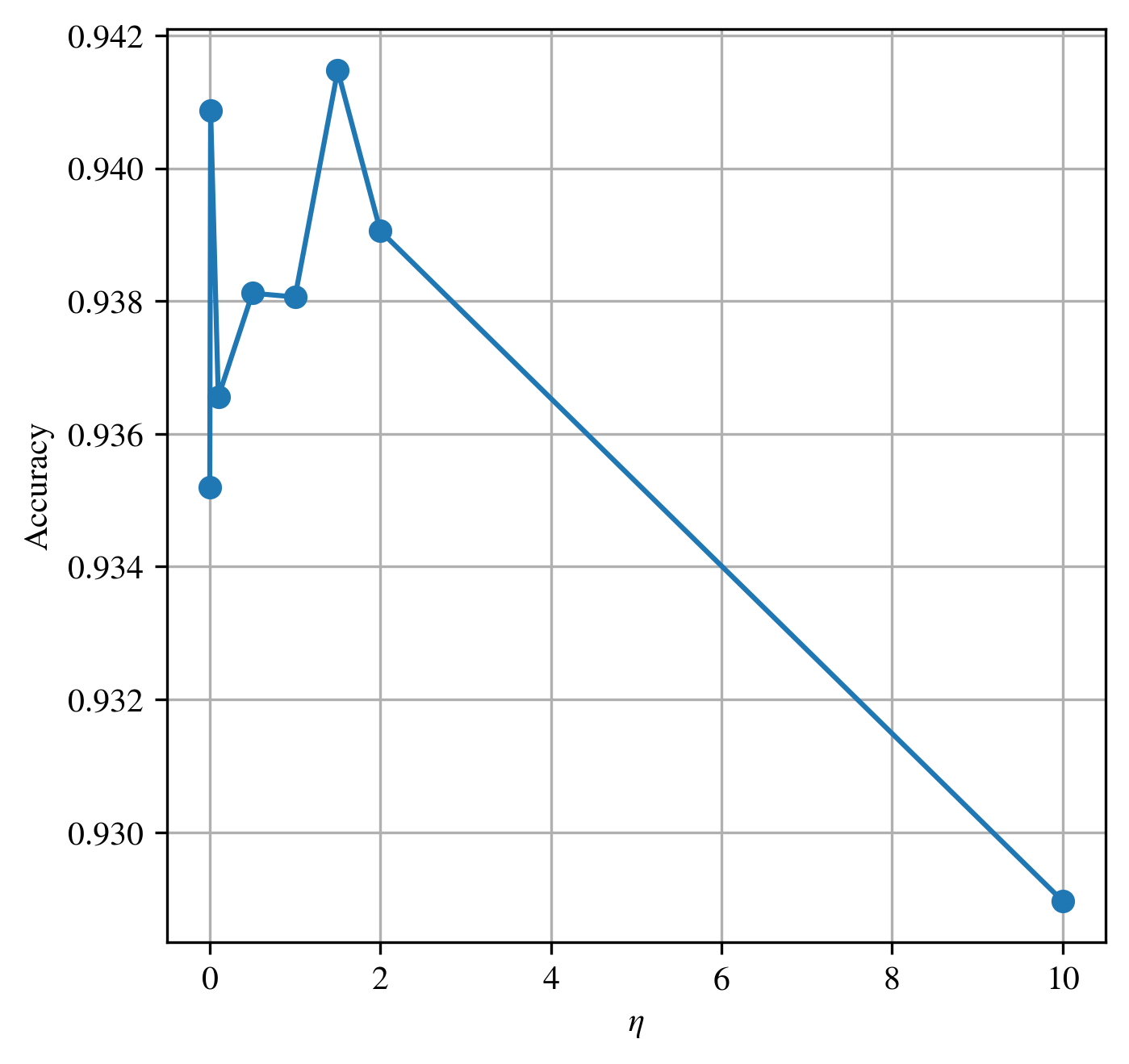}
	}
	\subfloat[rec vs sci]{
		\includegraphics[width=0.2\textwidth]{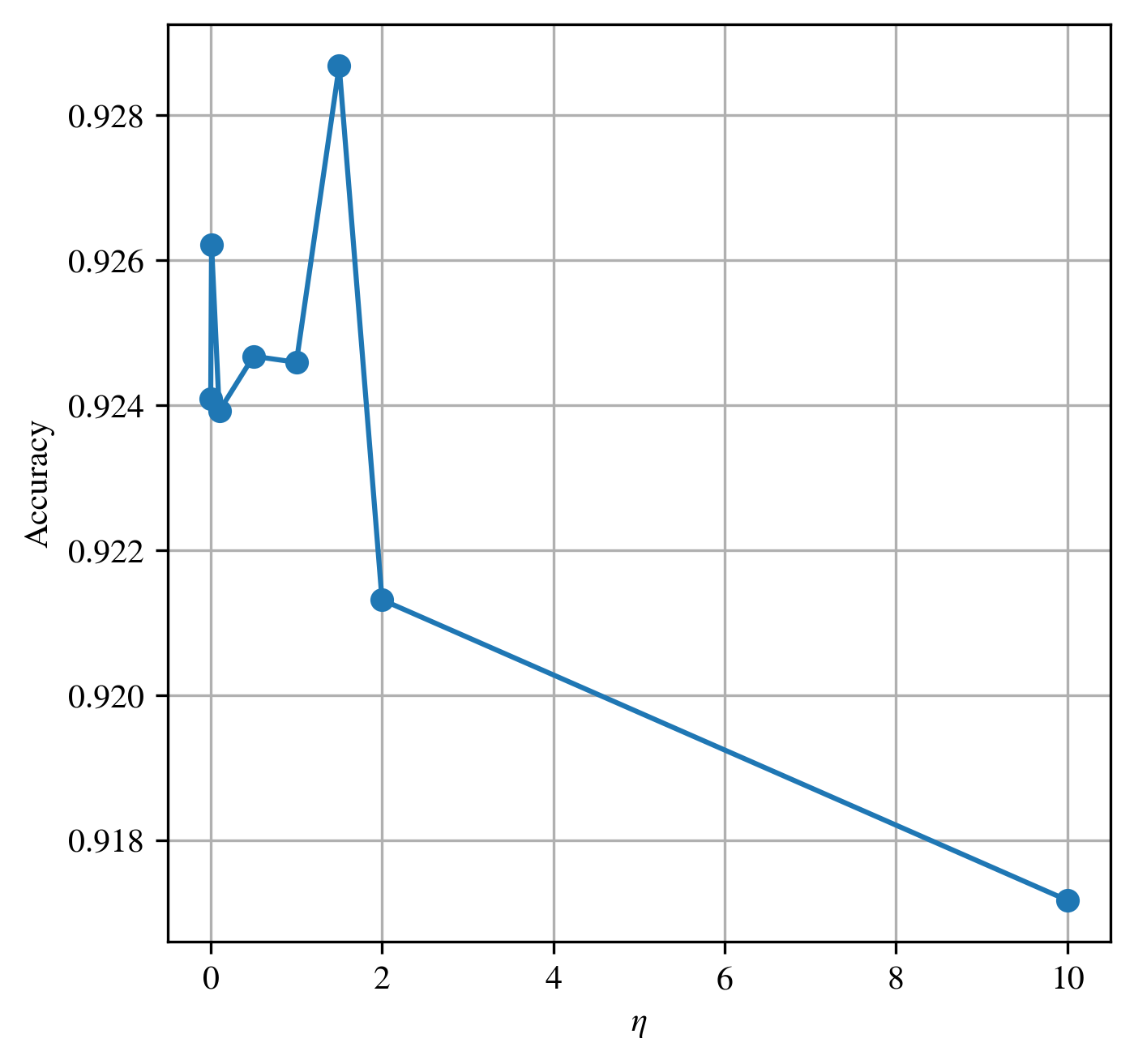}
	}\\
	\subfloat[talk vs sci]{
		\includegraphics[width=0.2\textwidth]{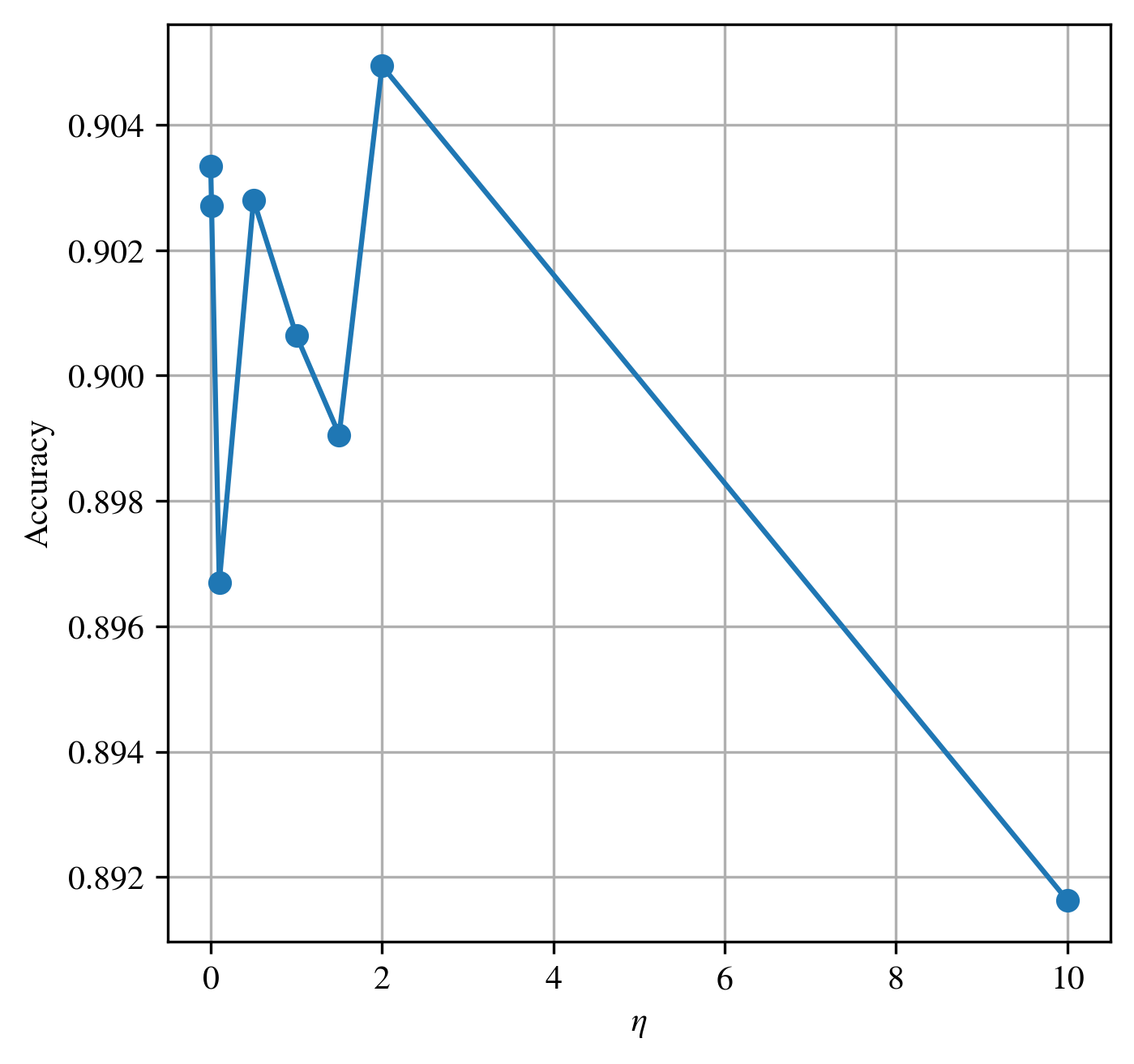}}
	\subfloat[comp vs rec]{
		\includegraphics[width=0.2\textwidth]{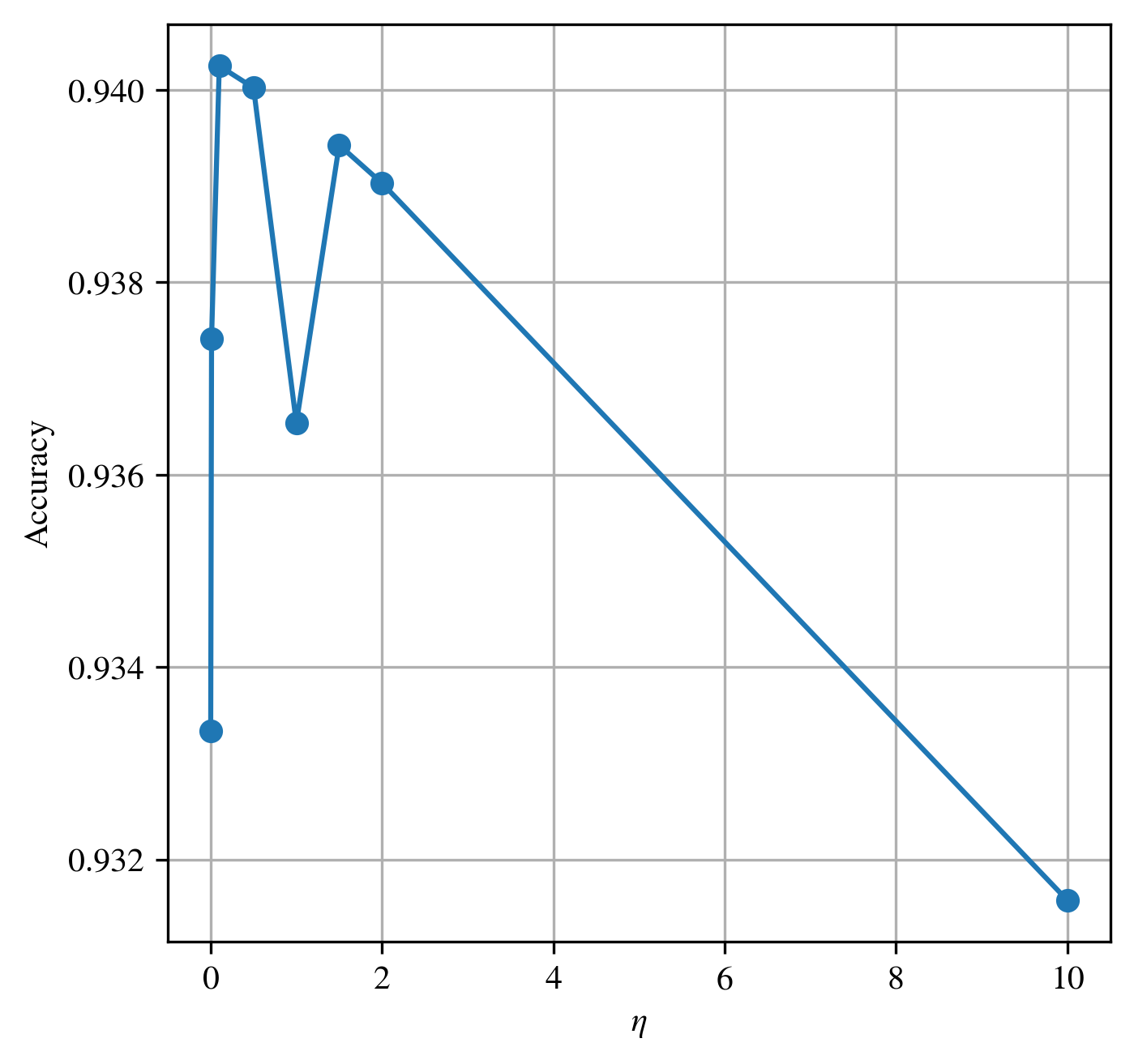}
	}
	\subfloat[comp vs talk]{
		\includegraphics[width=0.2\textwidth]{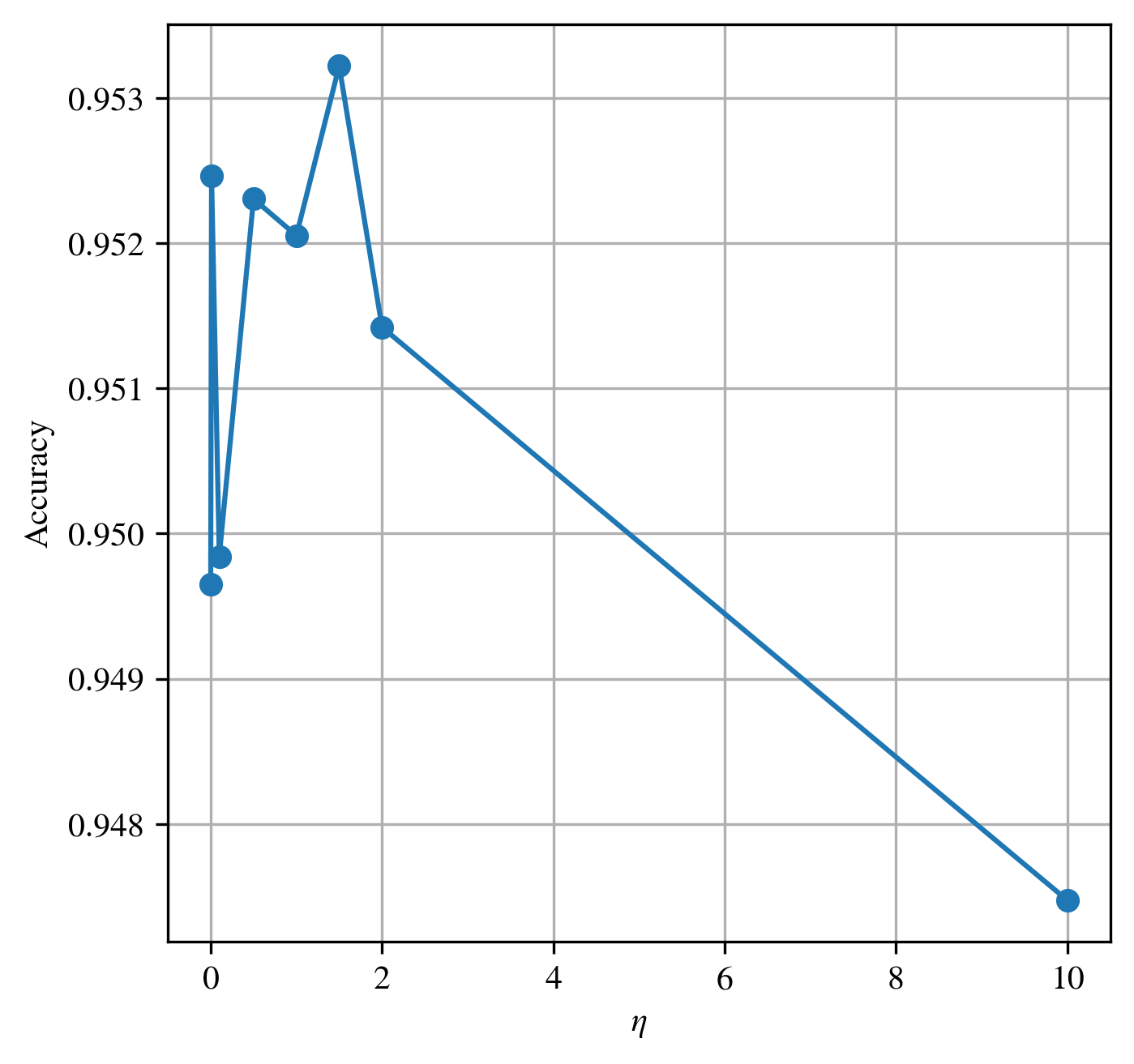}
	}
    \caption{Effect of $\eta$ varying from 0 to 10 when fixing $\zeta = 1$. The results are averaged over 20 experiments.}\label{fig:eta}
\end{figure*}

\begin{figure*}[!htp]
	\centering
	\subfloat{
		\includegraphics[width=0.75\textwidth]{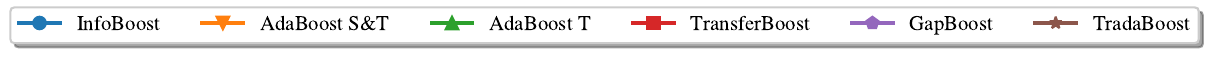}}	
	\addtocounter{subfigure}{-1}
	\\
	\subfloat[rec vs talk]{
		\includegraphics[width=0.25\textwidth]{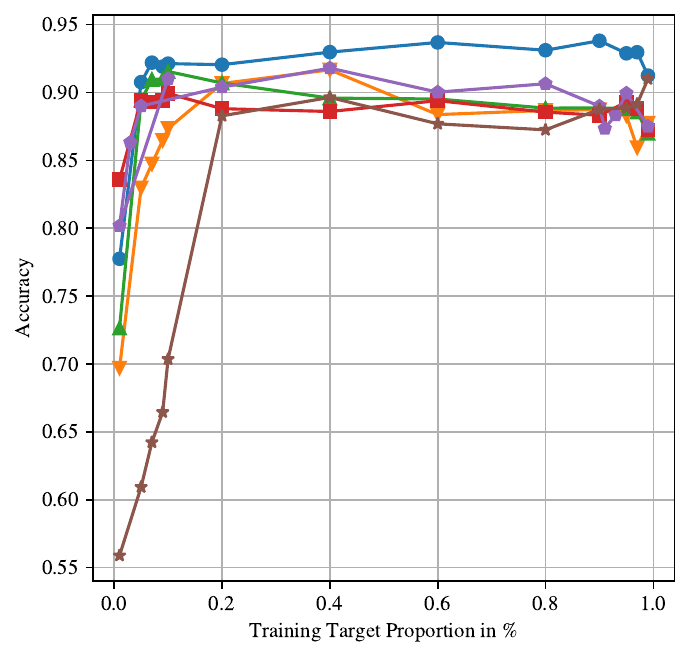}
	}
	\subfloat[comp vs sci]{
		\includegraphics[width=0.25\textwidth]{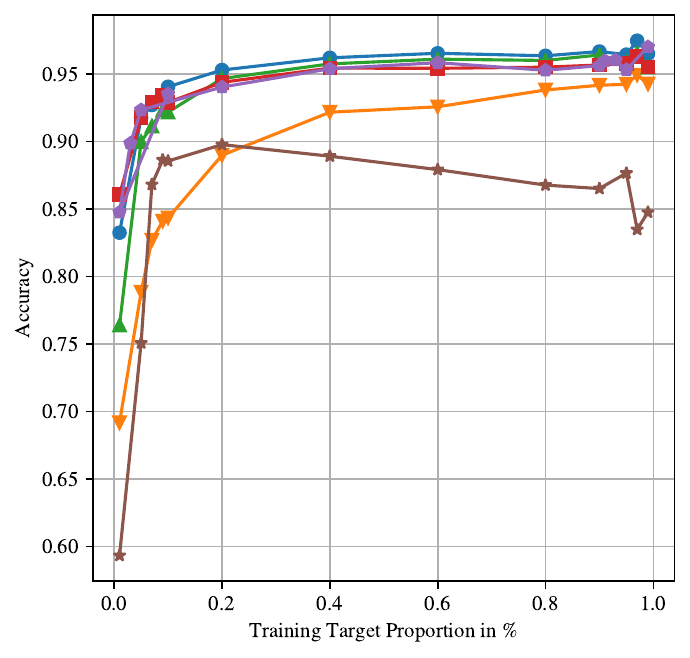}
	}
	\subfloat[rec vs sci]{
		\includegraphics[width=0.25\textwidth]{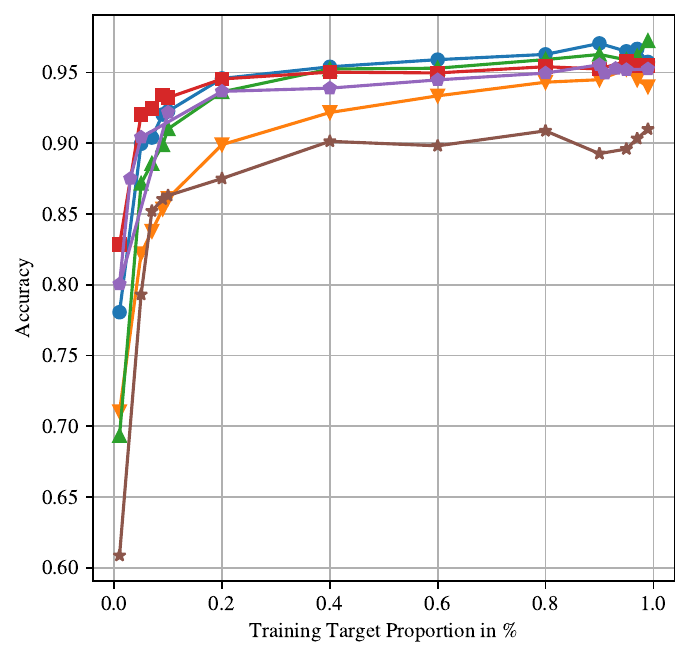}
	}\\
	\subfloat[talk vs sci]{
		\includegraphics[width=0.25\textwidth]{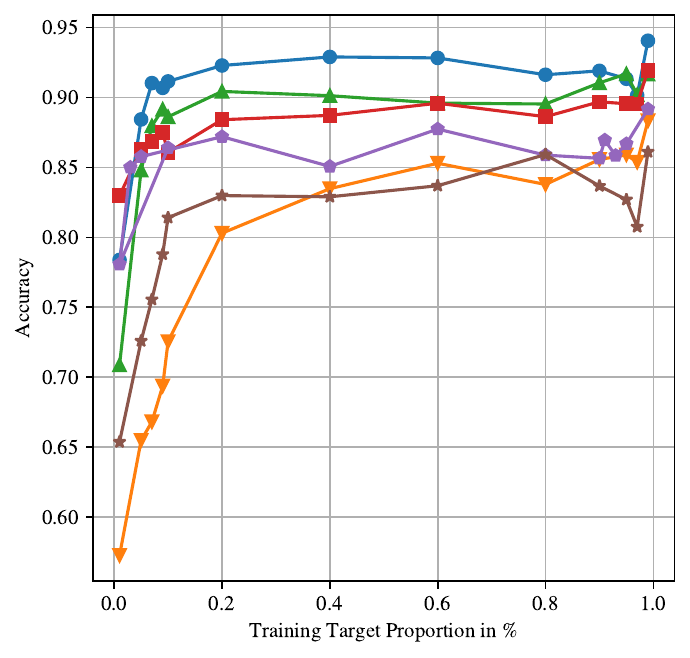}
		}
	\subfloat[comp vs rec]{
		\includegraphics[width=0.25\textwidth]{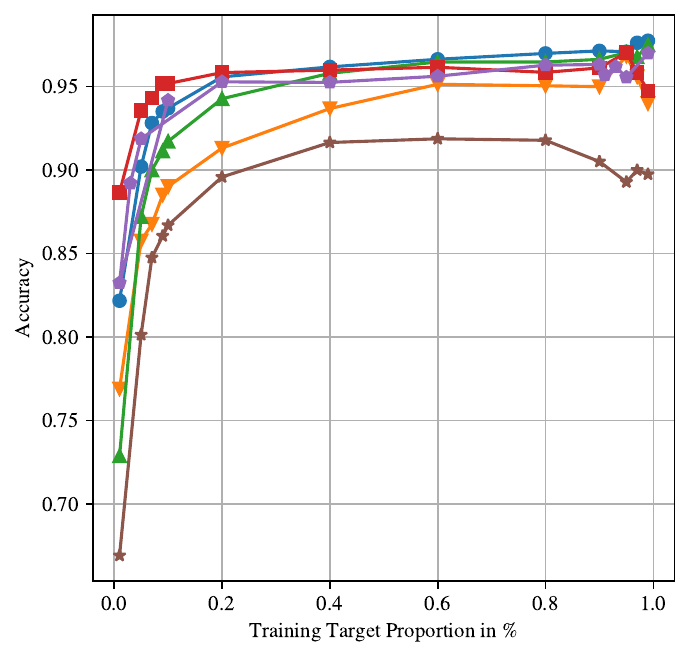}
	}
	\subfloat[comp vs talk]{
		\includegraphics[width=0.25\textwidth]{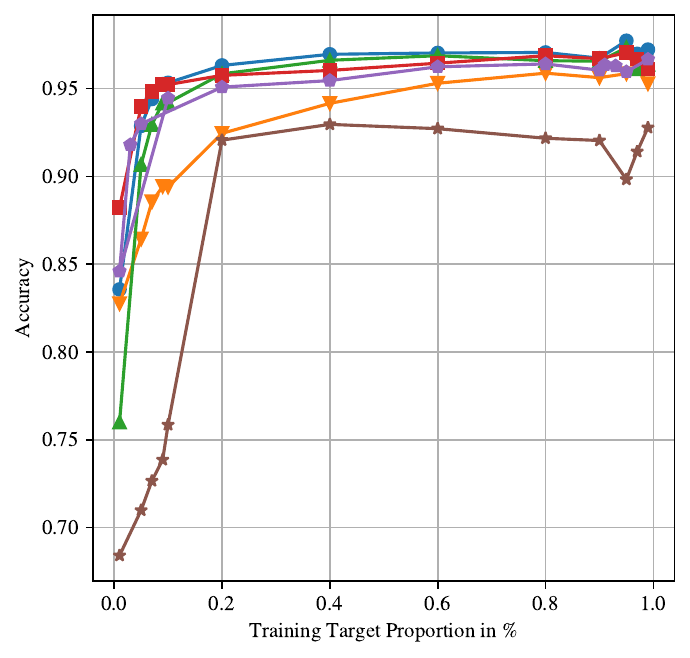}
	}
    \caption{Accuracy with different proportions of target training sizes for different tasks. We utilize all the source data and the results are averaged over 20 experiments.}\label{fig:proportion}
\end{figure*}

\noindent\textbf{Performance Comparisons} In all comparisons, $\eta$ and $\zeta$ are both set to 1 in the InfoBoost algorithm. We used the entire source data set for different tasks. As each dataset has a different target sample size, we selected 10 target instances in the training phase for the Office-Caltech dataset, 10\% of the target data for the 20 Newsgroup dataset for training, and 1\% target instances for training on the handwritten digit datasets. We used generalized linear regression (e.g., logistic regression) for various boosting algorithms as our base classifier. All performance comparisons are listed on Table~\ref{tab:acwd}, \ref{tab:20news}, and \ref{tab:mu}. From the comparisons, we can see that the InfoBoost algorithm outperforms other competitors in most cases for all three different transfer learning problems, showing the benefits of taking the domain divergence and the mutual information into account.  

\noindent\textbf{Hyperparameter Sensitivity} We carefully examine the effects of $\zeta$ and $\eta$ by fixing one variable and varying another. The results are shown in Figure~\ref{fig:zeta} and \ref{fig:eta}, respectively. One the one hand, we can see from Figure~\ref{fig:zeta}, $\zeta = 1$ achieves higher accuracy compared to other choices. This is understandable because in the case of too small $\zeta$, the domain divergence is not well taken into account in the source data updating step and the algorithm may be over-fitted. If $\zeta$ is too large, then the domain divergence will overwhelm the effect of the empirical risk, which will easily lead to under-fitting. Therefore, a proper selection of $\zeta$ is needed to yield satisfactory performance. On the other hand, if we fix $\zeta = 1$ and vary $\eta$ from 0 to 10, from Figure~\ref{fig:eta} we can then see that a reasonable choice of $\eta$ (e.g., less and equal than 2) can lead to a good performance in most cases. Even with a very small $\eta$, the accuracy is still comparable to the best result.  But the performance will be degraded if we choose a large $\eta$ (e.g., $\eta = 10$); this is because after several iterations, we mainly select the samples that have less effect on the prediction and a large $\eta$ overwhelms both empirical risk and domain divergence, which may lead to under-fitting in a similar manner. Although the accuracy rate dropped, the drop was not as great as the change of $\zeta$. Therefore, the effect of $\eta$ on the model performance is less than that of $\zeta$. To achieve the optimal choice of the hyper-parameters, we recommend setting both $\zeta$ and $\eta$ to 1 around.

We also examine the effect of the size of the target sample. Specifically, we split the target dataset into the training and testing sets by varying the proportion of the training size from 0.01 to 0.99 for 20 Newsgroup dataset as an example, and the results are shown in Figure~\ref{fig:proportion}. From the comparisons, when the target training sample size is sufficient, the InfoBoost algorithm will have the best performance compared to other competitors. However, when there is a lack of a target training sample, the InfoBoost algorithm might not have enough data to precisely capture the domain divergence and the mutual information, and the performance will be degraded with a small target training sample size.  

\noindent\textbf{Further Discussion}
Different from the works \cite{eaton2011selective} and \cite{wang2019transfer} that utilize the domain performance gap, we propose a new boosting algorithm inspired by the mutual information and the domain divergence for transfer learning. We empirically verify the effectiveness of our proposed algorithm and particularly examine the sensitivity of the hyper-parameters. Since the information-theoretic quantities are usually hard to estimate, we use the surrogate quantities (e.g., $d_i(h_t, h^{-i}_t)$, $\textup{div}(x_i,y_i)$ that heuristically represents the mutual information and the domain divergence. Instead, one could also improve the algorithm with other related quantities by information conditioning and processing techniques, such as the conditional mutual information \cite{steinke2020reasoning} and $f$-information proposed in \cite{harutyunyan2021information}, which might be easier to estimate from the data, which is left as our future work.
\section{Concluding remarks} \label{sec:conclusion}
In this work, we developed a set of upper bounds on the generalization error and the excess risk for general transfer learning algorithms under an information-theoretic framework. The derived bounds are particularly useful for various widely used algorithms in machine learning such as ERM, regularized ERM, Gibbs algorithm, and stochastic gradient descent algorithms. Moreover, we extend the results with different divergences other than the KL divergence, such as the $\phi$-divergence and the Wasserstein distance, which can give a tighter bound and handle more general scenarios where the KL divergence may be vacuous. We also tighten the bounds for the ERM and regularized ERM algorithms, and give the fast rate characterization of the transfer learning problems. Finally, we propose the InfoBoost algorithm that in each re-weighting iteration, the quantities of the mutual information bounds are utilized and the empirical verification shows the effectiveness of our algorithm. However, in some cases, it is hard to estimate the information-theoretic bound in practice. In particular, the density estimation from the source and target data might be inaccurate only given a few data samples in a high dimensional data space. One can thus relax the condition by either assuming the source and target distributions are lying in a restricted space and the divergence term will not be too deviated even with a small sample size or by finding a completely different divergence that is easier to estimate from the data, which is left as our future work.

In this paper we only studied the upper bound of transfer learning, ensuring the performance for some commonly used algorithms. However, since it is an upper bound, we are not sure when the introduction of source data hurts or improves the performance on the target domain. It is often known as the ``negative transfer" if the source data hurts. Therefore, under the information-theoretic framework, another possible future direction is to develop the lower bound for transfer learning algorithms to rigorously determine whether the source data will be useful or not, which might help mitigate the effect of the negative transfer.

\section*{Acknowledgement}
This research is supported by Melbourne Research Scholarships (MRS), and Australian Defence Science and Technology Group (DSTG) under the scheme The Artificial Intelligence for Decision Making Initiative 2022 and in part by Australian Research Council under project DE210101497.
\appendices

\section{Proof of Theorem \ref{thm:exp_gen}} \label{apd:main_theorem}
\begin{proof}%[Proof of Theorem \ref{thm:exp_gen}]
%For any $W$ to denote $W_{\ERM}$ in the proof to simplify notations. 
First, we rewrite the expectation of the generalization error for certain algorithm $P_{W|SS'}$ as in (\ref{eq:proof_theorem_1}) where the joint distribution $P_{WSS'}(w,s, s')$ on $(W, S, S')$ is given by $P_S(s)P_{S'}(s')P_{W|SS'}(w|s, s')$. 
\begin{figure*}[!htb]
\normalsize
\begin{align}
&\Esub{WSS'}{L_{\mu'}(W)-\hat L_{\alpha}(W)}=\Esub{WSS'}{L_{\mu'}(W)-(1-\alpha)\hat L(W,S)-\alpha\hat L(W,S')} \nonumber \\
&=\Esub{WSS'}{(1-\alpha)L_{\mu'}(W)+\alpha L_{\mu'}(W)-\frac{1}{n}\sum_{i=\beta n+1}^{n}\frac{1-\alpha}{1-\beta} \ell(W,Z_i)-\frac{1}{n}\sum_{i=1}^{\beta n}\frac{\alpha}{\beta} \ell(W,Z'_i)} \nonumber \\
&=\frac{1}{n}\Esub{WSS'}{\sum_{i=1}^{\beta n} \frac{\alpha}{\beta}(L_{\mu'}(W)- \ell(W,Z'_i))+\sum_{i=\beta n+1}^{n}\frac{1-\alpha}{1-\beta}(L_{\mu'}(W)-\ell(W, Z_i)) } \nonumber\\
&=\frac{1}{n}\frac{\alpha}{\beta}\sum_{i=1}^{\beta n}\Esub{WZ_i}{(L_{\mu'}(W)-\ell (W, Z'_i))}+\frac{1}{n}\frac{1-\alpha}{1-\beta}\sum_{i=\beta n+1}^n\Esub{WZ_i}{L_{\mu'}(W)-\ell(W, Z_i))}, \label{eq:proof_theorem_1}
\end{align} 
\vspace*{4pt}
\hrulefill
\end{figure*}
Recall that the variational representation of the KL divergence between two distributions $P$ and $Q$ defined over $\mathcal X$ is given as: (see, e.g. \cite{boucheron_concentration_2013})
\begin{align}
D(P||Q)=\sup_{f}\left\{\Esub{P}{f(X)}-\log\Esub{Q}{e^{f(x)}} \right\}, \label{eq:variational}
\end{align}
where the supremum is taken over all measurable functions such that $\Esub{Q}{e^{f(x)}}$ exists. For each $i=1,\ldots, n$, we define the joint distribution $P_{WZ_i}(w, z_i)$ ($P_{WZ'_i}(w, z'_i)$) between an individual sample $Z_i$ ($Z'_i$) and the hypothesis $W$ as induced by $P_{WSS'}(w, z^n)$ by marginalizing all samples other than $z_i$, and let $P_W$ be the marginal distribution on $W$ induced from $P_{WSS'}$.  

We first show the first inequality in the Theorem. For any $i=1,\ldots, \beta n$, let $P=P_{WZ'_i}$, $Q=P_W\otimes\mu'$ in (\ref{eq:variational}),  
and define $f:=\lambda \ell(W,Z'_i)$ for some  $\lambda$. The representation in (\ref{eq:variational}) implies that 
\begin{align*}
\Esub{WZ'_i}{\lambda\ell (W,Z'_i)}\leq D(P_{WZ'_i}||P_W\otimes \mu')+\log \E{e^{\lambda \ell(W,Z'_i)}},
\end{align*}
where the expectation on the R.H.S. is taken w.r.t. the distribution $P_W\otimes\mu'$. By the assumption that
\begin{align*}
\log \E{e^{\lambda(\ell(W,Z'_i)-\E{\ell(W,Z'_i)})}}\leq \psi(\lambda)
\end{align*}
for some $\lambda\in[b_{-},0]$ under the distribution $P_W\otimes \mu'$, we have
\begin{align*}
& \Esub{WZ'_i}{\lambda(\ell(W,Z'_i)-\Esub{WZ'_i\sim P_W\otimes \mu'}{\ell(W,Z'_i)} )} \\ 
& \leq  D(P_{WZ'_i}||P_W\otimes \mu') +\psi(\lambda),
\end{align*}
which is equivalent to
\begin{align*}
&\Esub{WZ'_i}{L_{\mu'}(W)-\ell(W,Z'_i)} \\ &\leq -\frac{1}{\lambda}\left(D(P_{WZ'_i}||P_W\otimes \mu') +\psi(\lambda))\right)\\
&=-\frac{1}{\lambda}\left(I(W;Z'_i)+D(P_{Z'_i}||\mu') +\psi(\lambda))\right)\\
&=-\frac{1}{\lambda}\left(I(W;Z'_i) +\psi(\lambda))\right),
\end{align*}
as $P_{Z'_i}=\mu'$ for $i=1,\ldots, \beta n$.  The best upper bound is obtained by minimizing the R.H.S., giving
\begin{align}
&\Esub{WZ'_i}{L_{\mu'}(W)-\ell (W,Z'_i)}\nonumber \\
&\leq \min_{\lambda\in[0, -b_{-}]}\frac{1}{\lambda}(I(W;Z'_i)+\psi(-\lambda))= \psi^{*-1}(I(W;Z'_i)). \label{eq:upper_target}
\end{align}
For $i=\beta n+1,\ldots, n$ in the source domain, using the same argument we can show that
\begin{align}
\Esub{WZ_i}{L_{\mu'}(W)-\ell (W,Z_i)}\leq \psi^{*-1}(I(W;Z_i)+D(\mu||\mu'))
\label{eq:upper_source}
\end{align}
Summing over $i$ using the upper bounds in (\ref{eq:upper_target}) and (\ref{eq:upper_source}), we obtain the first inequality in the theorem.

The second inequality is shown in the same way by using the fact that the cumulant generating function is upper bounded by $\psi(\lambda)$ in $[0,b_{+}) $.
\end{proof}

\section{Proof of Corollary \ref{coro:gen_beta0}}
\label{proof:corollary_beta0}
\begin{proof}
In the case when $\beta=0$, for any hypothesis $W$ induced by the sample $S$ and the learning algorithm $P_{W|S}$,  the generalization error is
\begin{align*}
&\Esub{WS}{L_{\mu'}(W)-\hat L(W,S)}\\
&=\Esub{WS}{L_{\mu'}(W)-\frac{1}{n}\sum_{i=1}^n\ell(W,Z_i)}\\
&=\frac{1}{n}\sum_{i=1}^n \Esub{WZ_i}{L_{\mu'}(W)-\ell(W,Z_i)}.
\end{align*}
From here, we can use the same argument as in the proof of Theorem \ref{thm:exp_gen} to upper and lower bound the term $\Esub{WZ_i}{L_{\mu'}(W)-\ell(W,Z_i)}$ to arrive at the result. Notice that here, we do not assume that $W$ is the solution of the ERM algorithm.
\end{proof}

\section{Multi-Source transfer learning problem} \label{apd:MSTLP}
In this subsection, we consider the learning regime that we have more than one source domain. Specifically, assume that the target domain consists of $n_t$ samples drawn IID from $\mu'$, e.g., $S' = \{Z'_1,Z'_2,\cdots,Z'_{n_t}\}$. We also have $k$ source domains $S^k_{1} = \left(S_{1},S_{2},\cdots,S_{k}\right)$ and each source domain $S_i$ consists of $n_i$ samples IID drawn from $\mu_i$, e.g., $S_i =\{Z_{i,1}, Z_{i,2}, \cdots, Z_{i,n_i}\}$, for $i = 1,\cdots, k$. Then we assign the weight $\alpha_i$ to the source domain $i$ and the weight $\alpha_t$ for the target domain where $\sum_{i=1}^{k} \alpha_i + \alpha_t = 1$. We then define the empirical loss by
\begin{align}
    \hat{L}_{\alpha}(w, S^k_1, S') = \sum_{i=1}^{k} \alpha_i \hat{L}(w,S_i) + \alpha_t\hat{L}(w,S').
\end{align}
and the corresponding combined expected risk is defined by
\begin{align}
    L_{\alpha}(w) = \sum_{i=1}^{k} \alpha_i {L}_{\mu_i}(w,S_i) + \alpha_t L_{\mu'}(w).
\end{align}
Then the generalization error for the multi-source transfer learning is defined as 
\begin{align}
    \gen (w, S^k_{1}, S') = L_{\mu'}(w) - \hat{L}_{\alpha}(w, S^k_1, S').
\end{align}
We can easily extend the result of the generalization error from the single source domain to multiple source domains.
%ERM algorithm solution takes the following form:
%\begin{align}
%    w_{\ERM} = \argmin_{w\in\mathcal{W}} \hat{L}_{\alpha}(w, S^k_1, S').
%\end{align}
\begin{corollary}[Generalization error for multi-source transfer learning]\label{coro:mst}
Let $P_W$ be the marginal distribution induced by $S^{k}_{1}, S'$ and $P_{W|S^k_1S'}$ for some algorithm (not necessarily the ERM solution). If $\ell(W, Z)$ is $r^2$-subgaussian under the distribution $P_W\otimes \mu'$, then the expectation of the generalization error is upper bounded as
\begin{align}
&|\Esub{WS^k_1S'}{\gen(W, S^k_1, S')}|\leq  \frac{\alpha_t\sqrt{2r^2}}{ n_t}\sum_{i=1}^{n_t}\sqrt{I(W;Z'_i)} \nonumber \\
& + \sum_{i=1}^{k}\frac{(1-\alpha_i)\sqrt{2r^2}}{n_i}\sum_{j=1 }^{n_i}\sqrt{ (I(W;Z_{i,j})+D(\mu_i||\mu'))}.
\end{align}
\end{corollary}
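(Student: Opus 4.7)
The plan is to mimic the argument of Theorem~\ref{thm:exp_gen} and its subgaussian specialization in Corollary~\ref{coro:general_bound}, only now bookkeeping the contributions of $k$ distinct source distributions separately. First I would use the constraint $\alpha_t+\sum_{i=1}^k\alpha_i=1$ to split the target population risk as $L_{\mu'}(W)=\alpha_t L_{\mu'}(W)+\sum_{i=1}^k\alpha_i L_{\mu'}(W)$, so that the expected generalization error factors into a ``target piece'' and $k$ ``source pieces'':
\begin{align*}
\Esub{WS^k_1S'}{\gen(W,S^k_1,S')}
&=\frac{\alpha_t}{n_t}\sum_{j=1}^{n_t}\Esub{WZ_j}{L_{\mu'}(W)-\ell(W,Z_j)}\\
&\quad +\sum_{i=1}^k\frac{\alpha_i}{n_i}\sum_{j=1}^{n_i}\Esub{WZ_{i,j}}{L_{\mu'}(W)-\ell(W,Z_{i,j})},
\end{align*}
where for each $Z_{i,j}$ the marginal is $\mu_i$ and for each $Z_j$ the marginal is $\mu'$.

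Next I would control each inner expectation by the Donsker–Varadhan variational representation of the KL divergence, exactly as in the proof of Theorem~\ref{thm:exp_gen}. For a target sample $Z_j\sim\mu'$, taking $P=P_{WZ_j}$ and $Q=P_W\otimes\mu'$ with test function $f=\lambda\ell(W,Z_j)$ and using $r^2$-subgaussianity under $P_W\otimes\mu'$ yields
\[
\bigl|\Esub{WZ_j}{L_{\mu'}(W)-\ell(W,Z_j)}\bigr|\leq \sqrt{2r^2\,I(W;Z_j)}.
\]
For a source sample $Z_{i,j}\sim\mu_i$ I again take $Q=P_W\otimes\mu'$ (not $P_W\otimes\mu_i$), so that the cumulant bound is available; the KL divergence on the right hand side then decomposes as
\[
D(P_{WZ_{i,j}}\|P_W\otimes\mu')=I(W;Z_{i,j})+D(\mu_i\|\mu'),
\]
and optimizing over $\lambda$ gives
\[
\bigl|\Esub{WZ_{i,j}}{L_{\mu'}(W)-\ell(W,Z_{i,j})}\bigr|\leq \sqrt{2r^2\,\bigl(I(W;Z_{i,j})+D(\mu_i\|\mu')\bigr)}.
\]

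Finally I would combine these bounds via the triangle inequality, weighting by $\alpha_t/n_t$ and $\alpha_i/n_i$ respectively, and repeat the argument with the sign flipped (using the $[0,b_+)$ side of the cumulant bound) to obtain the matching bound on $-\Esub{}{\gen}$; together they yield the absolute value statement. I do not expect any substantial obstacle: the only conceptual point is that the ``change of measure'' from $\mu_i$ to $\mu'$ must be absorbed into the KL divergence \emph{before} invoking the subgaussian assumption, so that the moment generating function is evaluated under $P_W\otimes\mu'$ and not under $P_W\otimes\mu_i$; this is precisely what forces the term $D(\mu_i\|\mu')$ to appear in the $i$-th source summand, and is the same mechanism used in the single-source proof.
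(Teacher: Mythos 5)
Your proposal is correct and follows essentially the same route as the paper, which offers no separate argument for this corollary beyond noting that it is the per-sample Donsker--Varadhan decomposition of Theorem~\ref{thm:exp_gen} and Corollary~\ref{coro:general_bound} with the source contribution split over the $k$ domains and the change of measure from $\mu_i$ to $\mu'$ absorbed into $D(P_{WZ_{i,j}}\|P_W\otimes\mu')=I(W;Z_{i,j})+D(\mu_i\|\mu')$, exactly as you describe. One small observation: your decomposition correctly produces the coefficient $\alpha_i$ on the $i$-th source summand (consistent with the excess-risk corollary that follows), so the $(1-\alpha_i)$ appearing in the stated bound is evidently a typo rather than a defect in your argument.
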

\noindent For ERM solution, we can decompose the excess risk in the following way:
\begin{figure*}[!htb]
\normalsize
\begin{align}
L_{\mu'}(w_{\ERM})-L_{\mu'}(w^*)&=L_{\mu'}(w_{\ERM})-\hat L_{\alpha}(w_{\ERM},S^k_1, S')+\hat L_{\alpha}(w_{\ERM},S^k_1, S')-\hat L_{\alpha}(w^*,S^k_1, S')\nonumber\\
&\quad  +\hat L_{\alpha}(w^*,S^k_1, S')-L_{\alpha}(w^*)
+L_{\alpha}(w^*)-L_{\mu'}(w^*)\nonumber\\
&\leq \gen(w_{\ERM}, S, S') +\hat L_{\alpha}(w^*,S^k_1, S')- L_{\alpha}(w^*) + \sum_{i=1}^{k}\alpha_i(L_{\mu}(w^*)-L_{\mu'}(w^*)). %\label{eq:excess_expression}
\end{align}
\vspace*{4pt}
\hrulefill
\end{figure*}

Therefore, we can give the upper bounds on the excess risk.
\begin{corollary}[Excess risk for multi-source transfer learning]
Let $P_W$ be the marginal distribution induced by $S^k_{1},S'$ and $P_{W|S^k_1S'}$ for the ERM algorithm, assume the loss function $\ell(W, Z)$ is $r^2$-subgaussian under the distribution $P_{W} \otimes \mu'$. Then the following inequality holds. 
\begin{align}
& \mathbb{E}_{W}[R_{\mu'}(W_\ERM)]\leq \frac{\alpha_t\sqrt{2r^2}}{ n_t}\sum_{i=1}^{n_t}\sqrt{I(W_\ERM;Z'_i)} \nonumber  \\
&+ \sum_{i=1}^{k}\frac{\alpha_i\sqrt{2r^2}}{n_i}\sum_{j=1 }^{n_i}\sqrt{ (I(W_\ERM;Z_{i,j})+D(\mu_i||\mu'))} \nonumber \\
&+ \sum_{i=1}^{k}\alpha_i d_{\mathcal W}(\mu_i, \mu').
\end{align}
\end{corollary}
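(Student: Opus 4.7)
The plan is to take expectations on both sides of the decomposition already written just before the corollary statement and bound each resulting term by a previously established result. Concretely, applying $\mathbb{E}_{WS^k_1 S'}$ to
\begin{align*}
L_{\mu'}(w_{\ERM}) - L_{\mu'}(w^*) \le \gen(w_{\ERM}, S^k_1, S') + \bigl[\hat L_{\alpha}(w^*, S^k_1, S') - L_{\alpha}(w^*)\bigr] + \sum_{i=1}^{k}\alpha_i\bigl(L_{\mu_i}(w^*) - L_{\mu'}(w^*)\bigr)
\end{align*}
produces three terms that must be controlled separately.

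For the leading term I would directly invoke Corollary~\ref{coro:mst}, the multi-source generalization error bound, which yields exactly the first two summations in the claim, namely the $\alpha_t$-weighted target-mutual-information sum and the $\alpha_i$-weighted source-mutual-information-plus-KL sums. For the middle bracket, since $w^*$ is a deterministic minimizer independent of the data, linearity of expectation together with the IID structure give $\mathbb{E}[\hat L_{\alpha}(w^*, S^k_1, S')] = \alpha_t L_{\mu'}(w^*) + \sum_{i=1}^k \alpha_i L_{\mu_i}(w^*) = L_{\alpha}(w^*)$, so this bracket contributes zero in expectation. For the third term, I would apply the natural multi-source analogue of the $L^1$ distance in~(\ref{eq:empirical_distance}), bounding each $L_{\mu_i}(w^*) - L_{\mu'}(w^*) \le \sup_{w\in\mathcal W} |L_{\mu_i}(w) - L_{\mu'}(w)| = d_{\mathcal W}(\mu_i, \mu')$ and summing over $i$ to produce $\sum_{i=1}^k \alpha_i d_{\mathcal W}(\mu_i, \mu')$.

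Combining the three contributions reproduces the stated inequality. The argument involves no new conceptual ingredient beyond the single-source analysis of Theorem~\ref{thm:excess}; it is essentially a multi-index rewriting. The only point warranting attention is bookkeeping: ensuring the normalization $\alpha_t + \sum_{i=1}^k \alpha_i = 1$ is used consistently when rewriting $L_{\alpha}(w^*) - L_{\mu'}(w^*)$ as the convex-combination gap $\sum_i \alpha_i(L_{\mu_i}(w^*) - L_{\mu'}(w^*))$, and that the source index $(i,j)$ inside the mutual information terms is aligned with the output of Corollary~\ref{coro:mst}. I do not foresee any genuine obstacle beyond these routine steps.
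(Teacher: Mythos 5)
Your proposal is correct and follows essentially the same route as the paper: the paper states exactly this decomposition of the excess risk immediately before the corollary and then obtains the bound by applying the multi-source generalization bound to the first term, noting the empirical-minus-population term for $w^*$ vanishes in expectation, and bounding $L_{\mu_i}(w^*)-L_{\mu'}(w^*)$ by $d_{\mathcal W}(\mu_i,\mu')$. The only caveat is cosmetic: the stated Corollary~\ref{coro:mst} carries coefficients $(1-\alpha_i)$ on the source sums, which is evidently a typo for $\alpha_i$ given the definition of $\hat L_{\alpha}$, and your reading (using $\alpha_i$) is the consistent one.
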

\begin{remark}
To minimize the upper bound, we will be optimizing $\alpha_t$ and $\alpha_i$ for target and source domains, which is not trivial as these weights are also implicitly embedded in the mutual information $I(W_\ERM;Z_i)$. Intuitively, less weights need to be assigned to those source domains which have large KL divergence $D(\mu_i\|\mu')$. To this end, one can apply the InfoBoost algorithm for optimizing $\alpha$.
\end{remark}

%\section{Proof of Theorem \ref{thm:exp_gen}} \label{proff:thm_exp_gen}

%Define $X_{\alpha}:=\alpha\ell(W,Z)-L_{\mu'}(W)$ and notice that  $\E{X_{\alpha}}=(\alpha-1)\E{\ell(W,Z)}:=(\alpha-1)\bar\ell$

\section{Transfer learning with pre-trained hypothesis}\label{apd:pre-train}
In transfer learning, it is quite common to first pre-train on the source domain and subsequently fine-tune on the target dataset. This methodology has been extensively adopted and affirmed across various situations, especially in the setups of deep learning \cite{liu2021transtailor,you2021logme}. Using our method, we can further apply this bounding technique to the pre-trained hypothesis. 

First, let us introduce some new notations for this particular setup: we denote the target dataset by $S' = \left\{Z'_1, Z'_2, \cdots, Z_{n_t}'\right\}$ and the source dataset by $S = \left\{Z_{1}, Z_{2}, \cdots, Z_{n_s}\right\}$. For deep learning model pre-training, it is common for the source and target to have some shared layers, represented by $w_C$. Initially, we train a model $w_{\text{source}} = (w_C, w_S)$ using the source data, where $w_S$ denotes the domain-specific model parameters. Subsequently, we use $w_C$ as a fixed layer and fine-tune the model $w_{\text{target}} = (w_C, w_T)$ for the target domain using the target data. Let $\ell(w_C, w_{\text{domain}}, Z)$ be the loss function where $w_{\text{domain}}$ denotes the domain specific model parameters. In this case, the learning procedures are described as follows: 
\begin{itemize}
    \item We initially train $w_{\text{source}}$ by ERM with the source data:
    \begin{align}
        w_C, w_S =  \argmin_{w_C, w_S} \frac{1}{n_s} \sum_{i=1}^{n_s} \ell(w_C, w_S, Z'_i) 
    \end{align}
    \item Given the trained $w_C$, we next train $w_T$ by ERM with the target data:
    \begin{align}
         w_T =  \argmin_{w_T} \frac{1}{n_t} \sum_{i=1}^{n_t} \ell(w_C, w_T, Z'_i) 
    \end{align}
    \item The output $w_T$ and $w_C$ will be tested with the data from the target distribution.
\end{itemize}
With the above definitions, we can then define the generalization error on the target domain as:
\begin{align}
    \text{gen}(w_T, w_C, S') = &\frac{1}{n_t} \sum_{i=1}^{n_t} \ell(w_C, w_T, Z'_i) \nonumber \\
    & - \mathbb{E}_{Z'_i}\left[\ell(w_C, w_T, Z'_i) \right]
\end{align}
Now we can state the theorem as follows.
\begin{figure*}[!htb]
\normalsize
\begin{align}
\left|\mathbb{E}_{W_TZ'_i|w_C}\left[\ell(W_T, w_C, Z'_i)\right]\right| - \mathbb{E}_{W_T \otimes \mu'|w_C}\left[\ell(W_T, w_C, Z'_i)\right] \leq \sqrt{2r^2 D(P_{W_TZ'_i|w_C} \| P_{W_T\otimes \mu'|w_C})}. \label{eq:condition_bound}
\end{align}
\vspace*{4pt}
\hrulefill
\end{figure*}
\begin{figure*}[!htb]
\normalsize
\begin{align}
\left|\mathbb{E}_{W_TW_CS'}\left[\text{gen}(W_T, W_C, S')\right]\right| &\leq \frac{1}{n_t} \sum_{i=1}^{n_t} \mathbb{E}_{W_C}\left[\sqrt{2r^2I(W_T;Z'_i|w_C)} \right] \nonumber \\
& \leq \frac{1}{n_t} \sum_{i=1}^{n_t} \left[\sqrt{2r^2I(W_T;Z'_i|W_C)}\right], \label{eq:gen_error_multi}
\end{align}
\vspace*{4pt}
\hrulefill
\end{figure*}
\begin{figure*}[!htb]
\normalsize
\begin{align}
    \mathbb{E}_{W_TW_C}\left[R_{\mu'}(W_T, W_C)\right] \leq & \frac{1}{n_t} \sum_{i=1}^{n_t} \sqrt{2r^2I(W_T;Z'_i|W_C)} \nonumber  \\
    &+  \frac{1}{n_t} \sum_{i=1}^{n_t} \left( \mathbb{E}_{W_CW_T \otimes Z'_i}\left[\ell(W_C, W_T, Z'_i)\right] - \underset{{w_T, w_C}}{\arg\min} \mathbb{E}_{Z'_i}\left[\ell(w_C, w_T, Z'_i)\right] \right) \label{eq:multi_excess_risk}
\end{align}
\vspace*{4pt}
\hrulefill
\end{figure*}

\begin{theorem}
Let $(w_C, w_S)$ be the hypothesis learned from the source data $S$. For any $w_C$, we assume that the loss function is $r^2$-subgaussian under the distribution $P_{W_T|w_C} \otimes \mu'$ where $W_T$ is the hypothesis learned from the target data $S'$ by fixing $w_C$. Then the generalization error can be bounded as:
\begin{align}
\mathbb{E}_{W_TW_CS'}\left[\gen(W_T, W_C, S')\right] & \leq \frac{1}{n_t} \sum_{i=1}^{n_t} \sqrt{2r^2I(W_T;Z'_i|W_C)}.
\end{align}
\end{theorem}

\begin{proof}
The proof of the above theorem simply follows with re-arranging the Donsker-Varadhan variational representation for the KL divergence $D(P_{W_TZ'_i|w_C} \| P_{W_T\otimes \mu'|w_C})$, and for any $w_C$ we will arrive at (\ref{eq:condition_bound}) for each $Z'_i$. Then we naturally have the generalization error bound shown in (\ref{eq:gen_error_multi}) where the second inequality follows from Jensen's inequality. Then it is also natural to derive the excess risk bound in (\ref{eq:multi_excess_risk}) since $w^* = \underset{{w_T, w_C}}{\arg\min} \mathbb{E}_{Z'_i}\left[\ell(w_C, w_T, Z'_i)\right]$.

\end{proof}
The above theorem has several implications: 
\begin{itemize}
    \item If not using $w_C$, we then don't use any source data, and we can recover the typical results in \cite{xu2017information}.
    \item If we totally depend on $w_C$, the generalization error will be zero, but the excess risk may be large due to the second difference term. 
    \item The insights from this bound is that the pre-trained model $w_C$ plays an important role in the generalization error: a good source-induced hypothesis would ensure that $w_C$ would lead to low $\underset{{w_T}}{\arg\min} \mathbb{E}_{Z'_i}\left[\ell(w_C, w_T, Z'_i)\right]$ and a good fine-tuning would ensure that the conditional mutual information $I(W_T;Z'_i|w_C)$ is small given $w_C$. 
\end{itemize}
While the outcome of the above theorem appears straightforward and intuitive, it offers only minimal direction regarding the training of $w_C$. At a glance, one might expect that a smaller domain divergence would lead $w_C$ closer to $w^*_C$. However, the manner in which the source influences the common hypothesis $w_C$ remains somewhat implicit. 

\section{Transfer learning with unlabeled source data}\label{apd:unlabel}
Unsupervised transfer learning has been extensively explored in numerous literature, especially in the medical area where the data may be difficult to get \cite{guo2018deep, niu2021distant}. In many real-world situations, there are often abundant unlabeled source data but a limited amount of labeled target data accessible for training. In this section, we will examine the generalization error within the domain of unlabeled source data, and the framework can be readily adapted to encompass both labeled source and unlabeled target data regions. Let $S' = (x'_1, x'_2,\cdots, z'_{\beta n}) \in (\mathcal{X} \times \mathcal{Y})^{\beta n}$ and each $Z'_i = (x'_i,y'_i)$ is a feature-label pair. For the source data, let $S = (X_{\beta n +1}, X_{\beta n +2}, \cdots, X_n) \in \mathcal{X}^{(1-\beta)n}$ as each instance $X_i$ is unlabeled, where $\mathcal{X}$ and $\mathcal{Y}$ are the feature and label spaces. Assume each source instance is i.i.d. drawn from $\mu_X$, and each target pair is i.i.d. drawn from $\mu'_{XY}$. For the loss function, we make the definitions for the supervised and unsupervised metrics (also leveraged from \cite{aminian2022information}):
\begin{itemize}
    \item Supervised loss function: A loss function $\ell: \mathcal{W} \times \mathcal{X} \times {Y} \rightarrow \mathbb{R}$ that measures the performance of the prediction.
    \item Unsupervised loss function: A loss function $\ell_u: \mathcal{W} \times \mathcal{X} \rightarrow \mathbb{R}$ that measures the performance on the unlabeled data.
\end{itemize}
Then we can define the empirical risk similarly as~(\ref{eq:er}):
\begin{align*}
      \hat{L}_{\alpha}(w, S, S') = &\frac{\alpha}{\beta n}\sum_{i=1}^{\beta n} \ell(w, x'_i, y'_i) \\
      &+ \frac{1-\alpha}{(1-\beta)n} \ell_u(w, x_i).
\end{align*}
The excess risk is defined as:
\begin{align*}
    L_{\mu'}(w) = \mathbb{E}_{\mu'_{XY}}\left[\ell(w, X', Y')\right].
\end{align*}
Then we define the generalization error as:
\begin{align*}
    \text{gen}(w, S, S') = \hat{L}_{\alpha}(w, S, S') - L_{\mu'}(w).
\end{align*}
Now we state the results as follows:
\begin{theorem}
Assume that the supervised loss functions $\ell(w,x,y)$ is $r^2_{l}$-subgaussian under the $\mu'_{X} \otimes \mu'_{Y|X}$ for all $w \in W$ and $\ell_u(w,x)$ is  $r^2_u$-subgaussian under marginal distribution $\mu'_X$ for all $w \in W$. The following upper bound holds on the expected generalization error:
\begin{align*}
     &| \text{gen} (w, S, S')| \leq \frac{\alpha}{\beta n}\sum_{i=1}^{\beta n} \sqrt{2r^2_l I(W;X'_i,Y'_i)} \\
     & + \frac{(1-\alpha)}{(1-\beta)n} \sum_{i=\beta n + 1}^{n} \sqrt{2r^2_u I(W;X'_i) + D(\mu_X\|\mu'_X)} \\
     &+ (1-\alpha)\mathbb{E}_{W\otimes \mu'_X}[\ell_u(W, X) - \mathbb{E}_{\mu'_{Y|X}}[\ell(W,X,Y)]].
\end{align*}
\end{theorem}
\begin{proof}
The proof simply follows the decomposition of $\text{gen}(w, S, S')$ as:
\begin{align}
    &\text{gen}(w, S, S') = \hat{L}_{\alpha}(w, S, S') - L_{\mu'}(w) \nonumber \\
    &=  \frac{\alpha}{\beta n}\sum_{i=1}^{\beta n} \left(\ell(w, x'_i, y'_i) - \mathbb{E}_{\mu'_{XY}}[\ell(w, x'_i, y'_i)] \right) \label{eq:labeled_part}\\
    &\quad + \frac{1-\alpha}{(1-\beta)n} \sum_{i=\beta n + 1}^{n}\left( \ell_u(w, x_i) - \mathbb{E}_{\mu'_X}[\ell_u(w, X_i)] \right) \label{eq:unlabeled_part} \\
    &\quad + (1-\alpha) \left(\mathbb{E}_{\mu'_X}[\ell_u(w, X)] - \mathbb{E}_{\mu'_{XY}}[\ell(w, x'_i, y'_i)]\right)
\end{align}
From the assumption, we follow similar procedures as in Corollary~\ref{coro:general_bound} as:
\begin{align*}
    &\mathbb{E}_{WX'_iY'_i}[\ell(W, X'_i, Y'_i)] - \mathbb{E}_{W\otimes\mu'_{XY}}[\ell(W, X'_i, Y'_i)] \\ 
    & \leq \sqrt{2r^2_l I(W;X'_i, Y'_i)},
\end{align*}
and we have:
\begin{align*}
    &\mathbb{E}_{WX_i}[\ell_u(W, X_i)] - \mathbb{E}_{W\otimes \mu'_X}[\ell_u(W, X_i)] \leq \\
    &\sqrt{2r^2_u I(W;X_i) + D(\mu_X\|\mu'_X)}.
\end{align*}
By taking the expectation over the generalization error, we get:
\begin{align*}
    &|\mathbb{E}_{WSS'}\left[\text{gen}(W, S, S')\right]| \leq \frac{\alpha}{\beta n}\sum_{i=1}^{\beta n} \sqrt{2r^2_l I(W;X'_i,Y'_i)} \\
    & + \frac{(1-\alpha)}{(1-\beta)n} \sum_{i=\beta n + 1}^{n} \sqrt{2r^2_u I(W;X'_i) + D(\mu_X\|\mu'_X)} \\
    &+ (1-\alpha)\mathbb{E}_{W\otimes \mu'_X}[\ell_u(W, X) - \mathbb{E}_{\mu'_{Y|X}}[\ell(W,X,Y)]].
\end{align*}
\end{proof}
The above results are very similar to Corollary~\ref{coro:gen_beta0} and Theorem~\ref{thm:exp_gen} for the supervised settings. However, there are still some new implications:
\begin{itemize}
    \item Since we do not have the labels for the source domain, the second term in the R.H.S. only contains the domain divergence of the source feature. However, we do have an extra term $|\mathbb{E}_{W\otimes \mu'_X}[\ell_u(W, X) - \mathbb{E}_{\mu'_{Y|X}}[\ell(W,X,Y)]]|$ that captures the conditional distribution shift. 
    \item The effect of the conditional distribution shift is reflected in the choice of the unsupervised loss function $\ell_u$, for any $w$ and $x$, if $\ell_u(w,x)$ can approach $\mathbb{E}_{\mu'_{Y|X}}[\ell(w,x,Y)]$, then we would have a better transfer.
    \item This now is directly related to the pseudo-labeling techniques for choosing the $\ell_u$, say if we rewrite $\ell_u(w,x) = \mathbb{E}_{\hat{\mu}_{Y|x}}[\ell(w,x,Y)]$, then we can directly view the unsupervised loss function as the average loss of $\ell(w,x,Y)$ under the approximated distribution $\hat{\mu}_{Y|x}$, which could be derived by many empirical methods such as deep learning and statistical transfer methods.
    \item Overall, if we want to achieve a small generalization error with the unlabeled target data, we may require that the algorithm would yield low mutual information, the divergence between $\mu_X$ and $\mu'_X$ should also be small, and the approximated conditional distribution $\hat{\mu}_{Y|x}$ that is used in pseudo-labeling should be close to the true conditional distribution ${\mu}'_{Y|x}$.
\end{itemize}

\section{Proof of Theorem \ref{thm:excess}} \label{proof:thm_excess}
\noindent The proof is built up on (\ref{eq:excess_decompose}): 
\begin{align}
&L_{\mu'}(w_{\ERM})-L_{\mu'}(w^*)
\leq  \gen(w_{\ERM}, S, S') \nonumber \\ &+\hat L_{\alpha}(w^*)-L_{\alpha}(w^*) \nonumber \\
&+(1-\alpha)(L_{\mu}(w^*)-L_{\mu'}(w^*)).
\end{align}
The Corollary \ref{coro:general_bound} provides an upper bound on the generalization error $\gen(w_{\ERM}, S, S')$, and the claim follows immediately as the expectation of $\hat{L}_{\alpha}(w^*) - L_{\alpha}(w^*)$ is zero. 

\section{Proof of Theorem~\ref{thm:central-transfer}}
\begin{proof} \label{proof:central-transfer}
We will build up on the decomposition of the expected excess risk:
\begin{align}
& \mathbb{E}_{W}[L_{\mu'}(W)-L_{\mu'}(w^*)] = \mathbb{E}_{W}[L_{\mu'}(W)- L_{\mu'}(w^*)] \nonumber  \\
&- \mathbb{E}_{WSS'}[\hat L_{\alpha}(W) - L_{\alpha}(w^*)]  \nonumber \\
&+ \mathbb{E}_{WSS'}[\hat L_{\alpha}(W)-\hat L_{\alpha}(w^*)], \label{eq:decompose2} 
\end{align}
with the fact that $\mathbb{E}_{SS'}[\hat{L}_{\alpha}(w^*,S,S')] = L_{\alpha}(w^*)$. For the target domain, we rewrite the excess risk for any $w$ by:
\begin{align}
    R_{\mu'}(w) &= \mathbb{E}_{\mu'}[\ell(w,Z)] - \mathbb{E}_{\mu'}[\ell(w^*,Z)] \nonumber \\ 
    &= \mathbb{E}_{S'}[\hat{{R}}(w, S')].
\end{align}
Then the expected excess risk in (\ref{eq:decompose2}) can be written as,
\begin{align}
    \mathbb{E}_{W}[L_{\mu'}(W)-L_{\mu'}(w^*)] = \mathbb{E}_{WSS'}[\mathcal{E}(W,S,S')] \nonumber \\
    + \mathbb{E}_{WSS'}[\hat{R}_{\alpha}(W,S,S')].
\end{align}
We will bound the first term by taking the expectation w.r.t. $W$ learned from $S$ and $S'$ as in (\ref{eq:eta_decompose})-(\ref{eq:eta_source}). Again, we use the variational representation of the KL divergence between two distributions $P$ and $Q$ defined over $\mathcal X$ is given as 
\begin{align}
D(P||Q)=\sup_{f}\left\{\Esub{P}{f(X)}-\log\Esub{Q}{e^{f(x)}} \right\}.%\label{eq:variational}
\end{align}
\begin{figure*}[!htb]
\normalsize
\begin{align}
    \mathbb{E}_{WSS'}[\mathcal{E}(W,S,S')] &= \alpha \left( \mathbb{E}_{P_W  \otimes \mu'^{\otimes \beta n}}[\hat{{R}}(W, S')] - \mathbb{E}_{WS'}[\hat{{R}}(W, S')]\right) \label{eq:eta_decompose}\\
    &\quad +(1-\alpha ) \left(\mathbb{E}_{P_W  \otimes \mu'^{\otimes (1-\beta)n}}[\hat{{R}}(W, S)] - \mathbb{E}_{WS}[\hat{{R}}(W, S)] \right) \\
    &= \frac{\alpha}{\beta n}\sum_{i=1}^{\beta n} \mathbb{E}_{P_W \otimes \mu'}[r(W,Z'_i)] - \mathbb{E}_{WZ'_i}[r(W,Z'_i)] \label{eq:eta_target}\\
    &\quad +  \frac{1-\alpha}{(1-\beta)n}\sum_{i=\beta n+1}^{n} \mathbb{E}_{P_W \otimes \mu'}[r(W,Z_i)] - \mathbb{E}_{WZ_i}[r(W,Z_i)] \label{eq:eta_source}.
\end{align} 
\vspace*{4pt}
\hrulefill
\end{figure*} 
We firstly examine the summation of the target data portion in (\ref{eq:eta_target}) for $i = 1,2,\cdots, \beta n$. Under the expected $(\eta,c)$-central condition, for any $0 < \eta' \leq \eta$, let $f(w,z'_i) = -\eta' r(w,z'_i)$, we will arrive at (\ref{eq:MI_KL}).
\begin{figure*}[!htb]
\normalsize
\begin{align}
    D(P_{WZ'_i}\|P_{W}\otimes P_{Z'_i}) &\geq \Esub{P_{WZ'_i}}{-\eta' r(W,Z'_i)} - \log \mathbb{E}_{P_{W}\otimes \mu'}[e^{-\eta'(r(W,Z'_i))}] \nonumber \\
    &= \Esub{P_{WZ'_i}}{-\eta' r(W,Z'_i)} - \log \mathbb{E}_{P_{W}\otimes \mu'}[e^{-\eta'(r(W,Z'_i) - \mathbb{E}_{P_{W}\otimes \mu'}[r(W,Z'_i)])  }] + \Esub{P_W \otimes \mu'}{\eta' r(W,Z'_i)} \nonumber \\
    &= \eta'\left(\Esub{P_W \otimes \mu'}{r(W,Z'_i)} - \Esub{P_{WZ'_i}}{r(W,Z'_i)}\right) - \log \mathbb{E}_{P_{W}\otimes \mu'}[e^{\eta'( \mathbb{E}_{P_{W}\otimes \mu'}[r(W,Z'_i)]  - r(W,Z'_i))  }]. \label{eq:MI_KL}
\end{align}
\vspace*{4pt}
\hrulefill
\end{figure*}

Next we will upper bound the second term $\log \mathbb{E}_{P_{W}\otimes \mu'}[e^{\eta'( \mathbb{E}_{P_{W}\otimes \mu}[r(W,Z'_i)]  - r(W,Z'_i))  }]$ in R.H.S. using the expected $(\eta,c)$-central condition. From the $(\eta, c)$-central condition, we have,
\begin{align}
    &\log \mathbb{E}_{P_{W}\otimes \mu'}[e^{\eta( \mathbb{E}_{P_{W}\otimes \mu'}[r(W,Z'_i)]  - r(W,Z'_i))  }] \nonumber \\
    & \leq (1-c)\eta \mathbb{E}_{P_{W}\otimes \mu'}[r(W,Z'_i)].
\end{align}
Since $\eta' \leq \eta$, Jensen's inequality yields:
\begin{align}
    &\log \mathbb{E}_{P_{W}\otimes \mu'}[e^{\eta'( \mathbb{E}_{P_{W}\otimes \mu'}[r(W,Z'_i)]  - r(W,Z'_i))  }] \nonumber \\
    &=  \log \mathbb{E}_{P_{W}\otimes \mu'}[e^{\frac{\eta'}{\eta}\eta( \mathbb{E}_{P_{W}\otimes \mu'}[r(W,Z'_i)]  - r(W,Z'_i))  }] \nonumber \\
    &\leq \log \left( \mathbb{E}_{P_{W}\otimes \mu'}[e^{\eta( \mathbb{E}_{P_{W}\otimes \mu'}[r(W,Z'_i)]  - r(W,Z'_i))  }] \right)^{\frac{\eta'}{\eta}} \nonumber  \\
    &\leq \frac{\eta'}{\eta} (1-c)\eta \mathbb{E}_{P_{W}\otimes \mu'}[r(W,Z'_i)] \nonumber \\
    &= \eta'(1-c) \mathbb{E}_{P_{W}\otimes \mu'}[r(W,Z'_i)].
    \label{eq:bound-central}
\end{align}
Substitute (\ref{eq:bound-central}) into (\ref{eq:MI_KL}), we arrive at,
\begin{align}
    I(W;Z'_i) \geq &  \eta' \left(\Esub{P_W \otimes \mu'}{r(W,Z'_i)} - \Esub{P_{WZ'_i}}{r(W,Z'_i)}\right) \nonumber \\
    & - (1-c)\eta' \mathbb{E}_{P_{W}\otimes \mu'}[r(W,Z'_i)].
\end{align}
Divide $\eta'$ on both sides, we have that
\begin{align}
    \frac{I(W;Z'_i)}{\eta'} \geq & \Esub{P_W \otimes \mu'}{r(W,Z'_i)} - \Esub{P_{WZ'_i}}{r(W,Z'_i)} \nonumber \\ 
    & - (1-c) \mathbb{E}_{P_{W}\otimes \mu'}[r(W,Z'_i)].
\end{align}
Rearrange the equation and yields,
\begin{align}
    c\Esub{P_W \otimes \mu'}{r(W,Z'_i)} \leq \Esub{P_{WZ'_i}}{r(W,Z'_i)} + \frac{I(W;Z'_i)}{\eta'}.
\end{align}
Therefore,
\begin{align}
    &\Esub{P_W \otimes \mu'}{r(W,Z'_i)} - \Esub{P_{WZ'_i}}{r(W,Z'_i)} \nonumber  \\
    & \leq  (\frac{1}{c} - 1)\left({\mathbb{E}_{P_{WZ'_i}}}[r(w,Z'_i)]\right) + \frac{I(W;Z'_i)}{c\eta'}.
\end{align}
Using the similar arguments for the source domain portion in (\ref{eq:eta_source}), we have that for any $i=\beta n +1, \cdots, n$:
\begin{align}
    &\Esub{P_W \otimes \mu'}{r(W,Z_i)} - \Esub{P_{WZ_i}}{r(W,Z_i)} \nonumber \\
    & \leq  (\frac{1}{c} - 1)\left({\mathbb{E}_{P_{WZ_i}}}[r(w,Z_i)]\right) + \frac{I(W;Z_i)+ D(\mu\|\mu')}{c\eta'}.
\end{align}
Summing up every term for $Z_i$, we end up with (\ref{eq:fast_rate_gen_proof}).
\begin{figure*}[!htb]
\normalsize
\begin{align}
    \Esub{WSS'}{\mathcal{E}(W,S,S')} \leq  &\frac{1}{c\eta'} \frac{\alpha}{\beta n} \sum_{i=1}^{\beta n} I(W;Z'_i) + \frac{1}{c\eta'} \frac{1-\alpha}{(1-\beta)n} \sum_{i=\beta n+ 1}^{n} \left(I(W;Z_i) + D(\mu\|\mu')\right) \nonumber \\
    &+ (\frac{1}{c} - 1)\left({\mathbb{E}_{WSS'}}[\alpha\hat{R}(W,S')+(1-\alpha)\hat{R}(W,S)]\right). \label{eq:fast_rate_gen_proof}
\end{align}
\vspace*{4pt}
\hrulefill
\end{figure*}

Since $\frac{1}{c} - 1 > 0$ and ${\mathbb{E}_{WSS'}}[\alpha\hat{R}(W,S')+(1-\alpha)\hat{R}(W,S)]$ will be negative for $W_\ERM$, we completes the proof for ERM by:
% = \Esub{WSS'}{\mathcal{E}(W_\ERM,S,S')} + \mathbb{E}_{WSS'}[\hat{R}_{\alpha}(W_\ERM,S,S')] \nonumber \\
\begin{align}
     &\Esub{W}{R_{\mu'}(W_\ERM)} \leq  \frac{1}{c\eta'} \frac{\alpha}{\beta n} \sum_{i=1}^{\beta n} I(W_\ERM;Z'_i) \nonumber \\
     &+ \frac{1}{c\eta'} \frac{1-\alpha}{(1-\beta)n} \sum_{i=\beta n+ 1}^{n} \left(I(W_\ERM;Z_i) + D(\mu\|\mu')\right).
\end{align}
If $(\eta,c)$-central condition holds for general $\hat{W}$, following the same analysis, we arrive at,
\begin{align}
    &\Esub{\hat{W}SS'}{\mathcal{E}(\hat{W},S,S')} \leq  \frac{1}{c\eta'} \frac{\alpha}{\beta n} \sum_{i=1}^{\beta n} I(\hat{W};Z'_i) \nonumber  \\
    & + \frac{1}{c\eta'} \frac{1-\alpha}{(1-\beta)n} \sum_{i=\beta n+ 1}^{n} \left(I(\hat{W};Z_i) + D(\mu\|\mu')\right) \nonumber \\
    &+ (\frac{1}{c} - 1)\mathbb{E}_{\hat{W}SS'}\left[\hat{R}_{\alpha}(\hat{W},S,S')\right].
\end{align}
Therefore, we complete the proof by,
% &= \Esub{\hat{W}SS'}{\mathcal{E}(\hat{W},S,S')} + \mathbb{E}_{\hat{W}SS'}\left[\hat{R}_{\alpha}(\hat{W},S,S')\right] \\
\begin{align}
    & \Esub{\hat{W}}{R_{\mu'}(\hat{W})} 
    \leq \frac{1}{c\eta'} \frac{\alpha}{\beta n} \sum_{i=1}^{\beta n} I(\hat{W};Z'_i) \nonumber \\
    & + \frac{1}{c\eta'} \frac{1-\alpha}{(1-\beta)n} \sum_{i=\beta n+ 1}^{n} \left(I(\hat{W};Z_i) + D(\mu\|\mu')\right) \nonumber \\
    & + \frac{1}{c} \mathbb{E}_{\hat{W}SS'}\left[\hat{R}_{\alpha}(\hat{W},S,S')\right].
\end{align}
\end{proof}

\section{Proof of Lemma~\ref{lemma:rerm}}\label{proof:lemma_rerm}
\begin{proof}
We firstly define the combined regularized loss as,
\begin{align}
    \hat{{L}}_{\textup{reg}}(w,S,S') := \hat{{L}}_{\alpha}(w,S,S') + \frac{\lambda}{n}g(w).
\end{align}
Based on Theorem~\ref{thm:central-transfer}, we can bound the excess risk for $W_{\sf{RERM}}$ with (\ref{eq:fast_excess_risk_start})-(\ref{eq:fast_excess_risk_end}) where (a) follows since $|g(w^*) - g(W_{\sf{RERM}}))| \leq B$ the expected empirical risk is negative for $W_{\ERM}$ and (b) holds due to that $W_{\sf{RERM}}$ is the minimizer of the regularized loss. 
\begin{figure*}[!htb]
\normalsize
\begin{align}
     \mathbb{E}_{W}[{R}_{\mu'}(W_{\sf{RERM}})]  \leq & \frac{1}{c} \mathbb{E}_{{WSS'}}[\hat{{R}}_{\alpha}\left(W_{\sf{RERM}}, S,S'\right)]  + \frac{1}{c\eta'} \frac{\alpha}{\beta n} \sum_{i=1}^{\beta n} I(W_{\sf{RERM}};Z'_i) \nonumber \\
     & + \frac{1}{c\eta'} \frac{1-\alpha}{(1-\beta)n} \sum_{i=\beta n +1}^{n} \left(I(W_{\sf{RERM}};Z_i) + D(\mu\|\mu')\right) \label{eq:fast_excess_risk_start} \\
     = & \frac{1}{c}  \mathbb{E}_{{WSS'}}[\hat{{L}}_{\alpha}\left(W_{\sf{RERM}}, {S}, S' \right) - \hat{{L}}_{\alpha}\left(w^*, {S}, S' \right)] + \frac{1}{c\eta'} \frac{\alpha}{\beta n} \sum_{i=1}^{\beta n} I(W_{\sf{RERM}};Z'_i) \nonumber \\
     & + \frac{1}{c\eta'} \frac{1-\alpha}{(1-\beta)n} \sum_{i=\beta n +1}^{n} \left(I(W_{\sf{RERM}};Z_i) + D(\mu\|\mu')\right) \\
     \overset{(a)}{\leq} & \frac{1}{c} \left( \mathbb{E}_{WSS'}[\hat{L}_{\textup{reg}}\left(W_{\sf{RERM}}, S,S' \right)] - \mathbb{E}_{SS'}[\hat{L}_{\textup{reg}}\left(w^*, S,S' \right)]  \right)    + \frac{1}{c\eta'} \frac{\alpha}{\beta n} \sum_{i=1}^{\beta n} I(W_{\sf{RERM}};Z'_i) \nonumber \\
     & + \frac{1}{c\eta'} \frac{1-\alpha}{(1-\beta)n} \sum_{i=\beta n +1}^{n} \left(I(W_{\sf{RERM}};Z_i) + D(\mu\|\mu')\right) + \frac{\lambda B}{cn} \\
     = & \frac{1}{c}  \mathbb{E}_{WSS'}[\hat{R}_{\textup{reg}} \left(W_{\sf{RERM}}, S,S' \right)] + \frac{\lambda B}{cn} + \frac{1}{c\eta'} \frac{\alpha}{\beta n} \sum_{i=1}^{\beta n} I(W_{\sf{RERM}};Z'_i) \nonumber \\
     & + \frac{1}{c\eta'} \frac{1-\alpha}{(1-\beta)n} \sum_{i=\beta n +1}^{n} \left(I(W_{\sf{RERM}};Z_i) + D(\mu\|\mu')\right) \\
     \overset{(b)}{\leq} & \frac{\lambda B}{cn} + \frac{1}{c\eta'} \frac{\alpha}{\beta n} \sum_{i=1}^{\beta n} I(W_{\sf{RERM}};Z'_i) + \frac{1}{c\eta'} \frac{1-\alpha}{(1-\beta)n} \sum_{i=\beta n +1}^{n} \left(I(W_{\sf{RERM}};Z_i) + D(\mu\|\mu')\right). \label{eq:fast_excess_risk_end}
\end{align}
\vspace*{4pt}
\hrulefill
\end{figure*}
%Finally (c) holds due to that $W_{\sf{RERM}}$ is the minimizer of the regularized loss $\hat{\mathcal{L}}_{\textup{reg}}$.
\end{proof}

\section{Proof of Lemma~\ref{lemma:intermediate}} \label{apd:lemma_inter}
\begin{proof}
Firstly we examine the bounds for target instances $z'_i$ for $i = 1,2,\cdots, \beta n$. We will build upon~(\ref{eq:MI_KL}). With the $(v,c)$-central condition, for any $\epsilon \geq 0$ and any $ 0 < \eta' \leq v(\epsilon)$, the Jensen's inequality yields:
\begin{align}
&\log \mathbb{E}_{P_{W}\otimes \mu'}[e^{\eta'( \mathbb{E}_{P_{W}\otimes \mu'}[r(W,Z'_i)]  - r(W,Z'_i))  }]  \nonumber  \\
&=  \log \mathbb{E}_{P_{W}\otimes \mu'}[e^{\frac{\eta'}{v(\epsilon)}v(\epsilon)( \mathbb{E}_{P_{W}\otimes \mu'}[r(W,Z'_i)]  - r(W,Z'_i))  }] \nonumber \\
&\leq \log \left( \mathbb{E}_{P_{W}\otimes \mu'}[e^{v(\epsilon)( \mathbb{E}_{P_{W}\otimes \mu'}[r(W,Z'_i)]  - r(W,Z'_i))  }] \right)^{\frac{\eta'}{v(\epsilon)}} \nonumber \\
&\leq \frac{\eta'}{v(\epsilon)} \left( (1-c)v(\epsilon) \mathbb{E}_{P_{W}\otimes \mu'}[r(W,Z'_i)] + v(\epsilon) \epsilon \right) \nonumber \\
&= \eta'(1-c) \mathbb{E}_{P_{W}\otimes \mu'}[r(W,Z'_i)] + \eta'\epsilon. \label{eq:bound-v-central}
\end{align}
Substitute (\ref{eq:bound-v-central}) into (\ref{eq:MI_KL}), we arrive at,
\begin{align}
    & I(W;Z'_i) \geq - \eta' \epsilon + \eta' \Esub{P_W \otimes \mu'}{r(W,Z'_i)} \nonumber \\ 
    &  - \eta' \Esub{P_{WZ'_i}}{r(W,Z'_i)} - (1-c)\eta' \mathbb{E}_{P_{W}\otimes \mu'}[r(W,Z'_i)] .
\end{align}
Divide $\eta'$ on both sides, we arrive at,
\begin{align}
    \frac{I(W;Z'_i)}{\eta'} \geq & \Esub{P_W \otimes \mu'}{r(W,Z'_i)} - \Esub{P_{WZ'_i}}{r(W,Z'_i)} \nonumber \\
    & - (1-c) \mathbb{E}_{P_{W}\otimes \mu'}[r(W,Z'_i)] - \epsilon.
\end{align}
Rearrange the equation and yields,
\begin{align}
    c\Esub{P_W \otimes \mu'}{r(W,Z'_i)} \leq \Esub{P_{WZ'_i}}{r(W,Z'_i)} \nonumber  \\ 
    + \frac{I(W;Z'_i)}{\eta'} +\epsilon.
\end{align}
Therefore,
\begin{align}
    &\Esub{P_W \otimes \mu'}{r(W,Z'_i)} - \Esub{P_{WZ'_i}}{r(W,Z'_i)} \nonumber  \\ 
    & \leq  (\frac{1}{c} - 1)\left({\mathbb{E}_{P_{WZ'_i}}}[r(w,Z'_i)]\right) + \frac{I(W;Z'_i)}{c\eta'} + \frac{\epsilon}{c}.
\end{align}
Regarding the source instances, following the same procedures we have the following inequality for $z_i, i = \beta n +1, \cdots, n$:
\begin{align}
    &\Esub{P_W \otimes \mu'}{r(W,Z_i)} - \Esub{P_{WZ_i}}{r(W,Z_i)} \nonumber \\
    & \leq  (\frac{1}{c} - 1)\left({\mathbb{E}_{P_{WZ_i}}}[r(w,Z_i)]\right) \nonumber \\
    & \quad + \frac{I(W;Z_i) + D(\mu\|\mu')}{c\eta'} + \frac{\epsilon}{c}.
\end{align}
Summing up every term for $Z_i$ and dividing the summation by $n$, we end up with,
\begin{align}
     &\mathbb{E}_{W}[L_{\mu'}(W) - L_{\mu'}(w^*)] \leq  \frac{1}{c} \mathbb{E}_{{WSS'}}[\hat{{R}}_{\alpha}\left(W, S,S' \right)] \nonumber \\
     & + \frac{\alpha}{c \beta n} \sum_{i=1}^{\beta n} \left( \frac{I(W;Z_i)}{\eta'} + \epsilon \right) \nonumber \\
      & + \frac{1-\alpha}{c(1-\beta)n} \sum_{i=\beta n+ 1}^{n} \left(\frac{I(W;Z_i) +  D(\mu\|\mu')}{\eta'} + \epsilon \right).
\end{align}
In particular, if $v(\epsilon) = \epsilon^{1-\beta}$ for some $\beta \in [0,1]$, then by choosing $\eta' = v(\epsilon)$ and $\frac{I(W;Z_i)}{c\eta'} + \frac{\epsilon}{c}$ is optimized when $\epsilon = I(W;Z_i)^{\frac{1}{2-\beta}}$ for the target instances and $\epsilon = \left(I(W;Z_i) + D(\mu\|\mu')\right)^{\frac{1}{2-\beta}}$ for the source instances. Then the bound becomes,
\begin{align}
     &\mathbb{E}_{W}[L_{\mu'}(W) - L_{\mu'}(w^*)] \leq  \frac{1}{c} \mathbb{E}_{{WSS'}}[\hat{{R}}_{\alpha}\left(W, S,S' \right)] \nonumber \\ 
     & + \frac{\alpha}{c \beta n} \sum_{i=1}^{\beta n}  I(W_\ERM;Z_i)^{\frac{1}{2-\beta}} \nonumber \\
     & + \frac{1-\alpha}{c(1-\beta)n} \sum_{i=\beta n+ 1}^{n} \left(I(W;Z_i) +  D(\mu\|\mu')\right)^{\frac{1}{2-\beta}},
 \end{align}
which completes the proof.
\end{proof}

\section{Proof of Theorem \ref{thm:Tbound}} \label{proof:thm_Tbound}
\begin{proof}
\noindent The following lemma is used to prove the theorem.
\begin{lemma} 
For all t, if the noise $n(t) \sim \mathcal{N}(0,\sigma_tI_d)$, we have
\begin{equation*}
I\left(W(t) ; S | W(t-1)\right) \leq \frac{d}{2} \log \left(1+\frac{\eta_{t}^{2} K_{ST}^{2}}{d \sigma_{t}^{2}}\right),
\end{equation*}
\begin{equation*}
I\left(W(t) ; S^{\prime} | W(t-1)\right) \leq \frac{d}{2} \log \left(1+\frac{\eta_{t}^{2} K_{ST}^{2}}{d \sigma_{t}^{2}}\right).
\end{equation*}
\label{lemma:mutual_condition_bound}
\end{lemma}
\begin{proof}[Proof of Lemma~\ref{lemma:mutual_condition_bound}]
From the definition of mutual entropy
\begin{align*}
I\left(W(t) ; S | W(t-1)\right)= &h\left(W(t) | W(t-1)\right) \\
& -h\left(W(t) | W(t-1), S\right).
\end{align*}
Each term can be bounded separately. First, we have the update rule on $W(t)$:
\begin{align*}
W(t) = & W(t-1) - \eta_t(\alpha\nabla\hat{L}_\alpha(W(t-1),S^{\prime}) \\
& + (1-\alpha)\nabla\hat{L}_\alpha(W(t-1),S)) + n(t).
\end{align*}
Note that
\begin{equation*}
h\left(W(t)-W(t-1) | W(t-1)\right)=h\left(W(t) | W(t-1)\right),
\end{equation*}
since the subtraction term does not affect the entropy of a random variable. Also the perturbation $n(t)$ is independent of the gradient term, thus we can compute the upper bound of the expected squared norm of $w(t) - w(t-1)$:
\begin{align*} 
&\mathbb{E}\left[\left\| W(t)-W(t-1)\right\|_{2}^{2}\right] \\
%&=\mathbb{E}\left[\left\| \eta_t(\alpha\nabla(\hat{L}_\alpha(W(t-1),S^{\prime}) + (1-\alpha)\nabla(\hat{L}_\alpha(W(t-1),S))\right\|_{2}^{2}+\left\|n_{t}\right\|_{2}^{2}\right] \\ 
& \leq  \eta_t^2(\alpha K_{S} + (1-\alpha) K_{T})^2 + d\sigma_t^2 \\
& \leq \eta_{t}^{2} K_{ST}^{2}+d \sigma_{t}^{2},
\end{align*}
\noindent where in the expression above, we used the assumption that $n_t \sim N(0,\sigma^2_tI_d)$. Among all random variables $X$ with a fixed expectation bound $\mathbb{E}\left\|X \right\|_{2}^{2} < A$, then the norm distribution $Y \sim N(0,\sqrt{\frac{A}{d}}I_d)$ has the largest entropy given by:
\begin{equation*}
h(Y) = d\log\left( \sqrt{2\pi e \sigma_Y^2} \right) = \frac{d}{2}\log\left( \frac{2\pi e A}{d}\right)
\end{equation*}
which indicates that:
\begin{equation*}
h\left(W(t) | W(t-1)\right) \leq \frac{d}{2} \log \left(2 \pi e \frac{\eta_{t}^{2} K_{ST}^{2}+d \sigma_{t}^{2}}{d}\right)
\end{equation*}
By entropy power inequality \cite{dembo_information_1991}, we have:
\begin{align*} 
&h\left(W(t) | W(t-1), S \right) \\
%&= h\left(W(t-1)+ \eta_t\nabla\hat{L}_\alpha(W(t-1),S,S^\prime)+n_{t} | W(t-1), S\right)\\
&= h\left(n_{t} + \eta_t\alpha\nabla\hat{L}_\alpha(W(t-1),S^{\prime}) | W(t-1), S\right) \\
&\geq  \frac{1}{2}\log(e^{2h(n_t)}+e^{2h\left( \eta_t\alpha\nabla\hat{L}_\alpha(W(t-1),S^{\prime}) | W(t-1), S\right)})\\
&\geq  h\left(n_{t}\right).
\end{align*} 
This leads to the following desired bound for the mutual entropy $I(W(t);S|W(t-1))$:
\begin{align*}
&h\left(W(t) | W(t-1)\right) - h\left(W(t) | S, W(t-1)\right) \\
& \leq \frac{d}{2} \log \left(2 \pi e \frac{\eta_{t}^{2} K_{ST}^{2}+d \sigma_{t}^{2}}{d}\right)- h(n_t). \\ 
\end{align*}
Similarly, we can achieve the upper bound for the mutual entropy $I(W(t); S^{\prime}|W(t-1))$:
\begin{align*} 
& h\left(W(t) | W(t-1)\right)  - h\left(W(t) | S', W(t-1)\right) \\
& \leq \frac{d}{2} \log \left(2 \pi e \frac{\eta_{t}^{2} K_{ST}^{2}+d \sigma_{t}^{2}}{d}\right)- h(n_t). \\ 
\end{align*}
Therefore, we consider the mutual information $I(W(t); S^\prime|W(t-1))$ and $I(W(t); S|W(t-1))$ with the Gaussian noise $n(t)$, e.g., $h(t) = \frac{d}{2}\log 2\pi e \sigma_t^2$, we can write
\begin{align*}
&I(W(t);S^\prime |W(t-1)) \\
&= h\left(W(t) | W(t-1)\right) - h\left(W(t) | S^{\prime}, W(t-1)\right) \\ 
& \leq \frac{d}{2} \log \left(2 \pi e \frac{\eta_{t}^{2} K_{ST}^{2}+d \sigma_{t}^{2}}{d}\right)-\frac{d}{2} \log 2 \pi e \sigma_{t}^{2} \\ 
&=\frac{d}{2} \log \frac{\eta_{t}^{2} K_{ST}^{2}+d \sigma_{t}^{2}}{d \sigma_{t}^{2}} \\ 
&=\frac{d}{2} \log \left(1+\frac{\eta_{t}^{2} K_{ST}^{2}}{d \sigma_{t}^{2}}\right).
\end{align*}
Similarly, we have:
\begin{align*}
I(W(t);S |W(t-1)) & \leq \frac{d}{2} \log \left(1+\frac{\eta_{t}^{2} K_{ST}^{2}}{d \sigma_{t}^{2}}\right).
\end{align*}
\end{proof}
\noindent With Lemma~\ref{lemma:mutual_condition_bound}, we reach the following bound by Jensen's inequality:
\begin{align}
&\mathbb{E}_{W S S^{\prime}}\left[\operatorname{gen}\left(W(T), S, S^{\prime}\right)\right] \nonumber \\
&\leq \frac{\alpha \sqrt{2 r^{2}}}{\beta n} \sum_{i=1}^{\beta n} \sqrt{I\left(W(T) ; Z'_{i}\right)} \nonumber \\
&+\frac{(1-\alpha) \sqrt{2 r^{2}}}{(1-\beta) n} \sum_{i=\beta n+1}^{n} \sqrt{\left(I\left(W(T) ; Z_{i}\right)+D\left(\mu \| \mu^{\prime}\right)\right)} \nonumber \\
&\leq \alpha \sqrt{\frac{2 r^{2}}{\beta n} I(W(T);S^{\prime})} \nonumber \\
& + (1-\alpha)\sqrt{2 r^{2}\left(\frac{I(W(T) ; S)}{(1-\beta)n}+D\left(\mu \| \mu^{\prime}\right)\right)}.
%\label{eq:T_iteration_gen}
\end{align}
Let $W^T =(W(1),W(2),W(3),\cdots,W(T))$, with the characteristic of the gradient descent algorithm, we can show that
\begin{equation}
h(W(t)|W^{(t-1)},S) = h(W(t)|W(t-1),S),
\label{markov_condition}
\end{equation}
which follows from the Markov chain that $S \rightarrow W(1) \rightarrow W(2) \cdots \rightarrow W(T)$.
Using lemma~\ref{lemma:mutual_condition_bound}, both the mutual information $I(W(T);S)$ and $I(W(T);S')$ are bounded as:
\begin{align*}
&I(W(T);S) \leq I(W^T;S) \\
=& I(W(1);S|W(0)) + I(W(2);S|W(1)) \\
& + I(W(3);S|W(2),W(1)) + \cdots + I(W(T);S|W^{T-1}) \\
=& \sum^T_{t=1}I(W(t);S|W(t-1))\\
\leq&  \frac{d}{2} \sum_{t=1}^{T}\log \left(2 \pi e \frac{\eta_{t}^{2} K_{ST}^{2}+d \sigma_{t}^{2}}{d}\right)- \sum_{t=1}^{T}h(n(t)),
\end{align*}
where the first inequality follows from the Markov chain $S \rightarrow W^T$.
\end{proof}

\section{Proof of Corollary \ref{coro:convex_er_bound}} \label{proof:convex_er_bound}
\noindent We leverage the following proposition that
\begin{proposition}
Under the given assumptions, we define $\kappa = \frac{\nu}{\mathcal{L}}\in (0,1)$, setting $\eta = \frac{1}{\mathcal{L}}$, for all $T \geq 1$, we have:
\begin{align*}
&\hat{L}_{\alpha}(W(T),S,S')- \hat{L}_{\alpha}(w_\ERM,S,S')  \\ 
&\leq K_{ST}\left \| W(T) - w_\ERM \right \| \\
&\leq K_{ST}(1-\kappa)^T(\left\| W(0) - w_\ERM \right \| + \hat{A}_T),
\end{align*}
where we define $\hat{A}_T$
\begin{equation*}
\hat{A}_T := \sum_{t=1}^T(1-\kappa)^{-t}\left\| n(t) \right\|.
\end{equation*}
\label{prop:strong_convex}
\end{proposition}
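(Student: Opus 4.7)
The plan is to establish the two inequalities separately. The first inequality,
$$\hat{L}_{\alpha}(W(T),S,S')-\hat{L}_{\alpha}(w_\ERM,S,S') \leq K_{ST}\|W(T) - w_\ERM\|,$$
is an immediate consequence of Assumption~\ref{asp:bounded}: since $\|\nabla \hat{L}_\alpha(\cdot, S, S')\|_2 \leq K_{ST}$ uniformly, the function $\hat{L}_\alpha(\cdot, S, S')$ is $K_{ST}$-Lipschitz on $\mathcal{W}$, and integrating the gradient along the segment from $w_\ERM$ to $W(T)$ yields the bound.

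For the second inequality my plan is a per-step contraction argument. Rewriting the update rule as
\begin{align*}
W(t) - w_\ERM = \bigl[W(t-1) - \eta \nabla \hat{L}_\alpha(W(t-1), S, S')\bigr] - w_\ERM + n(t)
\end{align*}
and applying the triangle inequality gives
\begin{align*}
\|W(t) - w_\ERM\| \leq \bigl\|\bigl[W(t-1) - \eta \nabla \hat{L}_\alpha(W(t-1))\bigr] - w_\ERM\bigr\| + \|n(t)\|.
\end{align*}
Because $w_\ERM$ minimizes $\hat{L}_\alpha$, we have $\nabla \hat{L}_\alpha(w_\ERM) = 0$, so the first term on the right can be expressed as $\|G_\eta(W(t-1)) - G_\eta(w_\ERM)\|$, where $G_\eta(w) := w - \eta \nabla \hat{L}_\alpha(w)$.

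The key step, which I view as the main obstacle, is the classical contraction estimate for $\nu$-strongly convex, $\mathcal{L}$-smooth functions: with $\eta = 1/\mathcal{L}$,
\begin{align*}
\|G_\eta(w_1) - G_\eta(w_2)\| \leq (1-\kappa)\|w_1 - w_2\|, \qquad \kappa = \nu/\mathcal{L}.
\end{align*}
This follows by combining the Baillon--Haddad co-coercivity of the gradient of an $\mathcal{L}$-smooth convex function with Assumption~\ref{asp:strong_convex}; it is precisely here that the choice $\eta = 1/\mathcal{L}$ is exploited to get the sharp contraction factor. Substituting this estimate produces the per-step recursion $\|W(t) - w_\ERM\| \leq (1-\kappa)\|W(t-1) - w_\ERM\| + \|n(t)\|$.

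Finally, unrolling this recursion over $t = 1, \ldots, T$ yields
\begin{align*}
\|W(T) - w_\ERM\| \leq (1-\kappa)^T\|W(0) - w_\ERM\| + \sum_{t=1}^{T}(1-\kappa)^{T-t}\|n(t)\|,
\end{align*}
and factoring out $(1-\kappa)^T$ from the sum reproduces $(1-\kappa)^T \hat{A}_T$ by the definition of $\hat{A}_T$. Combining with the Lipschitz bound from the first step completes the proof; aside from the contraction lemma, everything reduces to the triangle inequality and bookkeeping.
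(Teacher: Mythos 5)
Your proof is correct and fills in exactly the argument the paper leaves implicit: the paper's ``proof'' of this proposition consists only of noting the $K_{ST}$-Lipschitz property and citing Proposition~3 of \cite{schmidt_convergence_2011}, which is precisely the per-step contraction estimate for the gradient map of a $\nu$-strongly convex, $\mathcal{L}$-smooth function with $\eta=1/\mathcal{L}$ followed by unrolling the recursion, as you do. The one step worth stating carefully is the contraction factor $1-\kappa$: applying co-coercivity to $\nabla\hat L_\alpha$ directly only yields $\sqrt{(\mathcal{L}-\nu)/(\mathcal{L}+\nu)}$, so one must apply Baillon--Haddad to the convex, $(\mathcal{L}-\nu)$-smooth function $\hat L_\alpha-\tfrac{\nu}{2}\|\cdot\|^2$ (which is what your phrasing amounts to) to obtain the sharp factor $1-\nu/\mathcal{L}$.
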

We firstly claim that $\hat{L}_{\alpha}$ is $K_{ST}$-Lipschitz continuous with $K_{ST}$ bounded gradient, then the proof follows the proposition 3 in the work \cite{schmidt_convergence_2011}.
\begin{proof}(of Corollary~\ref{coro:convex_er_bound})
We firstly decompose the excess risk $L_{\mu^\prime}(W(T)) - L_{\mu^{\prime}}(w^*)$ into five fractions as follows.
\begin{align*}
&L_{\mu^\prime}(W(T)) - L_{\mu^{\prime}}(w^*) =  L_{\mu^{\prime}}(W(T)) - \hat{L}_{\alpha}(W(T), S, S') \\
& + \hat{L}_{\alpha}(W(T), S, S') - \hat{L}_{\alpha}(W_{\ERM}, S, S') \\
& + \hat{L}_{\alpha}(W_{\ERM}, S, S') - \hat{L}_{\alpha}(w^*, S, S') \\
& + \hat{L}_{\alpha}(w^*, S, S') - L_{\alpha}(w^*) + L_{\alpha}(w^*) - {L}_{\mu^{\prime}}(w^*).
\end{align*}
Following corollary~\ref{prop:strong_convex}, we have
\begin{align}
& \E{L_{\mu^{\prime}}(W(T)) - \hat{L}_{\alpha}(W(T), S, S')} \nonumber \\
&\leq \frac{\alpha \sqrt{2 r^{2}}}{\beta n} \sum_{i=1}^{\beta n} \sqrt{I\left(W(T) ; Z'_{i}\right)} \nonumber\\
& + \frac{(1-\alpha) \sqrt{2 r^{2}}}{(1-\beta) n} \sum_{i=\beta n+1}^{n} \sqrt{\left(I\left(W(T) ; Z_{i}\right)+D\left(\mu \| \mu^{\prime}\right)\right)}.
\label{eq:prof_gen_bound}   
\end{align}
Then use proposition~\ref{prop:strong_convex}, for any $W(T)$, we reach
\begin{align}
&\mathbb{E}[\hat{L}_{\alpha}(W(T), S, S') - \hat{L}_{\alpha}(W_{\ERM}, S, S')] \nonumber \\ 
& \leq  K_{ST}(1-\kappa)^T(\mathbb{E}[\left\| W(0) - W_\ERM \right \|] \nonumber \\
&\quad + \sum_{t=1}^T(1-\kappa)^{-t}\mathbb{E}[\left\| n(t) \right\|]).
\label{eq:prof_gen_risk}    
\end{align}
The remaining term $\hat{L}_{\alpha}(w^*,S,S') - L_{\alpha}(w^*) + L_{\alpha}(w^*) - \hat{L}_{\mu^{\prime}}(w^*)$ can be bounded with Theorem~\ref{thm:excess} for some $w^* \in \mathcal{W}$  that
\begin{align}
&\E{\hat{L}_{\alpha}(w^*) - L_{\alpha}(w^*) + L_{\alpha}(w^*) - \hat{L}_{\mu^{\prime}}(w^*)} \nonumber \\ 
&\leq (1-\alpha)d_{\mathcal W}(\mu, \mu').
\label{eq:prof_rest}
\end{align}
With the property $\hat{L}_{\alpha}(W_{\ERM}) - \hat{L}_{\alpha}(w^*) <0$, we combine the inequality~(\ref{eq:prof_gen_bound}), (\ref{eq:prof_gen_risk}) and (\ref{eq:prof_rest}) and claim the result.
\end{proof}

%\section{Proof of Theorem \ref{thm:gibbs-solution}}

\section{Proof of Theorem \ref{thm:gibbs-solution}} \label{apd:gibbs}
\begin{proof}\label{proof:gibbs-solution}
We know the information quantity minimisation could be written as (\ref{eq:info_gibbs}), which leads to the well-known Gibbs algorithm as shown in (\ref{eq:gibbs_solution}). Hence we arrive at:
\begin{equation}
P_{W | S' = s', S=s}^{*}(\mathrm{d} w)=\frac{e^{-k \hat{L}_{\alpha}(w,s,s')} Q(\mathrm{d} w)}{\mathbb{E}_{Q}\left[e^{-k \hat{L}_{\alpha}(W,s,s')}\right]} 
\end{equation}
for each  $s \in \mathcal{Z}^{n}$.
\end{proof} 
\begin{figure*}[!htb]
\normalsize
\begin{align}
& \inf _{P_{W | S,S'}}\left(\mathbb{E}\left[\hat{L}_{\alpha}(W,S,S')\right]+\frac{1}{k} D\left(P_{W | S,S'} \| Q | P_{S,S'}\right)\right) \nonumber \\
=& \inf _{P_{W | S,S'}} \int_{\mathcal{Z}^{\beta n}} (\mu')^{\otimes \beta n} (\mathrm{d} s) \int_{\mathcal{Z}^{(1-\beta) n}} \mu^{\otimes (1-\beta) n} (\mathrm{d} s')\left(\mathbb{E}\left[\hat{L}_{\alpha}(W, S, S') | S'=s', S=s\right]+\frac{1}{k} D\left(P_{W | S'=s', S=s} \| Q\right)\right) \nonumber \\
=&  \int_{\mathcal{Z}^{\beta n}} (\mu')^{\otimes \beta n} (\mathrm{d} s') \int_{\mathcal{Z}^{(1-\beta) n}} \mu^{\otimes (1-\beta) n} (\mathrm{d} s) \inf_{P_{W | S,S'}} \left(\mathbb{E}\left[\hat{L}_{\alpha}(W, S, S') | S' = s', S=s\right]+\frac{1}{k} D\left(P_{W | S' = s', S=s} \| Q\right)\right). \label{eq:info_gibbs}
\end{align}
\vspace*{4pt}
\hrulefill
\end{figure*}

\begin{figure*}[!htb]
\normalsize
\begin{align}
    &\inf_{P_{W | S,S'}} \left(\mathbb{E}\left[\hat{L}_{\alpha}(W) | S' = s', S=s\right]+\frac{1}{k} D\left(P_{W | S' = s', S=s} \| Q\right)\right) \nonumber  \\
    =& \inf_{P_{W | S,S'}} \int P_{W|S,S'}(w) \left( k\hat{L}_{\alpha}(w, s, s')  + \log \frac{P_{W|S,S'}(w)}{Q(w)} \right) dw  \nonumber \\
    =&\inf_{P_{W | S,S'}} \int P_{W|S,S'}(w) \log \frac{P_{W|S,S'}(w)}{Q(w)e^{-k\hat{L}_{\alpha}(w, s, s')}/ \mathbb{E}_{Q}[e^{-k \hat{L}_{\alpha}(W,s,s')]}} dw. \label{eq:gibbs_solution}
\end{align}
\vspace*{4pt}
\hrulefill
\end{figure*}

\section{Proof of Corollary \ref{coro:gibbs-finite}}\label{apd:gibbs-finite}

\begin{proof}
Since the bounded loss function $\ell(w,z) \in [0,1]$ also satisfies $r^2$-subgaussian with $r^2 = \frac{1}{4}$, we have:
\begin{align*}
&|\Esub{WSS'}{\gen(W_G, S, S')}| \leq \frac{\alpha\sqrt{2r^2}}{\beta n}\sum_{i=1}^{\beta n}\sqrt{I(W_G;Z'_i)} \\
&+\frac{(1-\alpha)\sqrt{2r^2}}{(1-\beta)n}\sum_{i=\beta n+1}^{n}\sqrt{ (I(W_G;Z_i)+D(\mu||\mu'))} \\
& \leq \frac{\alpha}{\beta n}\sum_{i=1}^{\beta n}\sqrt{\frac{I(W_G;Z'_i|S'_{-i},S)}{2}} \\
&+\frac{(1-\alpha)}{(1-\beta)n}\sum_{i=\beta n+1}^{n}\sqrt{ \frac{I(W_G;Z_i|S',S_{-i})+D(\mu||\mu')}{2}},
%& \leq \alpha \sqrt{\frac{2r^2I(W;S')}{\beta n}} + \left(1-\alpha\right) \sqrt{\frac{2r^2I(W;S)}{(1-\beta)n} + 2r^2\left (D(\mu \| \mu')\right)}\\
%& \leq \alpha \sqrt{\frac{I(W;S' | S)}{2\beta n}} + \left(1-\alpha\right) \sqrt{\frac{I(W;S| S')}{2(1-\beta)n} + \frac{D(\mu \| \mu')}{2}} 
\end{align*}
where we denote $S_{-i}$ by deleting the $i$-th element in $S$. The second inequality uses the fact that $Z_i$ and $S^{-i}$ are independent. Then using the Hoeffding inequality with the fact that the loss is bounded by $[0,1]$, we have:
\begin{align*}
    I(W_G;Z'_i|S'_{-i},S) &\leq \frac{\alpha^2k^2}{8\beta^2n^2}, \text{ for } i=1,\cdots, \beta n \\
    I(W_G;Z_i|S_{-i},S') &\leq \frac{(1-\alpha)^2k^2}{8(1-\beta)^2n^2},  \text{ for } i=\beta n + 1,\cdots, n.
\end{align*}
Finally, we arrive at,
\begin{align}
& |\Esub{WSS'}{\gen(W_G, S, S')}| 
\leq \frac{\alpha^2k}{4\beta n} \nonumber \\
& + \frac{(1-\alpha)}{(1-\beta)n} \sum_{i=\beta n +1}^{n}\sqrt{\frac{(1-\alpha)^2k^2}{16(1-\beta)^2n^2} + \frac{D(\mu \| \mu')}{2}}  \nonumber \\
&\leq \frac{\alpha^2k}{4\beta n}+ \frac{(1-\alpha)^2k}{4(1-\beta)n} +  (1-\alpha)\sqrt{\frac{D(\mu \| \mu')}{2}}, 
\end{align}
where we use $\sqrt{x+y} \leq \sqrt{x} + \sqrt{y}$ for both non-negative $x$ and $y$ in the last inequality. With the fact that $D(P^*_{W_G|S,S'} \| Q|S,S')$ is positive for any $Q$,
\begin{align}
&\mathbb{E}_{W} [L_{\mu'}(W_G)] = \mathbb{E}_{WSS'}[L_{\mu'}(W_G) - \hat{L}_{\alpha}(W_G,S,S')] \nonumber \\
&+ \mathbb{E}_{WSS'}[\hat{L}_{\alpha}(W_G,S,S')] \nonumber\\
&\leq \mathbb{E}_{WSS'}[\gen(W_G, S, S')] +  \mathbb{E}_{WSS'}[\hat{L}_{\alpha}(W_G,S,S')] \nonumber \\ 
&+ \frac{1}{k} D(P^*_{W_G|S,S'} \| Q|S,S') \nonumber \\
&\leq \mathbb{E}_{WSS'}[\gen(W_G, S, S')] + \mathbb{E}_{WSS'}[\hat{L}_{\alpha}(w^*_{st}(\alpha),S,S')] \nonumber \\
&+ \frac{1}{k} D(\delta_{w^*_{st}(\alpha)} \| Q|S,S')
\end{align}
as $P^*_{W_G|S,S'}$ is the minimizer of the regularized empirical risk~in~Equation~(\ref{eq:Gibbs}). Since $\mathcal{W}$ is finite, we have that $D(\delta_{w^*_{st}(\alpha)} \| Q|S,S') = -\log \frac{1}{Q(w^*_{st}(\alpha))}$ and $\mathbb{E}_{WSS'}[\hat{L}_{\alpha}(w^*_{st}(\alpha),S,S')] = L_{\alpha}(w^*_{st}(\alpha))$, which completes the proof.
\end{proof}

\section{Proof of Corollary \ref{coro:excess-fdiv-ERM}} \label{proof:coro-fdiv-ERM}
\begin{proof}
Suppose $\ell (W,Z_i)$ is $L_{\infty}$-norm upper bounded by $\sigma$, the $L_{\infty}$-norm of a random variable is defined as
\begin{equation*}
    \| X \|_{\infty} = \inf \{ M: P(X > M) = 0 \},
\end{equation*}
then followed by \cite[Theorem 3]{jiao_dependence_2017}, we have
\begin{equation}
    |\mathbb{E}_{P}[\ell (W, Z_i)] - \mathbb{E}_{Q}[\ell (W, Z_i)] | \leq 2\|\sigma \|_{\infty} D_{\phi}(P \| Q),
\end{equation}
where $D_{\phi}(P \| Q) = \frac{1}{2}\int|dP - dQ|$ is referred to as the $\phi$-divergence with $\phi(x) = \frac{1}{2}|x - 1|$.  If $Z'_i \sim \mu'$, $D_{\phi}(P \| Q) = D_{\phi}(P_{WZ'} || P_{W}\otimes \mu') := I_{\phi}(Z'_i;W)$. If $Z_i \sim \mu$, we have
\begin{align}
    &D_{\phi}(P \| Q) = \frac{1}{2}\int_{\mathcal{W} \times \mathcal{Z}}  \Big{|}dP_{W,Z_i} - dP_{W}d\mu' \Big{|} \nonumber \\
    &= \frac{1}{2}\int_{\mathcal{W} \times \mathcal{Z}}  \Big{|}dP_{W,Z_i} -  dP_{W}d\mu + dP_{W}d\mu - dP_{W}d\mu' \Big{|}\nonumber\\
    &\leq \frac{1}{2}\int_{\mathcal{W} \times \mathcal{Z}} \Big{|}dP_{W,Z_i} -  dP_{W}d\mu \Big{|} \nonumber \\ 
    &\quad + \frac{1}{2}\int_{\mathcal{W} \times \mathcal{Z}} \Big{|}dP_{W}d\mu - dP_{W}d\mu' \Big{|} \nonumber \\
    &= I_{\phi}(W;Z_i) + TV(\mu, \mu'),
\end{align}
where $TV(\mu, \mu') = D_{\phi}(\mu || \mu')$ denotes the total variation distance between the distribution $\mu$ and $\mu'$. By this, we can extend the mutual information measure to $\phi$-divergence.
\end{proof}

\section{Proof of Theorem~\ref{thm:gen-wd}} \label{proof:gen-wd}
\begin{proof}
We firstly look at the generalization error in terms of $P_{WSS'}$ and $P_{W} \otimes P_{S} \otimes P_{S'}$ as shown in (\ref{eq:wass1})-(\ref{eq:wass3}). Then we will be presenting the following theorem for the Wasserstein distance.
\begin{figure*}[!htb]
\normalsize
\allowdisplaybreaks
\begin{align}
& \Esub{WSS'}{\gen(W_{\ERM}, S, S')} =\Esub{WSS'}{L_{\mu'}(W_{\ERM})-\hat L_{\alpha}(W_{\ERM},S, S')} \nonumber \\
& = \int_{\mathcal{W} \times \mathcal Z}    \alpha\ell(W_{\ERM},Z') + (1-\alpha)\ell(W_{\ERM},Z)  dP_{W}d\mu' - \int_{\mathcal{W} \times \mathcal S \times \mathcal S'} \hat L_{\alpha}(W_{\ERM},S, S')  dP_{WSS'} \nonumber \\
& =  \int_{\mathcal{W} \times \mathcal Z} \alpha\ell(W_{\ERM},Z') dP_{W}d\mu' - \int_{\mathcal{W} \times \mathcal S'} \frac{\alpha}{\beta n}\sum_{i= 1}^{\beta n} \ell(W_{\ERM},Z'_i) dP_{WS'}  \nonumber \\
& \quad +  \int_{\mathcal{W} \times \mathcal Z} (1-\alpha)\ell(W_{\ERM},Z) dP_{W}d\mu' - \int_{\mathcal{W} \times \mathcal S'} \frac{1- \alpha}{(1-\beta) n}\sum_{i= \beta n + 1}^{ n} \ell(W_{\ERM},Z_i) dP_{WS}   \nonumber \\
& = \frac{\alpha}{\beta n}\sum_{i=1}^{\beta n} \left(  \int_{\mathcal{W} \times \mathcal Z} \ell(W_{\ERM},Z'_i) dP_Wd\mu' - \int_{\mathcal{W} \times \mathcal Z} \ell(W_{\ERM},Z'_i) dP_{WZ'_i} \right) \nonumber \\
& \quad + \frac{1 - \alpha}{(1-\beta) n}\sum_{i=\beta n + 1}^{n}  \left(  \int_{\mathcal{W} \times \mathcal Z} \ell(W_{\ERM},Z) dP_Wd\mu' - \int_{\mathcal{W} \times \mathcal Z} \ell(W_{\ERM},Z_i) dP_{WZ_i} \right) \nonumber \\
& = \frac{\alpha}{\beta n}\sum_{i=1}^{\beta n} \left(  \int_{\mathcal{W} \times \mathcal Z} \ell(W_{\ERM},Z'_i) dP_Wd\mu' - \int_{\mathcal{W} \times \mathcal Z} \ell(W_{\ERM},Z'_i) dP_{WZ'_i} \right) \label{eq:wass1}\\
& \quad + \frac{1 - \alpha}{(1-\beta) n}\sum_{i=\beta n + 1}^{n}  \left(  \int_{\mathcal{W} \times \mathcal Z} \ell(W_{\ERM},Z) dP_Wd\mu' - \int_{\mathcal{W} \times \mathcal Z} \ell(W_{\ERM},Z_i) dP_{W}d\mu \right) \label{eq:wass2}\\
& \quad + \frac{1 - \alpha}{(1-\beta) n}\sum_{i=\beta n + 1}^{n}  \left(  \int_{\mathcal{W} \times \mathcal Z} \ell(W_{\ERM},Z_i) dP_Wd\mu - \int_{\mathcal{W} \times \mathcal Z} \ell(W_{\ERM},Z_i) dP_{WZ_i} \right) \label{eq:wass3}
\end{align}
\vspace*{4pt}
\hrulefill
\end{figure*}

\begin{theorem}[Kantorovich-Rubinstein Duality Theorem] Let $(\mathcal{X}, {d})$ be a metric space and let $\mu,$ v denote two Radon probability measures contained in $\mathcal{P}_{d}(\mathcal{X})$. Then
\begin{align}
    \mathbb{W}_1(\mu, v)=\sup \left\{\int_{\mathcal X} {f} \mathrm{d} \mu-\int_{\mathcal X} {f} {d} v: f \in \operatorname{Lip}_{1}(\mathcal{X}, d)\right\},
\end{align}
where $\operatorname{Lip}_{1}(\mathcal{X}, d)$ denotes the collection of all 1-Lipschitz continuous functions on $\mathcal{X}$.
\end{theorem}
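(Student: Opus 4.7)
The plan is to establish the two inequalities separately, following the classical route through general Kantorovich duality and the $d$-transform calculus.

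\emph{Easy direction} ($\sup \leq \mathbb{W}_1$). First I would fix any $f \in \operatorname{Lip}_1(\mathcal{X}, d)$ and any coupling $\pi \in \Pi(\mu, v)$. Since the marginals of $\pi$ are $\mu$ and $v$, and $f$ is 1-Lipschitz,
\[
\int_{\mathcal X} f \, d\mu - \int_{\mathcal X} f \, dv \;=\; \int_{\mathcal X \times \mathcal X} \bigl(f(x) - f(y)\bigr) \, d\pi(x,y) \;\leq\; \int_{\mathcal X \times \mathcal X} d(x,y)\, d\pi(x,y).
\]
Taking the infimum over $\pi \in \Pi(\mu, v)$ on the right and the supremum over $f$ on the left yields $\sup\{\int f\,d\mu - \int f\,dv\} \leq \mathbb{W}_1(\mu,v)$.

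\emph{Hard direction} ($\mathbb{W}_1 \leq \sup$). For the reverse inequality, I would invoke general Kantorovich duality for the lower semi-continuous cost $c(x,y) = d(x,y)$ on a Polish space, which produces
\[
\mathbb{W}_1(\mu, v) \;=\; \sup_{(\varphi,\psi)} \left\{ \int \varphi\, d\mu + \int \psi\, dv \;:\; \varphi(x)+\psi(y) \leq d(x,y) \right\},
\]
where the supremum ranges over pairs of bounded continuous test functions. This duality follows from a Fenchel--Rockafellar / Hahn--Banach argument applied on $C_b(\mathcal X) \oplus C_b(\mathcal X)$ and is a standard result I would cite rather than reprove. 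Next I would run the $d$-transform reduction specific to the metric cost: for any admissible $(\varphi,\psi)$ define $\varphi^d(y) := \inf_{x}\bigl(d(x,y) - \varphi(x)\bigr)$. Two observations drive the reduction. (i) $\varphi^d \geq \psi$ pointwise (from the constraint), so replacing $\psi$ by $\varphi^d$ can only increase the dual objective. (ii) $\varphi^d$ is 1-Lipschitz in $y$, since for any $y_1, y_2$,
\[
\varphi^d(y_1) - \varphi^d(y_2) \;\leq\; \sup_x \bigl[d(x,y_1) - d(x,y_2)\bigr] \;\leq\; d(y_1,y_2),
\]
by the triangle inequality. Iterating the transform once more (and using that $d$ is a metric) shows we may also take $\varphi = -\varphi^d$, so that the optimal pair can be chosen of the form $(f, -f)$ with $f \in \operatorname{Lip}_1(\mathcal X, d)$. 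Setting $f := \varphi$ gives $\int \varphi\, d\mu + \int \psi\, dv = \int f\, d\mu - \int f\, dv$, which matches the RHS of the theorem and closes the inequality.

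\emph{Main obstacle.} The genuine technical work lies in the general Kantorovich duality for the possibly unbounded cost $c = d$ on a Polish space, including existence of a minimizing coupling (via tightness of $\Pi(\mu,v)$) and the passage from bounded continuous test pairs to (a priori unbounded) 1-Lipschitz functions. The $d$-transform calculus above is clean and purely metric, but the integrability of $f$ against $\mu$ and $v$ must be justified; this is precisely where the hypothesis $\mu, v \in \mathcal{P}_d(\mathcal X)$ is used. I would handle integrability by centering: fix a base point $x_0$, replace $f$ by $f - f(x_0)$ (which does not change $\int f\, d\mu - \int f\, dv$), and bound $|f(x) - f(x_0)| \leq d(x,x_0)$, so that finiteness of $\mathbb{E}_\mu[d(X,x_0)]$ and $\mathbb{E}_v[d(Y,x_0)]$ guarantees integrability. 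Everything else is standard metric/measure-theoretic bookkeeping.
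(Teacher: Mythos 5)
The paper does not prove this statement at all: the Kantorovich--Rubinstein duality theorem is stated inside Appendix F purely as a classical result to be invoked in the proof of the Wasserstein generalization bound (Theorem~\ref{thm:gen-wd}), with the optimal-transport literature (e.g.\ \cite{ambrosio2003lecture,raginsky2013concentration}) serving as the implicit reference. Your proposal is therefore doing strictly more than the paper, and it is a correct outline of the standard textbook argument: the easy direction (coupling plus the 1-Lipschitz bound $f(x)-f(y)\leq d(x,y)$) is complete as written modulo integrability, and the hard direction correctly identifies the two real ingredients, namely general Kantorovich duality over bounded continuous pairs and the $d$-transform reduction showing the optimal pair can be taken of the form $(g,-g)$ with $g\in\operatorname{Lip}_1$. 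Two small remarks. First, at the end of the hard direction you write that one may take $\varphi=-\varphi^d$ and then set $f:=\varphi$; the cleaner bookkeeping is that after the double transform the pair is $(-\varphi^d,\varphi^d)$, so the Lipschitz witness is $f:=-\varphi^d$ (or $\varphi^d$, since $\operatorname{Lip}_1$ is closed under negation) --- a notational slip, not a gap. Second, the theorem as stated in the paper is for Radon measures on a general metric space, whereas your Fenchel--Rockafellar route is phrased for Polish spaces; the result does hold in the stated generality, but that extension requires a different (or more careful) duality argument, so if you want to match the paper's hypotheses exactly you should either cite the Radon-measure version of the duality or add a separability reduction. Your treatment of integrability by centering $f$ at a base point and using $\mu,v\in\mathcal{P}_d(\mathcal{X})$ is exactly right and is the step most often glossed over.
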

Using the theorem above, we can bound the generalization error using the Wasserstein distance. By assuming that the loss function is $\mathcal L$-Lipschitz for any $Z \in \mathcal Z$ and $W \in \mathcal W$, then we can bound (\ref{eq:wass1}), (\ref{eq:wass2}), (\ref{eq:wass3}) using the inequalities (\ref{eq:wass_bound1}), (\ref{eq:wass_bound2}) and (\ref{eq:wass_bound3}) respectively, which completes the proof.
\begin{figure*}[!htb]
\normalsize
\begin{align}
    \int_{\mathcal{W} \times \mathcal Z} \ell(W_{\ERM},Z'_i) dP_Wd\mu' - \int_{\mathcal{W} \times \mathcal Z} \ell(W_{\ERM},Z'_i) dP_{WZ'_i} &\leq \mathcal{L}\mathbb{E}_{\mu'}\left[\mathbb{W}_1(P_{W} , P_{W|Z'_i}) \right], \label{eq:wass_bound1}\\
    \int_{\mathcal{W} \times \mathcal Z} \ell(W_{\ERM},Z) dP_Wd\mu' - \int_{\mathcal{W} \times \mathcal Z} \ell(W_{\ERM},Z_i) dP_{W}d\mu &\leq \mathcal{L}\mathbb{W}_1(\mu', \mu),\label{eq:wass_bound2}\\
    \int_{\mathcal{W} \times \mathcal Z} \ell(W_{\ERM},Z_i) dP_Wd\mu- \int_{\mathcal{W} \times \mathcal Z} \ell(W_{\ERM},Z_i) dP_{WZ_i} &\leq \mathcal{L}\mathbb{E}_{\mu}\left[\mathbb{W}_1(P_{W} , P_{W|Z_i}) \right]. \label{eq:wass_bound3}
\end{align}
\vspace*{3pt}
\hrulefill
\end{figure*}
\end{proof}

\section{Proof of Proposition \ref{prop:wd-info-comparison}}
\label{proof:wd-info-comparison}
\begin{proof}
In the following, we denote the random variable $W$ by the ERM solution $W_{\ERM}$ for simplicity. With the fact that $\sqrt{x+y} \geq \sqrt{\frac{x}{2}} + \sqrt{\frac{y}{2}}$ for both $x, y \geq 0$, we can further lower bound the mutual information based quantity $\mathbb{B}_{\textup{Info}}$ by,
\begin{align*}
\mathbb{B}_{\textup{Info}} &= \frac{\sqrt{2r^2}}{n}\sum_{i=1}^{n}  \sqrt{ (I(W_{\ERM};Z_i)+ D(\mu||\mu'))} \\
& \geq \frac{\sqrt{r^2}}{n}\sum_{i=1}^{n} \left(\sqrt{ I(W_{\ERM};Z_i)} +\sqrt{ D(\mu||\mu')}\right).
\end{align*}
With the assumption that $P_W$ and $\mu'$ satisfy the $T_{1}(\frac{r^2}{2\mathcal{L}^2})$ transport cost inequality, we have that for $P_{W|z_i} \ll P_{W}$ with any $z_i$ and $\mu \ll \mu'$,
\begin{align*}
    \mathbb{W}_{1}(P_{W|z_i},P_{W}) &\leq \sqrt{\frac{r^2}{\mathcal{L}^2}D(P_{W|z_i} \| P_W)}, \\
    \mathbb{W}_{1}(\mu, \mu') &\leq \sqrt{\frac{r^2}{\mathcal{L}^2}D(\mu \| \mu')}. \\
\end{align*}
By Jensen's inequality, we can show that for all $i$,
\begin{align*}
\mathbb{E}_{Z_i}\left[\mathcal{L}\mathbb{W}_{1}(P_{W|Z_i},P_{W})\right] &\leq \mathbb{E}_{Z_i}\left[\sqrt{r^2 D(P_{W|Z_i}\|P_{W})} \right] \\
&\leq \sqrt{r^2 \mathbb{E}_{Z_i}\left[ D(P_{W|Z_i}\|P_{W}) \right]} \\
& = \sqrt{r^2 I(W;Z_i)},
\end{align*}
where $I(W;Z_i) = D(P_{WZ_i} \| P_W\otimes P_{Z_i}) = \mathbb{E}_{{Z_i}}[D(P_{W|Z_i}\| P_{W})]$. Similarly, we can also prove that:
\begin{align*}
    \mathcal{L}\mathbb{W}_{1}(\mu,\mu') \leq \sqrt{2r^2D(\mu\|\mu')},
\end{align*}
which completes the proof, and it naturally shows that the Wasserstein distance-based bound is tighter than the mutual information-based bound.
\end{proof}

\section{Setup of Bernoulli adaptation} \label{setup:bernoulli}
\begin{figure*}[!htb]
\normalsize
\begin{align}
\E{L_{\mu'}(W(T)-L_{\mu'}(w^*)} \leq & \alpha \sqrt{\frac{2 r^2(T)}{\beta n} \sum^{T}_{t=1}\frac{1}{2}\log\left( 1+\frac{\eta^2(T)K^2_{ST}(T)}{\sigma^2_t} \right)}  \nonumber \\
&+ (1-\alpha)\sqrt{2 r^{2}(T)\left(\frac{\sum^{T}_{t=1}\frac{1}{2}\log\left( 1+\frac{\eta^2(T)K^2_{ST}(T)}{\sigma^2_t} \right)}{(1-\beta)n}+D\left(\mu \| \mu^{\prime}\right)\right)}  \nonumber \\
&+ (1-\alpha) \sup_{w\in W^T} |\hat L(w, S)-\hat L(w, S')| \nonumber \\
&+ K_{ST}(T)\left\| w(0) - W_\ERM \right \| + K_{ST}(T)\sum_{t=1}^T\left\| n(t) \right\|, 
\end{align}
\vspace*{4pt}
\hrulefill
\end{figure*}
\begin{figure*}[!htb]
\normalsize
\begin{align}
|\Esub{WSS'}{\gen(W(T), S, S')}| \leq& \alpha \sqrt{\frac{2 r^2(T)}{\beta n} \sum^{T}_{t=1}\frac{1}{2}\log\left( 1+\frac{\eta(T)^2K^2_{ST}(T)}{\sigma^2_t} \right)}  \nonumber \\
&+ (1-\alpha)\sqrt{2 r^2(T)\left(\frac{\sum^{T}_{t=1}\frac{1}{2}\log\left( 1+\frac{\eta(T)^2K^2_{ST}(T)}{\sigma^2_t} \right)}{(1-\beta)n}+D\left(\mu \| \mu^{\prime}\right)\right)}.
\end{align}
\vspace*{4pt}
\hrulefill
\end{figure*}

As we use the binary cross-entropy loss function:
\begin{align}
\ell(w, z_i) = -(z_i\log(w) + (1-z_i)\log(1-w)),
\end{align}
the corresponding gradient is given by
\begin{align}
\nabla\ell(w, z_i) &= \frac{1-z_i}{1-w} -\frac{z_i}{w}.
\end{align}
Then the population risk is then defined as
\begin{align}
L_{\mu^{\prime}}(w(T)) &= -p^{\prime}\log(w(T)) \nonumber \\
& \quad - (1-p^{\prime})\log(1-w(T)),
\end{align}
and the corresponding empirical risk is defined as:
\begin{align}
& \hat{L}_{\alpha}\left(w(T), S, S^{\prime}\right) =\frac{\alpha}{\beta n} \sum_{i=1}^{\beta n} \ell\left(w(T), z'_{i}\right)  \nonumber \\ 
& +\frac{1-\alpha}{(1-\beta) n} \sum_{i=\beta n+1}^{n} \ell\left(w(T), z_{i}\right).
\end{align}
We notice that $\left \| \nabla\ell(w, z_i) \right \|$ and $\ell(w, z_i)$ are not bounded in our case if $w$ approaches $0$ or $1$, however, in the simulation, within finite $T$ iterations, we set the relative iterative parameters to be the maximum value among all iterations such that
\begin{align}
K_T(T) &= \max_{\begin{subarray}{l} t=1,2,\cdots,T,\\
i = 1,2,\cdots,\beta n \end{subarray}}\left \| \nabla\ell(w(t);z_i) \right \|, \\
K_S(T) &= \max_{\begin{subarray}{l} t=1,2,\cdots,T,\\
i = \beta n +1,\cdots,n \end{subarray}}\left \| \nabla\ell(w(t);z_i) \right \|, \\
K_{ST}(T) &= \alpha K_T(T) + (1-\alpha)K_S(T), \\
\eta(T) & = \frac{1}{K_{ST}(T)}, \\
r(T) & =  \max_{t=1,2,\cdots,T}\frac{|\log\frac{w(t)}{1-w(t)}|}{\sqrt{2}}.
\end{align}
The generalization error bound after $T$ iterations is given by

By setting $\kappa = 0$ loosely, within the finite iterations such that $\sum_{i=1}^{T}\|n(i)\| < \infty$, the according excess risk bound for this case is then expressed as

where $D(\mu || \mu')$ is calculated by
\begin{align*}
D(\mu || \mu') &= \sum_{i=0}^1 \mu(i)\log\frac{\mu(i)}{\mu'(i)} \\
&= (1-p)\log\frac{1-p}{1-p'} + p\log\frac{p}{p'}.
\end{align*}

\bibliographystyle{IEEEtranN} 
\bibliography{sample}

% use section* for acknowledgment
%\section*{Acknowledgment}

%The authors would like to thank...

% Can use something like this to put references on a page
% by themselves when using endfloat and the captionsoff option.
\ifCLASSOPTIONcaptionsoff
  \newpage
\fi

\begin{IEEEbiographynophoto}{Xuetong Wu}
Xuetong Wu received a B.S. degree in 2016 from Fudan University, Shanghai, China, and completed his M.S. and Ph.D. degrees in 2018 and 2023, respectively, in the Department of Electrical and Electronic Engineering at the University of Melbourne. His research interests include machine learning, transfer learning, and information theory.
\end{IEEEbiographynophoto}
\begin{IEEEbiographynophoto}{Jonathan H. Manton}
Professor Jonathan Manton holds a Distinguished Chair at the University of Melbourne with the title Future Generation Professor. He is also an adjunct professor in the Mathematical Sciences Institute at the Australian National University, a Fellow of the Australian Mathematical Society (FAustMS) and a Fellow of the Institute of Electrical and Electronics Engineers (FIEEE). He received his Bachelor of Science (mathematics) and Bachelor of Engineering (electrical) degrees in 1995 and his Ph.D. degree in 1998, all from The University of Melbourne, Australia. In 2005 he became a full Professor in the Research School of Information Sciences and Engineering (RSISE) at the Australian National University. From mid-2006 till mid-2008, he was on secondment to the Australian Research Council as Executive Director, Mathematics, Information and Communication Sciences. He has served as an Associate Editor for the IEEE Transactions on Signal Processing and a Lead Guest Editor for the IEEE Transactions on Selected Topics in Signal Processing. He has been a Committee Member of the IEEE Signal Processing for Communications (SPCOM) Technical Committee, and a Committee Member on the Mathematics Panel for the ACT Board of Senior Secondary Studies in Australia. Currently he is a Committee Member of the IEEE Machine Learning for Signal Processing (MLSP) Technical Committee, and is the Signal Processing Chapter Chair for the IEEE Victorian Section. Awards include a prestigious Queen Elizabeth II Fellowship and a Future Summit Australian Leadership Award.

His principal fields of interest are Mathematical Systems Theory (including Signal Processing and Optimisation), Geometry and Topology (Differential and Algebraic), and Learning and Computation (including Systems Biology, Systems Neuroscience and Machine Learning).
\end{IEEEbiographynophoto}
\begin{IEEEbiographynophoto}{Uwe Aickelin}
Prof Uwe Aickelin received the Ph.D. degree from the University of Wales, U.K. He is currently a Professor and the Head of the School of Computing and Information Systems, University of Melbourne. His current research interests include artificial intelligence (modelling and simulation), data mining and machine learning (robustness and uncertainty), decision support and optimization (medicine and digital economy), and health informatics (electronic healthcare records).
\end{IEEEbiographynophoto}
\begin{IEEEbiographynophoto}{Jingge Zhu}
Jingge Zhu received the B.S. degree and M.S. degree in electrical engineering from Shanghai Jiao Tong University, Shanghai, China, the Dipl.-Ing. degree in Technische Informatik from Technische Universit\"{a}t Berlin, Berlin, Germany and the Doctorat \`{e}s Sciences degree from the Ecole Polytechnique F\'{e}d\'{e}rale (EPFL), Lausanne, Switzerland. He was a post-doctoral researcher at the University of California, Berkeley.  He is now a Senior Lecturer at the University of Melbourne, Australia. His research interests include information theory with applications in communication systems and machine learning. 

Dr. Zhu received the Discovery Early Career Research Award (DECRA) from the Australian Research Council in 2021, the IEEE Heinrich Hertz Award for Best Communications Letters in 2013, and the Early Postdoc. Mobility Fellowship from Swiss National Science Foundation in 2015 and the Chinese Government Award for Outstanding Students Abroad in 2016.
\end{IEEEbiographynophoto}
% % You can push biographies down or up by placing
% % a \vfill before or after them. The appropriate
% % use of \vfill depends on what kind of text is
% % on the last page and whether or not the columns
% % are being equalized.

%\vfill

% Can be used to pull up biographies so that the bottom of the last one
% is flush with the other column.
%\enlargethispage{-5in}

% that's all folks
\end{document}